\numberwithin{equation}{section}
\DeclareMathOperator{\tr}{Tr}
\DeclareMathOperator{\erf}{erf}
\DeclareMathOperator{\Span}{Span}
\newtheorem{theorem}{Theorem}[section]
\newtheorem{proposition}[theorem]{Proposition}
\newtheorem{lemma}[theorem]{Lemma}
\newtheorem{corollary}[theorem]{Corollary}
\theoremstyle{definition}
\newtheorem{remark}[theorem]{Remark}
\newcommand{\dda}{\mathrm{d}}
\newcommand{\de}{\,\dda}
\renewcommand\rho\varrho
\renewcommand\epsilon\varepsilon
\definecolor{darkred}{rgb}{0.9,0,0.3}
\definecolor{darkblue}{rgb}{0,0.3,0.9}
\definecolor{vdarkred}{rgb}{0.7,0,0.2}
\definecolor{vdarkblue}{rgb}{0,0.2,0.7}
\def\eps{\varepsilon}
\def\cchi{\raisebox{0.4ex}{$\chi$}}
\def\ph{\varphi}
          \let\ph=\varphi
\theoremstyle{definition}
\newcommand{\beq}{\begin{equation}}
\newcommand{\eeq}{\end{equation}}
\begin{document}

\title{Upper bound for the grand canonical free energy of the Bose gas in the Gross-Pitaevskii limit for general interaction potentials}

\author{Marco Caporaletti, Andreas Deuchert}

\date{\today}

\maketitle

\begin{abstract} 
We consider a homogeneous Bose gas in the Gross--Pitaevskii limit at temperatures that are comparable to the critical temperature for Bose--Einstein condensation. Recently, an upper bound for the grand canonical free energy was proved in \cite{BocDeuStock2023} capturing two novel contributions. First, the free energy of the interacting condensate is given in terms of an effective theory describing the probability distribution of the number of condensed particles. Second, the free energy of the thermally excited particles equals that of a temperature-dependent Bogoliubov Hamiltonian. We extend this result to a more general class of interaction potentials, including interactions with a hard core. Our proof follows a different approach than the one in \cite{BocDeuStock2023}: we model microscopic correlations between the particles by a Jastrow factor, and exploit a cancellation in the computation of the energy that emerges due to the different length scales in the system.   
\end{abstract}

\setcounter{tocdepth}{3}
\tableofcontents

\section{Introduction and main result}
\label{sec:model}

\subsection{Background and summary}

Since the first experimental realizations of Bose--Einstein condensation (BEC) in cold alkali gases in 1995 \cite{Anderson95, Ketterle95}, the dilute Bose Gas has become a prominent topic of experimental and theoretical research. The most relevant parameter regime to describe experiments with trapped quantum gases theoretically is the Gross--Pitaevskii (GP) limit. Here the scattering length of the interaction potential is scaled with the number of particles $N$ in such a way that the interaction energy per particle is of the same order of magnitude as the spectral gap in the trap, as $N \to \infty$. Many rigorous mathematical results about the GP limit of interacting Bose gases have been proved over the past twenty years. In the foundational works \cite{LiebSeiringer2002, LiebSeYng2000, LiebSeYng2002} it was shown that the ground state energy per particle can be approximated by the minimum of the GP energy functional, and that approximate ground states display BEC and superfluidity. These results have later been extended in \cite{LiebSeiringer2006, NamRouSei2016, Seiringer2003} to the case of rotating Bose gases.  

Condensation with an optimal rate was, in the GP limit, first proven in \cite{BBCS_OptimalRate2020} for approximate ground states of a Bose gas captured in a three-dimensional flat torus. In \cite{BBCS_Specturm2019} the same authors show that the second-order correction to the ground state energy, the low-lying eigenvalues of the many-body Hamiltonian and the corresponding eigenfunctions are well approximated by related quantities of a quadratic Hamiltonian, called Bogoliubov Hamiltonian. This confirms predictions of Bogoliubov from 1947 \cite{Bogoliubov47}. Similar results have later been obtained for the trapped Bose gas in the GP limit \cite{BrenSchleinSchra_OptimalRate2022, BreSchleiSchra_Spectrum2022}, for the homogeneous gas in a Thomas--Fermi limit \cite{AdiBreSchlein2021, BreCapSchlein2022}, and for the homogeneous gas in the GP limit in two space dimensions \cite{CaraCenaSchlein2021, CaraCenaSchlein2023}. More recently, a second-order upper bound for the ground state energy of a hard sphere Bose gas has been proven in \cite{basti2022second}. The homogeneous gas in a box with Neumann boundary conditions has been studied in \cite{BoccaSei2023}. 

While low-energy eigenstates of the Hamiltonian accurately describe the dilute Bose gas at (or near) zero temperature, understanding the system at positive temperature is crucial to describe modern experiments. In this setting the natural analogues of the ground state energy and its corresponding eigenfunction are the free energy and the Gibbs state associated to the many-body Hamiltonian. In the article \cite{DeuSeiYng2019} the trapped Bose gas is studied in a combination of thermodynamic limit in the trap and GP limit. It is proven that the free energy of the system minus that of the ideal gas is well approximated by the minimum of a GP energy functional. Moreover, the one-particle density matrix of any approximate minimizer of the free energy is, to leading order, given by the one of the ideal gas, where the condensate wavefunction has been replaced by the minimizer of the GP energy functional. This, in particular, establishes the existence of a BEC phase transition in the system. Comparable results for the homogeneous Bose gas have been obtained in \cite{DeuchertSeiringerHom2020}.

The GP limit is appropriate to describe experiments with atomic clouds containing $10^2-10^6$ particles. To describe truly macroscopic samples with particle numbers of the order of the Avogadro constant $N_{\mathrm{A}} \approx 6.022 \times 10^{23}$, one needs to consider a thermodynamic limit followed by a dilute limit. The leading order asymptotics of the ground state energy per particle in this regime has been established in the influential works \cite{Dyson57, LiebYng98} (three space dimensions) and \cite{LiebYng2001} (two space dimensions). The one-dimensional case has been studied in \cite{AgerReuSolo2022}. Recently, also the second-order correction predicted by Lee, Huang and Yang (LHY) in 1957 \cite{LHY57} could be justified, see \cite{YY2009, BaCeSchlein2021} for upper bounds and \cite{FouSol2020, FouSolo2023} for matching lower bounds. It is interesting to note that, to this date, there is no upper bound available that captures the LHY correction for a gas of hard spheres (the lower bound in \cite{FouSolo2023} applies in this case). A comparable second order expansion for the two-dimensional Bose has been obtained in \cite{FGJMO2022}. For the dilute Bose gas at positive temperature, asymptotic expansions capturing the leading order correction to the free energy caused by the interaction between the particles have been proved in \cite{Seiringer2008, Yin2010} (three space dimensions) and \cite{DeuMaySei2020, MaySei2020} (two space dimensions). In \cite{HabHaiNamSeiTriay2023} a LHY-type lower bound for the three-dimensional gas at suitably low temperatures is established. 

In the recent work \cite{BocDeuStock2023} the authors consider a grand canonical homogeneous Bose gas in the GP limit at temperatures that are comparable to the critical temperature for BEC in the ideal gas\footnote{The critical temperature in the interacting gas is expected to be the same, to leading order in $N$, but this has so far been proven only in the canonical setting, see \cite{DeuchertSeiringerHom2020}.}. Under the assumption that the interaction potential is of class $L^3$, they establish an upper bound for the grand canonical free energy that contains two novel contributions: the free energy of the interacting condensate is given in terms of an effective theory describing its particle number fluctuations. Moreover, the free energy of the thermally excited particles equals that of a temperature-dependent Bogoliubov Hamiltonian. In the present article, we extend this result to systems with interactions in a more general class, including the hard-core potential. Our proof is based on the use of a trial state that is similar in spirit to the one in \cite{BocDeuStock2023}. However, due to the lack of regularity of the interaction potential, we are forced to implement microscopic correlations between the particles by a full Jastrow factor. Because of this, the computation of the energy requires different arguments. A crucial step in our proof is a cancellation in the computation of the energy that emerges due to the different length scales in the problem.

\subsection{Notation}
Given two functions $a, b$ of the particle number $N$ and other parameters of the system, we write $a \lesssim b$ if there exists a constant $C$, independent of $N$, such that $a \le C b$. If we want to highlight the dependency of the constant on some ($N$-independent) parameter $k$, we use the notation $a \lesssim_k b$. We write $a \sim b$ if $a \lesssim b$ and $b \lesssim a$, and $a\simeq b$ means that $a/b \to 1$ in the limit considered. The letters $C, c$ denote generic positive constants, whose values may change from line to line.

The Fourier coefficients of a function $f : \Lambda = [-1/2, 1/2]^3 \to \mathbb C$ are denoted by $\hat f(p) = \int_{\Lambda} e^{-\mathrm{i} p\cdot x} f(x) \, \mathrm dx$ and, given a sequence $g: \Lambda^* = 2\pi\mathbb Z^3 \to \mathbb C$, the inverse Fourier transformation reads $\check g(x) = \sum_{p\in\Lambda^*} g(p) e^{\mathrm{i} p\cdot x}$. Standard $L^p(\Lambda)$ and $\ell^p(\Lambda^*)$ norms are denoted by $\|\cdot\|_p$. If $H$ is a (separable, complex) Hilbert space we denote by $\langle \cdot, \cdot \rangle$ its inner product, and by $\mathcal L^1(H)$ the space of trace-class operators on $H$. If $A$ is an operator on $H$ and $\psi\in H$ belongs to the form domain of $A$, we use the notation $\langle A\rangle_{\psi} = \langle \psi, A\psi\rangle$.

\subsection{The grand canonical free energy and the Gibbs variational principle}

We consider a Bose gas captured in the three-dimensional box $\Lambda = [-1/ 2 , 1/ 2 ]^3$ with periodic boundary conditions. Since we are interested in a system with a fluctuating particle number, its Hilbert is given by the bosonic Fock space 
\begin{equation}
	\label{eq:Fock_def}
	\mathfrak F = \bigoplus_{n\ge 0} L^2_{\mathrm s}(\Lambda^n).
\end{equation} 
Here $L^2_{ \mathrm s}(\Lambda^n)$ denotes the space of permutation symmetric functions in $L^2(\Lambda^n)$. That is, the closed linear subspace of $L^2(\Lambda^n)$, whose elements $\Psi(x_1,...,x_n)$ are invariant under any permutation of its $n$ particle coordinates $x_1,...,x_n$. 

The Hamiltonian of the system in the GP scaling reads
\begin{equation}
	\label{eq:Hamiltonian_second_quant}
	\mathcal H_N =  \bigoplus_{n\ge 0} H_N^{(n)}, \quad \text{ with } \quad H_N^{(n)} = - \sum_{i = 1}^n \Delta_i + N^2 \sum_{1 \le i < j \le n}V(N(x_i-x_j)).
\end{equation} Here, $\Delta_i$ denotes the Laplacian with periodic boundary conditions acting on the $i$-th coordinate, and $V(N(x_i-x_j))$ is a multiplication operator. We assume that the interaction potential $V:\mathbb R^3\to [0,\infty]$ is measurable, radial and compactly supported. The parameter $N$ will be chosen such that it coincides with the expected number of particles in the system. Our assumptions on $V$ guarantee that its scattering length $\mathfrak a$ is well-defined: this is a combined measure of the range and strength of the interaction potential $V$. For a precise definition of the scattering length, we refer the reader to Appendix~\ref{app:scattering}. By scaling, the scattering length $\mathfrak a_N$ of $V_N = N^2 V(N\cdot)$ satisfies $\mathfrak a_N = \mathfrak a / N$. The fact that $V\ge 0$ allows us to define the Hamiltonian $\mathcal H_N$ in \eqref{eq:Hamiltonian_second_quant} as a self-adjoint operator via Friedrichs extension. 

For states $\Gamma \in \Sigma \coloneqq \{ \Gamma\in \mathcal L^1(\mathfrak F) \mid \Gamma\ge 0, ~ \tr\Gamma=1 \}$ we define the Gibbs free energy functional $\mathcal F(\cdot)$ by\footnote{Here and in the following, we interpret $\tr[AB]$ for positive operators $A$ and $B$ as $\tr[A^{1/2}BA^{1/2}]$. By positivity, this expression is always well defined and takes values in $[0, \infty]$. In particular, finiteness of $\tr[AB]$ under this convention requires only that the operator $A^{1/2}BA^{1/2}$ is trace-class, and not necessarily $AB$.}
\begin{equation}
	\mathcal{F}(\Gamma) = \tr [ \mathcal{H}_N \Gamma ] - \beta^{-1} S(\Gamma),
	\label{eq:Gibbsfreenergyfunctional}
\end{equation}
where $ S(\Gamma) = - \tr[\Gamma \ln (\Gamma)] $ denotes the von Neumann entropy and $\beta > 0$ is the inverse temperature of the system. The grand canonical free energy is defined as the minimum of the Gibbs free energy functional among states with expected number of particles equal to $N$:
\begin{equation}
	\label{eq:Interacting_Free_Energy}
	F(\beta, N) = \min \left \{ \mathcal F(\Gamma) \mid \Gamma \in \Sigma, ~ \tr[\mathcal N \Gamma] = N  \right \} = - \beta^{-1} \log \tr [ \exp(-\beta (\mathcal{H}_N - \mu \mathcal{N}) ) ] + \mu N.
\end{equation} Here $\mathcal N = \bigoplus_{n\ge 0} n$ denotes the number of particles operator on $\mathfrak F$. The minimum in \eqref{eq:Interacting_Free_Energy} is attained uniquely at the Gibbs state 
\begin{equation}\label{eq:Gibbs_state_int}
	G = \frac{ \exp(- \beta (\mathcal{H}_N - \mu \mathcal{N}) )}{\tr [ \exp(- \beta (\mathcal{H}_N - \mu \mathcal{N}) ) ]} ,
\end{equation}
where the chemical potential $\mu = \mu(N, \beta) \in \mathbb{R}$ is defined implicitly by the equation $\tr [ \mathcal{N} G ] = N$. 

\subsection{The ideal Bose gas on the torus}

Before we state our main result, we recall some well-known facts about the ideal Bose gas on the unit torus, that is, the system described by the Hamiltonian in \eqref{eq:Hamiltonian_second_quant} with $V=0$. In this case the chemical potential $\mu_0 = \mu_0(\beta, N)< 0$ can be defined by the equation
\begin{equation} \label{eq:mu0_definition_implicit}
	N = \sum_{p\in\Lambda^*} \frac{1}{\exp( \beta(|p|^2 - \mu_0(\beta, N))) -1 }.
\end{equation} The expected number of particles with momentum $p = 0$ is given by 
\begin{equation}\label{eq:n0_def}
	 N_0(\beta, N) = \frac 1 {\exp(-\beta\mu_0) -1 } 
\end{equation} and satisfies
\begin{equation} \label{eq:Tc_ideal}
	\frac{N_0(\beta, N)}{N} \simeq \left[ 1- \frac{\beta_\mathrm{c}}{\beta} \right]_+, \quad \text{ with } \quad \beta_{\mathrm{c}} = \frac{1}{4\pi}\left( \frac{N}{\zeta(3/2)} \right)^{-2/3}
\end{equation} in the limit $N\to\infty$. Here $\zeta$ denotes the Riemann zeta function and $[\cdot]_+ = \max \{ \cdot, 0 \}$. Equation \eqref{eq:Tc_ideal} implies that the ideal gas displays a BEC phase transition with critical inverse temperature $\beta_{\mathrm{c}}$. More precisely, if $\beta = \kappa \beta_{\mathrm{c}}$ with $\kappa \in (1,\infty)$, we have $N_0 \simeq N (1-1/\kappa)$ and $|\mu_0| \sim N^{-1/3}$. If, in contrast, $\kappa \in (0,1)$ then $N_0 \sim 1$ and $|\mu_0| \sim N^{2/3}$. 

The free energy of the ideal Bose gas reads $F_0(\beta, N) = F_0^{\mathrm{BEC}} + F_0^+$, where
\begin{equation}\label{eq:free_energy_ideal_condensate}
		F_0^{\mathrm{BEC}} = \frac{1}{\beta} \log( 1- \exp(\beta\mu_0) ) + \mu_0 N_0
\end{equation} denotes the free energy of the condensate and
\begin{equation}\label{eq:free_energy_ideal_cloud}
	F_0^{+} = \frac{1}{\beta} \sum_{p\in\Lambda_+^*} \log( 1- \exp( - \beta( |p|^2 - \mu_0 ) ) ) + \mu_0 ( N - N_0 )
\end{equation} that of the thermally excited particles.

\subsection{Main results}

The following theorem is the main result of this article. 
\begin{theorem}\label{thm:main_thm}
	Let $V:\mathbb R^3 \to [0,\infty]$ be measurable, spherically symmetric and compactly supported. In the limit $N\to\infty$, with $\beta = \kappa \beta_{\mathrm{c}}$, $\kappa\in(0,\infty)$ and $\beta_{\mathrm{c}}$ in \eqref{eq:Tc_ideal}, the free energy in \eqref{eq:Interacting_Free_Energy} satisfies 
	\begin{equation}\label{eq:main_result}
\begin{split}
			F(\beta, N) \le & F_0^+(\beta, N) + 8\pi \mathfrak{a}_N N^2 + \min\{ F^{\mathrm{BEC}} - 8\pi \mathfrak{a}_N N_0^2, F_0^{\mathrm{BEC}} \} \\
			& - \frac{1}{2\beta} \sum_{p\in\Lambda_+^*} \left[ \frac{16\pi \mathfrak{a}_N N_0 }{ |p|^2 } - \log\left( 1+ \frac{16\pi \mathfrak{a}_N N_0 }{ |p|^2 }   \right) \right] + \mathcal{O}(N^{11/18}),
\end{split}
	\end{equation} with $N_0$, $F_0^{\mathrm{BEC}}$ and $F_0^+$ defined respectively in \eqref{eq:n0_def}, \eqref{eq:free_energy_ideal_condensate} and \eqref{eq:free_energy_ideal_cloud}, and
\begin{equation}\label{eq:free_energy_interacting_condensate}
	F^{\mathrm{BEC}} = 	F^{\mathrm{BEC}} (\beta, N_0, \mathfrak{a}_N) = -\frac{1}{\beta} \log \left( \int_{\mathbb C} \exp\left( -\beta \left( 4\pi \mathfrak{a}_N |z|^4 - \mu|z|^2 \right) \right) \,\mathrm d z \right) + \mu N_0(\beta, N).
\end{equation} Here, $\mathrm d z = \pi^{-1} \mathrm d x \, \mathrm d y$, where $\mathrm dx \, \mathrm dy$ denotes the Lebesgue measure on $\mathbb C$, and $\mu$ is chosen as the unique solution of the equation 
\begin{equation}\label{eq:mu_definition}
	\int_{\mathbb C} |z|^2 g(z) \, \mathrm d z= N_0(\beta, N),
\end{equation} with the probability density 
\begin{equation}\label{eq:condensate_density}
	g(z) = \frac{ \exp\left( -\beta \left( 4\pi \mathfrak{a}_N |z|^4 - \mu|z|^2 \right) \right) }{ \int_{\mathbb C} \exp\left( -\beta \left( 4\pi \mathfrak{a}_N |z|^4 - \mu|z|^2 \right) \right)\,\mathrm d z }. 
\end{equation} 
\end{theorem}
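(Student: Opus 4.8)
The plan is to construct an explicit trial state $\Gamma$ on Fock space whose Gibbs free energy $\cF(\Gamma)$ matches the right-hand side of \eqref{eq:main_result} up to the claimed error. Following the spirit of \cite{BocDeuStock2023}, the state will be a combination of a quasi-free state for the thermally excited modes, a classical (c-number) probability distribution $g(z)$ for the condensate amplitude, and — this is the new ingredient forced by the lack of regularity of $V$ — a Jastrow correlation factor $\prod_{i<j} f(N(x_i - x_j))$, where $f$ is the solution of the zero-energy scattering equation for $V$ (truncated at some intermediate length scale $\ell$ with $N^{-1} \ll \ell \ll 1$). Concretely, one first writes a trial vector on each $n$-particle sector in the form $\Psi_n = \big(\prod_{i<j} f(N(x_i-x_j))\big) \Phi_n$, where $\Phi_n$ is the product-of-condensate-and-excitations state obtained from a Bogoliubov rotation applied to the quasi-free part; then one lifts this to a density matrix $\Gamma$ on $\fock$ by taking the classical mixture over $z$ and the Gibbs state of the temperature-dependent Bogoliubov Hamiltonian for the excitations. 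The entropy term $S(\Gamma)$ is then essentially the entropy of the ideal-gas-type quasi-free state (which contributes $F_0^+$ and the condensate entropy inside $F^{\mathrm{BEC}}$), since the Jastrow factor and the c-number substitution are, in the right variables, measure-preserving / deterministic modifications.

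The core of the argument is the energy computation $\tr[\cH_N \Gamma]$. Here I would split $\Psi_n = J_n \Phi_n$ with $J_n = \prod_{i<j} f(N(x_i-x_j))$ and compute $\langle \Psi_n, H_N^{(n)} \Psi_n\rangle / \|\Psi_n\|^2$. The kinetic energy produces, via integration by parts, a term $\sum_{i<j} |\nabla f(N(x_i-x_j))|^2 N^2 |\Phi_n|^2 J_n^2$-type contribution, and the potential energy produces $N^2 \sum_{i<j} V(N(x_i-x_j)) f^2 |\Phi_n|^2 J_n^2$. The scattering equation $-\Delta f + \tfrac12 V f = 0$ (in the rescaled variable) is precisely what makes the singular parts of these two contributions combine: the dangerous $N^2 V$ term is absorbed, and what remains is governed by $8\pi \fra_N$ times the integrated pair density. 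This is the "cancellation due to the different length scales" advertised in the abstract, and I expect this to be the main obstacle: one must control the cross terms and the three-body (and higher) terms coming from $J_n^2 = \prod_{i<j} f^2$ — i.e. terms where the scattering factor $f(N(x_i-x_j))$ interacts with $f(N(x_i-x_k))$ — and show they are lower order. The smallness comes from the fact that $f(N\cdot)$ differs from $1$ only on a ball of radius $\sim \fra/N$, so the probability that three particles are mutually within this range is $O(N^{-3}) \times$ combinatorial factors, but making this rigorous at the level of the full grand-canonical trace, uniformly in $n$ fluctuating around $N$, with the hard-core case $V = +\infty$ on a ball included (so $f$ vanishes there and one must be careful about the domain), is delicate.

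Once the energy is computed, the remaining steps are bookkeeping: (i) extract the leading condensate self-interaction $8\pi \fra_N N^2$ and the $|z|^4$ term that, together with the chemical potential constraint \eqref{eq:mu_definition}, reconstructs $F^{\mathrm{BEC}}$ via \eqref{eq:free_energy_interacting_condensate}; (ii) identify the excitation energy with that of the temperature-dependent Bogoliubov Hamiltonian whose ground-state energy yields the sum $-\tfrac{1}{2\beta}\sum_p [\,\cdots\,]$ in \eqref{eq:main_result} (this is the standard Bogoliubov diagonalization, with the subtraction and $\log$ coming from the difference between the Bogoliubov free energy and the ideal one); (iii) take the minimum with $F_0^{\mathrm{BEC}}$ by also allowing the trivial trial state (pure ideal gas with the bare condensate distribution), which is admissible and gives $F_0^{\mathrm{BEC}} + F_0^+$; and (iv) optimize the length scale $\ell$ and all truncation parameters to balance the error terms, which is where the exponent $N^{11/18}$ comes from. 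The error accounting — collecting contributions like $N^2 \ell$, $N^{-1}\ell^{-1}$, the cost of truncating $f$, finite-momentum corrections, and the mismatch between the constrained and unconstrained chemical potentials — and verifying they are all $\cO(N^{11/18})$ is the final technical task, but conceptually routine once the cancellation in (ii) above is in hand.
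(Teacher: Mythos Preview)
Your outline captures the overall architecture (Jastrow factor on top of a coherent-state mixture tensored with a Bogoliubov Gibbs state), but you have misidentified the ``cancellation due to the different length scales'' that is the heart of the argument. What you describe --- combining $|\nabla f|^2$ with $\tfrac12 V f^2$ via the scattering equation so that the interaction is renormalized to $8\pi\mathfrak{a}_N$ --- is the standard Dyson trick and is indeed used (it produces the effective two-body potential $\xi_N$ with $\int\xi_N\approx 4\pi\mathfrak{a}_N$). The novel cancellation is one order further down: when you expand both the normalization $\|F\phi_{z,\alpha}\|^{-2}$ and the Jastrow product in the numerator, you get two terms (called $V_1$ and $V_2$ in the paper) that are \emph{individually} of order $N^2\ell^2$. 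Since one must also keep $\ell^{-1}$ small, balancing these naively gives only an $O(N^{2/3})$ error, which is the accuracy of \cite{DeuchertSeiringerHom2020} and not enough to resolve the Bogoliubov and condensate-fluctuation terms. The paper shows that $V_1$ and $V_2$ cancel to leading order; this requires precise pointwise bounds on the two-- and four--body reduced densities $\rho^{(2)}_{z,\alpha},\rho^{(4)}_{z,\alpha}$ of the \emph{individual eigenfunctions} $\phi_{z,\alpha}$, and the ``different length scales'' enter because one localizes these densities at coinciding points using that $\mathrm{supp}\,\xi_N,\mathrm{supp}\,u_\ell\subset B_\ell$ with $\ell\ll\beta^{1/2}$. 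Your three-body estimate is also off: the Jastrow factor differs from $1$ on scale $\ell$ (ultimately $\ell=N^{-11/18}$), not $\mathfrak{a}/N$, so the relevant volume factors are powers of $\ell$, not $N^{-1}$.

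A second gap is the entropy. The Jastrow dressing is not a measure-preserving or unitary operation: each $\phi_{z,\alpha}$ is renormalized by $\|F\phi_{z,\alpha}\|^{-2}$, and the vectors $F\phi_{z,\alpha}$ for different $z$ are not orthogonal, so the eigenvalues of $\Gamma$ are not simply $\zeta(z)\widetilde\lambda_\alpha$. Because correlations are added to the eigenfunctions of $|z\rangle\langle z|\otimes G(z)$ rather than to those of $\widetilde\Gamma_0$ (a choice forced by the energy computation above), one cannot just quote entropy monotonicity. The paper handles this with a Jensen-inequality argument that exploits the completeness relation $\int_{\mathbb C}|z\rangle\langle z|\,\mathrm dz=\mathds 1$ for coherent states together with $F\le 1$; you would need to supply an argument of this kind.
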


 The terms on the right-hand side of \eqref{eq:main_result} appear in descending order according to their order of magnitude in the limit $N\to\infty$. The free energy $F_0^+(\beta, N)$ of the thermal cloud of the ideal gas is proportional to $N^{5/3}$. The second term is a density-density interaction, which is of order $N$. The third term represents the free energy of the interacting condensate. If $\kappa >1$, it contributes two terms, one of order $N$ and one of order $N^{2/3} \log N$. For $\kappa<1$ it is proportional to $N^{2/3}$. Finally, the term on the second line of \eqref{eq:main_result} is a correction to the free energy of the thermally excited particles coming from Bogoliubov theory. It is of order $N^{2/3}$ in the presence of a macroscopic condensate occupation ($\kappa>1$), and of order $ N^{-4/3}$ if $\kappa<1$. More details concerning the last two terms can be found in Remark~\ref{rmk:main_result} below.

The following proposition, which is proved in \cite[Proposition 1.2]{BocDeuStock2023}, allows us to simplify the right-hand side of \eqref{eq:main_result} in the parameter regimes strictly above and strictly below the critical point.

\begin{proposition}\label{prop:f_bec}
	We consider the limit $N\to\infty$, with $\beta = \kappa \beta_{\mathrm{c}}$, $\kappa\in(0,\infty)$ and $\beta_{\mathrm{c}}$ in \eqref{eq:Tc_ideal}. The following statements hold for given $\epsilon > 0$:
	\begin{enumerate}
		\item Assume that $N_0 \gtrsim N^{5/6 + \epsilon}$ and that $\mathfrak{a}_N > 0$. There exists a constant $c>0$ such that
		\begin{equation}\label{eq:f_bec_expansion_condensed}
			F^{ \mathrm{ BEC } } = 4\pi \mathfrak a_N N_0^2 + \frac 1 {2\beta} \log( 4 \beta \mathfrak a_N ) + \mathcal O( \exp(-cN^{\eps})).
		\end{equation}
	\item If $N_0 \lesssim N^{5/6 - \eps}$, then 
	\begin{equation}\label{eq:f_bec_expansion_noncondensed}
			F^{ \mathrm{ BEC } } = - \frac{1}{\beta} \log(N_0) - \frac 1 \beta + \mathcal O( N^{2/3 - 2\eps} ).
	\end{equation}
	\end{enumerate}
\end{proposition}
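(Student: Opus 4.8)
The plan is to reduce $F^{\mathrm{BEC}}$ to the asymptotics of a one-dimensional integral. Passing to polar coordinates in $\mathbb C$ and substituting $t=|z|^2$, the measure $\mathrm dz=\pi^{-1}\,\mathrm dx\,\mathrm dy$ turns $\int_{\mathbb C}f(|z|^2)\,\mathrm dz$ into $\int_0^\infty f(t)\,\mathrm dt$. Writing $a\coloneqq 4\pi\mathfrak a_N$ (positive in part~1), the partition function is $Z(\mu)=\int_0^\infty e^{-\beta(at^2-\mu t)}\,\mathrm dt$, which, on completing the square, has the closed form
\begin{equation*}
Z(\mu)=e^{\beta\mu^2/(4a)}\,\frac{\sqrt\pi}{2\sqrt{\beta a}}\left(1+\erf\!\left(\mu\sqrt{\tfrac{\beta}{4a}}\right)\right),
\end{equation*}
so that, once $\mu$ is known, $F^{\mathrm{BEC}}=-\beta^{-1}\log Z(\mu)+\mu N_0$ is fully explicit. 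The defining condition \eqref{eq:mu_definition} is $\beta^{-1}\partial_\mu\log Z(\mu)=N_0$; convexity of $\mu\mapsto\log Z(\mu)$ (a log-moment-generating function) makes its derivative strictly increasing, which gives the unique-solvability claim. It is convenient to set $\mu=2a\nu$: then the density $g$ is $\mathcal N(\nu,\sigma^2)$ with $\sigma^2\coloneqq(2\beta a)^{-1}$ restricted to $[0,\infty)$, and $N_0$ equals the mean $M(\nu)$ of this truncated Gaussian, a strictly increasing bijection of $\nu\in\mathbb R$ onto $(0,\infty)$. Throughout one uses $\beta\sim N^{-2/3}$ and $\mathfrak a_N=\mathfrak a/N$, hence $\beta a\sim N^{-5/3}$ and $\sigma\sim N^{5/6}$.

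In the condensed regime $N_0\gtrsim N^{5/6+\epsilon}$ the point is that the truncation at $t=0$ is negligible. First I would obtain a priori control on $\nu$: the explicit truncated-Gaussian mean obeys $\nu\le N_0=M(\nu)\le\nu+C\sigma$ for $\nu\ge0$, so $\nu\ge N_0/2$ for large $N$, and then the inverse-Mills-ratio bound upgrades this to $|N_0-\nu|=|M(\nu)-\nu|\lesssim\sigma\,e^{-c\nu^2/\sigma^2}=\mathcal O(e^{-cN^{\epsilon}})$, using $N_0^2/\sigma^2\gtrsim N^{2\epsilon}$ and absorbing polynomial factors. Consequently the error-function argument is $\gtrsim N^{\epsilon}$, $1+\erf(\cdots)=2+\mathcal O(e^{-cN^{\epsilon}})$, and $\log Z(\mu)=\beta\mu^2/(4a)-\tfrac12\log(\beta a/\pi)+\mathcal O(e^{-cN^{\epsilon}})$. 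Plugging this in and using the identity $-\mu^2/(4a)+\mu N_0=4\pi\mathfrak a_N N_0^2-a(\nu-N_0)^2$ together with $a(\nu-N_0)^2=\mathcal O(e^{-cN^{\epsilon}})$ and $\beta a/\pi=4\beta\mathfrak a_N$ yields \eqref{eq:f_bec_expansion_condensed}. The cancellation of the term linear in $\nu-N_0$ — which reflects that $\mu$ is a stationary point of $\mu\mapsto-\beta^{-1}\log Z(\mu)+\mu N_0$ — is what promotes the exponentially small error in $\nu$ to an exponentially small error in $F^{\mathrm{BEC}}$, rather than merely a polynomially small one.

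In the non-condensed regime $N_0\lesssim N^{5/6-\epsilon}$, $\mu$ is negative with $|\mu|$ large, and the quartic term $at^2$ is a perturbation. From the truncated-Gaussian picture with $\nu<0$ and $|\nu|/\sigma\to\infty$, the standard Mills-ratio expansion gives $M(\nu)=\sigma^2/|\nu|+\mathcal O(\sigma^4/|\nu|^3)$, so $N_0=M(\nu)$ forces $\beta|\mu|=2\beta a|\nu|=N_0^{-1}(1+\mathcal O(N^{-2\epsilon}))$; in particular $|\nu|\sqrt{\beta a}\sim N^{\epsilon}\to\infty$, which retroactively justifies the use of the expansion (a monotonicity/bootstrap step guarantees $\nu$ is in this range). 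Inserting $\mu\sqrt{\beta/(4a)}=-|\nu|\sqrt{\beta a}\to-\infty$ into the closed form for $Z$, the factor $e^{\beta\mu^2/(4a)}$ cancels the Gaussian tail $1+\erf(\cdots)$ and leaves $Z(\mu)=(\beta|\mu|)^{-1}(1+\mathcal O(N^{-2\epsilon}))$, whence $-\beta^{-1}\log Z(\mu)=-\beta^{-1}\log N_0+\mathcal O(N^{2/3-2\epsilon})$; combined with $\mu N_0=-\beta^{-1}+\mathcal O(N^{2/3-2\epsilon})$ this gives \eqref{eq:f_bec_expansion_noncondensed}.

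The main obstacle is the implicit definition of $\mu$: controlling $F^{\mathrm{BEC}}$ at all requires first pinning down the order of magnitude and leading term of $\mu$, which is what forces the bootstrap through monotonicity of the truncated-Gaussian mean $M(\nu)$ and the two-sided a priori bounds. The rest is routine bookkeeping of error terms — distinguishing the exponentially small error-function tail in the first regime from the power-law corrections (from the quartic term and from the subleading Mills-ratio terms) in the second.
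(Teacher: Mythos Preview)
Your proposal is correct. The paper does not give its own proof of this proposition but cites \cite[Proposition~1.2]{BocDeuStock2023}; however, your reduction to the one-dimensional truncated-Gaussian integral, the closed error-function formula for $Z(\mu)$, and the two-regime analysis via the (inverse) Mills ratio are exactly the computations that underlie the cited reference, and they match the machinery the present paper uses in Appendix~\ref{app:condensate_functional} (Lemma~\ref{lem:mu_tilde_condensate}) for the closely related analysis of $\widetilde\mu$. In particular your variables $\nu$ and $\nu\sqrt{\beta a}$ correspond to the paper's $\eta$ up to normalization, and the asymptotics \eqref{eq:erf_asymptotics} are precisely your Mills-ratio expansions.
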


Proposition \ref{prop:f_bec} describes a transition in the behavior of the effective theory of the interacting condensate. If $N_0 \gtrsim N^{5/6 + \eps}$, the free energy in \eqref{eq:free_energy_interacting_condensate} is given, up to a small remainder, by the usual density-density interaction plus a contribution of order $N^{2/3} \log(N)$, which is related to the free energy of the fluctuations of the number of condensed particles. We refer to part 4 of Remark~\ref{rmk:main_result} for further details. Both contributions are caused by the self-interaction of the condensate, as is evident from their dependence on the scattering length $\mathfrak a_N$. If instead $1 \ll N_0 \lesssim N^{5/6- \eps}$, the free energy of the condensate \eqref{eq:f_bec_expansion_noncondensed} equals the one of its non-interacting counterpart, up to $o(N^{2/3})$. The threshold arises from the fact that, when $N_0\sim N^{5/6}$, the interaction energy $ 4\pi \mathfrak a_N N_0^2 \sim N^{2/3}$ of the condensate becomes much smaller than $\beta^{-1}$ times the classical entropy $S^{\mathrm{cl}}$ of $g(z)$ (see \eqref{eq:free_energy_functional_condensate}), which is always of order $N^{2/3} \log N$ if $N^{\eps}\lesssim N_0 \lesssim N^{5/6}$. In the transition regime $N^{5/6 - \eps} \lesssim N_0 \lesssim N^{5/6 + \eps}$, the free energy of the condensate does not have a simple form as in \eqref{eq:f_bec_expansion_condensed} or \eqref{eq:f_bec_expansion_noncondensed}. 

With Proposition \ref{prop:f_bec} at hand, one readily checks that the minimum on the right-hand side of \eqref{eq:main_result} is attained by the first term if $\kappa\in(1, \infty)$ (condensed phase) and by the second if $\kappa\in(0,1)$ (non-condensed phase). This leads to the following reformulation of Theorem \ref{thm:main_thm}, which is better suited for a comparison with the existing literature.

\begin{corollary}\label{cor:main_result_simplified}
	Let $V:\mathbb R^3 \to [0,\infty]$ be a measurable, spherically symmetric and compactly supported function that is strictly positive on a set of positive measure. We consider the limit $N\to\infty$, with $\beta = \kappa \beta_{\mathrm{c}}$, $\kappa\in(0,\infty)$ and $\beta_{\mathrm{c}}$ in \eqref{eq:Tc_ideal}. If $\kappa\in(1,\infty)$, the free energy \eqref{eq:Interacting_Free_Energy} satisfies 
	\begin{equation}\label{eq:main_result_condensed}
		\begin{split}
			F(\beta, N) \le & F_0^+(\beta, N) + 4\pi \mathfrak{a}_N ( 2 N^2 - N_0^2) + \frac 1 {2\beta} \log( 4 \beta \mathfrak a_N )  \\
			& - \frac{1}{2\beta} \sum_{p\in\Lambda_+^*} \left[ \frac{16\pi \mathfrak{a}_N N_0 }{ |p|^2 } - \log\left( 1+ \frac{16\pi \mathfrak{a}_N N_0 }{ |p|^2 }   \right) \right] + \mathcal{O}(N^{11/18})
		\end{split}
	\end{equation} with $N_0$ and $F_0^+$ defined in \eqref{eq:n0_def} and \eqref{eq:free_energy_ideal_cloud}, respectively. If $\kappa\in(0,1)$ we have 
\begin{equation}\label{eq:main_result_non_condensed}
		\begin{split}
			F(\beta, N) \le & F_0(\beta, N) + 8\pi \mathfrak{a}_N  N^2 + \mathcal{O}(N^{11/18}),
\end{split}
\end{equation} with the free energy of the ideal gas $F_0(\beta, N)$ above \eqref{eq:free_energy_ideal_condensate}.
\end{corollary}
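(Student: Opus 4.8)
\emph{Plan.} The corollary follows directly from Theorem~\ref{thm:main_thm} and Proposition~\ref{prop:f_bec}. The idea is to specialize to the two regimes $\kappa\in(1,\infty)$ and $\kappa\in(0,1)$ and to eliminate the minimum in \eqref{eq:main_result} by the trivial inequality $\min\{a,b\}\le a$ (resp.\ $\le b$). First note that, since $V\ge 0$ is strictly positive on a set of positive measure, its scattering length $\mathfrak{a}$ is strictly positive (see Appendix~\ref{app:scattering}), so that $\mathfrak{a}_N=\mathfrak{a}/N>0$; in particular the quantity $\log(4\beta\mathfrak{a}_N)$ appearing in \eqref{eq:main_result_condensed} is well defined.

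\emph{Condensed phase.} For $\kappa\in(1,\infty)$, equation \eqref{eq:Tc_ideal} gives $N_0\simeq N(1-1/\kappa)$, hence $N_0\sim N$ and in particular $N_0\gtrsim N^{5/6+\eps}$ for any fixed $\eps\in(0,1/6)$. I would then invoke Proposition~\ref{prop:f_bec}(1) to write $F^{\mathrm{BEC}}=4\pi\mathfrak{a}_N N_0^2+\tfrac{1}{2\beta}\log(4\beta\mathfrak{a}_N)+\mathcal{O}(\exp(-cN^{\eps}))$, bound the minimum in \eqref{eq:main_result} from above by its first argument $F^{\mathrm{BEC}}-8\pi\mathfrak{a}_N N_0^2$ (which, by the order-of-magnitude discussion following the theorem, is actually the smaller of the two, although only the trivial inequality is used), substitute this expansion, and regroup the terms proportional to $\mathfrak{a}_N$ into $4\pi\mathfrak{a}_N(2N^2-N_0^2)$. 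Since $\exp(-cN^{\eps})$ is negligible compared to $N^{11/18}$, this yields exactly \eqref{eq:main_result_condensed}; the Bogoliubov correction term, which is of order $N^{2/3}$ in this regime, is carried along unchanged.

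\emph{Non-condensed phase.} For $\kappa\in(0,1)$ I would instead bound the minimum in \eqref{eq:main_result} from above by its second argument $F_0^{\mathrm{BEC}}$ and combine it with $F_0^+$ via the splitting $F_0=F_0^++F_0^{\mathrm{BEC}}$ recorded above \eqref{eq:free_energy_ideal_condensate}. Since $x-\log(1+x)\ge 0$ for $x\ge 0$, the Bogoliubov correction term $-\tfrac{1}{2\beta}\sum_{p\in\Lambda_+^*}[16\pi\mathfrak{a}_N N_0/|p|^2-\log(1+16\pi\mathfrak{a}_N N_0/|p|^2)]$ is nonpositive and may simply be discarded; alternatively, using $x-\log(1+x)\le x^2/2$, $\sum_{p\in\Lambda_+^*}|p|^{-4}<\infty$ and $\mathfrak{a}_N N_0\sim N^{-1}$, it is bounded in absolute value by $C\beta^{-1}(\mathfrak{a}_N N_0)^2\sim N^{-4/3}$, hence negligible. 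This gives \eqref{eq:main_result_non_condensed}.

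\emph{Main obstacle.} I do not expect any genuine obstacle: the content of the corollary is entirely contained in Theorem~\ref{thm:main_thm} and Proposition~\ref{prop:f_bec}. The only points requiring a word of justification are the positivity $\mathfrak{a}>0$, the choice of $\eps\in(0,1/6)$ making Proposition~\ref{prop:f_bec}(1) applicable when $\kappa>1$, and the elementary bookkeeping showing that the subleading errors $\exp(-cN^{\eps})$ and $N^{-4/3}$ are dominated by the $\mathcal{O}(N^{11/18})$ remainder inherited from the theorem.
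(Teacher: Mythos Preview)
Your proposal is correct and is essentially the argument the paper has in mind: it states just before the corollary that, with Proposition~\ref{prop:f_bec} at hand, one checks that the minimum in \eqref{eq:main_result} is attained by the first term when $\kappa>1$ and by the second when $\kappa<1$, which is exactly what you do via $\min\{a,b\}\le a$ (resp.\ $\le b$). Your observation that the extra hypothesis on $V$ guarantees $\mathfrak{a}_N>0$ (needed for Proposition~\ref{prop:f_bec}(1) and for $\log(4\beta\mathfrak{a}_N)$ to make sense) and your bookkeeping of the error terms are both in order.
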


At the critical point, corresponding to $\kappa = 1$, Proposition \ref{prop:f_bec} does not apply, and the minimum in \eqref{eq:main_result} is needed. We have the following remarks concerning the above results.

\begin{remark}\label{rmk:main_result}
	\begin{enumerate}
		\item The first two terms in \eqref{eq:main_result_condensed} were first identified for the dilute Bose gas in the thermodynamic limit in \cite{Yin2010} (upper bound) and \cite{Seiringer2008} (lower bound). An asymptotic expansion for the canonical free energy of the Bose gas in the GP limit was given, up to remainders of order $o(N)$, in \cite{DeuchertSeiringerHom2020}. The same expansion is, however, expected to hold in the grand canonical setting, and it coincides with the first two terms on the right-hand sides of \eqref{eq:main_result_condensed} and \eqref{eq:main_result_non_condensed}. The first upper bound capturing the third and fourth term on the right-hand side of \eqref{eq:main_result_condensed} was proved in \cite{BocDeuStock2023} for Bose gases interacting through sufficiently regular interaction potentials (of class $L^3$). In contrast to \cite{BocDeuStock2023}, we make no such regularity assumption, and our result applies, in particular, to the case of the hard-core interaction 
		\begin{equation} \label{eq:hardcore}
			V(x) = \begin{cases} 
				+\infty & \text{if } |x| \le \mathfrak a, \\
				0 & \text{otherwise}. 
			\end{cases}
		\end{equation} This generalization is the main contribution of the present article.
	
	\item Our proof of Theorem~\ref{thm:main_thm} is based on a trial state that is similar to the one used in \cite{BocDeuStock2023}. In particular, we use (up to technicalities) the same uncorrelated trial state. However, in the absence of integrability assumptions on the interaction potential, we need to describe the correlations between particles by a full Jastrow factor.  As a consequence our proof of the upper bound in \eqref{eq:main_result} is not an adaption of the one in \cite{BocDeuStock2023} and requires different arguments. One key step in our proof is a cancellation in the computation of the interaction energy that is similar in spirit to a cancellation observed in \cite{BCOPS2023} in the computation of the Lee--Huang--Yang correction to the ground state energy of the hard-sphere gas. To see this cancellation, we exploit the fact that the interaction between the particles lives on a much smaller length scale than the thermal wavelength $\beta^{1/2}$. Moreover, precise pointwise bounds for the reduced densities of our trial state without correlations and of its eigenfunctions are needed. In contrast, in \cite{BocDeuStock2023} it was possible to implement correlations between the particles with a truncated quartic (in creation and annihilation operators) transformation in Fock space. In combination with the use of suitable momentum cutoffs in the trial state, this allowed the authors of \cite{BocDeuStock2023} to obtain an upper bound for the free energy in a more direct way.
	\item The third term on the right-hand side of \eqref{eq:main_result_condensed} is the free energy of the fluctuations of the number of particles in the condensate. To explain this, we describe the condensate with a trial state of the form 
	\begin{equation}
		G_0 = \int_{\mathbb C} | z \rangle \langle z | \rho(z) \, \mathrm dz,
		\label{eq:A1}
	\end{equation} where $	| z \rangle = \exp( z a_0^* - \overline z a_0 ) \Omega$ is the usual coherent state on the $p=0$ mode and $\rho(z)$ is a probability density with respect to the measure $\mathrm d z$ introduced below \eqref{eq:free_energy_interacting_condensate}. This is motivated by the fact that a $c$-number substitution for one momentum mode is known to introduce only small corrections to the free energy, see for instance \cite{LSY2005}. If we take $4\pi \mathfrak a_N a_0^* a_0^* a_0 a_0$ as the effective interaction Hamiltonian of the condensate (i.e., we replace the potential by a renormalized one proportional to the scattering length), we can write the free energy of $G_0$ as
	\begin{equation}
	\label{eq:free_energy_functional_condensate}
	\mathcal F^{\mathrm{BEC}}(G_0) = 4\pi \mathfrak{a}_N  \int_{\mathbb C} \rho(z)|z|^4\, \mathrm d z - \frac 1 \beta S(G_0) \le 4\pi \mathfrak{a}_N  \int_{\mathbb C} \rho(z)|z|^4\, \mathrm d z - \frac 1 \beta S^{\mathrm{cl}}(\rho),
\end{equation} where $S^{\mathrm{cl}}(\rho) = - \int_{\mathbb C} \rho(z) \log( \rho(z) ) \, \mathrm d z $ denotes the classical entropy of $\rho$. The inequality in \eqref{eq:free_energy_functional_condensate} is a consequence of the Berezin-Lieb inequality, see \cite{Berezin1972, Berezin75, Lieb73}. If we minimize the right-hand side of \eqref{eq:free_energy_functional_condensate} under the constraint $\int_{\mathbb C} |z|^2 \rho(z)\, \mathrm d z = N_0$, we find $F^{\mathrm{BEC}}$ in \eqref{eq:free_energy_interacting_condensate}, with the unique minimizer $g(z)$ in \eqref{eq:condensate_density}. Using Proposition \ref{prop:f_bec}, we thus see that 
\begin{equation}\label{eq:fluctuation_term}
	\frac 1 {2\beta} \log(4\beta \mathfrak a_N) = 4\pi a_N \Big( \int_{\mathbb C} |z|^4 g(z) \, \mathrm d z - \Big(  \int_{\mathbb C} |z|^2 g(z)\, \mathrm dz \Big)^2 \Big) -\frac{1}{\beta} S^{ \mathrm{ cl } }(g) + \mathcal O(\exp(-cN^{\eps}))
\end{equation}
	if $N_0\gtrsim N^{5/6+\eps}$ for some $\eps>0$. That is, according to this effective theory, the third term on the right-hand side of \eqref{eq:main_result_condensed} indeed equals the free energy of the fluctuations of the number of particles in the condensate. 
	\item While the variance of the number of particles in the condensate of the ideal gas is of order $N^2$, for the Gibbs distribution $g$ we have 
	\begin{equation}\label{eq:g_variance}
		\mathrm{Var}_g(|z|^2) = \int_{\mathbb C} |z|^4 g(z) \, \mathrm d z - \Big(  \int_{\mathbb C} |z|^2 g(z) \Big)^2 \sim N^{5/3},
	\end{equation} provided $\kappa>1$. This decrease of the fluctuations of the number of condensed particles caused by the repulsive interaction between them is a well-known effect, see e.g. \cite{BuffetPule83, Davies72}. Motivated by the recent experimental realization \cite{SchmittWeitz2014} of a system with grand canonical number statistics, a discrete version of $g$ has been used in \cite{WurffStoof2014} to compute the size of the fluctuations of the number of condensed particles for a trapped gas. The computations in \cite{WurffStoof2014} could be rigorously justified by showing that $g(z)$ approximates $\tr [ |z \rangle \langle z | G]$, where $|z \rangle$ is the coherent state defined below \eqref{eq:A1} and $G$ is the interacting Gibbs state in \eqref{eq:Gibbs_state_int}. This interesting mathematical problem is, however, beyond the scope of our present investigation.
\item The term on the second line of \eqref{eq:main_result} is a correction to the free energy of the thermally excited particles, which is related to Bogoliubov theory. This can be seen with the following heuristic computation. In the first step we assume, for the sake of simplicity, that $V \in L^1(\Lambda)$. We start by writing the Hamiltonian in \eqref{eq:Hamiltonian_second_quant} in terms of
the usual creation and annihilation operators $a_p^*, a_p$ of a particle with momentum $p \in \Lambda^*$ as
\begin{equation*}
	\mathcal H_N = \sum_{p\in\Lambda^*_+} |p|^2 a_p^* a_p + \sum_{p,q,r\in\Lambda^*} \widehat V_N(r) a_{p+r}^* a_q^* a_p a_{q+r}.
\end{equation*} Replacing $a_0^*, a_0$ with $\sqrt N_0$, the potential $\widehat V_N(r)$ with its renormalized version $4\pi \mathfrak a_N$, and neglecting cubic and quartic terms in $a_p^*, a_p$, we arrive at the Bogoliubov Hamiltonian
\begin{equation*}
	\mathcal H^{\mathrm{Bog}} = \sum_{p\in\Lambda^*_+} |p|^2 a_p^* a_p + 4\pi \mathfrak a_N N_0 \sum_{p \in \Lambda_+^*} \Big( 2 a_p^* a_p + a_p^* a_{-p}^* + a_p a_{-p} \Big).
\end{equation*} A careful analysis shows that the grand potential $\Phi^{ \mathrm{ Bog } }(\beta, \mu_0)$ associated to $\mathcal H^{ \mathrm{ Bog} } - \mu_0 \mathcal{N}$ with the chemical potential $\mu_0$ in \eqref{eq:mu0_definition_implicit} satisfies (compare to Lemma \ref{lem:bogoliubov_free_energy_expansion})
\begin{equation*}
\begin{split}
		\Phi^{ \mathrm{ Bog } }(\beta, \mu_0) = & \frac 1 \beta \sum_{ p\in\Lambda_+^* }  \log \Big( 1 - e^{ \beta \sqrt{ |p|^2 - \mu_0} \sqrt{ |p|^2 - \mu_0 + 16\pi\mathfrak a_NN_0 } } \Big)\\
		=  & \frac 1 \beta \sum_{ p\in \Lambda_+^* } \log \big( 1- e^{-\beta(|p|^2-\mu_0)}  \big) + 8\pi \mathfrak a_N N_0( N - N_0 ) \\
	& - \frac{1}{2\beta} \sum_{ p\in \Lambda_+^*} \Bigg[ \frac{16\pi  \mathfrak a_N N_0}{|p|^2} - \log\Big( 1 +  \frac{16\pi \mathfrak a_N N_0}{|p|^2}   \Big) \Bigg] + o(N^{2/3}).
\end{split}
\end{equation*}  The first term on the right-hand side contributes to $F_0^+$, the second term to the density-density interaction $4\pi \mathfrak a_N(2N^2-N_0^2)$, and the third term appears on the second line of \eqref{eq:main_result}. 
\item Let us denote by $H_N^{(N)}$ the restriction of $\mathcal H_N$ to the $N$-particle sector of Fock space (see \eqref{eq:Hamiltonian_second_quant}), and by $E_0$ the ground state of $H_N^{(N)}$. It has been shown in \cite{BBCS2020} that the eigenvalues $E$ of $H_N^{(N)} - E_0$ that satisfy $E \ll N^{1/8}$ are well approximated by those of a Bogoliubov Hamiltonian. If we compare this threshold to the energy scale $\beta^{-1} \sim N^{2/3}$, which represents the energy per particle in our system, we see that the results in \cite{BBCS2020} are far from being sufficient to draw conclusions on the free energy in our setting. 
\item If we replace the torus $\Lambda = [-1/2,1/2]^3$ by $\Lambda_{L} = [-L/2,L/2]^3$ with fixed $L>0$, Theorem~\ref{thm:main_thm} and a scaling argument imply a similar upper bound for the grand canonical free energy in this setting. In this case, the term on the second line of \eqref{eq:main_result} reads 
	\begin{equation*}\label{eq:main_result_L}
	\begin{split}
		- \frac{1}{2\beta} \sum_{p\in \frac {2\pi} L \mathbb Z^3\setminus \{0\} } \left[ \frac{16\pi \mathfrak{a}_N \rho_0(\beta, N, L) }{ |p|^2 } - \log\left( 1+ \frac{16\pi \mathfrak{a}_N \rho_0(\beta, N, L) }{ |p|^2 }   \right) \right] .
	\end{split}
\end{equation*} If we replace $\mathfrak a_N$ by $\mathfrak a$, divide the term in \eqref{eq:main_result_L} by $L^3$ and take a formal thermodynamic limit (i.e. letting $N, L\to\infty$ with $\rho = N/ L^3$ fixed), we obtain
\begin{equation*}
	-\frac{1}{2\beta ( 2\pi )^3 } \int_{\mathbb R^3} \left[ \frac{16\pi \mathfrak{a} \rho_0 }{ |p|^2 } - \log\left( 1+ \frac{16\pi \mathfrak{a} \rho_0 }{ |p|^2 }  \right) \right]  \, \mathrm d p = -\frac {16\sqrt \pi} {3\beta} (\mathfrak a\rho_0)^{3/2}.
\end{equation*} The right-hand side has been conjectured to appear in the asymptotic expansion of the specific free energy in the dilute limit, see \cite[Theorem 11]{NapReuSol2017}.
\item The minimum on the right-hand side of \eqref{eq:main_result} is needed, because $F^{ \mathrm{ BEC } }$ does not accurately describe the free energy of the condensate if $N_0 \sim 1$, see \eqref{eq:bounds_minimum}. This is related to the fact that we approximate the discrete random variable associated with the operator $a_0^* a_0$ with a continuous one. 
\item We state and prove Theorem \ref{thm:main_thm} with $\kappa \in (0, \infty)$ fixed. However, a straightforward adaptation of our proof applies to the case in which $\kappa$ depends on $N$, as long as $\kappa \gtrsim 1$. In particular, it is possible to take a zero-temperature limit (corresponding to $\kappa\to\infty$). 
\item We expect the upper bound in Theorem \ref{thm:main_thm} to be sharp. That is, we expect it to be possible to prove a matching lower bound, up to remainders of order $o(N^{2/3})$.
\item A similar expansion as in \eqref{eq:main_result_condensed} is expected to hold for the interacting canonical free energy if $F_0^+$ is replaced by the canonical free energy of the ideal gas and $F^{ \mathrm{ BEC } }$ by $4\pi \mathfrak a_N N_0^2$. The reason for the latter replacement lies in the fact that, in the canonical ensemble, the variance of the number of particles in the condensate is expected to be of the order $N^{4/3}$ if $\beta = \kappa \beta_{\mathrm{c}}$ with $\kappa>1$ (This is the order of magnitude of these fluctuations in the ideal gas.). When we compare this to \eqref{eq:fluctuation_term} and \eqref{eq:g_variance}, we see that these fluctuations are too small (when compared to $\beta^{-1}$ times the entropy) to generate a contribution to the free energy of the order $N^{2/3}$ or $N^{2/3} \ln(N)$. We refer the reader to \cite{ChaDia2014} for a detailed analysis of the condensate fluctuations in the canonical ideal gas.
\end{enumerate}
\end{remark}

\subsection{Outline of the article}
To prove Theorem~\ref{thm:main_thm}, we apply a trial state argument with two distinct trial states corresponding to the regimes of high and low occupation of the condensate, respectively. The analysis of the former parameter regime is considerably more difficult, and we therefore focus on it. The adaption of the proof to the (simpler) case of low condensate occupation is discussed at the end of Section~\ref{sec:proof_main_theorem}.

In Section~\ref{sec:trial_state} we define the trial state and we prove some of its properties that are needed for the computation of the free energy. In particular, we prove pointwise bounds for the two- and four-body reduced densities of our trial state and of its eigenfunctions. These estimates determine their leading order behavior as $N \to \infty$. In Section~\ref{sec:energy} we provide an upper bound for the energy. One main step in our proof is the use of the pointwise bounds for the reduced densities from Section~\ref{sec:trial_state} to establish in Section~\ref{sec:analysisoftherenormalizedinteraction} a cancellation between the numerator and the denominator of the effective interaction energy. An upper bound for the entropy is provided in Section~\ref{sec:entropy}. In Section~\ref{sec:proof_main_theorem} we collect the results from Sections~\ref{sec:energy} and \ref{sec:entropy} and give a proof of Theorem \ref{thm:main_thm}. 

To not disrupt the main line of the argument, we defer some technical lemmas to the Appendix. In Appendix \ref{app:scattering} we recall some properties of the solution to the scattering equation. Appendix~\ref{app:condensate_functional} contains useful estimates for the effective chemical potential in the BEC. Finally, the expected number of particles in our trial state is computed in Appendix~\ref{app:perturbation_number_particles}. 

\section{The trial state} \label{sec:trial_state}

In this section we define our trial state, which consists of the following parts: (a) the Gibbs state of a temperature-dependent Bogoliubov Hamiltonian that describes the thermally excited particles, (b) a suitable convex combination of coherent states describing the BEC, and (c) a correlation structure given by a Jastrow factor. In Section~\ref{sec:PropertiesTrialState} we state and prove several lemmas that are needed in the computation of the free energy of our trial state. Before constructing the trial state, we recall some definitions concerning the formalism of second quantization, which also allows us to set some notation.

\subsection{Second quantization}

An important class of operators on $\mathfrak F$ is given by the creation and annihilation operators $a^*(f)$, $a(f)$ of a one-particle wave function $f$. They satisfy the canonical commutation relations (CCR) 
\begin{equation*}
	[ a(f), a(g)^* ] = \langle f, g\rangle, \quad \quad  [a(f),a(g)] = 0 = [a^*(f),a^*(g)]
\end{equation*}
for every $f, g\in L^2 (\Lambda)$. 
In the special case $f (x)= \ph_p(x) = e^{\mathrm{i}p\cdot x}$ with $p\in 2\pi \mathbb Z^3$ we write $a_p = a(\ph_p)$. We also introduce the operator-valued distributions $a_x^*, a_x$ creating and annihilating a particle at a point $x \in \mathbb{R}^3$, respectively, which satisfy the CCR $ [ a_x, a_y^* ] = \delta(x-y) $, $[ a_x, a_y ] = 0 = [ a_x^*, a_y^* ]$ for $x,y\in\Lambda$. Here $\delta(x)$ denotes Dirac's delta distribution with unit mass at the origin.

To be able to distinguish between the condensate and the thermally excited particles, we introduce the Fock spaces
\begin{equation*}
	\mathfrak F_0 = \bigoplus_{n\ge 0} \Span\{\varphi_0\} ^{\otimes^n},	\qquad \mathfrak{F}_+ = \bigoplus_{n\ge 0} L^2_\perp(\Lambda)^{\otimes_{\mathrm {s}}^n},
\end{equation*} where $\varphi_0$ is the (normalized) constant function on $\Lambda$, and $L^2_\perp(\Lambda)$ denotes the orthogonal complement of $\Span\{ \varphi_0 \}$ in $L^2(\Lambda)$. We denote by $\Omega_0$ and $\Omega_+$ the vacuum vectors in $\mathfrak F_0$ and $\mathfrak F_+$, respectively. The full Fock space can be identified with the tensor product $\mathfrak{F}_0\otimes\mathfrak{F}_+$ thanks to the unitary equivalence $\mathfrak F = \mathcal U (\mathfrak F_0 \otimes \mathfrak F_+)$ defined by $\Omega = \mathcal U (\Omega_0 \otimes \Omega_+)$, where $\Omega$ denotes the vacuum vector in $\mathfrak F$, and 
\begin{equation}\label{eq:definitionU}
	\begin{split}
		\mathcal U^* a(\varphi_0\oplus 0) \mathcal U = & a(\varphi_0) \otimes \mathds 1, \\
		\mathcal U^* a(0\oplus f) \mathcal U = & \mathds 1 \otimes a(f), 
	\end{split}
\end{equation} for every $f\in L^2(\Lambda)$.

\subsection{Definition of the trial state}

We are now prepared to give the definition of our trial state. On the excitation Fock space $\mathfrak F_+$, we define the Bogoliubov Hamiltonian
\begin{equation}\label{eq:bogoliubov_ham}
	\mathcal H_{\mathrm{B}} (z) = \sum_{p\in\Lambda^*_+} (|p|^2 -\mu_0) a_p^* a_p + 4\pi \mathfrak a_N N_0 \sum_{ p\in P_{\mathrm B} } \left[  2 a_p^* a_p + \frac{z^2}{|z|^2} a_p^* a_{-p}^* + \frac{\overline{z}^2}{|z|^2} a_{p} a_{-p} \right]
\end{equation}  
with $z\in \mathbb C$, $\mu_0(\beta,N)$ in \eqref{eq:mu0_definition_implicit} and $N_0(\beta,N)$ in \eqref{eq:n0_def}. It is important to note that $\mathcal H_{\mathrm{B}} (z)$ depends on $\beta$ via the latter two quantities. The momentum set $P_{\mathrm{B}}$ in the second sum is defined by 
\begin{equation*}
	P_{\mathrm B} = \{ p\in \Lambda_+^* \mid |p| \le N^{\delta_{\mathrm{Bog}}} \}
\end{equation*} with some $\delta_{ \mathrm{ Bog } }>0$ that will be chosen later (independently of $N$). The Hamiltonian in \eqref{eq:bogoliubov_ham} can be diagonalized with a (unitary) Bogoliubov transformation $T_z$, that is,
\begin{equation}\label{eq:HbogDiagonal}
	T_z^* \mathcal H_{ \mathrm B }(z) T_z = \mathcal H^{ \mathrm {diag} } = E_0 + \sum_{p\in\Lambda_+^*} \eps(p) a_p^* a_p,
\end{equation} with $E_0, \eps(p) \in \mathbb R$ (precise definitions will be given later in Section \ref{subsec:bogo_trafo}).  

The thermally excited particles will be described by the following Gibbs states related to $\mathcal H_{\mathrm{B}} (z)$:
\begin{equation}\label{eq:gibbs_state_bog_diag}
	G^{\mathrm{diag}} \coloneqq \frac{ \exp(- \beta \mathcal H ^{\mathrm{diag}} ) }{\tr_{\mathfrak{F}_+} \left[  \exp(- \beta  \mathcal H ^{\mathrm{diag}}  ) \right]} , \qquad 	\widetilde G^{\mathrm{diag}} \coloneqq \frac{ \mathds P_{\widetilde c} \exp(- \beta \mathcal H ^{\mathrm{diag}} ) }{\tr_{\mathfrak{F}_+} \left[  \mathds P_{\widetilde c} \exp(- \beta \mathcal H ^{\mathrm{diag}} ) \right]}.
\end{equation} 
Here the spectral projection $\mathds P_{\widetilde c}$ is defined as
\begin{equation}
	\label{eq:cutoff_definition}
	\mathds P_{\widetilde c} \coloneqq \mathds{1}_{ \{ \mathcal N^< \le \widetilde c \beta^{-1} N^{ \delta_{ \mathrm{ Bog } } } \} } \mathds{1}_{ \{ \mathcal N^> \le \widetilde c N \} },
\end{equation} 
with some $\widetilde c>1$ to be specified later and with the number operators
\begin{equation}
	\label{eq:N<>_def}
	\mathcal N^{<} \coloneqq \sum_{p\in P_{ \mathrm{ B } } } a_p^* a_p \quad \text{ and } \quad \mathcal N^> \coloneqq \sum_{p\in ( \Lambda_+^* \setminus P_{ \mathrm{ B } } ) } a_p^* a_p.
\end{equation}
Here and in the following we introduce all states once with and once without a particle number cutoff. This is motivated by the fact that both objects appear frequently in the computation of the free energy of our final trial state. To make the identification of quantities with cutoff easier, we always denote them with a tilde. The states in \eqref{eq:gibbs_state_bog_diag}, when transformed by $T_z$, are denoted by
\begin{equation}\label{eq:gibbs_state_bog}
	G(z) \coloneqq \frac{ \exp(- \beta \mathcal H_{\mathrm{B}}(z) ) }{\tr_{\mathfrak{F}_+} \left[   \exp(- \beta  \mathcal H_{\mathrm{B}}(z)  ) \right]} = T_z G^{\mathrm{diag}} T_z^* \quad \text{ and } \quad 	\widetilde G(z) \coloneqq T_z \widetilde G^{\mathrm{diag}} T_z^*.
\end{equation}  
Finally, the uncorrelated trial states read
\begin{equation} \label{eq:gamma_0_def} 
	\Gamma_0 =  \mathcal U \left( \int_{\mathbb C} \zeta(z)  | z \rangle \langle z | \otimes G(z) \, \mathrm d z \right) \mathcal U^*, \quad \quad \widetilde \Gamma_0 = \mathcal U \left( \int_{\mathbb C} \zeta(z)  | z \rangle \langle z | \otimes \widetilde G(z) \, \mathrm d z \right) \mathcal U^*, \\ 
\end{equation} 
with $\mathcal{U}$ in \eqref{eq:definitionU} and the coherent state $|z\rangle = W_z \Omega_0 = \exp(z a^*(\ph_0) - \overline z a(\ph_0))\Omega_0 \in \mathfrak F_0$. The probability density $\zeta(z)$ on $\mathbb{C}$ with respect to the measure $\de z = \de x \de y/\pi$ with $z = x+\mathrm{i} y$ is given by 
\begin{equation}\label{eq:condensate_density_cutoff}
	\zeta(z) = \frac {\mathds{1}_{\{|z|^2\le \widetilde c N\}}\exp( -\beta(4\pi \mathfrak a_N |z|^4 - \widetilde \mu |z|^2 ) )} { \int_{ \{ |z|^2 \le \widetilde c N \}}\exp( -\beta(4\pi \mathfrak a_N |z|^4 - \widetilde\mu |z|^2 ) ) \, \mathrm dz }. 
\end{equation} The chemical potential $\widetilde \mu = \widetilde \mu(N) \in \mathbb{R}$ will be chosen later such that our final trial state has the correct particle number. All particle number cutoffs we have introduced so far are needed for technical reasons. The restriction of the momenta in the sum in the interaction terms in \eqref{eq:bogoliubov_ham} to the set $P_{\mathrm{B}}$ is very convenient from a mathematical point of view and still allows us to obtain the term in the second line in \eqref{eq:main_result}. This is possible because $\epsilon(p) \simeq p^2-\mu_0$ for $|p| \gg 1$. In \cite{BocDeuStock2023} the state $\Gamma_0$ in \eqref{eq:gamma_0_def} has been used as uncorrelated trial state, while we are forced to work with $\widetilde \Gamma_0$ instead. This is related to the fact that we implement correlations between the particles differently than in \cite{BocDeuStock2023}. 

It remains to add microscopic correlation induced by the interaction $v_N$ to our trial state. To that end, we first apply the spectral theorem and write $G^{\mathrm{diag}} = \sum_{\alpha\in\mathcal A} \lambda_\alpha | \Psi_\alpha \rangle \langle \Psi_\alpha |$. We assume that each $\Psi_{\alpha}$ is a symmetrized product of plane waves with a definite particle number. This choice is possible because of \eqref{eq:HbogDiagonal}, and it is important for our analysis. We highlight that $\{ \Psi_{\alpha} \}_{\alpha}$ is a basis that jointly diagonalizes $\mathcal H^{ \mathrm {diag} }$, $\mathcal{N}$, $\mathcal{N}^<$ and $\mathcal{N}^>$, i.e.
\begin{equation}
	\label{eq:Eigenvalues_Psi_alpha}
	\mathcal H^{\mathrm {diag} } \Psi_\alpha = E_\alpha \Psi_\alpha , \qquad \mathcal N \Psi_\alpha = N_\alpha \Psi_\alpha , \qquad \mathcal N^< \Psi_\alpha = N_\alpha^< \Psi_\alpha , \qquad \mathcal N^> \Psi_\alpha = N_\alpha^> \Psi_\alpha,
\end{equation} 
where $N_\alpha = N_\alpha^< + N_\alpha^>$. In this representation the particle number cutoff in the definition of $\widetilde G^{\mathrm{diag}}$ amounts to restricting the sum over $\alpha$ to the set 
\begin{equation} \label{eq:A_tilde_def}
	\widetilde {\mathcal A} = \{ \alpha \in \mathcal A \mid N_\alpha^< \le \widetilde c \beta^{-1} N^{ \delta_{ \mathrm{ Bog } } } \text{ and } N_\alpha^> \le \widetilde c N \}
\end{equation} 
and to normalizing by the factor $\kappa_0 \coloneqq \sum_{\alpha\in\widetilde {\mathcal A}} \lambda_\alpha$. The eigenvalues of $\widetilde G^{\mathrm{diag}}$ therefore read $\widetilde{\lambda}_{\alpha} = \kappa_0^{-1} \lambda_{\alpha}$. 

We define the correlation structure in terms of the solution $f(|x|)$ to the zero energy scattering equation $\Delta f(|x|) = V_N(x) f(|x|)/2$ in $\mathbb{R}^3$ with the boundary condition $\lim_{|x| \to \infty} f(|x|) = 1$. Let
\begin{equation}
	f_{\ell}(x) = \begin{cases}
		f(|x|)/f(\ell) & \text{ for } |x| < \ell, \\ 1 & \text{ for } |x| \geq \ell,
	\end{cases}
	\label{eq:Definitionfl}
\end{equation}
where the parameter $\ell > 0$ is required to be strictly larger than the radius of the support of $V_N$. In the following, we assume that $\ell$ is at least twice as large as that radius. This, in particular, implies $\ell \geq 2 \mathfrak{a}_N$. We also define the operator $F$ on $\mathfrak{F}$ by
\begin{align}
	(F \Psi)^{(n)}(x_1,...,x_n) &= F_n(x_1,...,x_n) \Psi^{(n)}(x_1,...,x_n) \quad \text{ with } \nonumber \\ 
	F_n(x_1,...,x_n) &= \prod_{1 \leq i < j \leq n} f_{\ell}(x_i - x_j).
	\label{eq:jastrow_operator_def}
\end{align}
That is, $F$ multiplies each $n$-particle component of a Fock space vector $\Psi$ by a Jastrow factor. This should be compared with \cite{Jastrow1955,Dyson57,ErdSchlYau2010,DeuchertSeiringerHom2020,MaySei2020}. Finally, our trial state with correlations is given by
\begin{equation}\label{eq:trial_state}
	\Gamma =  \int_{\mathbb C } \zeta(z)  \sum_{\alpha\in \widetilde{ \mathcal A}}\frac {  \widetilde \lambda_\alpha } { \| F \mathcal U \phi_{z,\alpha} \|^2 }  | F \mathcal U \phi_{z,\alpha} \rangle \langle  F \mathcal U \phi_{z,\alpha}  |\, \mathrm d z
\end{equation} 
with $\mathcal{U}$ in \eqref{eq:definitionU} and $ \phi_{z,\alpha} = |z \rangle \otimes T_z \Psi_{\alpha} $.  

For $\Gamma$ to be an admissible trial state in the Gibbs variational principle, we require that $\tr[\mathcal N \Gamma] = N$ holds. This is not a trivial matter because (a) the chemical potential in the definition of $\widetilde{G}(z)$ is fixed and (b) the correlation structure changes the particle number of $\Gamma$ with respect to that of $\widetilde \Gamma_0$ because $\widetilde \Gamma_0$ and $\mathcal{N}$ do not commute. Under suitable assumptions on the parameters, the following lemma guarantees the existence of $\widetilde \mu \in\mathbb R$ such that $\Gamma$ is an admissible trial state.
\begin{lemma}
\label{lem:perturbation_number_particles}
We consider the combined limit $N\to\infty$, $\beta \gtrsim \beta_{\mathrm{c}}$ with $\beta_{\mathrm{c}}$ in \eqref{eq:Tc_ideal} and assume $N_0 \geq N^{2/3}$, $\delta_{ \mathrm{ Bog } }<1/12$ as well as that $\widetilde{c}$ is sufficiently large. Then there are constants $c, M > 0$ such that if $2 \mathfrak a_N \le \ell \le c N^{-7/12}$ and $N \geq M$ the following holds: There exists $\widetilde \mu \in \mathbb{R}$ such that the state $\Gamma$ in \eqref{eq:trial_state} satisfies $\tr[\mathcal N \Gamma] = N$ and we have the bound 
\begin{equation}\label{eq:difference_particle_number}
	\big | \tr [ \mathcal N \Gamma ] - \tr[ \mathcal N \widetilde \Gamma_0] \big | \lesssim N^{3}\ell^4 + N^{ 1 + \delta_{ \mathrm{ Bog } }}\ell^2 ( \beta^{-1} + 1 )
\end{equation} 
with $\widetilde \Gamma_0$ in \eqref{eq:gamma_0_def}.
\end{lemma}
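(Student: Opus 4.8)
The plan is to compute $\tr[\mathcal N \Gamma]$ by expanding the Jastrow factor and tracking how the correlation structure perturbs the particle number away from its value $\tr[\mathcal N \widetilde\Gamma_0]$, then to invoke a continuity/monotonicity argument in $\widetilde\mu$ to fix the particle number exactly at $N$. First I would write, for each fixed $z$ and $\alpha$,
\[
\frac{\langle F\mathcal U\phi_{z,\alpha}, \mathcal N\, F\mathcal U\phi_{z,\alpha}\rangle}{\|F\mathcal U\phi_{z,\alpha}\|^2}
= \frac{\langle \phi_{z,\alpha}, \mathcal U^*F\mathcal N F\mathcal U\,\phi_{z,\alpha}\rangle}{\langle \phi_{z,\alpha}, \mathcal U^*F^2\mathcal U\,\phi_{z,\alpha}\rangle},
\]
and note that since $\mathcal N$ commutes with multiplication operators (it acts as $n$ on the $n$-particle sector), the numerator minus $N_\alpha' \coloneqq$ (the particle number of $\phi_{z,\alpha}$, roughly $|z|^2 + N_\alpha$) times the denominator is zero before integration over $z$; the genuine shift comes from the fact that $\phi_{z,\alpha}$ is \emph{not} an eigenstate of $\mathcal N$ — the coherent state $|z\rangle$ has Poissonian number statistics. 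So I would instead write $\mathcal N = \mathcal N_0 + \mathcal N_+$ in the $\mathcal U$-splitting, use that $F$ commutes with $\mathcal N_0 + \mathcal N_+ = \mathcal N$ on each sector, and reduce the whole computation to controlling $\|F\mathcal U\phi_{z,\alpha}\|^2$ and $\langle (\mathcal N - \langle\mathcal N\rangle)\rangle$-type quantities. The cleanest route is: bound $\big|\,\|F\mathcal U\phi_{z,\alpha}\|^2 - 1\,\big|$ and $\big|\langle F\mathcal U\phi_{z,\alpha}, \mathcal N F\mathcal U\phi_{z,\alpha}\rangle - \langle \mathcal U\phi_{z,\alpha},\mathcal N\,\mathcal U\phi_{z,\alpha}\rangle\big|$ by expanding $F^2 = \prod_{i<j} f_\ell^2(x_i-x_j) = \prod_{i<j}(1 - w_\ell(x_i-x_j))$ with $w_\ell = 1 - f_\ell^2$ supported in a ball of radius $\ell$ and satisfying $0 \le w_\ell \le 1$, $\int w_\ell \lesssim \mathfrak a_N \ell^2 + \mathfrak a_N^2 \ell \lesssim \mathfrak a_N \ell^2$ (from the scattering-equation estimates in Appendix~\ref{app:scattering}).

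The combinatorial heart is then a cluster-type expansion: $\prod_{i<j}(1 - w_\ell(x_i-x_j)) = 1 - \sum_{i<j} w_\ell(x_i-x_j) + (\text{higher clusters})$. Taking expectation in $\mathcal U\phi_{z,\alpha}$ (a coherent-state-times-plane-wave product), the leading correction to $\|F\mathcal U\phi_{z,\alpha}\|^2$ is $-\tfrac12\sum_{i\ne j}\langle w_\ell(x_i-x_j)\rangle$, which by the pointwise bounds on the two-body reduced density of $\widetilde\Gamma_0$ and of the $\phi_{z,\alpha}$ (these are exactly the pointwise reduced-density bounds from Section~\ref{sec:trial_state}, announced in the outline) is of size $\lesssim N^2 \int w_\ell \lesssim N^2 \mathfrak a_N \ell^2 \sim N\ell^2$, and similarly higher clusters with $k$ points cost a factor $N^k (\mathfrak a_N\ell^2)^{k-1}\cdot(\text{extra }\mathfrak a_N\ell^2)$, i.e. are subleading under $\ell \lesssim N^{-7/12}$. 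For the numerator, inserting $\mathcal N$ costs at most one extra factor of the local particle density; the momenta in $P_{\mathrm B}$ contribute the $N^{1+\delta_{\mathrm{Bog}}}\ell^2$ term (each low-momentum excitation carries weight $\lesssim \beta^{-1}N^{\delta_{\mathrm{Bog}}}$, i.e. $\mathcal N^<$ has expectation of that order in $\widetilde G(z)$), while the high-momentum sector and the condensate mode contribute the $N^3\ell^4$ term (two-cluster corrections weighted by $\mathcal N \sim N$ each, giving $N\cdot N^2\cdot(\mathfrak a_N\ell^2)^2 / \mathfrak a_N \sim N^3\ell^4$ after using $\mathfrak a_N \sim N^{-1}$). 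I would assemble these into the bound \eqref{eq:difference_particle_number}; the bookkeeping of which cluster/which sector produces which of the two error terms is the only delicate part.

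Finally, to produce $\widetilde\mu$ with $\tr[\mathcal N\Gamma] = N$ exactly: the map $\widetilde\mu \mapsto \tr[\mathcal N\Gamma(\widetilde\mu)]$ is continuous (dominated convergence, using the cutoff $|z|^2 \le \widetilde c N$ and $\widetilde{\mathcal A}$ finite-dimensional restrictions to make everything manifestly finite), and by the explicit form of $\zeta$ in \eqref{eq:condensate_density_cutoff} it is strictly increasing and sweeps through a window around $\tr[\mathcal N\widetilde\Gamma_0]$ of width $\gg$ the right-hand side of \eqref{eq:difference_particle_number} as $\widetilde\mu$ ranges over an $O(N^{-1/3})$-neighborhood of $\widetilde\mu_0$ — this uses that $\mathrm{Var}_\zeta(|z|^2) \sim N^{5/3}$ (cf.\ \eqref{eq:g_variance}), so $\partial_{\widetilde\mu}\tr[\mathcal N\widetilde\Gamma_0] = \beta\,\mathrm{Var}_\zeta(|z|^2) + (\text{cloud variance}) \gtrsim N^{5/3}\beta \sim N$, hence moving $\widetilde\mu$ by $O(1)$ changes the particle number by $\Omega(N)$, comfortably dominating the $O(N^3\ell^4 + N^{1+\delta_{\mathrm{Bog}}}\ell^2(\beta^{-1}+1))$ perturbation. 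An application of the intermediate value theorem then yields the desired $\widetilde\mu$, and the proof of Lemma~\ref{lem:perturbation_number_particles} is complete. The main obstacle is the pointwise control of the reduced densities entering the cluster expansion uniformly in $z$ and $\alpha \in \widetilde{\mathcal A}$ — which is why those bounds are isolated in Section~\ref{sec:trial_state} and simply cited here.
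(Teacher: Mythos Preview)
Your proposal has a genuine gap at the core step. You correctly observe in your first paragraph that $[F,\mathcal N]=0$ and that the shift in the expectation of $\mathcal N$ comes entirely from the fact that $\phi_{z,\alpha}$ is not a number eigenstate. But you then abandon this and propose to bound $\big|\|F\phi_{z,\alpha}\|^2-1\big|$ and $\big|\langle F\phi_{z,\alpha},\mathcal N F\phi_{z,\alpha}\rangle-\langle\phi_{z,\alpha},\mathcal N\phi_{z,\alpha}\rangle\big|$ \emph{separately}. That does not work: the first-order cluster term in each of these is of size $N\ell^2$ and $N\cdot N\ell^2 = N^2\ell^2$ respectively, so the resulting bound on the ratio is $O(N^2\ell^2)$, which for $\ell\sim N^{-7/12}$ is $N^{5/6}$ --- strictly larger than the target $N^3\ell^4 + N^{1+\delta_{\mathrm{Bog}}}\ell^2(\beta^{-1}+1)\lesssim N^{2/3}$. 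Your back-of-envelope attributions of the two error terms (low momenta $\to N^{1+\delta_{\mathrm{Bog}}}\ell^2$, two-cluster $\to N^3\ell^4$) do not account for the leading one-cluster contribution at all, and the arithmetic in your two-cluster estimate is also off.

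What the paper actually does (see Appendix~\ref{app:perturbation_number_particles}) is keep the leading one-cluster contributions from numerator and denominator \emph{together} and show that they cancel: expanding both, one finds
\[
\tr[\mathcal N\Gamma]-\tr[\mathcal N\widetilde\Gamma_0]
= \tfrac12\int\zeta\sum_{\alpha}\widetilde\lambda_\alpha\Big(\int\rho^{(1)}_{z,\alpha}\Big)\Big(\int u_{12}\rho^{(2)}_{z,\alpha}\Big)
-\tfrac12\int u_{12}\,\rho^{(3)}_{\Gamma_0}
+O(N\ell^2+N^3\ell^4),
\]
and then, after localizing $u_\ell$ at coincident points and applying Wick's theorem, the two displayed terms are shown to have the \emph{same} polynomial in $|z|^2$ and $\tr[\mathcal N_+G^{\mathrm{diag}}]$ with opposite signs (equations \eqref{eq:N_cancellation_positive}--\eqref{eq:N_cancellation_negative}). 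This is precisely the ``simpler version of the cancellation'' flagged in the paragraph after the lemma statement, mirroring Lemma~\ref{lem:V1+V2_cancellation}. The residual errors after this cancellation --- coming from the Taylor remainder in localizing $u_\ell$, from the subleading terms in $\rho^{(2)}_{z,\alpha}$ (this is where the $N^{1+\delta_{\mathrm{Bog}}}\ell^2(\beta^{-1}+1)$ arises, via \eqref{eq:rho2za_asymptotics}), and from the second-order cluster --- are what give the bound \eqref{eq:difference_particle_number}. Without exhibiting this cancellation your argument cannot close.

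A secondary issue: for the existence of $\widetilde\mu$, you assert that $\widetilde\mu\mapsto\tr[\mathcal N\Gamma(\widetilde\mu)]$ is strictly increasing. This is not obvious, since $\widetilde\mu$ enters only through $\zeta$ and the Jastrow normalizations $\|F\phi_{z,\alpha}\|^{-2}$ depend on $z$ in a nontrivial way. The paper avoids this by using only continuity (dominated convergence) together with the fact that $M(\widetilde\mu)=\int|z|^2\zeta(z)\,\dd z$ ranges over $(0,\widetilde c N)$ as $\widetilde\mu$ varies over $\mathbb R$; combined with the uniform-in-$\widetilde\mu$ bound \eqref{eq:difference_particle_number} and the hypothesis $N_0\ge N^{2/3}$, the intermediate value theorem then suffices.
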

In the proof of the above lemma we use a simpler version of a cancellation that we observe in the computation of the energy of $\Gamma$ in the proof of Proposition \ref{prop:renormalized_potential}. To not dilute the main line of the argument, we therefore defer it to Appendix \ref{app:perturbation_number_particles}.

\begin{remark}
As is apparent from the assumption in the above lemma, the trial state $\Gamma$ is only well defined for inverse temperatures such that $N_0(\beta,N) \geq N^{2/3}$ holds. To obtain a proof of Theorem \ref{thm:main_thm} for all inverse temperatures satisfying $\beta \gtrsim \beta_{\mathrm{c}}$, we use a second and much simpler trial state in the parameter regime defined by $N_0(\beta,N) \leq N^{2/3}$. More details can be found in Section~\ref{sec:proof_main_theorem}.
\end{remark}

\begin{remark} \label{rem:correclationStructure}
The way correlations are implemented in \cite{BocDeuStock2023} differs from our approach in two ways. The first difference is that, instead of a Jastrow factor, the authors of \cite{BocDeuStock2023} use a certain quartic (in creation and annihilation operators) transformation. Up to technicalities, their transformation amounts to multiplying an uncorrelated $n$-particle wave function by a factor $1-\sum_{1 \leq i < j \leq n} w(x_i-x_j)$, where $w(x) = 1 - f_{\ell}(x)$. This approach is very convenient from a computational perspective, but it suffers from the disadvantage that the resulting trial state is not an element of the form domain of $\mathcal{H}_N$, if $V$ is chosen as in \eqref{eq:hardcore}. The trial state $\Gamma$ in \eqref{eq:trial_state} does not have this problem. 

The second main difference between our approaches lies in the level at which correlations are introduced. In \cite{BocDeuStock2023} correlations are added to the eigenfunctions of $\Gamma_0$. With this choice, it is easier (when compared to our case) to estimate the effect of the correlation structure on the entropy of the trial state. However, the eigenfunctions of $\Gamma_0$ have a somewhat complicated structure, which makes it difficult to access useful properties of the eigenfunctions of the Gibbs state $G(z)$ in computing the energy and number of particles of the trial state. This is not a problem in \cite{BocDeuStock2023}, thanks to the special form of the correlation structure chosen there (however, it causes additional difficulties in proving the existence of a chemical potential $\widetilde{\mu}$ such that the trial state has the correct particle number, see the discussion below Lemma~2.1 in \cite{BocDeuStock2023}). In contrast, we add correlations to the eigenfunctions of the state $|z \rangle \langle z | \otimes G(z)$. This allows us to harness properties of a (suitably chosen) basis of eigenfunctions of $G(z)$, which turns out to be crucial in computing the energy of our trial state. We therefore had to find a different way to estimate the influence of the correlations on the entropy. More details can be found in Section~\ref{sec:entropy}.
\end{remark} 

To simplify the notation we will, by a slight abuse of notation, drop the isomorphism $\mathcal U$ from all formulas, and identify vectors in $\mathfrak F$ and $\mathfrak F_0\otimes \mathfrak F_+$. In the remaining part of the article we prove an upper bound for the free energy of $\Gamma$ that implies Theorem~\ref{thm:main_thm}.

\subsection{Properties of the trial state}\label{sec:PropertiesTrialState}

To compute the free energy of our trial state, precise information about its properties is needed. In this section we prove the relevant statements to not interrupt the main line of the argument later.

The following lemma, which allows us to estimate momentum sums in terms of integrals, will be used frequently in our analysis. A proof can be found in \cite[Lemma 3.3]{DeuchertSeiringerHom2020}.

\begin{lemma}\label{lem:sum_integral_approx}
	Let $f:[0,\infty) \to \mathbb R$ be nonnegative and monotone decreasing, and let $\lambda \ge 0$. Then we have 
	\begin{equation*}
		\sum_{p\in \Lambda_+^*} f(p) \mathds 1_{ [\lambda, \infty) } (|p|) \le (2\pi)^{-3}  \int_{ |p| \ge [ \lambda -  2\pi \sqrt 3 ]_+ } f( |p| ) \left( 1 + \frac{2\pi}{ |p| } + \frac{6\pi}{ |p|^2 } \right) \, \mathrm d p .
	\end{equation*}
\end{lemma}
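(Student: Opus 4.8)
The plan is to prove Lemma~\ref{lem:sum_integral_approx} by comparing the sum $\sum_{p \in \Lambda_+^*} f(p) \mathds 1_{[\lambda, \infty)}(|p|)$ with an integral over $\mathbb R^3$, using the monotonicity of $f$ to dominate each summand by the average of $f$ over a unit cube centered at the corresponding lattice point. Recall that $\Lambda_+^* = 2\pi \mathbb Z^3 \setminus \{0\}$, so the points are spaced $2\pi$ apart; after rescaling we may assume the lattice is $2\pi \mathbb Z^3$ and the fundamental cells are cubes of side $2\pi$.

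First I would associate to each $p \in \Lambda_+^*$ the cube $Q_p = p + [-\pi,\pi)^3$ of volume $(2\pi)^3$; these cubes tile $\mathbb R^3$. Since $f$ is nonnegative and monotone decreasing in $|p|$, for any $x \in Q_p$ we have $f(|p|) \le f(|x|)$ whenever $|x| \le |p|$; but this does not hold for all $x \in Q_p$. The standard fix is to write $f(|p|) = (2\pi)^{-3} \int_{Q_p} f(|p|) \, \mathrm d x$ and then bound $f(|p|)$ by $f(|x|)$ on the portion of $Q_p$ that is closer to the origin, while controlling the error on the remaining portion. A cleaner route, which is the one I expect the paper to take, is: for $x \in Q_p$ one has $|x| \le |p| + \pi\sqrt 3$, hence $|p| \ge |x| - \pi\sqrt 3$, and also by monotonicity $f(|p|) \le f([|x| - \pi\sqrt 3]_+)$ is not quite what we want either. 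Instead I would shift: compare the value at $p$ with the integral of $f$ over a cube shifted \emph{toward} the origin, so that every point of the shifted cube has modulus at least $|p|$ is false — rather, we want $f(|p|) \le$ average of $f$ over a cube all of whose points are no farther from the origin than $p$. Concretely, replace $Q_p$ by $Q_p' = p + [-2\pi,0)^3$ scaled appropriately, or more symmetrically use the bound $\sum_p f(|p|) \mathds 1_{[\lambda,\infty)}(|p|) \le (2\pi)^{-3} \int_{\mathbb R^3} f(|x|) \mathds 1_{|x| \ge [\lambda - 2\pi\sqrt 3]_+} \, \mathrm d x$ up to boundary corrections, and then the correction terms $\frac{2\pi}{|p|} + \frac{6\pi}{|p|^2}$ arise from expanding the volume of a ball of radius $|p| + 2\pi\sqrt 3$ (or similar) relative to that of radius $|p|$: indeed $(r + c)^3 = r^3(1 + c/r)^3 \le r^3(1 + 3c/r + \dots)$, and dividing by the integrand's radial profile produces exactly such $1/|p|$ and $1/|p|^2$ corrections. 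Since the lemma is quoted from \cite[Lemma 3.3]{DeuchertSeiringerHom2020}, I would simply cite that reference for the detailed bookkeeping rather than reproduce it.

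The key steps in order: (i) reduce to the lattice $2\pi\mathbb Z^3$ and tile $\mathbb R^3$ by cubes $Q_p$ of side $2\pi$ centered at the lattice points; (ii) for each $p$ with $|p| \ge \lambda$, use monotonicity of $f$ to bound $f(|p|)$ by a suitable average of $f(|x|)$ over a region contained in $\{|x| \ge [\lambda - 2\pi\sqrt 3]_+\}$, at the cost of replacing the cube by a slightly larger spherical shell; (iii) sum over $p$, using that the cubes are disjoint, to get an integral bound; (iv) account for the difference between integrating over cubes versus over the region $\{|x| \ge [\lambda - 2\pi\sqrt 3]_+\}$ by a volume-comparison argument, which produces the geometric correction factor $1 + \frac{2\pi}{|p|} + \frac{6\pi}{|p|^2}$ inside the integral. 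The main obstacle — and the only genuinely delicate point — is step (ii)--(iv): carefully choosing the comparison region so that monotonicity can be applied cleanly near the inner boundary $|p| = \lambda$ (hence the shift by $2\pi\sqrt 3$, the diameter of a fundamental cube) while keeping the overshoot in the radial variable small enough that it only contributes the stated polynomial-in-$1/|p|$ corrections. Since this is exactly the content of \cite[Lemma 3.3]{DeuchertSeiringerHom2020}, the proof reduces to that citation, and I would present it as such.
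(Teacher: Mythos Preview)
Your proposal is correct and matches the paper's approach exactly: the paper does not reprove this lemma but simply cites \cite[Lemma 3.3]{DeuchertSeiringerHom2020}, which is precisely what you conclude after sketching the underlying idea. Your heuristic explanation of where the correction factor $1 + \tfrac{2\pi}{|p|} + \tfrac{6\pi}{|p|^2}$ and the shift $2\pi\sqrt{3}$ come from is sound, even if not needed here.
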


In the following subsection we recall some properties of the operators $T_z$ and $W_z$.

\subsubsection{The Bogoliubov and Weyl transformations}
\label{subsec:bogo_trafo}

For $p\in\Lambda_+^*$ we define
\begin{equation}\label{eq:tau_def}
	\tau_p \coloneqq - \frac 1 4 \log\left[ 1 + \frac{16\pi \mathfrak  a_N N_0 \mathds 1_{ P_ {\mathrm B} }(p) }{|p|^2 - \mu_0} \right]
\end{equation}  
as well as the Bogoliubov transformation
\begin{equation}\label{eq:Tz_def}
	T_z \coloneqq \exp\Big( \sum_{p\in\Lambda^*_+} \tau_p \Big( \frac{ z^2}{|z|^2}a_p^* a_{-p}^* - \mathrm{h.c.} \Big) \Big).
\end{equation} Its action on creation and annihilation is given by
\begin{equation} \label{eq:action_bogoliubov}
	T_z^* a_p^* T_z = u_p a_p^* + v_p \frac{\overline{z}^2}{|z|^2} a_{-p} , \quad \quad T_z^* a_p T_z = u_p a_p + v_p \frac{z^2}{|z|^2} a_{-p}^*,
\end{equation} with $u_p \coloneqq \cosh(\tau_p)$ and $v_p \coloneqq \sinh(\tau_p) $, that is,
\begin{equation}\label{eq:up_vp_def}
	\begin{split}
		u_p = &\frac 1 2 \left( \frac { |p|^2 - \mu_0 } { |p|^2 - \mu_0 + 16 \pi \mathfrak a_N N_0  \mathds 1_{ P_ {\mathrm B} }(p) } \right)^ {1/4} +  \frac 1 2 \left( \frac { |p|^2 - \mu_0 } { |p|^2 - \mu_0 + 16 \pi \mathfrak a_N N_0 \mathds 1_{ P_ {\mathrm B} }(p) } \right)^ {-1/4},  \\
		v_p = & \frac 1 2 \left( \frac { |p|^2 - \mu_0 } { |p|^2 - \mu_0 + 16 \pi \mathfrak  a_N N_0\mathds 1_{ P_ {\mathrm B} }(p) } \right)^ {1/4} -  \frac 1 2 \left( \frac { |p|^2 - \mu_0 } { |p|^2 - \mu_0 + 16 \pi \mathfrak  a_N N_0 \mathds 1_{ P_ {\mathrm B} }(p)} \right)^ {-1/4}.
	\end{split}
\end{equation} 
The coefficients $u_p, v_p$ satisfy the following bounds.

\begin{lemma}\label{lem:up_vp_norms}
	 The coefficient $v_p$ in \eqref{eq:up_vp_def} satisfies the bounds
	\begin{equation}\label{eq:up_vp_norm_bounds1}
		\begin{split}
			 v_p \lesssim \frac {N_0}{N |p|^2} \lesssim \frac{N_0}{N}, \qquad \| v \|_2 \lesssim \frac {N_0} {N} \quad \text{ and } \quad \sum_{p\in \Lambda_+^*} |p|^k |v_p| \lesssim \frac {N_0} N N^{(k+1)\delta_{\mathrm {Bog}}} 
		\end{split}		
	\end{equation}
	for $k \in \{0,1,2\}$. For $u_p$ we have
	\begin{equation}\label{eq:up_vp_norm_bounds2}
		\| u -1 \|_\infty\lesssim \frac{N_0^2}{N^2}.
	\end{equation}
\end{lemma}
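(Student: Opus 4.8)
The plan is to extract all bounds from the explicit formulas \eqref{eq:up_vp_def}, using the elementary inequalities $|\cosh(\tau)-1|\lesssim \tau^2 e^{2|\tau|}$ and $|\sinh(\tau)|\lesssim |\tau| e^{|\tau|}$ together with the formula \eqref{eq:tau_def} for $\tau_p$. The starting observation is that $\mu_0 < 0$, so $|p|^2-\mu_0 \ge |p|^2 \ge (2\pi)^2$ for $p\in\Lambda_+^*$, and moreover $\mathfrak a_N N_0/N = \mathfrak a N_0 /N^2 \lesssim 1$ since $N_0 \le N$ and $\mathfrak a$ is fixed. Hence the argument of the logarithm in \eqref{eq:tau_def}, namely $1 + \frac{16\pi \mathfrak a_N N_0 \mathds 1_{P_{\mathrm B}}(p)}{|p|^2-\mu_0}$, lies in a bounded interval $[1, 1+C]$, so $|\tau_p| \lesssim 1$ uniformly in $p$ and $N$, and all the exponential factors $e^{c|\tau_p|}$ are $\mathcal O(1)$. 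This is the only place where uniformity over $p$ matters; everything else is a pointwise estimate.

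For $v_p$: from $|\sinh(\tau_p)|\lesssim|\tau_p|$ and $|\log(1+t)|\le t$ for $t\ge 0$ we get $|v_p|\lesssim |\tau_p| \lesssim \frac{\mathfrak a_N N_0 \mathds 1_{P_{\mathrm B}}(p)}{|p|^2-\mu_0} \lesssim \frac{\mathfrak a N_0}{N^2 |p|^2} \sim \frac{N_0}{N|p|^2}$, which gives the first bound in \eqref{eq:up_vp_norm_bounds1} (the second inequality $\frac{N_0}{N|p|^2}\lesssim \frac{N_0}{N}$ is immediate from $|p|\ge 2\pi$). For the $\ell^2$ bound, $\|v\|_2^2 \lesssim \frac{N_0^2}{N^2}\sum_{p\in\Lambda_+^*}|p|^{-4} \lesssim \frac{N_0^2}{N^2}$ since $\sum_{p\in\Lambda_+^*}|p|^{-4}<\infty$ (e.g. by Lemma~\ref{lem:sum_integral_approx} with $f(p)=|p|^{-4}$, or just directly), and taking square roots yields $\|v\|_2 \lesssim \frac{N_0}{N}$. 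For the weighted sums, use that $v_p$ is supported in $P_{\mathrm B} = \{|p|\le N^{\delta_{\mathrm{Bog}}}\}$, which contains $\lesssim N^{3\delta_{\mathrm{Bog}}}$ lattice points, so
\begin{equation*}
	\sum_{p\in\Lambda_+^*}|p|^k|v_p| \lesssim \frac{N_0}{N}\sum_{p\in P_{\mathrm B}}|p|^{k-2} \lesssim \frac{N_0}{N}\, N^{3\delta_{\mathrm{Bog}}}\, \big(N^{\delta_{\mathrm{Bog}}}\big)^{(k-2)_+},
\end{equation*}
which is $\lesssim \frac{N_0}{N}N^{(k+1)\delta_{\mathrm{Bog}}}$ for $k\in\{0,1,2\}$ (for $k=0,1$ one uses $|p|^{k-2}\le (2\pi)^{k-2}$ and the cardinality bound, for $k=2$ the factor $|p|^0=1$ and the cardinality bound; in all three cases the exponent $3\delta_{\mathrm{Bog}}$ is dominated by $(k+1)\delta_{\mathrm{Bog}}$ only for $k\ge 2$, so more carefully one estimates $|p|^{k-2}\lesssim N^{(k-2)\delta_{\mathrm{Bog}}}$ for $k=2$ and $|p|^{k-2}\lesssim 1$ for $k\le 2$, giving exponent $\max\{3\delta_{\mathrm{Bog}},(k+1)\delta_{\mathrm{Bog}}\}$; since $(k+1)\ge 3$ fails for $k<2$, one instead simply notes $N^{3\delta_{\mathrm{Bog}}}\le N^{(k+1)\delta_{\mathrm{Bog}}}$ is false for $k<2$ and must keep the worse of the two — but this is harmless for the application, and the cleanest route is to bound $\sum_{p\in P_{\mathrm B}}|p|^{k-2}\lesssim N^{(k+1)\delta_{\mathrm{Bog}}}$ directly via Lemma~\ref{lem:sum_integral_approx}). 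I will phrase this last step through Lemma~\ref{lem:sum_integral_approx} applied to an integral $\int_{|p|\le N^{\delta_{\mathrm{Bog}}}}|p|^{k-2}\,\mathrm dp \sim N^{(k+1)\delta_{\mathrm{Bog}}}$, which is clean and avoids the casework.

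For $u_p$: write $u_p - 1 = \cosh(\tau_p)-1$ and use $|\cosh(\tau)-1|\le \tfrac12\tau^2 e^{|\tau|}\lesssim \tau^2$ for $|\tau|\lesssim 1$, hence $\|u-1\|_\infty \lesssim \|\tau\|_\infty^2 \lesssim \big(\sup_p \frac{\mathfrak a_N N_0}{|p|^2-\mu_0}\big)^2 \lesssim \big(\frac{N_0}{N^2}\big)^2 \cdot \frac{1}{(2\pi)^4} \lesssim \frac{N_0^2}{N^4}$. Note this is in fact \emph{stronger} than the claimed $\frac{N_0^2}{N^2}$, so \eqref{eq:up_vp_norm_bounds2} follows a fortiori (the weaker form in the statement presumably suffices for later use and may be stated for uniformity with other estimates). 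I do not expect any genuine obstacle here: the lemma is a collection of routine consequences of the explicit formulas, and the only mild subtlety is organizing the weighted-sum estimate so that the exponent $(k+1)\delta_{\mathrm{Bog}}$ comes out correctly for each $k\in\{0,1,2\}$ simultaneously — best handled by converting the momentum sum over $P_{\mathrm B}$ to an integral via Lemma~\ref{lem:sum_integral_approx}.
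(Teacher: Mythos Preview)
Your approach is essentially the same as the paper's: both extract the pointwise bound $|v_p|\lesssim \frac{N_0}{N|p|^2}$ from the explicit formulas (the paper via the algebraic expression for $v_p^2$ and the inequality $(1+x)^{1/2}+(1+x)^{-1/2}-2\le x^2/4$, you via $|v_p|=|\sinh\tau_p|\lesssim|\tau_p|$ for bounded $\tau_p$), and then deduce the $\ell^2$ and weighted-sum bounds from it; for $u_p$ the paper uses $u_p^2=1+v_p^2$ and $\sqrt{1+x}-1\le x$, while you use $\cosh\tau-1\lesssim\tau^2$, which is equivalent.

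There is one arithmetic slip worth flagging: in the $u_p$ estimate you write $\sup_p \frac{\mathfrak a_N N_0}{|p|^2-\mu_0}\lesssim \frac{N_0}{N^2}$, but since $\mathfrak a_N=\mathfrak a/N$ and $|p|^2-\mu_0\ge(2\pi)^2$, the correct bound is $\lesssim \frac{N_0}{N}$. Thus your method gives exactly $\|u-1\|_\infty\lesssim \frac{N_0^2}{N^2}$, matching the lemma, not the stronger $\frac{N_0^2}{N^4}$ you claim. (The same slip appears in your $v_p$ chain, where $\frac{\mathfrak a N_0}{N^2|p|^2}$ should read $\frac{\mathfrak a N_0}{N|p|^2}$; your final answer there is nevertheless correct.) Also, the muddled casework in the weighted-sum paragraph is unnecessary: as you note at the end, the clean argument is simply $\sum_{p\in P_{\mathrm B}}|p|^{k-2}\lesssim \int_{1\lesssim|p|\le N^{\delta_{\mathrm{Bog}}}}|p|^{k-2}\,\mathrm dp\sim N^{(k+1)\delta_{\mathrm{Bog}}}$ for each $k\in\{0,1,2\}$ by a direct lattice-point count (Lemma~\ref{lem:sum_integral_approx} as stated has no upper cutoff, so you would apply it to the truncated function or just count shells directly).
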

\begin{proof}
	We first observe that \eqref{eq:up_vp_def} implies
	\begin{equation*}
		v_p^2 =  \frac 1 4 \left( \frac { |p|^2 - \mu_0 } { |p|^2 - \mu_0 + 16 \pi \mathfrak a_N N_0  \mathds 1_{ P_ {\mathrm B} }(p)  } \right)^ {1/2} +  \frac 1 4 \left( \frac { |p|^2 - \mu_0 } { |p|^2 - \mu_0 + 16 \pi  \mathfrak  a_N N_0 \mathds 1_{ P_ {\mathrm B} }(p)  } \right)^ {-1/2} - \frac 1 2.
	\end{equation*} Using  $\mu_0<0$, $ \mathfrak a_N =  \mathfrak a / N$ and the inequality $0 \le (1+x)^{1/2} + (1+x)^{-1/2} - 2 \le x^2/4$ for $x\ge 0$, we find
	\begin{equation}\label{eq:vp_pointwise_bound}
		v_p^2 \lesssim \frac {N_0^2}{N^2 |p|^4}. 
	\end{equation} The remaining estimates in \eqref{eq:up_vp_norm_bounds1} and \eqref{eq:up_vp_norm_bounds2} follow from \eqref{eq:vp_pointwise_bound}, $|p|\ge 2\pi$ for  $p\in\Lambda_+^*$, the identity $u_p^2 = 1+v_p^2$ and the inequality $\sqrt{1+x} -1 \le x$ for $x \ge 0$. 
\end{proof} 

As we mentioned earlier, the unitary $T_z$ diagonalizes the Bogoliubov Hamiltonian. The precise statement is the following.
\begin{lemma}
	Let $z\in\mathbb C$, $\mathcal H_{\mathrm{B}} (z)$ in \eqref{eq:bogoliubov_ham} and $\tau_p$ in \eqref{eq:tau_def}. We have
	\begin{equation}\label{eq:bogoliubov_ham_diagonalization}
		T_z^* \mathcal H_{\mathrm{B}} (z) T_z = \mathcal H^{ \mathrm{diag} } \coloneqq E_0 + \sum_{p\in\Lambda^*_+} \eps(p) a_p^* a_p,
	\end{equation} 
	with the ground state energy
	\begin{equation}\label{eq:E_0}
		E_0 \coloneqq -\frac 1 2 \sum_{ p\in P_{\mathrm B}} \left[ |p|^2-\mu_0 + 8\pi \mathfrak  a_N N_0 -\eps(p) \right] 
	\end{equation}
	and the Bogoliubov dispersion relation
	\begin{equation} \label{eq:bogoliubov_dispersion}
		\eps(p) \coloneqq \begin{cases}
			\sqrt{|p|^2 - \mu_0} \sqrt{|p|^2 -\mu_0 + 16\pi \mathfrak  a_N N_0 }, &  p\in P_{\mathrm B} ,\\
			|p|^2 - \mu_0, & p\in\Lambda_+^*\setminus P_{\mathrm B}. 
		\end{cases}
	\end{equation} 
\end{lemma}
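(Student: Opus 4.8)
The plan is to diagonalize the quadratic form $\mathcal H_{\mathrm B}(z)$ directly by transporting creation and annihilation operators with $T_z$ and collecting terms mode by mode. Since the interaction terms in \eqref{eq:bogoliubov_ham} only couple the pair $(p,-p)$ with $p\in P_{\mathrm B}$, and the kinetic term is diagonal in $p$, it suffices to treat each such pair independently; for $p\in\Lambda_+^*\setminus P_{\mathrm B}$ the Hamiltonian already acts as $(|p|^2-\mu_0)a_p^*a_p$ and is left untouched by $T_z$ (note $\tau_p=0$ there, so $T_z$ acts trivially on these modes). First I would record that, for $p\in P_{\mathrm B}$, $T_z^* T_z=\mathds 1$ and use \eqref{eq:action_bogoliubov} to substitute $T_z^*a_p^*T_z=u_pa_p^*+v_p\tfrac{\overline z^2}{|z|^2}a_{-p}$ and the companion formulas into
\[
T_z^*\Bigl[(|p|^2-\mu_0)(a_p^*a_p+a_{-p}^*a_{-p})+4\pi\mathfrak a_N N_0\bigl(2a_p^*a_p+2a_{-p}^*a_{-p}+2\tfrac{z^2}{|z|^2}a_p^*a_{-p}^*+2\tfrac{\overline z^2}{|z|^2}a_pa_{-p}\bigr)\Bigr]T_z,
\]
where I have grouped the $p$ and $-p$ contributions (each unordered pair is thereby counted once). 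Here $u_p=\cosh\tau_p$, $v_p=\sinh\tau_p$, so $u_p^2-v_p^2=1$ and $2u_pv_p=\sinh(2\tau_p)$, $u_p^2+v_p^2=\cosh(2\tau_p)$; the phase factors $z^2/|z|^2$ combine to modulus one and cancel in the diagonal terms while producing real coefficients in front of the off-diagonal terms.

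Carrying out the substitution and using the CCR to normal-order, the coefficient of the diagonal part $a_p^*a_p+a_{-p}^*a_{-p}$ becomes $(|p|^2-\mu_0+8\pi\mathfrak a_N N_0)\cosh(2\tau_p)-8\pi\mathfrak a_N N_0\sinh(2\tau_p)$, the coefficient of the off-diagonal part $a_p^*a_{-p}^*+\text{h.c.}$ becomes $-(|p|^2-\mu_0+8\pi\mathfrak a_N N_0)\sinh(2\tau_p)+8\pi\mathfrak a_N N_0\cosh(2\tau_p)$, and normal-ordering the term $v_p^2(a_pa_p^*+a_{-p}a_{-p}^*)$ throws off a constant $2v_p^2(|p|^2-\mu_0+8\pi\mathfrak a_N N_0)-4u_pv_p\cdot 4\pi\mathfrak a_N N_0\cdot(\text{from the cross terms})$ that will assemble into $E_0$. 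The diagonalization condition is that the off-diagonal coefficient vanish, i.e.
\[
\tanh(2\tau_p)=\frac{16\pi\mathfrak a_N N_0}{|p|^2-\mu_0+8\pi\mathfrak a_N N_0},
\]
which is exactly what \eqref{eq:tau_def} encodes: writing $\tau_p=-\tfrac14\log(1+x)$ with $x=\tfrac{16\pi\mathfrak a_N N_0}{|p|^2-\mu_0}$ gives $e^{-4\tau_p}=1+x$, hence $\cosh(2\tau_p)=\tfrac12((1+x)^{1/2}+(1+x)^{-1/2})$ and $\sinh(2\tau_p)=\tfrac12((1+x)^{-1/2}-(1+x)^{1/2})$, and a direct check confirms the vanishing. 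With the off-diagonal term gone, the surviving diagonal coefficient simplifies, using $\cosh^2-\sinh^2=1$, to $\sqrt{(|p|^2-\mu_0)(|p|^2-\mu_0+16\pi\mathfrak a_N N_0)}=\eps(p)$, which is \eqref{eq:bogoliubov_dispersion} for $p\in P_{\mathrm B}$.

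Finally I would reassemble the constant terms. Summing the contributions $\tfrac12[\eps(p)-(|p|^2-\mu_0+8\pi\mathfrak a_N N_0)]$ over $p\in P_{\mathrm B}$ (the factor $\tfrac12$ because each pair $\{p,-p\}$ is summed twice in $\sum_{p\in P_{\mathrm B}}$) yields precisely $E_0$ in \eqref{eq:E_0}. For $p\notin P_{\mathrm B}$ there is no constant and the dispersion is $|p|^2-\mu_0$, consistent with the second case of \eqref{eq:bogoliubov_dispersion}. This gives \eqref{eq:bogoliubov_ham_diagonalization}. The only genuinely delicate point is bookkeeping: correctly handling the double-counting in $\sum_{p\in P_{\mathrm B}}$ versus sums over unordered pairs, the normal-ordering constants, and verifying that the $z$-dependent phases drop out of the final expression (they must, since $\mathcal H^{\mathrm{diag}}$ is manifestly $z$-independent); none of this is conceptually hard but it is the part where sign and factor errors creep in. One can alternatively verify the identity by checking it on the $(p,-p)$ two-mode Fock space via the explicit $2\times 2$ symplectic diagonalization, which I would use as a cross-check.
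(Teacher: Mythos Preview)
Your approach is exactly what the paper does: it simply states that the proof is a standard computation based on \eqref{eq:action_bogoliubov} and \eqref{eq:up_vp_def} and refers to \cite[Lemma~5.2]{BBCS2020}. Your mode-by-mode diagonalization via substitution of the transformed creation/annihilation operators is precisely this computation.

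That said, a couple of the intermediate formulas have slipped signs and factors (as you yourself anticipated). With $A_p=|p|^2-\mu_0+8\pi\mathfrak a_N N_0$ and $B_p=8\pi\mathfrak a_N N_0$, the diagonal and off-diagonal coefficients after conjugation are
\[
A_p\cosh(2\tau_p)+B_p\sinh(2\tau_p)\quad\text{and}\quad A_p\sinh(2\tau_p)+B_p\cosh(2\tau_p),
\]
respectively (both plus signs), so the diagonalization condition reads $\tanh(2\tau_p)=-B_p/A_p=-\dfrac{8\pi\mathfrak a_N N_0}{|p|^2-\mu_0+8\pi\mathfrak a_N N_0}$; this is consistent with \eqref{eq:tau_def}, since $\tau_p<0$ makes $\sinh(2\tau_p)<0$. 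With these corrections your computation of $\eps(p)$ and of $E_0$ goes through verbatim.
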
 The proof is a standard computation based on \eqref{eq:action_bogoliubov} and \eqref{eq:up_vp_def}, which can be found for instance in \cite[Lemma 5.2]{BBCS2020}.

Next, we recall the definition of the Weyl operator $W_z \coloneqq \exp(za^*(\ph_0) - \overline z a(\ph_0))$ and the well-known formulas
\begin{equation}
	\label{eq:action_weyl_position}
	W_z^* a_x W_z = a_x + z, \qquad W_z^* a_x^* W_z = a_x^* + \overline{z},
\end{equation} 
\begin{equation}
	\label{eq:action_weyl_momentum}
	W_z^* a_p W_z = a_p + z\delta_{p,0}, \qquad W_z^* a_p^* W_z = a_p^* + \overline{z}\delta_{p,0},
\end{equation} for every $x\in\Lambda$ and $p\in\Lambda^*$. The Bogoliubov and the Weyl transformations cannot change the particle number too much. The precise statement is captured in the following lemma.
\begin{lemma}
	\label{lem:Tz_N_bound}
	The operator inequalities
	\begin{equation} \label{eq:Tz_N_bound}
		T_z^* (\mathcal N^< + 1)^k T_z \lesssim_k ( \mathcal N^< + 1 )^{k}
	\end{equation} 
	and 
	\begin{equation} \label{eq:Wz_N_bound}
		W_z^* (\mathcal N^< + 1)^k W_z \lesssim_k  \mathcal ( N^< + |z|^2 + 1  )^{k}
	\end{equation} 
	hold for all $k\in\mathbb N$.
\end{lemma}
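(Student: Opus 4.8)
\textbf{Proof strategy for Lemma~\ref{lem:Tz_N_bound}.}
The plan is to prove the two operator inequalities separately, in each case reducing the power-$k$ statement to a combination of two ingredients: (i) the transformed operator acts in a controlled way on each fixed-particle-number sector (or more precisely, commutator estimates), and (ii) the transformation can only change the particle number by a bounded factor on the relevant modes. For the Bogoliubov bound \eqref{eq:Tz_N_bound}, I would argue as follows. Note first that $\mathcal{N}^{<} = \sum_{p \in P_{\mathrm B}} a_p^* a_p$ and that $T_z$ only mixes modes in $P_{\mathrm B}$ (since $\tau_p$ vanishes outside $P_{\mathrm B}$ by \eqref{eq:tau_def}), so $T_z$ commutes with $\mathcal{N}^{>}$ and it suffices to work on the Fock space over $\ell^2(P_{\mathrm B})$. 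The standard way to get such bounds is to differentiate in the transformation parameter: setting $T_z(s) = \exp\big( s \sum_{p} \tau_p ( \tfrac{z^2}{|z|^2} a_p^* a_{-p}^* - \mathrm{h.c.}) \big)$ and $\mathcal{N}_s \coloneqq T_z(s)^* (\mathcal{N}^{<}+1)^k T_z(s)$, one computes
\begin{equation*}
	\frac{\mathrm d}{\mathrm d s} \mathcal{N}_s = T_z(s)^* \Big[ (\mathcal N^< + 1)^k , \sum_{p\in P_{\mathrm B}} \tau_p \Big( \tfrac{z^2}{|z|^2} a_p^* a_{-p}^* - \mathrm{h.c.} \Big) \Big] T_z(s),
\end{equation*}
and the commutator of $(\mathcal N^< + 1)^k$ with $a_p^* a_{-p}^*$ (resp.\ its adjoint) is, using the pull-through formula $a_p^* a_{-p}^* \mathcal{N}^{<} = (\mathcal{N}^{<} - 2) a_p^* a_{-p}^*$, controlled by $\lesssim_k \|\tau\|_1 (\mathcal N^< + 1)^k$ together with a Cauchy–Schwarz step in $p$; by Lemma~\ref{lem:up_vp_norms} (and $\tau_p \sim v_p$ for small $\tau_p$) one has $\sum_{p\in P_{\mathrm B}} |\tau_p| \lesssim N_0/N \lesssim 1$. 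Then a Grönwall argument in $s\in[0,1]$ yields $\mathcal{N}_1 \lesssim_k (\mathcal N^< + 1)^k$, which is \eqref{eq:Tz_N_bound}.

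For the Weyl bound \eqref{eq:Wz_N_bound}, the argument is more elementary because $W_z$ acts only on the zero mode $\varphi_0$, whereas $\mathcal{N}^{<}$ counts particles in $P_{\mathrm B} \subset \Lambda_+^*$, i.e.\ in modes orthogonal to $\varphi_0$. Hence $W_z$ commutes with $\mathcal{N}^{<}$, and in fact $W_z^*(\mathcal N^< + 1)^k W_z = (\mathcal N^< + 1)^k \lesssim_k (\mathcal N^< + |z|^2 + 1)^k$ trivially. (I would double-check the intended reading: if instead $\mathcal{N}^{<}$ in the Weyl statement is meant to include the zero mode, then one uses $W_z^* a_0^* a_0 W_z = (a_0^* + \bar z)(a_0 + z) = a_0^* a_0 + \bar z a_0 + z a_0^* + |z|^2$, and the cross terms are bounded via $\pm(\bar z a_0 + z a_0^*) \leq a_0^* a_0 + |z|^2$, giving $W_z^* a_0^* a_0 W_z \lesssim a_0^* a_0 + |z|^2 + 1$; iterating $k$ times with the pull-through formula for $a_0$ on $(\mathcal N + 1)^{k}$ gives the stated power-$k$ bound. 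Either way the conclusion holds.)

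I do not expect either step to present a serious obstacle; the only mildly delicate point is organizing the commutator bound for general $k$ in the Bogoliubov case so that the constant depends only on $k$ and on the $\ell^1$ norm of $\tau$ (not on $N$ or $z$), and checking that the $z$-dependent phase $z^2/|z|^2$, which has modulus one, plays no role in the estimates — it cancels in all the relevant products. Since $\|\tau\|_1$ is $N$-uniformly bounded by Lemma~\ref{lem:up_vp_norms}, the Grönwall constant is $N$-independent, as required.
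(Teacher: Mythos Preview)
Your proposal is correct and matches the paper's approach: the paper omits the proof entirely, stating only that it is ``a standard application of Gr\"onwall's inequality'' (with a reference to \cite[Lemma~3.1]{BreSchlein2019}), which is precisely the differentiation-in-$s$ and commutator-bound argument you outline for \eqref{eq:Tz_N_bound}. Your observation that \eqref{eq:Wz_N_bound} is trivial as literally stated (since $W_z$ acts on the zero mode while $\mathcal N^<$ counts modes in $P_{\mathrm B}\subset\Lambda_+^*$) is accurate, and your parenthetical handling of the alternative reading (zero mode included) is the correct back-up, consistent with how the lemma is actually used later, e.g.\ in the proof of \eqref{eq:rho4_za_integral_bound}.
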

We omit the proof, which is a standard application of Gr\"onwall's inequality, compare for instance with \cite[Lemma~3.1]{BreSchlein2019}.
%
We are now prepared to prove several important properties of the Bogoliubov Gibbs state in the diagonal and in the non-diagonal representation. 

\subsubsection{The Bogoliubov Gibbs state in the diagonal representation}

We start by computing the 1-pdm of $G^{ \mathrm{ diag } }$ in \eqref{eq:gibbs_state_bog_diag}. 
\begin{lemma}\label{lem:integral_gamma_0} 
	For every $p, q\in\Lambda_+^*$, we have 
	\begin{equation}\label{eq:correlations_Gdiag_def}
		\tr_{\mathfrak{F}_+} [a_p^* a_q G^{ \mathrm{ diag } }] = \gamma^{ \mathrm {diag} }_p \delta_{p,q}, \qquad 	\tr_{\mathfrak{F}_+} [a_p a_q G^{ \mathrm{ diag } }] = 0,  
	\end{equation} with 
	\begin{equation} \label{eq:gamma_Bog_def}
		\gamma^{ \mathrm {diag} }_p \coloneqq  \frac{1}{\exp( \beta \eps(p)) - 1}
	\end{equation}
	and $\eps(p)$ in \eqref{eq:bogoliubov_dispersion}. The sequence of eigenvalues $\gamma_p^{ \mathrm { diag } }$ satisfies 
	\begin{equation}\label{eq:gamma_Bog_bound_momentum}
		\sum_{p \in \Lambda_+^*} \gamma_p^{ \mathrm { diag } } \lesssim \beta^{-3/2} + \beta^{-1}, \qquad \sum_{p \in \Lambda_+^*} p^2 \gamma_p^{ \mathrm { diag } } \lesssim \beta^{-5/2} + \beta^{-3/2}, \qquad \sum_{p \in \Lambda_+^*} | \gamma_p^{ \mathrm { diag } } |^2 \lesssim \beta^{-2}.
	\end{equation}
	Moreover, 
	\begin{equation}\label{eq:gamma_Bog_bound_position}
		  \int_{\Lambda} | \check \gamma^{ \mathrm {diag} }(x)|\, \mathrm d x \lesssim \beta^{-1}. 
	\end{equation}
\end{lemma}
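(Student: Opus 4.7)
The approach has a clear structure: the identities (first part) follow from the fact that $\mathcal{H}^{\mathrm{diag}} = E_0 + \sum_p \epsilon(p) a_p^* a_p$ is already expressed in terms of commuting number operators $a_p^* a_p$, so that $G^{\mathrm{diag}}$ factorizes as a tensor product over momentum modes of one-mode thermal states. My first step is to exploit this: since $G^{\mathrm{diag}}$ commutes with each $a_p^* a_p$, the expectations $\tr[a_p^* a_q G^{\mathrm{diag}}]$ vanish unless $p=q$, and $\tr[a_p a_q G^{\mathrm{diag}}]=0$ for all $p,q$ because $G^{\mathrm{diag}}$ preserves the number of particles in every mode. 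For $p=q$, the remaining one-mode computation gives the standard Bose--Einstein occupation number $\gamma_p^{\mathrm{diag}} = (e^{\beta\epsilon(p)}-1)^{-1}$.

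Next I would establish the pointwise bound $\epsilon(p) \geq |p|^2$ from \eqref{eq:bogoliubov_dispersion}: for $p \in \Lambda_+^* \setminus P_{\mathrm B}$ this is immediate since $\mu_0 \leq 0$; for $p \in P_{\mathrm B}$, writing $a = |p|^2 - \mu_0 \geq |p|^2$ and $b = 16\pi\mathfrak a_N N_0 \geq 0$, we have $\epsilon(p) = \sqrt{a}\sqrt{a+b} \geq a \geq |p|^2$. Consequently $\gamma_p^{\mathrm{diag}} \leq (e^{\beta|p|^2}-1)^{-1}$, which reduces every sum in \eqref{eq:gamma_Bog_bound_momentum} to a Bose--Einstein-like lattice sum for a free particle at temperature $\beta^{-1}$.

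For those lattice sums, the plan is to split according to $\beta|p|^2 \lessgtr 1$ and use the elementary inequalities $(e^x-1)^{-1} \leq 2/x$ for $x \in (0,1]$ and $(e^x-1)^{-1} \leq 2e^{-x}$ for $x \geq 1$. In the low-momentum region $\{|p|^2 \leq \beta^{-1}\}$ the bound gives $\gamma_p^{\mathrm{diag}} \leq 2/(\beta|p|^2)$, and applying Lemma \ref{lem:sum_integral_approx} with a cutoff large enough that the factor $1 + 2\pi/|p| + 6\pi/|p|^2$ is bounded (finitely many smaller lattice points are handled separately) produces integrals such as $\beta^{-1}\int_{2\pi \leq |p| \leq \beta^{-1/2}} |p|^{-2}\, \mathrm dp \lesssim \beta^{-3/2}$. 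In the high-momentum region a rescaling $q = \beta^{1/2} p$ reduces the estimate to convergent Gaussian integrals, yielding contributions bounded by $\beta^{-3/2}$. The same machinery applied to $\sum p^2 \gamma_p^{\mathrm{diag}}$ and $\sum |\gamma_p^{\mathrm{diag}}|^2$ produces the required bounds $\beta^{-5/2}+\beta^{-3/2}$ and $\beta^{-2}$; in the latter case the low-momentum contribution is $\sum_{|p| \leq \beta^{-1/2}} 4/(\beta^2|p|^4)$, whose integral analogue converges at infinity in three dimensions and is controlled by $\beta^{-2}$.

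Finally, the position-space bound \eqref{eq:gamma_Bog_bound_position} will follow from Cauchy--Schwarz and Parseval: since $|\Lambda|=1$,
\begin{equation*}
	\int_\Lambda |\check\gamma^{\mathrm{diag}}(x)|\, \mathrm dx \leq |\Lambda|^{1/2}\, \|\check\gamma^{\mathrm{diag}}\|_2 \lesssim \Bigl(\sum_{p \in \Lambda_+^*} |\gamma_p^{\mathrm{diag}}|^2\Bigr)^{1/2} \lesssim \beta^{-1},
\end{equation*}
using the $\ell^2$-bound just obtained. The only subtlety I anticipate is organizational: when applying Lemma \ref{lem:sum_integral_approx}, the correction weights $1/|p|$ and $1/|p|^2$ fail to be integrable near $|p|=0$, so one must either use the Lemma with $\lambda$ large enough to stay away from the origin (and treat finitely many small-$|p|$ lattice points as a bounded remainder) or rely on the fact that $|p| \geq 2\pi$ on $\Lambda_+^*$ to keep those weights bounded. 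Apart from this bookkeeping, the argument is a routine combination of the diagonal structure of $G^{\mathrm{diag}}$ with standard integral comparisons.
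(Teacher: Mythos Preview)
Your proposal is correct and follows essentially the same approach as the paper: use the diagonal form of $G^{\mathrm{diag}}$ for the identities, the pointwise bound $\eps(p)\ge|p|^2$ to reduce to free Bose--Einstein sums, Lemma~\ref{lem:sum_integral_approx} (with a cutoff keeping $|p|$ bounded away from $0$) for the integral comparison, and Cauchy--Schwarz plus Parseval for \eqref{eq:gamma_Bog_bound_position}. The paper's argument is slightly leaner in that it uses only the single inequality $(e^x-1)^{-1}\le x^{-1}$ rather than splitting into $\beta|p|^2\lessgtr 1$; in particular the $\ell^2$-bound is obtained in one line from $\sum_p|\gamma_p^{\mathrm{diag}}|^2\le\beta^{-2}\sum_p|p|^{-4}$.
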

\begin{proof}
	The formulas \eqref{eq:correlations_Gdiag_def}, \eqref{eq:gamma_Bog_def} follow directly from the definition of $G^{ \mathrm{ diag } }$. We apply Lemma \ref{lem:sum_integral_approx} with $\lambda \in (2\pi\sqrt 3 , 4\pi)$, use $\eps(p) \ge |p|^2$ and $(\exp(x)-1)^{-1} \le x^{-1}$ to see that 
	\begin{equation*}
		\sum_{p \in \Lambda_+^*} \gamma_p^{ \mathrm { diag } } \le \sum_{p \in \Lambda_+^*} \frac { \mathds 1_{[\lambda, \infty)}(|p|) } { \exp( \beta |p|^2 ) -1 } + C\beta^{-1} \lesssim \int_{ |p| \ge \lambda -  2\pi \sqrt 3 }  \frac {1 + |p|^{-2}} { \exp( \beta |p|^2 ) -1 } \, \mathrm d p + \beta^{-1} \lesssim \beta^{-3/2} + \beta^{-1}
	\end{equation*} 
	holds. With the same argument we prove the second bound in \eqref{eq:gamma_Bog_bound_momentum}. We also have
	\begin{equation*}
		\begin{split}
			\sum_{p\in\Lambda_+^*} |\gamma_p^{ \mathrm{diag} } |^2 \le \frac{1}{ \beta^2 }  \sum_{p\in\Lambda_+^*} \frac 1 { |p|^4 } \lesssim \beta^{-2},
		\end{split}
	\end{equation*} which completes the proof of \eqref{eq:gamma_Bog_bound_momentum}. 
	Finally, to obtain \eqref{eq:gamma_Bog_bound_position}, we apply Cauchy-Schwartz and \eqref{eq:gamma_Bog_bound_momentum}:
	\begin{equation*}
		\int_{\Lambda} | \check \gamma^{ \mathrm {diag} }(x)|\, \mathrm d x \leq \left( \int_{\Lambda} | \check \gamma^{ \mathrm {diag} }(x)|^2\, \mathrm d x \right)^{1/2} \lesssim \beta^{-1}.
	\end{equation*}
	%
\end{proof} 

The following lemma allows us to control the expectation of powers of $\mathcal N_+$ (the restriction of $\mathcal N$ to $\mathfrak F_+$) in the state $G^{\mathrm{ diag }}$. 
\begin{lemma}\label{cor:expectation_powers_N}
	For every $k \ge 2$, we have 
	\begin{equation}\label{eq:expectation_powers_N}
		\tr_{\mathfrak{F}_+}[\mathcal N_+^k G^{ \mathrm{ diag} }] - \big( \tr_{\mathfrak{F}_+}[\mathcal N_+ G^{ \mathrm{ diag} }]\big) ^k \lesssim_k \beta^{-\frac{3(k-2)} 2}\beta^{-2} + \beta^{-(k-1)}.
	\end{equation}
\end{lemma}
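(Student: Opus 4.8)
The plan is to diagonalize everything in the basis of quasiparticle modes. Since $\mathcal{H}^{\mathrm{diag}} = E_0 + \sum_p \eps(p) a_p^* a_p$, the Gibbs state $G^{\mathrm{diag}}$ is a product state over the momentum modes $p \in \Lambda_+^*$: the occupation numbers $n_p$ of distinct modes are independent, with $n_p$ geometrically distributed with mean $\gamma_p^{\mathrm{diag}}$ (as in \eqref{eq:gamma_Bog_def}). Writing $\mathcal{N}_+ = \sum_p n_p$, the quantity $\tr_{\mathfrak{F}_+}[\mathcal{N}_+^k G^{\mathrm{diag}}] - (\tr_{\mathfrak{F}_+}[\mathcal{N}_+ G^{\mathrm{diag}}])^k$ is a polynomial in the cumulants/centered moments of the independent variables $n_p$, all of whose terms except the pure product $\prod_p \gamma_p^{\mathrm{diag}}$-type leading piece involve at least one \emph{centered} factor. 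So the first step is to expand $\mathcal{N}_+^k = (\sum_p n_p)^k$ multinomially, subtract $(\sum_p \gamma_p^{\mathrm{diag}})^k$, and collect terms: what survives is a sum over tuples $(p_1,\dots,p_k)$ where at least two indices coincide, or where some subset of the $n_{p_i}$ appears in a genuinely centered combination.

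Concretely, I would first record that for a geometric variable $n_p$ with mean $\gamma_p \deq \gamma_p^{\mathrm{diag}}$ one has $\EEE[n_p^m] \lesssim_m \gamma_p + \gamma_p^m$ for every $m \ge 1$ (the low-$\gamma$ regime contributes the linear term, the high-$\gamma$ regime the $\gamma^m$ term); hence all centered moments $\EEE[(n_p - \gamma_p)^m]$ for $m \ge 2$ are also $\lesssim_m \gamma_p + \gamma_p^m$, and in fact $\lesssim_m \gamma_p^2 + \gamma_p^m$ since the $m$-th centered moment vanishes to second order in $\gamma_p$ as $\gamma_p \to 0$. Then expand $\EEE[\mathcal{N}_+^k] - (\EEE\,\mathcal{N}_+)^k$ using independence: group the $k$ indices into blocks of equal value; a block of size $m \ge 2$ sitting at momentum $p$ contributes a centered-moment factor bounded by $\gamma_p^2 + \gamma_p^m$ (a singleton block just reproduces $\gamma_p$ and those are exactly cancelled by the subtraction of $(\EEE\,\mathcal{N}_+)^k$, up to lower-order combinatorial corrections). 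So every surviving term is a product of factors, at least one of which is a momentum sum of $\gamma_p^2$ (giving $\beta^{-2}$ by the third bound in \eqref{eq:gamma_Bog_bound_momentum}) or $\gamma_p^m$ with $m \ge 2$ (which is also $\lesssim \beta^{-2}$ since $\gamma_p^m \le \gamma_p^2$), times at most $k-2$ remaining factors of $\sum_p \gamma_p \lesssim \beta^{-3/2} + \beta^{-1}$ from \eqref{eq:gamma_Bog_bound_momentum}. Multiplying, the worst term is $(\beta^{-3/2})^{k-2} \cdot \beta^{-2} = \beta^{-3(k-2)/2}\beta^{-2}$, with the subleading piece $(\beta^{-1})^{k-2}\cdot\beta^{-2} = \beta^{-(k-1)}$ coming from replacing each $\sum_p \gamma_p$ by its $\beta^{-1}$ part, exactly matching the claimed right-hand side of \eqref{eq:expectation_powers_N}. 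The case $k=2$ is just $\mathrm{Var}(\mathcal{N}_+) = \sum_p \mathrm{Var}(n_p) \lesssim \sum_p (\gamma_p^2 + \gamma_p) \lesssim \beta^{-2} + \beta^{-3/2} + \beta^{-1} \lesssim \beta^{-2}$ (using $\beta \gtrsim \beta_{\mathrm{c}}$, so $\beta^{-1} \lesssim \beta^{-2}$ is false in general — I should be careful here: actually for $\beta$ large $\beta^{-2} \le \beta^{-1}$, but the stated bound already includes a $\beta^{-(k-1)} = \beta^{-1}$ term for $k=2$, so this is consistent).

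The main obstacle is purely bookkeeping: organizing the multinomial expansion so that the cancellation of the singleton (uncentered) contributions against $(\EEE\,\mathcal{N}_+)^k$ is transparent, and verifying that every non-cancelled term carries at least one "$\gamma_p^2$-type" factor so that one genuinely gains the extra $\beta^{-2}$ (rather than $\beta^{-3/2}$) over the naive bound $(\sum_p\gamma_p)^k$. A clean way to do this is to pass to cumulants: $\log \EEE[e^{t\mathcal{N}_+}] = \sum_p \log\EEE[e^{t n_p}]$, so all joint cumulants of order $\ge 2$ of $\mathcal{N}_+$ equal $\sum_p \kappa_m(n_p)$ with $\kappa_m(n_p) \lesssim_m \gamma_p^2 + \gamma_p^m \lesssim_m \gamma_p^2 + \gamma_p$ (again vanishing to second order at $\gamma_p = 0$), hence $\lesssim_m \beta^{-2} + \beta^{-3/2}$; then express the centered moment $\EEE[(\mathcal{N}_+ - \EEE\mathcal{N}_+)^k]$ — which differs from $\EEE[\mathcal{N}_+^k] - (\EEE\mathcal{N}_+)^k$ only by terms that are themselves products of lower centered moments and a power of $\EEE\mathcal{N}_+$, and those are even smaller — via the moment–cumulant formula as a sum over set partitions of $\{1,\dots,k\}$ into blocks of size $\ge 2$; a partition into $j$ blocks contributes $\prod (\text{cumulant}) \lesssim (\beta^{-2}+\beta^{-3/2})^j$, and since $j \le \lfloor k/2 \rfloor$ and each cumulant is at worst $\beta^{-3/2}$ while at least... hmm, this gives $(\beta^{-3/2})^{\lfloor k/2\rfloor}$ which is weaker than claimed for odd considerations — so the cleaner route after all is the direct block expansion above where a size-$m$ block contributes $\gamma_p^2 + \gamma_p^m$ and the \emph{linear} (singleton) pieces are handled by the subtraction; I will present that version, as it gives exactly the stated exponent $\beta^{-3(k-2)/2}\beta^{-2}$ by using $(k-1)$ blocks of which one has size $2$ (contributing $\beta^{-2}$) and $k-2$ are... no: $k = 2 \cdot 1 + (k-2)\cdot 1$ requires $k-2$ singletons, but singletons are cancelled. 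The resolution: after subtracting $(\EEE\mathcal{N}_+)^k$, a surviving term has \emph{all} indices in blocks of size $\ge 2$ OR has some singletons but then an "incomplete" cancellation leaves a centered combination; tracking this carefully — and observing that the extremal case is one block of size $2$ contributing $\sum_p\gamma_p^2\lesssim\beta^{-2}$ together with, effectively, $k-2$ further "degrees of freedom" each costing $\sum_p\gamma_p\lesssim\beta^{-3/2}$ — is the one genuinely delicate point, and I would spell it out by induction on $k$, peeling off one index at a time and using $\mathcal{N}_+^k = \mathcal{N}_+^{k-1}\cdot\mathcal{N}_+$ with the decomposition $\mathcal{N}_+ = (\mathcal{N}_+ - \EEE\mathcal{N}_+) + \EEE\mathcal{N}_+$.
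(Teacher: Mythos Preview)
Your approach via independence of the mode occupation numbers $n_p$ is genuinely different from the paper's, which works in position space: the paper writes $\mathcal N_+^k$ in normally ordered form, applies Wick's theorem to $\int_{\Lambda^k}\tr[a_{x_1}^*\cdots a_{x_k}^* a_{x_1}\cdots a_{x_k}G^{\mathrm{diag}}]\,\dd X^k$, observes that the identity permutation reproduces $(\tr[\mathcal N_+ G^{\mathrm{diag}}])^k$, and for any non-identity permutation picks two indices $i\neq j$ with $\pi(i)\neq i$, $\pi(j)\neq j$, bounds the remaining $k-2$ factors by $\|\check\gamma^{\mathrm{diag}}\|_\infty\lesssim\beta^{-3/2}+\beta^{-1}$, and integrates the two distinguished factors using $\|\check\gamma^{\mathrm{diag}}\|_1\lesssim\beta^{-1}$ (Lemma~\ref{lem:integral_gamma_0}) to gain $\beta^{-2}$. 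Both routes ultimately rest on $\sum_p(\gamma_p^{\mathrm{diag}})^2\lesssim\beta^{-2}$.

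Your route works, but there is one concrete error and one organizational gap. The error: for a geometric variable with mean $\gamma_p$, centered moments of order $m\ge2$ do \emph{not} vanish to second order as $\gamma_p\to0$; already $\mathrm{Var}(n_p)=\gamma_p(1+\gamma_p)\sim\gamma_p$. The correct bound is $|\EEE[(n_p-\gamma_p)^m]|\lesssim_m\gamma_p+\gamma_p^m$, linear at the origin. Fortunately this is harmless: summed over $p$ the linear piece gives $S=\sum_p\gamma_p\lesssim\beta^{-3/2}+\beta^{-1}$, and one checks that $S^{k-1}\lesssim_k\beta^{-3(k-2)/2-2}+\beta^{-(k-1)}$, so the extra $S$ from each block is absorbed by the stated right-hand side.

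The organizational gap is the one you yourself flag: the cleanest way to make the ``one block of size $2$ plus $k-2$ singletons'' picture rigorous is not the block partition of $\{1,\dots,k\}$ (where singletons are forbidden since $\EEE[\tilde n_p]=0$) but the binomial identity $\EEE[\mathcal N_+^k]-S^k=\sum_{j=2}^k\binom{k}{j}S^{k-j}\EEE[\tilde{\mathcal N}_+^j]$. The dominant term is $j=2$, giving $S^{k-2}\mathrm{Var}(\mathcal N_+)\lesssim S^{k-2}(S+\beta^{-2})$, which matches the claim; the terms $j\ge3$ are smaller since $|\EEE[\tilde{\mathcal N}_+^j]|$ is controlled by products of $\sum_p(\gamma_p+\gamma_p^{m_i})$ over blocks of size $m_i\ge2$. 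Spelling this out replaces your unfinished induction sketch.
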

\begin{proof} 
	Let us introduce the notation $\mathrm d X^k = \mathrm d x_1, ... , \, \mathrm d x_k$. Using the CCR we see that
	\begin{equation}\label{eq:tr_N_Gdiag_1}
		\begin{split}
			\tr_{\mathfrak{F}_+} & [\mathcal N^{k} G^{ \mathrm{ diag} }] \\
			= & \int_{\Lambda^{k+1}}  \tr_{\mathfrak{F}_+}[a_{x_1}^* ... a_{x_{k}}^* a_{x_1} ... a_{x_{k}} G^{ \mathrm{ diag} }] \, \mathrm d X^{k} +\tr_{\mathfrak{F}_+}[ \big( \mathcal N^{k} - \mathcal N \cdot ... \cdot (\mathcal N-k + 1) \big) G^{ \mathrm{ diag} }].
		\end{split}
	\end{equation} 
	An application of Wick's theorem and the identity $\tr_{\mathfrak F_+}[ a_x^* a_x G^{\mathrm {diag} } ] = \tr_{\mathfrak F_+}[ \mathcal N_+ G^{\mathrm {diag} } ]$ show that the first term on the right-hand side equals
	\begin{equation*}
		\big( \tr_{\mathfrak F_+}[ \mathcal N_+ G^{\mathrm {diag} } ] \big)^k + \sum_{\pi \in S_k \setminus \{\mathrm{Id}_k \}} \int_{\Lambda^k} \prod_{i = 1}^k \tr_{ \mathfrak{F}_+ }[ a^*_{ x_i } a_{ x_{ \pi(i) } } G^{ \mathrm{ diag } } ] \mathrm d X^k,
	\end{equation*} where $S_k$ denotes the set of permutations of $\{ 1, ..., k \}$ and $\mathrm{Id_k} \in S_k$ is the identity. To bound the sum on the right-hand side, we first observe that an application of Cauchy--Schwartz and Lemma~\ref{lem:integral_gamma_0} imply
	\begin{equation*}
		| \tr_{\mathfrak F_+}[ a_x^* a_y G^{\mathrm {diag} } ] | \leq \sqrt{\tr_{\mathfrak F_+}[ a_x^* a_x G^{\mathrm {diag} } ] \tr_{\mathfrak F_+}[ a_y^* a_y G^{\mathrm {diag} } ]} \lesssim \beta^{-3/2} + \beta^{-1}.
	\end{equation*}
	Moreover, if $\pi\in S_k \setminus \{ \mathrm{Id}_k \}$, there exist distinct $i, j \in \{1, ..., k\}$ such that $\pi(i)\neq i$, $\pi(j) \neq j$. We can thus estimate
	\begin{equation*}
		\begin{split}
			\sum_{\pi \in S_k \setminus \{\mathrm{Id}_k \}}  \int_{\Lambda^k} \prod_{i = 1}^k & \tr_{ \mathfrak{F}_+ }[ a^*_{ x_i } a_{ x_{ \pi(i) } } G^{ \mathrm{ diag } }] \mathrm d X^k \\
			\lesssim	& k! ( \beta^{-3/2} + \beta^{-1} )^{k-2} \Big ( \int_{\Lambda^4} | \check \gamma^{\mathrm{diag}} (x_1 - x_{ 3 }) \check \gamma^{\mathrm{diag}} (x_2 - x_{ 4 }) | \, \mathrm d x_1 \, \mathrm d x_2 \, \mathrm d x_{3} \, \mathrm d x_{4} \\
			& \qquad +  \int_{\Lambda^4} | \check \gamma^{\mathrm{diag}} (x_1 - x_{ 3 }) \check \gamma^{\mathrm{diag}} (x_2 - x_{ 3 }) | \, \mathrm d x_1 \, \mathrm d x_2 \, \mathrm d x_{3} \Big) \\
			\lesssim_k & ( \beta^{-3/2} + \beta^{-1} )^{k-2} \beta^{-2}, 
		\end{split}
	\end{equation*} where we used \eqref{eq:gamma_Bog_bound_position} in the last step.
	
	To obtain a bound for the second term on the right-hand side of \eqref{eq:tr_N_Gdiag_1}, we argue as above and find
	\begin{equation*}
		\tr_{\mathfrak{F}_+}[ \big( \mathcal N^{k} - \mathcal N \cdot ... \cdot (\mathcal N-k + 1) \big) G^{ \mathrm{ diag} }] \lesssim_k \tr_{\mathfrak{F}_+}[ \mathcal N^{k-1} G^{ \mathrm{ diag} }] \lesssim_k (\beta^{-3/2} +\beta^{-1})^{(k-1)}.
	\end{equation*} In combination, these considerations prove \eqref{eq:expectation_powers_N}.
\end{proof}

To compare the energy of our trial state to the expression in \eqref{eq:main_result}, we need to remove the cutoff on the number of particles in several terms. The next lemma allows us to control the corresponding errors.
\begin{lemma} \label{lem:cutoff_a_remainder}
	Let $\beta \gtrsim \beta_{\mathrm{c}}$. There exist constants $\widetilde c, c>0$ such that, for every $m\in\mathbb N$, we have
	\begin{equation} \label{eq:cutoff_a_remainder_bound}
	\tr_{ \mathfrak{F}_+ }[ ( \mathds 1 - \mathds P_{ \widetilde c } ) \mathcal N_+^m G^{\mathrm{diag}} ] = \sum_{\alpha\in\mathcal A \setminus \widetilde{ \mathcal A}} \lambda_\alpha N_\alpha^m \lesssim_m \exp(-c N^{ \delta_{ \mathrm{ Bog } } } )
	\end{equation} 
	with $\mathds P_{ \widetilde c }$ in \eqref{eq:cutoff_definition}. In particular, $1\ge \kappa_0 \ge 1- C\exp(-c N^{ \delta_{ \mathrm{ Bog } } })$. Moreover, 
	\begin{equation} \label{eq:cutoff_a_remainder_bound_energy}
		\tr_{ \mathfrak{F}_+ }[ ( \mathds 1 - \mathds P_{ \widetilde c } ) \mathcal H^{\mathrm{diag}} G^{\mathrm{diag}} ] =	\sum_{\alpha\in\mathcal A \setminus \widetilde{ \mathcal A}} \lambda_\alpha E_\alpha \lesssim \exp(-c N^{ \delta_{ \mathrm{ Bog } } } ).
	\end{equation}
\end{lemma}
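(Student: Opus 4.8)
The plan is to exploit that $G^{\mathrm{diag}}$ is a quasi-free (product) state over the momentum modes, and to estimate the tail events "$N_\alpha^<$ large" and "$N_\alpha^>$ large" by a Chernoff bound; the point is that the two particle-number cutoffs entering $\mathds P_{\widetilde c}$ sit above the respective expectations by a factor that grows with $\widetilde c$, so choosing $\widetilde c$ large makes the tails exponentially small.

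First I would reduce everything to tail probabilities. Since $\widetilde{\mathcal A} = \{\alpha : N_\alpha^< \le t_< \text{ and } N_\alpha^> \le t_>\}$ with $t_< = \widetilde c\beta^{-1}N^{\delta_{\mathrm{Bog}}}$ and $t_> = \widetilde c N$, for any $c_\alpha\ge 0$ we have
\[
\sum_{\alpha\in\mathcal A\setminus\widetilde{\mathcal A}}\lambda_\alpha c_\alpha \;\le\; \sum_{\alpha:\,N_\alpha^<>t_<}\lambda_\alpha c_\alpha \;+\; \sum_{\alpha:\,N_\alpha^>>t_>}\lambda_\alpha c_\alpha ,
\]
and, by Cauchy--Schwarz, each of the two terms is bounded by $(\text{tail probability})^{1/2}\big(\sum_\alpha\lambda_\alpha c_\alpha^2\big)^{1/2}$. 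For $c_\alpha = N_\alpha^m$ the second factor is $\tr_{\mathfrak F_+}[\mathcal N_+^{2m}G^{\mathrm{diag}}]^{1/2}\lesssim_m N^{m}$ by Lemma~\ref{cor:expectation_powers_N} together with $\tr_{\mathfrak F_+}[\mathcal N_+ G^{\mathrm{diag}}]=\sum_p\gamma_p^{\mathrm{diag}}\lesssim\beta^{-3/2}$ (Lemma~\ref{lem:integral_gamma_0}); for $c_\alpha = E_\alpha$ I would write $E_\alpha = E_0 + \sum_p\epsilon(p)n_p^{(\alpha)}$ and note that $|E_0|$ and $\tr_{\mathfrak F_+}[(\mathcal H^{\mathrm{diag}}-E_0)^2 G^{\mathrm{diag}}]$ are bounded by fixed powers of $N$, by the same kind of momentum-sum estimates as in Lemma~\ref{lem:integral_gamma_0}. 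Hence it suffices to show that both tail probabilities are $\lesssim\exp(-cN^{\delta_{\mathrm{Bog}}})$ for a suitable choice of $\widetilde c$; the polynomial prefactors are then absorbed into the exponential at the price of a slightly smaller constant in the exponent, the implied constant becoming $m$-dependent as permitted by the statement.

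For the tail probabilities I would use that, because $\mathcal H^{\mathrm{diag}}$ is diagonal in the occupation-number basis $\{\Psi_\alpha\}$, the measure $\alpha\mapsto\lambda_\alpha$ makes the occupation numbers $(n_p)_{p\in\Lambda_+^*}$ independent, with $n_p$ geometrically distributed, $q_p = e^{-\beta\epsilon(p)}\in(0,1)$ and $\mathbb E[n_p]=\gamma_p^{\mathrm{diag}}$. For $\lambda>0$ with $e^\lambda q_p<1$ one has $\mathbb E[e^{\lambda n_p}] = (1-q_p)/(1-e^\lambda q_p) = (1-\gamma_p^{\mathrm{diag}}(e^\lambda-1))^{-1}$. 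Choosing $\lambda = c_0\beta$ with $c_0$ a small absolute constant and using $\epsilon(p)\ge|p|^2\ge(2\pi)^2$, hence $\gamma_p^{\mathrm{diag}}\le((2\pi)^2\beta)^{-1}$, one gets $\gamma_p^{\mathrm{diag}}(e^\lambda-1)\le\tfrac12$ and therefore $\mathbb E[e^{\lambda n_p}]\le\exp(2(e^\lambda-1)\gamma_p^{\mathrm{diag}})$ for all $p$. The exponential Markov inequality and independence then give
\[
\sum_{\alpha:\,N_\alpha^<>t_<}\lambda_\alpha \;\le\; e^{-\lambda t_<}\!\!\prod_{p\in P_{\mathrm B}}\!\mathbb E[e^{\lambda n_p}] \;\le\; \exp\!\Big(-\lambda t_< + 2(e^\lambda-1)\!\!\sum_{p\in P_{\mathrm B}}\!\gamma_p^{\mathrm{diag}}\Big),
\]
and an analogous bound for the $N_\alpha^>$-tail with a parameter $\lambda'=c_1\beta$ and the sum running over $p\in\Lambda_+^*\setminus P_{\mathrm B}$. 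The two required sum estimates are $\sum_{p\in P_{\mathrm B}}\gamma_p^{\mathrm{diag}}\le\beta^{-1}\sum_{0<|p|\le N^{\delta_{\mathrm{Bog}}}}|p|^{-2}\lesssim\beta^{-1}N^{\delta_{\mathrm{Bog}}}$ (from $\epsilon(p)\ge|p|^2$, $(e^x-1)^{-1}\le x^{-1}$ and Lemma~\ref{lem:sum_integral_approx}) and $\sum_{p\notin P_{\mathrm B}}\gamma_p^{\mathrm{diag}}\le\sum_{p\in\Lambda_+^*}\gamma_p^{\mathrm{diag}}\lesssim\beta^{-3/2}+\beta^{-1}\lesssim N$ (using $\beta\gtrsim\beta_{\mathrm{c}}$ and \eqref{eq:gamma_Bog_bound_momentum}). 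Plugging these in, the first exponent is $\le -c_0(\widetilde c-C)N^{\delta_{\mathrm{Bog}}}$ and the second is $\le -c_1\widetilde c\,\beta N + Cc_1\beta^{-1/2}\le -cN^{1/3}$, both for $\widetilde c$ fixed larger than the absolute constant $C$ and using $\beta N\sim\beta^{-1/2}\sim N^{1/3}$; since $\delta_{\mathrm{Bog}}<1/3$ (cf.\ Lemma~\ref{lem:perturbation_number_particles}) this yields the claimed bound $\lesssim\exp(-cN^{\delta_{\mathrm{Bog}}})$ in both cases. The statements about $\kappa_0$ follow from \eqref{eq:cutoff_a_remainder_bound} with $m=0$.

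The argument is essentially a routine large-deviation estimate for a quasi-free state, and I expect the only delicate point to be the bookkeeping in the last step: one must verify that the cutoff $t_< = \widetilde c\beta^{-1}N^{\delta_{\mathrm{Bog}}}$ genuinely exceeds $\mathbb E[\mathcal N^<]$ by a factor comparable to $\widetilde c$, which forces the use of the restricted sum bound $\sum_{p\in P_{\mathrm B}}\gamma_p^{\mathrm{diag}}\lesssim\beta^{-1}N^{\delta_{\mathrm{Bog}}}$ rather than the cruder $\lesssim\beta^{-3/2}$. It is precisely here, and in the analogous comparison for $\mathcal N^>$, that the hypothesis "$\widetilde c$ sufficiently large" enters.
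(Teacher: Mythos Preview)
Your proof is correct and follows essentially the same approach as the paper: an exponential Chebyshev/Chernoff bound for the tail events $\{\mathcal N^< > t_<\}$ and $\{\mathcal N^> > t_>\}$ exploiting the product structure of $G^{\mathrm{diag}}$, combined with the momentum-sum estimates $\sum_{p\in P_{\mathrm B}}\gamma_p^{\mathrm{diag}}\lesssim\beta^{-1}N^{\delta_{\mathrm{Bog}}}$ and $\sum_{p}\gamma_p^{\mathrm{diag}}\lesssim\beta^{-3/2}$. Your reduction of the weighted sums (for $m\ge 1$ and for $E_\alpha$) to the tail probabilities via Cauchy--Schwarz is slightly cleaner than the paper's treatment of the $m\ge 1$ case, which instead uses $x^m\mathds 1_{\{x>0\}}\le m!\,e^{\eta x}/\eta^m$ together with a direct decomposition into several terms; the paper does use your Cauchy--Schwarz argument for the energy bound~\eqref{eq:cutoff_a_remainder_bound_energy}.
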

\begin{proof}
	We start by proving the first statement in the case $m = 0$. For notational convenience, let us introduce $\widetilde N^< = \widetilde c \beta^{-1} N^{\delta_{ \mathrm{ Bog } }}$ and $\widetilde N = \widetilde c N$.
	Using the fact that 
	\begin{equation*}
		\mathds 1 - \mathds P_{\widetilde c} \le \mathds 1_{ \{ \mathcal N^< > \widetilde N^< \}} + \mathds 1_{ \{ \mathcal N^> > \widetilde N \} } \le \mathds 1_{ \{ \mathcal N^< > \widetilde N^< \}} + \mathds 1_{ \{ \mathcal N_+ > \widetilde N \} },
	\end{equation*} we can bound 
	\begin{equation}\label{eq:cutoff_remainder_first_bound}
	\tr_{ \mathfrak{F}_+ }[ ( \mathds 1 - \mathds P_{ \widetilde c } ) G^{\mathrm{diag}} ] \le \tr_{\mathfrak F_+}[ \mathds 1_{ \{ \mathcal N^< > \widetilde N^< \}}  G^{ \mathrm{ diag } } ] + \tr_{\mathfrak F_+}[  \mathds 1_{ \{ \mathcal N_+ > \widetilde N \} } G^{ \mathrm{ diag } } ]. 
	\end{equation} 
	With the inequality $\mathds 1_{\{ x > 0 \}} \le e^{\eta x}$ valid for any $x \in \mathbb R$ and $\eta>0$, we estimate the second term on the right-hand side by 
	\begin{equation}\label{eq:cutoff_intermediate_bound_whole_1}
		\tr_{\mathfrak F_+}[  \mathds 1_{ \{ \mathcal N_+ > \widetilde N \} } G^{ \mathrm{ diag } } ] \le  \tr_{\mathfrak F_+}[ e^{\eta (\mathcal N_+ - \widetilde N)} G^{ \mathrm{ diag } } ]	= e^{-\eta \widetilde N} \frac { \tr_{\mathfrak F_+}[ e^{-\beta \mathcal H^{ \mathrm{ diag } }  + \eta \mathcal N_+}  ] } { \tr_{\mathfrak F_+}[ e^{-\beta \mathcal H^{ \mathrm{ diag } }  } ] }.
	\end{equation}
	For the choice $\eta = 2\pi^2 \beta $, which ensures $ \eta / \beta < 4\pi^2$, we have 
	\begin{equation*}
		\tr_{\mathfrak F_+}[ e^{-\beta \mathcal H^{ \mathrm{ diag } }  + \eta \mathcal N_+} ] = \exp\Bigg( - \sum_{p\in\Lambda_+^*} \log \Big( 1-e^{-\beta\eps(p) + \eta } \Big)\Bigg).
	\end{equation*} 
	We expand the logarithm to second order and use the fact that the function $x\mapsto (\cosh(x)-1)^{-1}$ is decreasing and satisfies $(\cosh(x)-1)^{-1}\le 2x^{-2}$ for $x > 0$, to see that
	\begin{equation} \label{eq:large_deviations_intermediate_1}
		\begin{split}
			\tr_{\mathfrak F_+}[ e^{-\beta\mathcal H^{\mathrm{ diag }} + \eta \mathcal N_+} ]	\le & 	\tr_{\mathfrak F_+}[ e^{-\beta\mathcal H^{\mathrm{ diag }} } ] \exp \Bigg( \eta \sum_{p\in\Lambda_+^*} \frac 1 {e^{\beta\eps(p)} -1 } \Bigg) \exp\Bigg( \frac 1 4 \eta^2 \sum_{p\in\Lambda_+^*} \frac{1}{\cosh(\frac 1 2 \beta\eps(p)) - 1 }\Bigg) \\
			\le & 	\tr_{\mathfrak F_+}[ e^{-\beta\mathcal H^{\mathrm{ diag }} } ] \exp \Bigg( \eta \sum_{p\in\Lambda_+^*} \frac 1 {e^{\beta\eps(p)} -1 } \Bigg) \exp\Bigg( 2 \eta^2 \sum_{p\in\Lambda_+^*} \frac{1}{\beta^2|p|^4 }\Bigg) \\
			\lesssim & 	\tr_{\mathfrak F_+}[ e^{-\beta\mathcal H^{\mathrm{ diag }} } ] \exp \Bigg( \eta \sum_{p\in\Lambda_+^*} \frac 1 {e^{\beta\eps(p)} -1 } \Bigg).
		\end{split}
	\end{equation} It follows from \eqref{eq:mu0_definition_implicit} that
	\begin{equation*}
		\sum_{p\in\Lambda_+^*} \frac 1 {e^{\beta\eps(p)} -1 } \le \sum_{p\in\Lambda_+^*} \frac 1 {e^{\beta( |p|^2 - \mu_0 )} -1 } \le N .
	\end{equation*} 
	In combination with \eqref{eq:cutoff_intermediate_bound_whole_1} and \eqref{eq:large_deviations_intermediate_1}, this bound implies
	\begin{equation} \label{eq:cutoff_a_remainder_intermediate_bound}
		\begin{split}
			\tr_{\mathfrak F_+}[  \mathds 1_{ \{ \mathcal N_+ > \widetilde N \} } G^{ \mathrm{ diag } } ] \lesssim e^{-\eta ( \widetilde N - N)}.
		\end{split}
	\end{equation} 
	
	Next, we prove a similar estimate for the first term on the right-hand side of \eqref{eq:cutoff_remainder_first_bound}. We have
	\begin{equation*}
		\tr_{\mathfrak F_+}[ \mathds 1_{ \{ \mathcal N^< > \widetilde N^< \}}  G^{ \mathrm{ diag } } ] \le e^{-\eta \widetilde N^<} \frac { \tr_{\mathfrak F_+}[ e^{-\beta \mathcal H^{ \mathrm{ diag, < } }  + \eta \mathcal N^<}  ] } { \tr_{\mathfrak F_+}[ e^{-\beta \mathcal H^{ \mathrm{ diag , < } }  } ] }
	\end{equation*} with $\mathcal H^{\mathrm {diag}, <} = E_0 + \sum_{p \in P_{\mathrm B}}\eps(p) a_p^* a_p$. Arguing as in \eqref{eq:large_deviations_intermediate_1} with the same choice of $\eta$, we also see that 
	\begin{equation*}
		\begin{split}
			\tr_{\mathfrak F_+}[ e^{-\beta\mathcal H^{\mathrm{ diag }, <} + \eta \mathcal N^<} ]	\lesssim	\tr_{\mathfrak F_+}[ e^{-\beta\mathcal H^{\mathrm{ diag }, <} } ] \exp \Bigg( \eta \sum_{p\in P_{\mathrm{ B }}} \frac 1 {e^{\beta\eps(p)} -1 } \Bigg).	
			\end{split}
	\end{equation*} The sum in the above exponential is bounded by
	\begin{equation*}
		\beta^{-1} \sum_{p\in P_{\mathrm{ B }}} \frac 1 { |p|^2 } \lesssim \beta^{-1} N^{\delta_{ \mathrm{ Bog } }},
	\end{equation*} and hence 
	\begin{equation}
		\label{eq:cutoff_a_remainder_intermediate_bound_2}
		\tr_{\mathfrak F_+}[ \mathds 1_{ \{ \mathcal N^< > \widetilde N^< \}}  G^{ \mathrm{ diag } } ] \lesssim e^{ -\eta ( \widetilde N^< - c_1 \beta^{-1} N^{ \delta_{ \mathrm{ Bog } } } ) },
	\end{equation} for some $c_1>0$ independent of $N$. In combination, \eqref{eq:cutoff_remainder_first_bound}, \eqref{eq:cutoff_a_remainder_intermediate_bound} and \eqref{eq:cutoff_a_remainder_intermediate_bound_2} imply \eqref{eq:cutoff_a_remainder_bound} for $m = 0$ and $\widetilde c > \max\{1, c_1\}$.
	
	If $m\ge 1$, we write
	\begin{equation*}
		\begin{split}
			\tr_{ \mathfrak{F}_+ }[ ( \mathds 1 - \mathds P_{ \widetilde c } ) \mathcal N_+^m G^{\mathrm{diag}} ] \le & \tr_{\mathfrak F_+}[\mathcal N_+^m\mathds 1_{\{\mathcal N_+ > \widetilde N\}} G^{ \mathrm{ diag } } ] + \tr_{\mathfrak F_+}[\mathcal N_+^m\mathds 1_{\{\mathcal N^< > \widetilde N^<\}} G^{ \mathrm{ diag } } ] \\
			\lesssim_m &  \tr_{\mathfrak F_+}[( \mathcal N_+ - \widetilde N)^m\mathds 1_{\{\mathcal N_+ > \widetilde N\}}  G^{ \mathrm{ diag } } ] + \widetilde N^m \tr_{\mathfrak F_+}[\mathds 1_{\{\mathcal N_+ > \widetilde N\}}  G^{ \mathrm{ diag } } ] \\
			& + \tr_{\mathfrak F_+}[(\mathcal N^<  - \widetilde N^< )^m\mathds 1_{\{\mathcal N^< > \widetilde N^<\}} G^{ \mathrm{ diag } } ] + (\widetilde N^<)^m \tr_{\mathfrak F_+}[\mathds 1_{\{\mathcal N^< > \widetilde N^<\}}  G^{ \mathrm{ diag } } ] \\ 
			& + \tr_{\mathfrak F_+}[(\mathcal N^>)^m\mathds 1_{\{\mathcal N^< > \widetilde N^<\}} G^{ \mathrm{ diag } } ].
		\end{split}
	\end{equation*} 
	To estimate the first and the third term on the right-hand side, we apply the inequality $x^m\mathds 1_{\{x\ge 0\}}\le m! e^{\eta x}/\eta^m$. The rest of the argument is the same as in the case $m=0$. With \eqref{eq:cutoff_a_remainder_intermediate_bound}, \eqref{eq:cutoff_a_remainder_intermediate_bound_2} and $\widetilde N, \widetilde N^< \lesssim N$ we see that the second and the fourth term are bounded by a constant times $\exp(-c N^{ \delta_{ \mathrm{ Bog } } })$. To obtain a bound for the last term, we observe that it equals 
	\begin{equation*}
		\tr_{\mathfrak F_+}[\mathds 1_{\{\mathcal N^< > \widetilde N^<\}} G^{ \mathrm{ diag } } ] 	\tr_{\mathfrak F_+}[(\mathcal N^>)^m G^{ \mathrm{ diag } } ] \lesssim e^{-cN^{ \delta_{ \mathrm{ Bog } } } }	\tr_{\mathfrak F_+}[ \mathcal N_+^m G^{ \mathrm{ diag } } ] \lesssim_m N^m e^{-cN^{ \delta_{ \mathrm{ Bog } } } } ,
	\end{equation*} where we used $[\mathcal N^<, \mathcal N^>] = 0$, \eqref{eq:cutoff_a_remainder_intermediate_bound} and \eqref{eq:cutoff_a_remainder_intermediate_bound_2}.

	It remains to prove \eqref{eq:cutoff_a_remainder_bound_energy}. The Cauchy--Schwarz inequality and \eqref{eq:cutoff_a_remainder_bound} for $m=0$ imply
	\begin{equation}\label{eq:cutoff_a_energy_intermediate}
		\begin{split}
			\tr_{ \mathfrak{F}_+ }[ ( \mathds 1 - \mathds P_{ \widetilde c } ) \mathcal H^{\mathrm{diag}} G^{\mathrm{diag}} ] \le & \big(\tr_{ \mathfrak{F}_+ }[ ( \mathds 1 - \mathds P_{ \widetilde c } ) G^{\mathrm{diag}} ] \big)^{1/2} \big(\tr_{ \mathfrak{F}_+ }[ (\mathcal H^{ \mathrm{ diag } })^2 G^{\mathrm{diag}} ] \big)^{1/2} \\
			\lesssim & e^{-cN^{ \delta_{ \mathrm{ Bog } } }} \big(\tr_{ \mathfrak{F}_+ }[ (\mathcal H^{ \mathrm{ diag } })^2 G^{\mathrm{diag}} ] \big)^{1/2}.
		\end{split}
	\end{equation} 
	To estimate the second factor on the right-hand side we apply Wick's theorem:
	\begin{equation*}
		\tr_{ \mathfrak{F}_+ }[ (\mathcal H^{ \mathrm{ diag } })^2 G^{\mathrm{diag}} ]  = \Big(\sum_{p \in\Lambda_+^*} \eps(p)\gamma_p^{\mathrm{diag}} \Big)^2 + \sum_{p \in\Lambda_+^*} \eps(p)^2\gamma_p^{\mathrm{diag}} ( 1 +  \gamma_p^{\mathrm{diag}}) \lesssim \beta^{-5} + \beta^{-5/2}.
	\end{equation*} 
	In the last step we used the monotonicity of $x \mapsto (e^x-1)^{-1}$, $\eps(p) \geq |p|^2$ and Lemma \ref{lem:sum_integral_approx}. Inserting this bound into \eqref{eq:cutoff_a_energy_intermediate} we get \eqref{eq:cutoff_a_remainder_bound_energy}.

\end{proof}

\subsubsection{The Bogoliubov Gibbs state in the original representation}

We now compute and estimate the correlation functions of $G(z)$.

\begin{lemma}\label{lem:1pdm_pairing_bogoliubov}
	For every $z\in\mathbb C$, we have 
	\begin{equation}\label{eq:correlations_GB_momentum}
		\tr_{\mathfrak F_+}[a_q^*a_p G(z) ] = \gamma_p \delta_{p,q}, \qquad \tr_{\mathfrak F_+}[a_qa_p G(z) ] = (z/|z|)^2 \alpha_p \delta_{p,-q},
	\end{equation} 
with 
	\begin{equation}
		\label{eq:1pdm_pairing_bogoliubov_momentum}
		\begin{split}
			\gamma_p = (1+2v_p^2)  \gamma_p^{ \mathrm{ diag } } + v_p^2, \qquad  \alpha_p =u_pv_p \Big(  2 \gamma_p^{ \mathrm{ diag } }  + 1 \Big)
		\end{split}
	\end{equation} 
	and $u_p, v_p$ in \eqref{eq:up_vp_def}. Moreover, we have the bounds 
	\begin{equation} \label{eq:pointwise_bounds_gamma_alpha}
		\gamma_p \lesssim \frac  { N_0^2 \mathds 1_{ P_ {\mathrm B} }(p)  } { N^2 |p|^4} + \frac {1} {\exp(\beta|p|^2) - 1 }, \qquad \alpha_p \lesssim \frac  { N_0 \mathds 1_{ P_ {\mathrm B} }(p)  } {N |p|^2} \left(1 + \frac 1 {\beta |p|^2}\right),
	\end{equation} as well as 
	\begin{align}
	\sum_{ p\in P_{\mathrm B} } \gamma_p \lesssim & \beta^{-1} N^{\delta_{\mathrm {Bog}}} + \frac{N_0^2}{N^2}, &	\sum_{ p\in \Lambda_+^* } \gamma_p \lesssim &  \beta^{-3/2} + \beta^{-1/2} + \frac{N_0^2}{N^2}, \nonumber \\
		\sum_{ p\in \Lambda_+^* } \gamma_p^2 \lesssim & \beta^{-2} + \frac{N_0^4}{N^4} , & \sum_{p\in\Lambda_+^*} |p| \gamma_p \lesssim & \beta^{-2} + \beta^{-1} + \frac {N_0^2}  {N^2} \log N , \label{eq:norm_bounds_gamma} \\
		\sum_{ p\in \Lambda_+^* } | \alpha_p | \lesssim & \frac{N_0}{N} \left( \beta^{-1} + N^{\delta_{\mathrm {Bog}}}  \right) , & \sum_{ p\in \Lambda_+^*} |\alpha_p|^2 \lesssim  & \frac{N_0^2}{N^2} \left( \beta^{-2} +1 \right) ,  \nonumber \\
	 	\sum_{p\in\Lambda_+^*} |p| |\alpha_p| \lesssim  & \frac{N_0}{N} \left( \beta^{-1} \log N + N^{2\delta_{\mathrm{Bog}}} \right). \label{eq:norm_bounds_alpha}
	\end{align}
\end{lemma}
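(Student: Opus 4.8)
The plan is to reduce every quantity to the corresponding one for $G^{\mathrm{diag}}$ via the conjugation $G(z) = T_z G^{\mathrm{diag}} T_z^*$, and then to estimate using the pointwise bounds from Lemmas~\ref{lem:up_vp_norms} and~\ref{lem:integral_gamma_0}.

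First, for the correlation functions I would use cyclicity of the trace to write $\tr_{\mathfrak F_+}[a_q^* a_p G(z)] = \tr_{\mathfrak F_+}\big[(T_z^* a_q^* T_z)(T_z^* a_p T_z)\, G^{\mathrm{diag}}\big]$ and substitute \eqref{eq:action_bogoliubov}. Expanding the product yields four terms; since $G^{\mathrm{diag}}$ is quasi-free with $\tr_{\mathfrak F_+}[a_r^* a_s G^{\mathrm{diag}}] = \gamma_r^{\mathrm{diag}}\delta_{r,s}$ and vanishing pairing (Lemma~\ref{lem:integral_gamma_0}), only the $a_q^* a_p$ and the $a_{-q}a_{-p}^*$ terms survive, and the CCR turn the latter into $(\gamma_p^{\mathrm{diag}}+1)\delta_{p,q}$. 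Using that $u_p$, $v_p$, $\eps(p)$ (hence $\gamma_p^{\mathrm{diag}}$) depend only on $|p|$ — so that $u_{-p}=u_p$, $v_{-p}=v_p$, $\gamma_{-p}^{\mathrm{diag}}=\gamma_p^{\mathrm{diag}}$, and $\mathds 1_{P_{\mathrm B}}(-p)=\mathds 1_{P_{\mathrm B}}(p)$ since $P_{\mathrm B}$ is a ball — together with $u_p^2 = 1+v_p^2$, I obtain $\gamma_p = (u_p^2+v_p^2)\gamma_p^{\mathrm{diag}}+v_p^2 = (1+2v_p^2)\gamma_p^{\mathrm{diag}}+v_p^2$. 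The pairing $\tr_{\mathfrak F_+}[a_q a_p G(z)]$ is treated identically: the surviving contractions force $q=-p$, produce the common phase $z^2/|z|^2$, and give the coefficient $u_pv_p(2\gamma_p^{\mathrm{diag}}+1)=\alpha_p$.

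Next I would derive \eqref{eq:pointwise_bounds_gamma_alpha}. From Lemma~\ref{lem:up_vp_norms} (and $v_p=0$ off $P_{\mathrm B}$) one has $v_p^2 \lesssim N_0^2\mathds 1_{P_{\mathrm B}}(p)/(N^2|p|^4)$ and $u_p=\sqrt{1+v_p^2}\lesssim 1$, while $\eps(p)\ge|p|^2$ gives $\gamma_p^{\mathrm{diag}}\le (e^{\beta|p|^2}-1)^{-1}\le(\beta|p|^2)^{-1}$; plugging these into the formulas for $\gamma_p$ and $\alpha_p$ yields \eqref{eq:pointwise_bounds_gamma_alpha}, and in particular the convenient bounds $\gamma_p \lesssim \gamma_p^{\mathrm{diag}}+v_p^2$ and $|\alpha_p|\lesssim v_p\big(1+(\beta|p|^2)^{-1}\big)$. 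The summed bounds \eqref{eq:norm_bounds_gamma}, \eqref{eq:norm_bounds_alpha} then follow by splitting $\gamma_p$ and $\alpha_p$ accordingly and estimating each piece: for the $\gamma^{\mathrm{diag}}$-parts I invoke the momentum-sum bounds of Lemma~\ref{lem:integral_gamma_0} directly, except for $\sum_p|p|\gamma_p^{\mathrm{diag}}\lesssim\beta^{-2}+\beta^{-1}$, which I would get from Lemma~\ref{lem:sum_integral_approx} applied to the decreasing function $|p|/(e^{\beta|p|^2}-1)$; for the $v$-parts I use $\sum_p|p|^k|v_p|\lesssim(N_0/N)N^{(k+1)\delta_{\mathrm{Bog}}}$ and $\|v\|_2\lesssim N_0/N$ from Lemma~\ref{lem:up_vp_norms}, together with the elementary bounds $\sum_{p\in\Lambda_+^*}|p|^{-4}\lesssim 1$ and $\sum_{p\in P_{\mathrm B}}|p|^{-3}\lesssim\log N$ (the latter, via Lemma~\ref{lem:sum_integral_approx} and the cutoff $|p|\le N^{\delta_{\mathrm{Bog}}}$, producing the $\log N$ factors in the bounds for $\sum_p|p|\gamma_p$ and $\sum_p|p||\alpha_p|$). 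Finally, $\beta^{-1}\le\tfrac12(\beta^{-3/2}+\beta^{-1/2})$ brings $\sum_p\gamma_p$ into the stated form.

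I do not expect any serious obstacle: the lemma is essentially an exercise in the algebra of Bogoliubov transformations followed by routine sum-versus-integral estimates. The only steps demanding a little care are tracking precisely which Wick contractions of $T_z^*a_q^*a_pT_z$ and $T_z^*a_qa_pT_z$ survive in the quasi-free state $G^{\mathrm{diag}}$ (together with the bookkeeping of the phases $z^2/|z|^2$), and verifying the monotonicity hypotheses required to apply Lemma~\ref{lem:sum_integral_approx} — in particular for the logarithmically divergent sum $\sum_{p\in P_{\mathrm B}}|p|^{-3}$ cut off at $|p|\sim N^{\delta_{\mathrm{Bog}}}$, which is the source of the $\log N$ contributions.
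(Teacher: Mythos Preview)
Your proposal is correct and follows essentially the same approach as the paper: conjugating by $T_z$ via \eqref{eq:action_bogoliubov}, evaluating the surviving contractions in $G^{\mathrm{diag}}$ using Lemma~\ref{lem:integral_gamma_0} and $u_p^2=1+v_p^2$, then deducing \eqref{eq:pointwise_bounds_gamma_alpha} from Lemma~\ref{lem:up_vp_norms} and the summed bounds from Lemma~\ref{lem:sum_integral_approx}. Your write-up is in fact more detailed than the paper's (which dispatches the summed bounds in a single sentence), and your auxiliary observations --- the monotonicity of $r\mapsto r/(e^{\beta r^2}-1)$, the $\log N$ from $\sum_{p\in P_{\mathrm B}}|p|^{-3}$, and the AM--GM step $\beta^{-1}\le\tfrac12(\beta^{-3/2}+\beta^{-1/2})$ --- are all accurate.
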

\begin{proof}
	Using \eqref{eq:action_bogoliubov}, Lemma~\ref{lem:integral_gamma_0}, the fact that $\gamma^{ \mathrm{ diag } }$ and $v$ are even functions of $p$, and the identity $u_p^2 = 1 + v_p^2$, we find
	\begin{equation}\label{eq:gamma_GB_computation}
		\begin{split}
			\tr_{\mathfrak F_+}[a_q^*a_p G(z)] = & \tr_{\mathfrak F_+} \Big [ \Big( u_q a_q^* + v_q \frac{ \overline z^2 }{ |z|^2 } a_{-q} \Big) \Big( u_p a_p +v_p \frac{ z^2 }{ |z|^2 } a_{-p}^* \Big) G^{ \mathrm{ diag} } \Big] \\
			= & u_p^2 \gamma_p^{ \mathrm{ diag } } \delta_{p,q} + v_{p}^2 (1 + \gamma_{p}^{ \mathrm{ diag } } ) \delta_{p,q} = (1+2v_p^2)  \gamma_p^{ \mathrm{ diag } } \delta_{p, q} + v_p^2\delta_{p,q},
		\end{split}
	\end{equation} which is the first identity in \eqref{eq:1pdm_pairing_bogoliubov_momentum}. We find the second identity after an analogous computation for $	\tr_{\mathfrak F_+}[a_q^*a_p^* G(z)] $. In combination, \eqref{eq:1pdm_pairing_bogoliubov_momentum} and Lemma \ref{lem:up_vp_norms} show \eqref{eq:pointwise_bounds_gamma_alpha}. The bounds in \eqref{eq:norm_bounds_gamma} and \eqref{eq:norm_bounds_alpha} are a consequence of \eqref{eq:pointwise_bounds_gamma_alpha} and Lemma~\ref{lem:sum_integral_approx}. 
\end{proof}

Lemma \ref{lem:1pdm_pairing_bogoliubov} allows us to compare the number of particles in $G(z)$ with the one in the Gibbs state
\begin{equation}\label{eq:GibbsStateIdealGas}
	G^{ \mathrm {id} } = \frac{\exp(-\beta \sum_{p \in \Lambda_+} (p^2 - \mu_0) a_p^*a_p)}{\tr_{ \mathfrak{F}_+ } [ \exp(-\beta \sum_{p \in \Lambda_+} (p^2 - \mu_0) a_p^*a_p) ]}
\end{equation}
describing the thermally excited particles in the ideal Bose gas.

\begin{lemma}\label{lem:n_g(z)_ideal_difference}
	We have 
	\begin{equation} \label{eq:n_g(z)_ideal_difference}
			\big | \tr_{ \mathfrak{F}_+ } [ \mathcal N_+  G(z) ] - \tr_{ \mathfrak{F}_+ } [ \mathcal N_+ G^{ \mathrm {id} } ] \big | \lesssim \frac{N_0}{\beta N} + \frac{N_0^2}{N^2}.
	\end{equation}
\end{lemma}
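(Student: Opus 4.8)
The plan is to turn the statement into a comparison of momentum sums. By Lemma~\ref{lem:1pdm_pairing_bogoliubov} we have $\tr_{\mathfrak F_+}[\mathcal N_+ G(z)] = \sum_{p\in\Lambda_+^*}\gamma_p$ with $\gamma_p = (1+2v_p^2)\gamma_p^{\mathrm{diag}}+v_p^2$, while $G^{\mathrm{id}}$ in \eqref{eq:GibbsStateIdealGas} is a quasi-free state whose one-particle density matrix is diagonal in the momentum basis with eigenvalues $\gamma_p^{\mathrm{id}} \coloneqq (\exp(\beta(|p|^2-\mu_0))-1)^{-1}$, so that $\tr_{\mathfrak F_+}[\mathcal N_+ G^{\mathrm{id}}] = \sum_{p\in\Lambda_+^*}\gamma_p^{\mathrm{id}}$ (this equals $N-N_0$ by \eqref{eq:mu0_definition_implicit}). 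I would then write
\begin{equation*}
\gamma_p - \gamma_p^{\mathrm{id}} = \big(\gamma_p^{\mathrm{diag}}-\gamma_p^{\mathrm{id}}\big) + 2v_p^2\gamma_p^{\mathrm{diag}} + v_p^2,
\end{equation*}
apply the triangle inequality, and estimate the three resulting sums separately.

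The two sums containing $v_p$ are controlled directly by Lemma~\ref{lem:up_vp_norms}. For the last one, $\sum_{p}v_p^2 = \|v\|_2^2 \lesssim N_0^2/N^2$. For the middle one I would combine the pointwise bounds $v_p^2 \lesssim N_0^2/(N^2|p|^4)$ and $\gamma_p^{\mathrm{diag}} \le (\beta\eps(p))^{-1} \le (\beta|p|^2)^{-1}$, which gives $\sum_p v_p^2\gamma_p^{\mathrm{diag}} \lesssim (N_0^2/(N^2\beta))\sum_{p\in\Lambda_+^*}|p|^{-6} \lesssim N_0^2/(N^2\beta)$; since $N_0\le N$ this is bounded by $N_0/(\beta N)$.

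The main term is $\sum_p |\gamma_p^{\mathrm{diag}}-\gamma_p^{\mathrm{id}}|$. Here the key observation is that $\eps(p) = |p|^2-\mu_0$ for $p\in\Lambda_+^*\setminus P_{\mathrm B}$ by \eqref{eq:bogoliubov_dispersion}, so that only momenta in $P_{\mathrm B}$ contribute. For such $p$ one has $\eps(p) = \sqrt{|p|^2-\mu_0}\,\sqrt{|p|^2-\mu_0+16\pi\mathfrak a_N N_0} \ge |p|^2-\mu_0 > 0$, and writing the difference as an integral,
\begin{equation*}
0 \le \gamma_p^{\mathrm{id}} - \gamma_p^{\mathrm{diag}} = \int_{\beta(|p|^2-\mu_0)}^{\beta\eps(p)} \frac{\mathrm dt}{4\sinh^2(t/2)} \le \frac{\beta\big(\eps(p)-|p|^2+\mu_0\big)}{\big(\beta(|p|^2-\mu_0)\big)^2},
\end{equation*}
where I use that $t\mapsto (4\sinh^2(t/2))^{-1}$ is decreasing and bounded by $t^{-2}$. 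Since $\eps(p)-(|p|^2-\mu_0) \le 8\pi\mathfrak a_N N_0$ (the elementary bound $\sqrt{A}\sqrt{A+B}-A \le B/2$ with $A = |p|^2-\mu_0$, $B = 16\pi\mathfrak a_N N_0$) and $|p|^2-\mu_0 \ge |p|^2$, this yields $\gamma_p^{\mathrm{id}}-\gamma_p^{\mathrm{diag}} \lesssim \mathfrak a_N N_0/(\beta|p|^4)$. Summing over $p$ and recalling $\mathfrak a_N = \mathfrak a/N$ gives $\sum_p |\gamma_p^{\mathrm{diag}}-\gamma_p^{\mathrm{id}}| \lesssim N_0/(\beta N)$, and adding the three bounds proves \eqref{eq:n_g(z)_ideal_difference}.

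The only point that requires some care — and it is not deep — is to retain enough decay in $|p|$ in each of the three sums so that they converge with the stated size. This works precisely because the correction to the dispersion relation, $\eps(p)-(|p|^2-\mu_0)$, is uniformly of order $\mathfrak a_N N_0 = \mathcal O(N_0/N)$ and, after division by $(|p|^2-\mu_0)^2$, produces the summable weight $|p|^{-4}$; if one only used the uniform bound $v_p,\,\gamma_p^{\mathrm{diag}} \lesssim 1$ the sums would be far too large.
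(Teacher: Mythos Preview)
Your proof is correct and essentially identical to the paper's: both express the difference $\gamma_p-\gamma_p^{\mathrm{id}}$ as $(\gamma_p^{\mathrm{diag}}-\gamma_p^{\mathrm{id}})+2v_p^2\gamma_p^{\mathrm{diag}}+v_p^2$, bound the $v_p$-terms via Lemma~\ref{lem:up_vp_norms}, and handle the main term by writing $\gamma_p^{\mathrm{id}}-\gamma_p^{\mathrm{diag}}$ as an integral of $(4\sinh^2(t/2))^{-1}$ combined with the observation $\eps(p)-(|p|^2-\mu_0)\lesssim \mathfrak a_N N_0$. The only cosmetic differences are that the paper parametrizes the integral over $[0,1]$ and bounds $\sinh$ at the end, while you integrate over $[\beta(|p|^2-\mu_0),\beta\eps(p)]$ and invoke $\sinh(x)\ge x$ first.
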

\begin{proof}
With Lemma \ref{lem:1pdm_pairing_bogoliubov} we compute
\begin{equation}\label{eq:gamma_gamma0_difference}
	\begin{split}
		\tr_{ \mathfrak{F}_+ } [ \mathcal N_+  G(z) ] - \tr_{ \mathfrak{F}_+ } [ \mathcal N_+ G^{ \mathrm {id} } ] = & \sum_{ p\in P_{\mathrm B} } \left( \gamma_p - \frac{1}{e^{\beta(|p|^2-\mu_0) -1}} \right) \\
		= &  \sum_{p\in P_{\mathrm B} } \left( \frac{1}{e^{\beta\eps(p)} -1} - \frac{1}{e^{\beta(|p|^2-\mu_0)} -1} \right) +  \sum_{ p\in P_{\mathrm B} }\left( \frac{2 v_p^2}{e^{\beta\eps(p)} -1} + v_p^2 \right).
	\end{split}
\end{equation} 
Using $e^x-1\ge x$, $\eps(p)\ge |p|^2$, and Lemma \ref{lem:up_vp_norms}, it is straightforward to see that 
\begin{equation}\label{eq:n0_n0tilde_difference_1}
	\sum_{ p\in P_{\mathrm B} }\left( \frac{2 v_p^2}{e^{\beta\eps(p)} -1} + v_p^2 \right) \lesssim \frac{N_0^2}{N^2} (1+\beta^{-1}).
\end{equation} To bound the first term on the right-hand side of \eqref{eq:gamma_gamma0_difference}, we write 
\begin{equation*}
	\frac{1}{e^{\beta(|p|^2-\mu_0)} -1} - \frac{1}{e^{\beta\eps(p)} -1} = \int_0^1 \frac{\beta \left( \eps(p) - (|p|^2-\mu_0) \right)}{ 4\sinh^2 \left ( \frac 1 2 ( t\beta\eps(p) + (1-t)\beta(|p|^2-\mu_0) ) \right )  } \, \mathrm d t.
\end{equation*} With
\begin{equation*}
	\left|  \eps(p) - (|p|^2-\mu_0) \right| = (|p|^2-\mu_0) \left | \sqrt{1+\frac {16\pi a_N N_0} {|p|^2 - \mu_0}  }  - 1 \right| \lesssim \frac{N_0}{N},
\end{equation*} we obtain the bound
\begin{equation}\label{eq:n0_n0tilde_difference_2}
	\left|  \sum_{ p\in P_{\mathrm B}  } \left( \frac{1}{e^{\beta\eps(p)} -1} - \frac{1}{e^{\beta(|p|^2-\mu_0)} -1} \right) \right| \lesssim \frac{\beta N_0}{N} \sum_{ p\in P_{\mathrm B}  } \frac{1}{\sinh^2(\beta |p|^2/2)} \lesssim \frac{N_0}{\beta N}.
\end{equation} In the last step we used $\sinh(x) \geq x$ for $x \geq 0$. In combination, \eqref{eq:gamma_gamma0_difference}, \eqref{eq:n0_n0tilde_difference_1} and \eqref{eq:n0_n0tilde_difference_2} prove \eqref{eq:n_g(z)_ideal_difference}. 
\end{proof}

We now consider the eigenfunctions $\phi_{z,\alpha} = |z\rangle  \otimes T_z \Psi_{\alpha} $. For $k\in \mathbb N$, we introduce the notation 
\begin{equation}\label{eq:rho_za_k_def}
	\begin{split}
		\rho_{z, \alpha}^{(k)} (x_1,...,x_k) = &  \langle a_{x_1}^* ... a_{x_k}^* a_{x_1}...a_{x_k}\rangle_{\phi_{z,\alpha}}.
	\end{split}
\end{equation} In the cases $k=2,4$ we have the following bounds. 

\begin{lemma}\label{lem:rho_z_a_uniform_bound}
	We have
	\begin{align}
		\sup_{ x_1, x_2\in\Lambda }  \rho_{z, \alpha}^{(2)} (x_1,x_2) \le & |z|^4 + 4 |z|^2 N_\alpha + 2N_\alpha(N_\alpha-1) \nonumber \\
		&+C ( |z|^2 + N_\alpha )   ( N_\alpha^< + N^{ \delta_{ \mathrm{ Bog } } } ) + C N^{ 3\delta_{ \mathrm{ Bog } }} ( N_\alpha^{<} + 1 )^2 ,  \label{eq:rho2za_bound} \\
		\sup_{ x_1, x_2\in\Lambda } \left| \nabla_2\rho^{(2)}_{z,\alpha}(x_1,x_2) \right| \lesssim & N_\alpha^{3/2} \| \mathcal K^{1/2} \Psi_\alpha \|  + N^{ 4 \delta_{\mathrm {Bog} } } ( N_\alpha^< + 1 )^2 \nonumber \\
		&+   ( N_\alpha^< + N^{ \delta_{ \mathrm{ Bog } } }) \Big( N_\alpha^{1/2}  \| \mathcal K^{1/2} \Psi_\alpha \| + N_{\alpha} N^{\delta_{\mathrm{Bog}}} \Big) \nonumber \\
		&+ |z|^2 \Big( N_\alpha^{1/2} \| \mathcal K^{1/2} \Psi_\alpha \| + N^{ \delta_{ \mathrm{ Bog } } } ( N_\alpha^< +  N^{ \delta_{ \mathrm{ Bog } } }) \Big), \label{eq:nabla_rho2_bound} \\
		\sup_{ x_1, x_2\in\Lambda }  \int_{\Lambda^2}\rho_{z, \alpha}^{(4)} (x_1,x_2,x_3, x_4) \, \mathrm d x_3 \, \mathrm d x_4 \lesssim & ( |z|^2 + N_\alpha + 1 )^2\left ( ( N_\alpha + |z|^2 ) ^2 +N^{ 3 \delta_{\mathrm{Bog}}} ( N_\alpha^< +1 )^2 \right ). \label{eq:rho4_za_integral_bound} 
	\end{align}
\end{lemma}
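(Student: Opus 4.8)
The plan is to evaluate $\rho^{(2)}_{z,\alpha}$, $\nabla_2\rho^{(2)}_{z,\alpha}$ and $\int\rho^{(4)}_{z,\alpha}$ after undoing the Weyl and Bogoliubov transformations, thereby reducing everything to explicit computations on the excitation vector $\Psi_\alpha$, which is a symmetrized product of plane waves and hence an eigenvector of every $a_p^* a_p$. Since the annihilation operators commute, $\rho^{(2)}_{z,\alpha}(x_1,x_2)=\|a_{x_2}a_{x_1}\phi_{z,\alpha}\|^2$. Because $\varphi_0\equiv 1$ on $\Lambda$ we have $a_x=a_0+b_x$ with $b_x=\sum_{p\in\Lambda_+^*}a_p e^{\mathrm i p\cdot x}$, and since $W_z$ and $T_z$ act on complementary sets of modes we have $[W_z,T_z]=0$ and $W_z^*T_z^*a_xT_zW_z=z+a_0+\tilde b_x$ with $\tilde b_x\coloneqq T_z^* b_x T_z$. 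Using $a_0(\Omega_0\otimes\Psi_\alpha)=0$ and $[a_0,\tilde b_x]=0$ one finds $a_{x_2}a_{x_1}\phi_{z,\alpha}=W_zT_z(\Omega_0\otimes\Xi)$ with $\Xi=z^2\Psi_\alpha+z(\tilde b_{x_1}+\tilde b_{x_2})\Psi_\alpha+\tilde b_{x_1}\tilde b_{x_2}\Psi_\alpha$, so that $\rho^{(2)}_{z,\alpha}(x_1,x_2)=\|\Xi\|^2$. By \eqref{eq:action_bogoliubov} and $u_p=1$, $v_p=0$ for $p\notin P_{\mathrm B}$, we write $\tilde b_x=b_x+c_x$ with $c_x=\sum_{p\in P_{\mathrm B}}\big[(u_p-1)a_p+v_p(z/|z|)^2a_{-p}^*\big]e^{\mathrm i p\cdot x}$ the Bogoliubov correction, supported on $P_{\mathrm B}$.

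For the leading part I would replace $\tilde b$ by $b$ everywhere in $\|\Xi\|^2$. Since $\Psi_\alpha$ has a sharp particle number the cross terms between the three summands vanish, and the three remaining norms are computed from $\langle a_p^* a_q\rangle_{\Psi_\alpha}=n_p\delta_{p,q}$ (with $n_p$ the occupation of $p$ in $\Psi_\alpha$) and the analogous Wick combinatorics for $\langle a_p^*a_r^*a_qa_s\rangle_{\Psi_\alpha}$: this gives $|z|^4$, the bound $\|(b_{x_1}+b_{x_2})\Psi_\alpha\|^2=2N_\alpha+2\Re\sum_p n_p e^{\mathrm i p\cdot(x_1-x_2)}\le 4N_\alpha$, and $\|b_{x_1}b_{x_2}\Psi_\alpha\|^2=\sum_{p\ne q}n_pn_q\big(1+e^{\mathrm i(p-q)\cdot(x_1-x_2)}\big)+\sum_p n_p(n_p-1)\le 2N_\alpha^2-\sum_p n_p^2-N_\alpha\le 2N_\alpha(N_\alpha-1)$, where the last step uses $\sum_p n_p^2\ge\sum_p n_p=N_\alpha$. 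Hence the leading part is bounded by $|z|^4+4|z|^2N_\alpha+2N_\alpha(N_\alpha-1)$. Every remaining contribution to $\|\Xi\|^2$ carries at least one factor $c_x$; after observing that the only such cross term that does not vanish by a particle-number argument is $2\Re\langle z^2\Psi_\alpha,\tilde b_{x_1}\tilde b_{x_2}\Psi_\alpha\rangle$, these terms are estimated by expanding into modes and, where a plain product-of-norms bound is too lossy, by computing the relevant inner products explicitly from the plane-wave structure of $\Psi_\alpha$ (this is where cancellations such as $|\langle b_{x_1}b_{x_2}\Psi_\alpha,b_{x_1}a_r\Psi_\alpha\rangle|\lesssim n_rN_\alpha$ enter). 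Controlling the resulting $v$- and $(u-1)$-weighted momentum sums with Lemma~\ref{lem:up_vp_norms} — in particular $\|u-1\|_\infty,\|v\|_\infty,\|v\|_2\lesssim N_0/N$, $\sum_p|v_p|\lesssim(N_0/N)N^{\delta_{\mathrm{Bog}}}$ and $|P_{\mathrm B}|\lesssim N^{3\delta_{\mathrm{Bog}}}$ — and reorganizing mixed terms by Young's inequality produces the error terms $(|z|^2+N_\alpha)(N_\alpha^<+N^{\delta_{\mathrm{Bog}}})$ and $N^{3\delta_{\mathrm{Bog}}}(N_\alpha^<+1)^2$, giving \eqref{eq:rho2za_bound}.

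For the gradient bound, differentiating $\rho^{(2)}_{z,\alpha}(x_1,x_2)=\|a_{x_2}a_{x_1}\phi_{z,\alpha}\|^2$ in $x_2$ makes the zero mode drop out since $\nabla_{x_2}a_{x_2}=\nabla b_{x_2}$, and Cauchy--Schwarz gives $|\nabla_2\rho^{(2)}_{z,\alpha}(x_1,x_2)|\le 2\sqrt{\rho^{(2)}_{z,\alpha}(x_1,x_2)}\,\|(\nabla b_{x_2})a_{x_1}\phi_{z,\alpha}\|$. Conjugating as before, $\|(\nabla b_{x_2})a_{x_1}\phi_{z,\alpha}\|=\|z\nabla\tilde b_{x_2}\Psi_\alpha+\nabla\tilde b_{x_2}\tilde b_{x_1}\Psi_\alpha\|$, where $\nabla\tilde b_{x_2}$ carries an extra weight $\mathrm i p$ on each mode. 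Using $\sum_{p\in P_{\mathrm B}}|p|^2 v_p^2\lesssim(N_0^2/N^2)N^{\delta_{\mathrm{Bog}}}$ the first summand is $\lesssim|z|(\|\mathcal K^{1/2}\Psi_\alpha\|+N^{\delta_{\mathrm{Bog}}/2})$; the second is estimated exactly as in the previous paragraph, via the plane-wave identities, the weighted occupation sums $\sum_p|p|^2 n_p=\|\mathcal K^{1/2}\Psi_\alpha\|^2$ and $\sum_p|p|n_p\le N_\alpha^{1/2}\|\mathcal K^{1/2}\Psi_\alpha\|$, and the bounds $\sum_p|p|^k|v_p|\lesssim(N_0/N)N^{(k+1)\delta_{\mathrm{Bog}}}$, yielding $\lesssim N_\alpha^{1/2}\|\mathcal K^{1/2}\Psi_\alpha\|+(\text{corrections controlled by }N_\alpha^<,N^{\delta_{\mathrm{Bog}}})$. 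Inserting the bound for $\sqrt{\rho^{(2)}_{z,\alpha}}$, multiplying out, and absorbing mixed terms by $2ab\le a^2+b^2$ (for instance $N_\alpha|z|\,\|\mathcal K^{1/2}\Psi_\alpha\|\le\tfrac12 N_\alpha^{3/2}\|\mathcal K^{1/2}\Psi_\alpha\|+\tfrac12|z|^2N_\alpha^{1/2}\|\mathcal K^{1/2}\Psi_\alpha\|$) yields \eqref{eq:nabla_rho2_bound}.

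Finally, the four-body bound reduces to the two-body one. Since the annihilation operators commute and $\int_{\Lambda^2}a_{x_3}^*a_{x_4}^*a_{x_3}a_{x_4}\,\mathrm d x_3\,\mathrm d x_4=\mathcal N(\mathcal N-1)$, we get $\int_{\Lambda^2}\rho^{(4)}_{z,\alpha}(x_1,x_2,x_3,x_4)\,\mathrm d x_3\,\mathrm d x_4=\langle a_{x_2}a_{x_1}\phi_{z,\alpha},\mathcal N(\mathcal N-1)a_{x_2}a_{x_1}\phi_{z,\alpha}\rangle\le\|\mathcal N a_{x_2}a_{x_1}\phi_{z,\alpha}\|^2$. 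Writing $a_{x_2}a_{x_1}\phi_{z,\alpha}=|z\rangle\otimes T_z\Xi$ as in the first paragraph and $\mathcal N=a_0^*a_0+\mathcal N_+$, and using $\langle z|(a_0^*a_0)^k|z\rangle\lesssim_k(|z|^2+1)^k$, the identity $\|T_z\Xi\|^2=\rho^{(2)}_{z,\alpha}(x_1,x_2)$, the bound $T_z^*(1+\mathcal N_+)^kT_z\lesssim_k(1+\mathcal N_+)^k$ (Lemma~\ref{lem:Tz_N_bound} together with $[\mathcal N^<,\mathcal N^>]=0$ and $[T_z,\mathcal N^>]=0$), and the fact that the $\mathcal N_+$-spectral support of $\Xi$ lies in $[N_\alpha-2,N_\alpha+2]$ (so $\langle\Xi,(1+\mathcal N_+)^k\Xi\rangle\le(N_\alpha+3)^k\|\Xi\|^2$), one finds $\int_{\Lambda^2}\rho^{(4)}_{z,\alpha}\,\mathrm d x_3\,\mathrm d x_4\lesssim(|z|^2+N_\alpha+1)^2\,\rho^{(2)}_{z,\alpha}(x_1,x_2)$; since \eqref{eq:rho2za_bound} implies $\rho^{(2)}_{z,\alpha}(x_1,x_2)\lesssim(N_\alpha+|z|^2)^2+N^{3\delta_{\mathrm{Bog}}}(N_\alpha^<+1)^2$, this gives \eqref{eq:rho4_za_integral_bound}. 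The main obstacle is the bookkeeping in the two middle steps: one must control a large number of $v$- and $(u-1)$-weighted momentum sums without spoiling the sharp leading constants $|z|^4+4|z|^2N_\alpha+2N_\alpha(N_\alpha-1)$ in \eqref{eq:rho2za_bound}, which forces one to evaluate several cross-term inner products explicitly rather than bounding them by products of norms, and the gradient estimate \eqref{eq:nabla_rho2_bound} is the most delicate because of the extra momentum weights and because $\nabla\tilde b_{x_2}\tilde b_{x_1}\Psi_\alpha$ is no longer a symmetrized product of plane waves.
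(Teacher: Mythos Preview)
Your strategy is sound and, for \eqref{eq:rho2za_bound} and \eqref{eq:rho4_za_integral_bound}, essentially equivalent to the paper's. The paper expands directly in momentum space and splits the quartic expectations according to whether the momenta lie in $P_{\mathrm B}$ or not (using $|P_{\mathrm B}|\lesssim N^{3\delta_{\mathrm{Bog}}}$, translation invariance and Cauchy--Schwarz for the all-low block). Your decomposition $\tilde b=b+c$ and the exact evaluation of the all-$b$ part via the occupation numbers $n_p$ is a different but equally clean bookkeeping; your reduction of the four-body integral to $(|z|^2+N_\alpha+1)^2\rho^{(2)}_{z,\alpha}$ via $\mathcal N(\mathcal N-1)$ and Lemma~\ref{lem:Tz_N_bound} is in fact shorter than the paper's direct computation.

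There is, however, a genuine gap in your gradient argument. Applying Cauchy--Schwarz globally as $|\nabla_2\rho^{(2)}_{z,\alpha}|\le 2\sqrt{\rho^{(2)}_{z,\alpha}}\,\|(\nabla b_{x_2})a_{x_1}\phi_{z,\alpha}\|$ destroys the $z$-structure: the product of $\sqrt{\rho^{(2)}}\lesssim |z|^2+N_\alpha+\cdots$ with the second factor (which contains a piece $|z|\,\|\mathcal K^{1/2}\Psi_\alpha\|$) produces a term $|z|^3\|\mathcal K^{1/2}\Psi_\alpha\|$. This term is \emph{not} dominated by any combination of the summands on the right-hand side of \eqref{eq:nabla_rho2_bound} via $2ab\le a^2+b^2$, since every $z$-dependent summand there scales like $|z|^2$. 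The paper avoids this by expanding $\nabla_2\rho^{(2)}_{z,\alpha}$ in momentum space first, where the Weyl expansion together with $\langle a_p^*a_q^*a_r\rangle_{T_z\Psi_\alpha}=\langle a_p^*\rangle_{T_z\Psi_\alpha}=0$ forces the result to be (quartic, $z$-independent) $+\,|z|^2\times$(quadratic), with no odd powers of $|z|$.

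The fix within your framework is immediate: do not apply Cauchy--Schwarz to the full inner product. Write $\nabla_2\rho^{(2)}_{z,\alpha}=2\Re\langle\Xi,\Xi'\rangle$ with $\Xi'=z\,\nabla\tilde b_{x_2}\Psi_\alpha+\nabla\tilde b_{x_2}\tilde b_{x_1}\Psi_\alpha$, and use the same particle-number orthogonality you already exploited for $\|\Xi\|^2$ to see that only the pairings
\[
\bar z^{\,2}\langle\Psi_\alpha,\nabla\tilde b_{x_2}\tilde b_{x_1}\Psi_\alpha\rangle,\qquad
|z|^2\langle(\tilde b_{x_1}+\tilde b_{x_2})\Psi_\alpha,\nabla\tilde b_{x_2}\Psi_\alpha\rangle,\qquad
\langle\tilde b_{x_1}\tilde b_{x_2}\Psi_\alpha,\nabla\tilde b_{x_2}\tilde b_{x_1}\Psi_\alpha\rangle
\]
survive. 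Bounding each of these three separately by Cauchy--Schwarz (and then splitting $\tilde b=b+c$ as you do) reproduces exactly the $|z|^0+|z|^2$ structure of \eqref{eq:nabla_rho2_bound}.
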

\begin{proof} The function $\rho_{z, \alpha}^{(2)} (x_1,x_2)$ is bounded by
	\begin{equation}\label{eq:computation_rho_2_z_a}
		\begin{split}
			\rho_{z, \alpha}^{(2)} (x_1,x_2) = & \sum_{p_1,p_2, q_1, q_2\in\Lambda^*} e^{ \mathrm i (p_1-q_1)x_1 + \mathrm i (p_2-q_2) x_2 } \langle a_{p_1}^* a_{p_2}^* a_{q_1}a_{q_2}\rangle_{ \phi_{z,\alpha} } \\
			\le & \sum_{p_1, p_2, q_1, q_2\in\Lambda^*}  \langle a_{p_1}^* a_{p_2}^* a_{q_1}a_{q_2}\rangle_{ \phi_{z,\alpha} } = |z|^4 + |z|^2\sum_{p,q\in \Lambda^*_+} \left( \frac {z^2}{|z|^2} \langle a_{p}^* a_{q}^* \rangle_{T_z\Psi_\alpha} + \mathrm{h.c.} \right) \\
			& + 4 |z|^2\sum_{p,q\in \Lambda^*_+} \langle a_{p}^* a_{q} \rangle_{T_z\Psi_\alpha} + \sum_{p_1, p_2, q_1, q_2 \in\Lambda^*_+} \langle a_{p_1}^* a_{p_2}^* a_{q_1} a_{q_2} \rangle_{ T_z\Psi_\alpha}.
		\end{split}
	\end{equation} In the last step we used \eqref{eq:action_weyl_momentum} and
		$ \langle a_{p}^* a_{q}^* a_r \rangle_{T_z\Psi_\alpha}= \langle a_{p}^* \rangle_{T_z\Psi_\alpha}=0$ for every $p,q,r\in\Lambda^*_+$, which follows from \eqref{eq:action_bogoliubov} and the fact that $\Psi_\alpha$ is an eigenfunction of $\mathcal N_+$.
	
		We first estimate the quadratic terms on the right-hand side of \eqref{eq:computation_rho_2_z_a}. Since $\Psi_\alpha$ is a symmetrized product of plane waves, we have, for every $k\ge 1$,  $\langle a_{p_1}^* ... a_{p_k}^* a_{q_1} ... a_{q_k} \rangle_{\Psi_\alpha} =0$ unless $\{ p_1, ... , p_k\} = \{ q_1, ... q_k \}$. This, in particular, implies
	\begin{equation}\label{eq:quadratic_expectation_psi_a}
		\begin{split}
			\langle a_{p}^* a_{q} \rangle_{T_z\Psi_{\alpha}} = & \, \delta_{p,q} \Big[ u_p^2   \langle a_{p}^* a_{p} \rangle_{\Psi_{\alpha}} + v_p^2  ( \langle a_{-p}^* a_{-p} \rangle_{\Psi_{\alpha}} + 1) \Big], \\
			\langle a_{p}^* a_{q}^* \rangle_{T_z\Psi_{\alpha}} = & \frac{{\overline z}^2}{|z|^2}u_pv_p \delta_{p,-q} \left(  \langle a_{p}^* a_{p} \rangle_{\Psi_{\alpha}} +  \langle a_{-p}^* a_{-p} \rangle_{\Psi_{\alpha}} + 1 \right).
		\end{split}
	\end{equation} Using Lemma \ref{lem:up_vp_norms} and $N_0\le N$ we estimate 
	\begin{equation}\label{eq:rho_2_z_a_quadratic_bound_1}
		\begin{split}
			\sum_{p,q\in \Lambda^*_+} \left( \langle a_{p}^* a_{q}^* \rangle_{T_z\Psi_\alpha} + \mathrm{h.c.} \right)= 2 \frac{\mathfrak{Re}z^2 }{|z|^2} \sum_{p\in\Lambda_+^*} u_pv_p \left( 2 \langle a_{p}^* a_{p} \rangle_{\Psi_{\alpha}} + 1 \right) \lesssim  N_\alpha^{<} + N^{ \delta_{ \mathrm{ Bog } } }  
		\end{split}
	\end{equation} and 
	\begin{equation}\label{eq:rho_2_z_a_quadratic_bound_2}
		\begin{split}
			\sum_{p,q\in \Lambda^*_+} \langle a_{p}^* a_{q} \rangle_{T_z\Psi_\alpha} = \sum_{p\in\Lambda_+^*} \Big[ (1 + 2 v_p^2) \langle a_{p}^* a_{p} \rangle_{\Psi_{\alpha}} + v_p^2 \Big] \le N_\alpha + C ( N_\alpha^{<} + 1 ) .
		\end{split}
	\end{equation} 
	
	We now estimate the last term on the right-hand side of \eqref{eq:computation_rho_2_z_a}. To do so, we observe that, since $T_z$ only acts on low momenta, the expectation in the sum vanishes if $ | \{ p_1, p_2 \} \cap P_{ \mathrm B } | \neq | \{ q_1, q_2 \} \cap P_{ \mathrm B } |$, and hence
	\begin{equation}\label{eq:quartic_expectation_momentum_split}
		\begin{split}
			\sum_{p_1, p_2, q_1, q_2 \in\Lambda^*_+} & \langle a_{p_1}^* a_{p_2}^* a_{q_1} a_{q_2} \rangle_{ T_z\Psi_\alpha} = \sum_{p_1, p_2, q_1, q_2 \in (\Lambda^*_+ \setminus P_{ \mathrm{ B } }) } \langle a_{p_1}^* a_{p_2}^* a_{q_1} a_{q_2} \rangle_{ T_z\Psi_\alpha} \\
			& + \sum_{p_1, p_2, q_1, q_2 \in P_{ \mathrm{ B } } } \langle a_{p_1}^* a_{p_2}^* a_{q_1} a_{q_2} \rangle_{ T_z\Psi_\alpha} + 4 \sum_{\substack{p_1, q_1 \in (\Lambda^*_+ \setminus P_{ \mathrm{ B } }) \\ p_2, q_2 \in P_{ \mathrm{ B } } }} \langle a_{p_1}^* a_{p_2}^* a_{q_1} a_{q_2} \rangle_{ T_z\Psi_\alpha}.
		\end{split}
	\end{equation} 
	
	The first term on the right-hand side equals 
	\begin{equation*}
		\sum_{p_1, p_2, q_1, q_2 \in (\Lambda^*_+ \setminus P_{ \mathrm{ B } }) } \langle a_{p_1}^* a_{p_2}^* a_{q_1} a_{q_2} \rangle_{\Psi_\alpha} = 2 N_\alpha^> ( N_\alpha^> - 1) \le 2 N_\alpha(N_\alpha -1 ). 
	\end{equation*} 
	As for the second term, using the translation invariance of $T_z\Psi_\alpha$, the Cauchy--Schwarz inequality and \eqref{eq:Tz_N_bound}, we find 
	\begin{equation*}
		\begin{split}
			\sum_{p_1, p_2, q_1, q_2 \in P_{ \mathrm{ B } } } \langle a_{p_1}^* a_{p_2}^* a_{q_1} a_{q_2} \rangle_{ T_z\Psi_\alpha} = & \sum_{ \substack{ p_1, p_2, q_1 \in P_{ \mathrm{ B } } \\  p_1+p_2-q_1 \in P_{ \mathrm{ B } } } } \langle a_{p_1}^* a_{p_2}^* a_{q_1} a_{p_1+p_2-q_1} \rangle_{ T_z\Psi_\alpha} \\
			\le & 2 | P_{ \mathrm{ B } } | \sum_{ p_1, p_2 \in P_{ \mathrm{ B } } } \langle a_{p_1}^* a_{p_2}^* a_{p_1} a_{p_2} \rangle_{ T_z\Psi_\alpha} \lesssim N^{ 3 \delta_{\mathrm {Bog} } } ( N_\alpha^< + 1 )^2.
		\end{split}
	\end{equation*}
	The last term on the right-hand side of \eqref{eq:quartic_expectation_momentum_split} equals 
	\begin{equation*}
		N_\alpha^> \sum_{p,q\in P_{ \mathrm B } } \langle a_{p}^* a_{q} \rangle_{ T_z\Psi_\alpha} \lesssim N_\alpha  ( N_\alpha^< +  N^{ \delta_{ \mathrm{ Bog } } } ) ,
	\end{equation*} where we used \eqref{eq:rho_2_z_a_quadratic_bound_2} in the last step. 
	Putting these considerations together, we find
	\begin{equation*}
		\sum_{p_1, p_2, q_1, q_2 \in\Lambda^*_+}  \langle a_{p_1}^* a_{p_2}^* a_{q_1} a_{q_2} \rangle_{ T_z\Psi_\alpha} \le 2N_\alpha(N_\alpha-1) + C  N_\alpha ( N_\alpha^< + N^{ \delta_{ \mathrm{ Bog } } } ) + C N^{ 3\delta_{ \mathrm{ Bog } }}  ( N_\alpha^{<} + 1 )^2,
	\end{equation*} which combined with \eqref{eq:computation_rho_2_z_a}, \eqref{eq:rho_2_z_a_quadratic_bound_1} and \eqref{eq:rho_2_z_a_quadratic_bound_2} implies \eqref{eq:rho2za_bound}.

	We now show \eqref{eq:nabla_rho2_bound}. We start by taking the gradient of \eqref{eq:computation_rho_2_z_a}:
	\begin{equation}\label{eq:nabla_rho2_1}
		\begin{split}
			- \mathrm{i} \nabla_2\rho^{(2)}_{z,\alpha}(x,y) = & \sum_{p_1, p_2, q_1, q_2 \in\Lambda^*} (p_2-q_2) e^{ \mathrm i (p_1-q_1)x +  \mathrm i (p_2-q_2)y} \langle a_{p_1}^* a_{p_2}^* a_{q_1} a_{q_2} \rangle_{\phi_{z,\alpha}} \\ 
			= &  \sum_{p_1, p_2, q_1, q_2 \in\Lambda_+^*}  (p_2-q_2) e^{ \mathrm i (p_1-q_1)x +  \mathrm i (p_2-q_2)y} \langle a_{p_1}^* a_{p_2}^* a_{q_1} a_{q_2} \rangle_{T_z\Psi_\alpha} \\ 
			& +  |z|^2  \sum_{p, q \in\Lambda_+^*} \Bigg( q e^{ \mathrm i p\cdot x +  \mathrm i q\cdot y} \langle a_{p}^* a_{q}^* \rangle _{T_z\Psi_\alpha} \\
			& \qquad - q e^{ \mathrm i p\cdot x -  \mathrm i q\cdot y} \langle a_{p}^* a_{q}  \rangle_{T_z\Psi_\alpha} + p e^{- \mathrm i q\cdot x +  \mathrm i p\cdot y}  \langle a_{p}^* a_{q} \rangle_{T_z\Psi_\alpha} \\
			& \qquad + (p-q)  e^{   \mathrm i (p - q)\cdot y} \langle a_{p}^* a_{q}\rangle_{T_z\Psi_\alpha} - qe^{- \mathrm i p\cdot x -  \mathrm i  q\cdot y} \langle a_{p} a_{q} \rangle_{T_z\Psi_\alpha} \Bigg) . 
		\end{split}
	\end{equation} 
	With a similar argument as in \eqref{eq:quartic_expectation_momentum_split}, we see that the absolute value of first term on the right-hand side is bounded by 
	\begin{equation}\label{eq:quartic_expectation_momentum_split_nabla}
		\begin{split}
			&\sum_{p_1, p_2, q_1, q_2 \in\Lambda^*_+} |p_1|  \langle a_{p_1}^* a_{p_2}^* a_{q_1} a_{q_2} \rangle_{ T_z\Psi_\alpha} \le \sum_{p_1, p_2, q_1, q_2 \in (\Lambda^*_+ \setminus P_{ \mathrm{ B } }) } |p_1| \langle a_{p_1}^* a_{p_2}^* a_{q_1} a_{q_2} \rangle_{ T_z\Psi_\alpha} \\
			& + N^{\delta_{ \mathrm{ Bog } }}  \sum_{p_1, p_2, q_1, q_2 \in P_{ \mathrm{ B } } } \langle a_{p_1}^* a_{p_2}^* a_{q_1} a_{q_2} \rangle_{ T_z\Psi_\alpha} + 2 \sum_{\substack{p_1, q_1 \in (\Lambda^*_+ \setminus P_{ \mathrm{ B } }) \\ p_2, q_2 \in P_{ \mathrm{ B } } }} ( |p_1| + N^{\delta_{\mathrm{ Bog }}} ) \langle a_{p_1}^* a_{p_2}^* a_{q_1} a_{q_2} \rangle_{ T_z\Psi_\alpha}.
		\end{split}
	\end{equation} With the Cauchy--Schwarz inequality we find 
	\begin{equation*}
		\begin{split}
			\sum_{p_1, p_2, q_1, q_2 \in (\Lambda^*_+ \setminus P_{ \mathrm{ B } }) } & |p_1| \langle a_{p_1}^* a_{p_2}^* a_{q_1} a_{q_2} \rangle_{ T_z\Psi_\alpha} \\
			\le & \Big( \sum_{p,q\in\Lambda_+^*} |p|^2 \langle a_{p}^* a_{q}^* a_{p} a_{q} \rangle_{\Psi_\alpha} \Big )^{1/2} \Big ( \sum_{p,q\in\Lambda_+^*} \langle a_{p}^* a_{q}^* a_{p} a_{q} \rangle_{\Psi_\alpha} \Big )^{1/2} \lesssim N_\alpha^{3/2} \| \mathcal K^{1/2} \Psi_\alpha \| .
		\end{split}
	\end{equation*} The remaining terms on the right-hand side of \eqref{eq:quartic_expectation_momentum_split_nabla} are bounded similarly, and we get
	\begin{equation*}
		\begin{split}
			&\sum_{p_1, p_2, q_1, q_2 \in\Lambda^*_+} |p_1|  \langle a_{p_1}^* a_{p_2}^* a_{q_1} a_{q_2} \rangle_{ T_z\Psi_\alpha} \\
			&\lesssim N_\alpha^{3/2} \| \mathcal K^{1/2} \Psi_\alpha \|  + N^{ 4 \delta_{\mathrm {Bog} } } ( N_\alpha^< + 1 )^2 +  ( N_\alpha^< + N^{ \delta_{ \mathrm{ Bog } } }) \Big( N_\alpha^{1/2}  \| \mathcal K^{1/2} \Psi_\alpha \| + N_{\alpha} N^{\delta_{\mathrm{Bog}}} \Big).
		\end{split}
	\end{equation*}
	
	The quadratic terms on the right-hand side of \eqref{eq:nabla_rho2_1} can be estimated easily using the explicit formulas in \eqref{eq:quadratic_expectation_psi_a}. Doing so, we see that they are bounded by 
	\begin{equation*}
		C  |z|^2 \Big( N_\alpha^{1/2} \| \mathcal K^{1/2} \Psi_\alpha \| + N^{ \delta_{ \mathrm{ Bog } } }  ( N_\alpha^< +   N^{ \delta_{ \mathrm{ Bog } } } ) \Big).
	\end{equation*} Combining the previous two bounds proves \eqref{eq:nabla_rho2_bound}.
	
	Finally, we prove \eqref{eq:rho4_za_integral_bound}. By \eqref{eq:action_weyl_momentum} we have 
	\begin{equation}\label{eq:rho_za_4_computation}
		\begin{split}
			\int_{\Lambda^2} & \rho_{z, \alpha}^{(4)} (x_1,x_2,x_3, x_4) \, \mathrm d x_3 \, \mathrm d x_4 = \sum_{p_1,p_2, q_1, q_2 \in\Lambda^*} e^{ \mathrm i(p_1-q_1)x_1 + \mathrm i(p_2-q_2) x_2 } \langle a_{p_1}^* a_{p_2}^* \mathcal N (\mathcal  N - 1) a_{q_1}a_{q_2}\rangle_{\phi_{z,\alpha}} \\
			\le & |z|^4 \langle \mathcal N(\mathcal N -1) \rangle_{\phi_{z, \alpha}} + |z|^2 \Bigg [  4 \sum_{p_1, q_1\in\Lambda^*_+} \langle a_{p_1}^* \mathcal N (\mathcal  N - 1) a_{q_1} \rangle_{\phi_{z,\alpha}}  \\
			& + \Big(\sum_{p_1, p_2\in\Lambda^*_+} \langle a_{p_1}^* a_{p_2}^* \mathcal N (\mathcal  N - 1) \rangle_{\phi_{z,\alpha}}+ \mathrm{h.c.} \Big ) \Bigg] + \sum_{p_1,p_2, q_1, q_2\in\Lambda^*_+}  \langle a_{p_1}^* a_{p_2}^* \mathcal N (\mathcal  N - 1) a_{q_1}a_{q_2}\rangle_{\phi_{z,\alpha}} .
		\end{split}
	\end{equation} The first term on the right-hand side of \eqref{eq:rho_za_4_computation} can be bounded by a constant times $|z|^4( N_\alpha + |z|^2 + 1)^2$ using Lemma \ref{lem:Tz_N_bound}. By the translation invariance of the state $|\phi_{z,\alpha}\rangle \langle \phi_{z, \alpha} |$ and Lemma \ref{lem:Tz_N_bound}, we see that the second term equals
	\begin{equation*}
		4|z|^2\sum_{p \in\Lambda^*_+} \langle a_{p}^* \mathcal N (\mathcal  N - 1) a_{p} \rangle_{\phi_{z,\alpha}}  \lesssim |z|^2(N_\alpha + |z|^2 +1 )^3 ,
	\end{equation*} and the same bound holds for the third term.
	
	To estimate the last term on the right-hand side of \eqref{eq:rho_za_4_computation} we split the sum in momentum sets as in \eqref{eq:quartic_expectation_momentum_split}: 
	\begin{equation*}
		\begin{split}
			\sum&_{p_1, p_2, q_1, q_2 \in\Lambda^*_+}  \langle a_{p_1}^* a_{p_2}^* \mathcal N( \mathcal N -1 ) a_{q_1} a_{q_2} \rangle_{ \phi_{z,\alpha} } = \sum_{p_1, p_2, q_1, q_2 \in (\Lambda^*_+ \setminus P_{ \mathrm{ B } }) } \langle a_{p_1}^* a_{p_2}^*  \mathcal N( \mathcal N -1 )  a_{q_1} a_{q_2} \rangle_{ \phi_{z,\alpha} } \\
			& + \sum_{p_1, p_2, q_1, q_2 \in P_{ \mathrm{ B } } } \langle a_{p_1}^* a_{p_2}^*  \mathcal N( \mathcal N -1 ) a_{q_1} a_{q_2} \rangle_{ \phi_{z,\alpha} } + 4 \sum_{\substack{p_1, q_1 \in (\Lambda^*_+ \setminus P_{ \mathrm{ B } }) \\ p_2, q_2 \in P_{ \mathrm{ B } } }} \langle a_{p_1}^* a_{p_2}^*  \mathcal N( \mathcal N -1 )  a_{q_1} a_{q_2} \rangle_{ \phi_{z,\alpha} }.
		\end{split}
	\end{equation*} The first and the third term are easily seen to be bounded by a constant times $(N_\alpha + |z|^2 +1)^4$, by Lemma \ref{lem:Tz_N_bound}. The second term can be estimated using translation invariance, Cauchy--Schwarz, and Lemma~\ref{lem:Tz_N_bound}: 
	\begin{equation*}
		\begin{split}
			\sum_{p_1, p_2, q_1, q_2 \in P_{ \mathrm{ B } } } & \langle a_{p_1}^* a_{p_2}^*  \mathcal N( \mathcal N -1 ) a_{q_1} a_{q_2} \rangle_{ \phi_{z,\alpha} } = \sum_{p_1, p_2, q_1 \in P_{ \mathrm{ B } } } \langle a_{p_1}^* a_{p_2}^*  \mathcal N( \mathcal N -1 ) a_{q_1} a_{p_1+p_2 - q_1} \rangle_{ \phi_{z,\alpha} } \\
			\lesssim & N^{3 \delta_{ \mathrm{ Bog } }}  \sum_{p_1, p_2 \in P_{ \mathrm{ B } } } \langle a_{p_1}^* a_{p_2}^*  \mathcal N^2 a_{p_1} a_{p_2} \rangle_{ \phi_{z, \alpha } } \le N^{3 \delta_{ \mathrm{ Bog } } } \| ( \mathcal N^<)^2  \phi_{z, \alpha }  \| \cdot \| \mathcal N^2 \phi_{z,\alpha} \| \\
			\lesssim & N^{3\delta_{ \mathrm{ Bog } }} ( N_\alpha^< + 1 )^2 ( N_\alpha + |z|^2 + 1 )^2 .
		\end{split}
	\end{equation*} Collecting the above bounds, we find \eqref{eq:rho4_za_integral_bound}. 
\end{proof} 

The following statement quantifies how the Jastrow factor changes the norm of an eigenfunction of $| z \rangle \langle z | \otimes G(z)$.
\begin{corollary}
	Let $\beta \gtrsim \beta_{\mathrm{c}}$ and $\delta_{\mathrm{ Bog }} \le 2/15$. There exists $C>0$ such that
	\begin{equation}\label{eq:denominator_lower_bound}
		\| F \phi_{\alpha, z} \| ^2 \ge 1 - CN\ell^2,
	\end{equation} for every $|z|^2\le \widetilde c N$, $\alpha\in\widetilde{\mathcal A}$ and $N\in\mathbb N$.
\end{corollary}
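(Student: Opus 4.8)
The plan is to expand $\|F\phi_{z,\alpha}\|^2$ into its Fock-space components, bound the square of the Jastrow factor from below by an expression that is linear in the pair function $w_\ell = 1 - f_\ell$, and then control the resulting two-body term using the pointwise bound \eqref{eq:rho2za_bound} for $\rho^{(2)}_{z,\alpha}$ together with the scattering estimate $\int_{\mathbb R^3}(1-f_\ell(z))\,\mathrm dz \lesssim \mathfrak a_N\ell^2$, a standard property of the zero-energy scattering solution (see Appendix~\ref{app:scattering}). Throughout we may assume that $N$ is large and $\ell < 1/2$, the bound for all $N$ then following by enlarging the constant $C$.

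First I would record the relevant pointwise inequality for the Jastrow factor. Since $f$ is nondecreasing and $\ell$ lies outside the support of $V_N$ with $\ell \ge 2\mathfrak a_N$, one has $0 \le f_\ell \le 1$, so $w_\ell$ takes values in $[0,1]$ and is supported in $\{|x|<\ell\}$. Writing $w_{ij} = w_\ell(x_i-x_j)$, the elementary inequality $\prod_{i<j}(1-w_{ij}) \ge 1 - \sum_{i<j}w_{ij}$ (valid because each factor lies in $[0,1]$), combined with $t^2 \ge 2t-1$, gives for every $n$
\[
	F_n(x_1,\dots,x_n)^2 \ \ge\ 2F_n(x_1,\dots,x_n) - 1 \ \ge\ 1 - 2\sum_{1\le i<j\le n} w_\ell(x_i-x_j).
\]
Integrating this against $|\phi_{z,\alpha}^{(n)}|^2$, summing over $n$, using $\|\phi_{z,\alpha}\| = 1$ and the standard second-quantized form of $\sum_{i<j}w_\ell(x_i-x_j)$ together with the definition \eqref{eq:rho_za_k_def} of $\rho^{(2)}_{z,\alpha}$, we arrive at
\[
	\|F\phi_{z,\alpha}\|^2 \ \ge\ 1 - \int_{\Lambda^2} w_\ell(x-y)\,\rho^{(2)}_{z,\alpha}(x,y)\,\mathrm dx\,\mathrm dy \ \ge\ 1 - \Big(\sup_{x,y\in\Lambda}\rho^{(2)}_{z,\alpha}(x,y)\Big)\int_{\Lambda^2} w_\ell(x-y)\,\mathrm dx\,\mathrm dy.
\]

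It then remains to estimate the two factors. Since $w_\ell \ge 0$ and $\ell < 1/2$, we have $\int_{\Lambda^2} w_\ell(x-y)\,\mathrm dx\,\mathrm dy \le |\Lambda|\int_{\mathbb R^3}(1-f_\ell(z))\,\mathrm dz \lesssim \mathfrak a_N\ell^2$. For the supremum I would insert the constraints defining $\widetilde{\mathcal A}$ and the support of $\zeta$ into \eqref{eq:rho2za_bound}: for $\alpha\in\widetilde{\mathcal A}$ one has $N_\alpha^< \le \widetilde c\,\beta^{-1}N^{\delta_{\mathrm{Bog}}} \lesssim N^{2/3+\delta_{\mathrm{Bog}}}$ (using $\beta \gtrsim \beta_{\mathrm c} \sim N^{-2/3}$) and $N_\alpha = N_\alpha^< + N_\alpha^> \lesssim N$, while $|z|^2 \le \widetilde c N$. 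Then the first three terms in \eqref{eq:rho2za_bound} are $\lesssim N^2$, the term $(|z|^2+N_\alpha)(N_\alpha^<+N^{\delta_{\mathrm{Bog}}})$ is $\lesssim N^{5/3+\delta_{\mathrm{Bog}}}$, and the last term $N^{3\delta_{\mathrm{Bog}}}(N_\alpha^<+1)^2$ is $\lesssim N^{4/3+5\delta_{\mathrm{Bog}}}$; since $\delta_{\mathrm{Bog}} \le 2/15$, all of these are $\lesssim N^2$, so $\sup_{x,y\in\Lambda}\rho^{(2)}_{z,\alpha}(x,y) \lesssim N^2$. Combining the two bounds yields $\|F\phi_{z,\alpha}\|^2 \ge 1 - CN^2\mathfrak a_N\ell^2 = 1 - C\mathfrak a\,N\ell^2$, which is \eqref{eq:denominator_lower_bound}.

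Once \eqref{eq:rho2za_bound} and the scattering estimate are available this is essentially a bookkeeping argument, so I do not anticipate a genuine obstacle. The two points deserving care are: (i) the exponent count in the last step, which is exactly where the hypothesis $\delta_{\mathrm{Bog}} \le 2/15$ is needed, being the precise threshold making $N^{4/3+5\delta_{\mathrm{Bog}}} \lesssim N^2$; and (ii) that the crude volume estimate $\int_{\mathbb R^3}w_\ell \lesssim \ell^3$ is not good enough — one genuinely needs the extra smallness $\int_{\mathbb R^3}(1-f_\ell) \lesssim \mathfrak a_N\ell^2 \sim \ell^2/N$ coming from the scattering length being of order $N^{-1}$, since otherwise one would only obtain $1 - CN^2\ell^3$, which does not tend to $1$ for the admissible choice $\ell \sim N^{-7/12}$.
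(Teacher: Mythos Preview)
Your proof is correct and follows essentially the same route as the paper's. The only cosmetic difference is that the paper applies the product inequality directly to $F_n^2 = \prod_{i<j}(1-u_{ij})$ with $u_\ell = 1-f_\ell^2$, while you first use $F_n^2 \ge 2F_n-1$ and then apply the product inequality to $F_n$ with $w_\ell = 1-f_\ell$; both reduce to the same two-body integral, bounded by the scattering estimate $\int(1-f_\ell)\lesssim \mathfrak a_N\ell^2$ and the pointwise bound $\rho^{(2)}_{z,\alpha}\lesssim N^2$.
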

\begin{proof} We use the inequality
	\begin{equation}\label{eq:product_lower_bound}
		\prod_{i<j}\left( 1- u_\ell(x_i-x_j) \right) \ge 1 -\sum_{i<j} u_\ell(x_i-x_j),
	\end{equation} for $u_\ell = 1- f_\ell^2 \ge 0$,	to estimate
	\begin{equation} \label{eq:denominator_bound_first}
		\begin{split}
			\| F  \phi_{\alpha,z} \| = \sum_n \int_{\Lambda^n} \big| \phi_{\alpha,z}^{(n)} \big|^2\prod_{i<j}\left( 1- u_\ell(x_i-x_j) \right) \ge 1 - \frac 1 2 \int_{\Lambda^2} \rho_{\alpha, z}^{(2)}(x,y) u_\ell(x-y) \mathrm dx \mathrm dy.  
		\end{split}
	\end{equation} By \eqref{eq:rho2za_bound} and the assumptions $\beta \gtrsim \beta_{\mathrm{c}}$, $\delta_{\mathrm{ Bog }} \le 2/15$, $\alpha \in \widetilde { \mathcal A }$ and $|z|^2 \le \widetilde c N$, we have
	\begin{equation*}
		\rho^{(2)}_{z,\alpha}(x_1, x_2) \lesssim N^2 + N^{5\delta_{ \mathrm{ Bog } }}\beta^{-2} \lesssim N^2.
	\end{equation*} We insert this into \eqref{eq:denominator_bound_first}, use \eqref{eq:ul_integral} and $\mathfrak a_N N \lesssim 1$, and find \eqref{eq:denominator_lower_bound}.
\end{proof} 

\subsubsection{The uncorrelated trial state} 

	Finally, we need some estimates on the densities of the uncorrelated trial state $\Gamma_0$. For $k\in\mathbb N$, we define
	\begin{equation}\label{eq:rho_k_gamma0_def}
		\begin{split}
			\rho_{\Gamma_0}^{(k)}(x_1, ... x_k) = &  \tr[a_{x_1}^*...a_{x_k}^*a_{x_1}...a_{x_k} \Gamma_0].
		\end{split}
	\end{equation} We have the following. 

	\begin{lemma} \label{lem:rho_k_Gamma_bound} Let $\beta \gtrsim \beta_{\mathrm c}$, $k \ge 2$ and $i\in \{1,...,k\}$. Then the bounds
	\begin{equation}\label{eq:rho_k_Gamma0_bounds}
		\begin{split}
			\sup_{x_1, ... , x_k\in\Lambda}|\rho_{\Gamma_0}^{(k)}(x_1, ..., x_k)| \lesssim_k & N^k, \\
				\sup_{x_1, ... , x_k\in\Lambda}| \nabla_i {\rho}_{\Gamma_0}^{(k)} (x_1, ... x_k) | \lesssim_k & \beta^{-1/2} N^k,
		\end{split}
	\end{equation}
	hold.
\end{lemma}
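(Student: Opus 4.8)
The plan is to reduce both bounds to a single uniform estimate on the $z$-fibres of the density in \eqref{eq:rho_k_gamma0_def}, and then to exploit that the relevant state is quasi-free. Since $\Gamma_0 = \int_{\mathbb C}\zeta(z)\,|z\rangle\langle z|\otimes G(z)\,\mathrm d z$ with $\zeta$ a probability density supported in $\{|z|^2\le\widetilde c N\}$ (see \eqref{eq:gamma_0_def}, \eqref{eq:condensate_density_cutoff}), we have $\rho_{\Gamma_0}^{(k)}(x_1,\dots,x_k) = \int_{\mathbb C}\zeta(z)\,\rho_z^{(k)}(x_1,\dots,x_k)\,\mathrm d z$ with
\begin{equation*}
	\rho_z^{(k)}(x_1,\dots,x_k) \coloneqq \tr\!\big[\,a_{x_1}^*\cdots a_{x_k}^*\, a_{x_1}\cdots a_{x_k}\,\big(|z\rangle\langle z|\otimes G(z)\big)\big],
\end{equation*}
and the same identity holds with $\nabla_i$ applied to both sides. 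Hence it suffices to bound $\rho_z^{(k)}$ and $\nabla_i\rho_z^{(k)}$ uniformly in $x_1,\dots,x_k\in\Lambda$ and in $z$ with $|z|^2\le\widetilde c N$, after which the claim follows from $\int_{\mathbb C}\zeta = 1$.

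The key observation is that $|z\rangle\langle z|\otimes G(z)$ is a quasi-free state on $\mathfrak F$: the coherent state $|z\rangle = W_z\Omega_0$ is Gaussian with mean $\langle a_0\rangle = z$ and vanishing centred covariance on the $\varphi_0$-mode (a direct computation gives $\langle a_0^*a_0\rangle - |z|^2 = 0 = \langle a_0a_0\rangle - z^2$), while $G(z) = T_zG^{\mathrm{diag}}T_z^*$ is quasi-free on $\mathfrak F_+$ with zero mean and two-point functions $\gamma_p$ and $(z/|z|)^2\alpha_p$ of Lemma~\ref{lem:1pdm_pairing_bogoliubov}. I would then apply Wick's theorem to express $\rho_z^{(k)}$ as a sum over all partitions of the $2k$ operators $a_{x_1}^*,\dots,a_{x_k}^*,a_{x_1},\dots,a_{x_k}$ into singletons and pairs: a singleton at $a_{x_i}^\sharp$ contributes its mean, which equals $z$ or $\overline z$ and comes only from the $p=0$ mode, whereas a pair contributes a connected two-point function, i.e. $\check\gamma(x_i-x_j) = \sum_{p\in\Lambda_+^*}\gamma_pe^{ip(x_i-x_j)}$ or $(z/|z|)^{\pm2}\check\alpha(x_i-x_j)$. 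The crucial point is that all connected two-point functions are supported on $\Lambda_+^*$ (the $p=0$ mode being purely coherent), so each is bounded in modulus, uniformly in its arguments, by $\sum_{p\in\Lambda_+^*}\gamma_p$ or by $\sum_{p\in\Lambda_+^*}|\alpha_p|$.

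For the first bound I would estimate every singleton by $|z|\le(\widetilde c N)^{1/2}$ and every pair contraction by $\max\{\sum_{p\in\Lambda_+^*}\gamma_p,\sum_{p\in\Lambda_+^*}|\alpha_p|\}$, which by \eqref{eq:norm_bounds_gamma}, \eqref{eq:norm_bounds_alpha}, $\beta\gtrsim\beta_{\mathrm c}$, $N_0\le N$ and $\delta_{\mathrm{Bog}}$ small is $\lesssim N$. A partition with $m$ singletons and $c$ pairs obeys $m+2c = 2k$, hence contributes $\lesssim|z|^mN^c\lesssim N^{m/2+c} = N^k$; since the number of partitions depends only on $k$, this gives $\sup_x|\rho_z^{(k)}(x)|\lesssim_kN^k$ and, after integration against $\zeta$, the first inequality in \eqref{eq:rho_k_Gamma0_bounds}. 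For the second bound, the product rule gives $\nabla_i\rho_z^{(k)}$ as a sum of two terms in which $a_{x_i}^*$, respectively $a_{x_i}$, is replaced by $\nabla_ia_{x_i}^\sharp = \sum_{p\in\Lambda_+^*}(\mp ip)e^{\mp ipx_i}a_p^\sharp$. Such a differentiated operator has no $p=0$ component, so its mean vanishes and it must lie in a pair in every Wick partition; that pair then carries an extra factor $|p|$ inside the momentum sum, so its contraction is bounded by $\sum_{p\in\Lambda_+^*}|p|\gamma_p$ or $\sum_{p\in\Lambda_+^*}|p||\alpha_p|$, which by \eqref{eq:norm_bounds_gamma}, \eqref{eq:norm_bounds_alpha} and $\beta\gtrsim\beta_{\mathrm c}$ (so that $\beta^{-3/2}\lesssim N$ and $\beta^{-1/2}\lesssim N^{1/3}$) are both $\lesssim\beta^{-1/2}N$. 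Bounding the remaining $m$ singletons by $|z|$ and the remaining $c-1$ pairs by $N$ each, a partition contributes $\lesssim|z|^m(\beta^{-1/2}N)N^{c-1}\lesssim\beta^{-1/2}N^{m/2+c} = \beta^{-1/2}N^k$; summing over partitions and integrating against $\zeta$ yields the second inequality.

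The only genuinely delicate point is the correct bookkeeping of Wick's theorem for a Gaussian state with non-zero mean: one must keep track of which operators are absorbed into the coherent displacement $z$ (carrying factors $|z|\lesssim N^{1/2}$, with no $x$-dependence and hence annihilated by $\nabla_i$) and which are contracted through the genuinely $\Lambda_+^*$-supported covariance, since it is only the latter that produce the $\beta$-dependent $\ell^1$-type bounds of Lemma~\ref{lem:1pdm_pairing_bogoliubov}. Everything else reduces to the covariance estimates of that lemma together with the cutoff $|z|^2\le\widetilde cN$ built into $\zeta$; in particular, no new ingredient beyond Lemma~\ref{lem:1pdm_pairing_bogoliubov} is needed.
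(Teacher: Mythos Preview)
Your proof is correct and follows essentially the same strategy as the paper's: reduce to the $z$-fibre, exploit the quasi-free structure of $|z\rangle\langle z|\otimes G(z)$ via Wick's theorem, and bound the resulting two-point contractions through Lemma~\ref{lem:1pdm_pairing_bogoliubov} together with the cutoff $|z|^2\le\widetilde c N$. The only cosmetic difference is that the paper first conjugates by the Weyl operator (with phase-adjusted operators $a_{x,z}^\sharp$ making the shift real) and then applies Wick on $G(z)$, whereas you phrase the same expansion as Wick's rule for a Gaussian state with non-zero mean; both organizations produce the identical sum over singleton/pair partitions with the same bounds.
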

\begin{proof}
	For $x \in \Lambda$ and $z\in \mathbb C$ we introduce the operators $a_{x,z}^* = \frac{z}{|z|}a^*_x$, $ a_{x,z} = \frac{ \overline z }{|z|}a_x $, which satisfy 
	\begin{equation}\label{eq:action_Weyl_z}
		W_z^* a_{x, z}^* W_z = a_{x, z}^* + |z|, \qquad W_z^* a_{x, z} W_z = a_{x, z} + |z|
	\end{equation} 
	by \eqref{eq:action_weyl_position}.  With the notation $y_i = y_{i+k} = x_i$, $\sharp_i = *$, $\sharp_{i+k} = \cdot$ for $i = 1,..., k$, we can write the $k$-body density of $\Gamma_0$ as 
	\begin{equation}\label{eq:rho2_gamma0_expansion}
		\begin{split}
			\rho_{\Gamma_0}^{(k)}(x_1, ... , x_k) = \int_{ \mathbb C} \zeta(z)  \tr_{\mathfrak F} \big[ a_{y_1, z}^{\sharp_1}... a_{y_{2k}, z}^{\sharp_{2k}} |z\rangle\langle z| \otimes G(z) \big] \, \mathrm d z.
		\end{split}
	\end{equation} Using \eqref{eq:action_Weyl_z} and Wick's theorem for the quasi-free state $G(z)$, we find
	\begin{equation*}
\begin{split}
		\tr_{\mathfrak F} \big[ a_{y_1, z}^{\sharp_1}... a_{y_{2k}, z}^{\sharp_{2k}} & |z\rangle\langle z| \otimes G(z) \big] \\
		= & |z|^{2k} + \sum_{h = 1}^k |z|^{2(k - h)} \sum_{ 1 \le i_1 < ... < i_{2h} \le 2k } \, \sum_{ \sigma \in P_{2h} } \prod_{j = 1}^h  \tr_{\mathfrak F_+} \Big[ a^{\sharp_{ i_{\sigma(2j-1)} }}_{y_{ i_{\sigma(2j-1)}, z }} a^{\sharp_{i_{\sigma(2j)}}}_{y_{ i_{\sigma(2j)} , z}} G(z) \Big] ,
\end{split}
	\end{equation*} where 
\begin{equation*}
	P_{2h} = \{ \sigma \in S_{2h} \mid \sigma(2j-1) < \sigma(2j), \, \sigma(2j-1) < \sigma(2j + 1) \}
\end{equation*} denotes the set of pairings of $\{ 1, ... , 2h \}$. Lemma~\ref{lem:1pdm_pairing_bogoliubov} implies that for every $\sharp, \flat \in \{ *, \cdot\}$ there exists $f_z^{(\sharp, \flat)}: \Lambda\to\mathbb C$ such that $ \tr_{ \mathfrak{F}_+ }[a^{\sharp}_{x,z} a^{\flat}_{y, z} G(z)] = f_{z}^{(\sharp, \flat)}( x- y ) $. Moreover, the functions $f_z^{(\sharp, \flat)}$ satisfy the bounds 
\begin{equation*}
	\| f_z^{(\sharp, \flat)} \|_\infty \lesssim N \quad \text{ and } \quad \| \nabla f_z^{(\sharp, \flat)} \|_\infty \lesssim \beta^{-1/2} N,
\end{equation*}
uniformly in $z\in \mathbb C$. Taking into account the particle number cutoff in $\zeta(z)$, we thus find
\begin{equation*}
\begin{split}
	 | \rho_{\Gamma_0}^{(k)}(x_1, ... , x_k) | \lesssim & \int_{\mathbb C} \zeta(z) \Big( |z|^{2k} + \sum_{h = 1}^{k} |z|^{2(k - h)} \binom{2k}{2h} |P_{2h}| N^h \Big) \, \mathrm dz \lesssim_k N^{k}, \\
 | \nabla_i \rho_{\Gamma_0}^{(k)}(x_1, ... , x_k) | \lesssim &  \sum_{h = 1}^{k} \int_{\mathbb C} \zeta(z) |z|^{2(k - h)} \binom{2k}{2h} |P_{2h}|  \beta^{-1/2} N^h \, \mathrm dz \lesssim_k \beta^{-1/2}N^{k},
\end{split}
\end{equation*} which is the claim of the lemma. 
\end{proof}

Let us introduce the notation
\begin{equation} \label{eq:n0_tilde_def}
	\widetilde N_0 = \tr[ a_0^* a_0 \widetilde{\Gamma}_0 ] = \int_{\mathbb C} |z|^2 \zeta(z) \, \mathrm d z
\end{equation} 
for the expected number of particles in the condensate of $\widetilde \Gamma_0$ in \eqref{eq:gamma_0_def}, with $\widetilde \mu$ chosen according to Lemma \ref{lem:perturbation_number_particles}. As a corollary to Lemmas~\ref{lem:perturbation_number_particles}, \ref{lem:cutoff_a_remainder} and \ref{lem:n_g(z)_ideal_difference} we prove the following estimate for the difference between $\widetilde N_0$ and the expected number of particles in the condensate of the ideal gas. 

\begin{corollary} \label{lem:n0_n0tilde}
	Let the assumptions of Lemma \ref{lem:perturbation_number_particles} be satisfied. Then
	\begin{equation}\label{eq:N0_N0tilde_diff}
		| N_0 - \widetilde N_0 | \lesssim \frac{N_0}{\beta N}  + \frac{N_0^2}{N^2}  + N^{3}\ell^4 + N^{ 1 + \delta_{ \mathrm{ Bog } }}\ell^2 (\beta^{-1} + 1)
	\end{equation}
	holds with $N_0$ in \eqref{eq:n0_def}.
\end{corollary}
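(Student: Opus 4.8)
The plan is to bound $|N_0 - \widetilde N_0|$ by comparing both quantities to the number of thermally excited particles in a suitable ideal-gas Gibbs state, and to control each step by one of the lemmas already at our disposal. First I would record the exact identity for the ideal gas: isolating the $p=0$ term in \eqref{eq:mu0_definition_implicit} and using \eqref{eq:n0_def} gives $N = N_0 + \tr_{\mathfrak F_+}[\mathcal N_+ G^{\mathrm{id}}]$. On the trial-state side, since $\widetilde\Gamma_0 = \int_{\mathbb C}\zeta(z)\,|z\rangle\langle z|\otimes\widetilde G(z)\,\mathrm d z$ and $\zeta$ is a probability density, one has $\tr[\mathcal N\widetilde\Gamma_0] = \widetilde N_0 + \int_{\mathbb C}\zeta(z)\,\tr_{\mathfrak F_+}[\mathcal N_+\widetilde G(z)]\,\mathrm d z$. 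Subtracting the two displays,
\[
	N_0 - \widetilde N_0 = \big(N - \tr[\mathcal N\widetilde\Gamma_0]\big) + \int_{\mathbb C}\zeta(z)\Big(\tr_{\mathfrak F_+}[\mathcal N_+\widetilde G(z)] - \tr_{\mathfrak F_+}[\mathcal N_+ G^{\mathrm{id}}]\Big)\,\mathrm d z .
\]

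For the first bracket I would invoke Lemma~\ref{lem:perturbation_number_particles}: since $\tr[\mathcal N\Gamma]=N$ under the stated choice of $\widetilde\mu$, we get $|N - \tr[\mathcal N\widetilde\Gamma_0]| = |\tr[\mathcal N\Gamma] - \tr[\mathcal N\widetilde\Gamma_0]| \lesssim N^{3}\ell^4 + N^{1+\delta_{\mathrm{Bog}}}\ell^2(\beta^{-1}+1)$, which supplies the last two terms in \eqref{eq:N0_N0tilde_diff}. The integrand in the second term I would split as
\[
	\big(\tr_{\mathfrak F_+}[\mathcal N_+\widetilde G(z)] - \tr_{\mathfrak F_+}[\mathcal N_+ G(z)]\big) + \big(\tr_{\mathfrak F_+}[\mathcal N_+ G(z)] - \tr_{\mathfrak F_+}[\mathcal N_+ G^{\mathrm{id}}]\big),
\]
and estimate the second difference directly by Lemma~\ref{lem:n_g(z)_ideal_difference}, uniformly in $z$, by $\tfrac{N_0}{\beta N}+\tfrac{N_0^2}{N^2}$; this contributes the first two terms of \eqref{eq:N0_N0tilde_diff}.

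The remaining piece, the difference between the cutoff and non-cutoff Bogoliubov Gibbs states, is the step that needs the most care. I would write $\tr_{\mathfrak F_+}[\mathcal N_+\widetilde G(z)] = \tr_{\mathfrak F_+}[A_z\widetilde G^{\mathrm{diag}}]$ and $\tr_{\mathfrak F_+}[\mathcal N_+ G(z)] = \tr_{\mathfrak F_+}[A_z G^{\mathrm{diag}}]$ with $A_z \coloneqq T_z^*\mathcal N_+ T_z$, and note that $0\le A_z\lesssim \mathcal N_+ + 1$ because $\mathcal N^>$ commutes with $T_z$ while $T_z^*(\mathcal N^< + 1)T_z\lesssim \mathcal N^<+1$ by Lemma~\ref{lem:Tz_N_bound}. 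Using $\widetilde G^{\mathrm{diag}} = \kappa_0^{-1}\mathds P_{\widetilde c}G^{\mathrm{diag}}$ and the decomposition $\widetilde G^{\mathrm{diag}} - G^{\mathrm{diag}} = (\kappa_0^{-1}-1)\mathds P_{\widetilde c}G^{\mathrm{diag}} - (\mathds 1-\mathds P_{\widetilde c})G^{\mathrm{diag}}$, this difference is bounded by $|\kappa_0^{-1}-1|\,\tr_{\mathfrak F_+}[(\mathcal N_+ + 1)G^{\mathrm{diag}}] + \tr_{\mathfrak F_+}[(\mathcal N_+ + 1)(\mathds 1-\mathds P_{\widetilde c})G^{\mathrm{diag}}]$, which by Lemma~\ref{lem:cutoff_a_remainder} (applied with $m=0,1$, which in particular yields $|\kappa_0^{-1}-1|\lesssim \exp(-cN^{\delta_{\mathrm{Bog}}})$) together with $\tr_{\mathfrak F_+}[\mathcal N_+ G^{\mathrm{diag}}]\lesssim \beta^{-3/2}+\beta^{-1}\lesssim N$ from Lemma~\ref{lem:integral_gamma_0} is $\lesssim\exp(-cN^{\delta_{\mathrm{Bog}}})$. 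Collecting the three contributions gives $|N_0-\widetilde N_0|\lesssim \tfrac{N_0}{\beta N}+\tfrac{N_0^2}{N^2}+N^{3}\ell^4+N^{1+\delta_{\mathrm{Bog}}}\ell^2(\beta^{-1}+1)+\exp(-cN^{\delta_{\mathrm{Bog}}})$, and since $N_0\ge N^{2/3}$ forces $N_0^2/N^2\gtrsim N^{-2/3}$, the exponential term is absorbed into the other terms, yielding \eqref{eq:N0_N0tilde_diff}. I do not expect a genuine obstacle here: all the analytic work is packaged into Lemma~\ref{lem:perturbation_number_particles} (proved in the appendix) and the Gibbs-state estimates of the preceding subsections, so the only point requiring attention is the bookkeeping with the several particle-number cutoffs and the non-commutativity of $T_z$ with $\mathcal N_+$.
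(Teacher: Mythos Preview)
Your proof is correct and follows essentially the same route as the paper's: the same three-term decomposition of $N_0-\widetilde N_0$, handled respectively by Lemma~\ref{lem:perturbation_number_particles}, Lemma~\ref{lem:n_g(z)_ideal_difference}, and Lemma~\ref{lem:cutoff_a_remainder}. Your treatment of the cutoff term via $A_z=T_z^*\mathcal N_+T_z\lesssim\mathcal N_++1$ and the decomposition $\widetilde G^{\mathrm{diag}}-G^{\mathrm{diag}}=(\kappa_0^{-1}-1)\mathds P_{\widetilde c}G^{\mathrm{diag}}-(\mathds 1-\mathds P_{\widetilde c})G^{\mathrm{diag}}$ makes explicit what the paper leaves implicit, but the argument is the same.
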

\begin{proof}
	We have 
	\begin{equation*}
		\begin{split}
			N_0 - \widetilde N_0 = & \tr_{ \mathfrak{F}_+ } [ \mathcal N_+  G(z) ] - \tr_{ \mathfrak{F}_+ } [ \mathcal N_+ G^{ \mathrm {id} } ] + \tr_{ \mathfrak{F}_+ } [ \mathcal N_+ \widetilde G(z) ] - \tr_{ \mathfrak{F}_+ } [ \mathcal N_+ G(z) ] \\
			& + \tr_{ \mathfrak{F} } [ \mathcal N \Gamma ] - \tr_{ \mathfrak{F} } [ \mathcal N \widetilde \Gamma_0 ]
		\end{split}
	\end{equation*} 
	with $G, \widetilde{G}$ in \eqref{eq:gibbs_state_bog}, $\widetilde{\Gamma}_0$ in \eqref{eq:gamma_0_def}, $\Gamma$ in \eqref{eq:trial_state} and $G^{\mathrm{id}}$ in \eqref{eq:GibbsStateIdealGas}. In combination with Lemmas~\ref{lem:perturbation_number_particles}, \ref{lem:cutoff_a_remainder} and \ref{lem:n_g(z)_ideal_difference}, this identity proves the claim.
\end{proof}

\section{Estimate for the energy}\label{sec:energy}

Throughout Sections~\ref{sec:energy},~\ref{sec:entropy} and \ref{sec:proof_main_theorem}, we adopt the assumptions of Lemma \ref{lem:perturbation_number_particles}, and take $N$ sufficiently large so that the trial state $\Gamma$ is well defined. In addition, we suppose that $\beta \sim \beta_{\mathrm{c}}$,  with $\beta_{\mathrm{c}}$ in \eqref{eq:Tc_ideal}, and we fix a sufficiently large cutoff parameter $\widetilde c  >2$ in \eqref{eq:A_tilde_def} such that the statements of Lemma \ref{lem:cutoff_a_remainder} hold. Finally, we assume that $N \ell^2$ is sufficiently small. In this section we will prove the following statement. 

\begin{proposition}\label{prop:energy_upper_bound}
	The energy of $\Gamma$ is given by 
	\begin{equation}\label{eq:energy_upper_bound}
		\begin{split}
			\tr[\mathcal H_N \Gamma] = & \tr [ \mathcal H^{ \mathrm{ diag } } G^{ \mathrm{ diag} }  ] + \mu_0\sum_{p\in \Lambda_+^*} \gamma_p + 4\pi \mathfrak a _N \int_{\mathbb C } \zeta(z) |z|^4 \, \mathrm d z \\
			& +   8\pi \mathfrak a_N  \Bigg[  \Big( \sum_{p\in \Lambda_+^*} \gamma_p\Big)^2 + \widetilde{N}_0  \sum_{ p \in \Lambda_+^* \setminus  P_{\mathrm B} } \gamma(p) + \widetilde{N}_0 \sum_{p\in \Lambda_+^*} \gamma_p  \Bigg]  + \mathcal E_{\mathcal H},
		\end{split}
	\end{equation}
	with 
	\begin{equation}\label{eq:energy_error_bound}
		\mathcal E_{\mathcal H} \lesssim  N^{1/3 + \delta_{ \mathrm{ Bog } }} + N^{2/3 + 3\delta_{ \mathrm{ Bog } }} \ell^{1/2} + N^{5/3 + 3\delta_{ \mathrm{ Bog } }} \ell^2 + N^3\ell^4 + \ell^{-1}.
	\end{equation}
\end{proposition}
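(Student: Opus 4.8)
The plan is to compute $\tr[\mathcal H_N \Gamma]$ by expanding $\mathcal H_N = \mathcal K + \mathcal V_N$ into kinetic and potential parts and evaluating each against the trial state $\Gamma = \int \zeta(z) \sum_{\alpha} \widetilde\lambda_\alpha \|F\phi_{z,\alpha}\|^{-2} |F\phi_{z,\alpha}\rangle\langle F\phi_{z,\alpha}|$. Since $\Gamma$ is a convex combination of the correlated states $F\phi_{z,\alpha}/\|F\phi_{z,\alpha}\|$, the energy is $\int \zeta(z)\sum_\alpha \widetilde\lambda_\alpha \langle \mathcal H_N\rangle_{F\phi_{z,\alpha}}/\|F\phi_{z,\alpha}\|^2$, and by \eqref{eq:denominator_lower_bound} the denominators are $1+\mathcal O(N\ell^2)$, so they only contribute to the error term provided the numerators are controlled. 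The core task is therefore to estimate $\langle \mathcal H_N\rangle_{F\phi_{z,\alpha}}$. Writing $F\phi_{z,\alpha}$ componentwise as $F_n\phi_{z,\alpha}^{(n)}$, the kinetic energy produces, after integration by parts, the usual three pieces: $\langle \mathcal K\rangle_{\phi_{z,\alpha}}$ weighted by $F_n^2$, a cross term $\nabla F_n \cdot \nabla \phi_{z,\alpha}^{(n)}$, and the positive term $\int |\nabla F_n|^2 |\phi_{z,\alpha}^{(n)}|^2$. The potential energy is $\int V_N \cdot F_n^2 |\phi_{z,\alpha}^{(n)}|^2$. The key algebraic identity is that for the Jastrow factor built from the zero-energy scattering solution, $|\nabla f_\ell|^2 + \tfrac12 V_N f_\ell^2$ collapses — up to boundary terms supported near $|x|=\ell$ — so that the sum of the $|\nabla F|^2$ term and the potential term is governed by a short-range kernel whose integral against the two-body density produces $4\pi\mathfrak a_N$ times a density factor, by the scattering length identity in Appendix~\ref{app:scattering}. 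This is the announced cancellation "due to the different length scales": the interaction kernel lives on scale $\ell \sim N^{-7/12} \ll 1$, so replacing the slowly-varying two-body density $\rho^{(2)}_{z,\alpha}(x_1,x_2)$ by its diagonal value $\rho^{(2)}_{z,\alpha}(x_1,x_1)$ costs only $\ell \sup|\nabla_2\rho^{(2)}_{z,\alpha}|$, controlled by \eqref{eq:nabla_rho2_bound}.

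Concretely I would proceed as follows. First, expand the Jastrow factor using $F_n = \prod_{i<j}f_\ell(x_i-x_j) = \prod(1-w_\ell)$ with $w_\ell = 1-f_\ell$, and note $w_\ell$ is supported on $|x|<\ell$ with $\|w_\ell\|_1 \lesssim \ell^2 \mathfrak a_N$ — so that products of more than one factor $w_\ell$ are negligible when integrated against densities bounded by $N^k$ (Lemma~\ref{lem:rho_k_Gamma_bound}, Lemma~\ref{lem:rho_z_a_uniform_bound}). This reduces all Jastrow-weighted quantities to a "leading" term plus the uncorrelated density plus small remainders. Second, handle the kinetic cross term: after integration by parts it becomes $-\tfrac12 \int \Delta F_n^2 \cdot |\phi^{(n)}|^2$ minus the $|\nabla F_n|^2$ term, and using $\Delta(f_\ell^2) = 2|\nabla f_\ell|^2 + 2 f_\ell \Delta f_\ell = 2|\nabla f_\ell|^2 + f_\ell^2 V_N$ (away from $|x|=\ell$) plus a surface measure on $|x|=\ell$, one assembles the potential and gradient terms into $\sum_{i<j}\big(2|\nabla f_\ell|^2 + V_N f_\ell^2\big)(x_i-x_j)$ acting on $|\phi^{(n)}|^2\prod_{\text{other}}f_\ell^2$; the surface term is $\mathcal O(\ell^{-1})$ by a direct estimate (this is the $\ell^{-1}$ in \eqref{eq:energy_error_bound}). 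Third, use the scattering identity $\int_{\mathbb R^3}(2|\nabla f_\ell|^2 + V_N f_\ell^2) = 8\pi\mathfrak a_N(1+\mathcal O(\mathfrak a_N/\ell))$ together with the pointwise-Lipschitz bounds on $\rho^{(2)}$ to replace the two-body density by its diagonal, producing $8\pi\mathfrak a_N \cdot \tfrac12\int_\Lambda \rho^{(2)}_{z,\alpha}(x,x)\,\mathrm dx$ and similarly for the four-body density $\int \rho^{(4)}_{z,\alpha}(x,x,x_3,x_4)\,dx\,dx_3\,dx_4$ (needed because $\prod_{\text{other}}f_\ell^2$ differs from $1$), both controlled via Lemma~\ref{lem:rho_z_a_uniform_bound}. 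Fourth, reassemble: the diagonal of $\rho^{(2)}_{z,\alpha}$ expressed through $|z|$, $\gamma_p$, $\alpha_p$ and the eigenvalue data $N_\alpha, N_\alpha^<$ gives, after averaging over $z$ with $\zeta$ and over $\alpha$ with $\widetilde\lambda_\alpha$ — using $\int|z|^4\zeta\,dz = \int\zeta|z|^4$, $\sum\widetilde\lambda_\alpha\langle a_p^*a_p\rangle_{T_z\Psi_\alpha} = \gamma_p$ after a small cutoff correction from Lemma~\ref{lem:cutoff_a_remainder}, and $\widetilde N_0 = \int|z|^2\zeta$ — exactly the explicit terms $4\pi\mathfrak a_N\int\zeta|z|^4 + 8\pi\mathfrak a_N[(\sum\gamma_p)^2 + \widetilde N_0\sum_{p\notin P_{\mathrm B}}\gamma_p + \widetilde N_0\sum\gamma_p]$ on the right-hand side of \eqref{eq:energy_upper_bound}. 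Finally, the kinetic energy $\langle\mathcal K\rangle_{\phi_{z,\alpha}}$ weighted by the trivial part of $F_n^2$ gives $\tr[\mathcal H^{\mathrm{diag}}G^{\mathrm{diag}}] + \mu_0\sum_p\gamma_p$ after using $T_z^*\mathcal H_{\mathrm B}(z)T_z = \mathcal H^{\mathrm{diag}}$, $\langle\mathcal K\rangle_{|z\rangle} = 0$, and tracking the $\mu_0\mathcal N_+$ bookkeeping together with the cutoff-removal errors from Lemma~\ref{lem:cutoff_a_remainder}.

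The error bookkeeping is where the exponents in \eqref{eq:energy_error_bound} are produced. The $N^3\ell^4$ term comes from the two-$w_\ell$ contributions integrated against a density bounded by $N^4$ (two pairs, each costing $\ell^2\mathfrak a_N \sim \ell^2/N$); the $N^{5/3+3\delta_{\mathrm{Bog}}}\ell^2$ and $N^{2/3+3\delta_{\mathrm{Bog}}}\ell^{1/2}$ terms come from the diagonal-replacement error $\ell\cdot\sup|\nabla_2\rho^{(2)}_{z,\alpha}|$ and from the $\rho^{(4)}$ correction to the "other factors" $\prod f_\ell^2$, after inserting the bounds of Lemma~\ref{lem:rho_z_a_uniform_bound} with $N_\alpha \lesssim N$, $N_\alpha^< \lesssim \beta^{-1}N^{\delta_{\mathrm{Bog}}}$, $|z|^2\lesssim N$, $\beta\sim N^{-2/3}$, and $\|\mathcal K^{1/2}\Psi_\alpha\| \lesssim (\beta^{-5/2})^{1/2}$-type bounds coming from $\tr[\mathcal K G^{\mathrm{diag}}]$; the $N^{1/3+\delta_{\mathrm{Bog}}}$ term is the residual from replacing $\widetilde N_0$-type sums by their counterparts and from the $\|F\phi\|^{-2}$ normalization acting on the main term of order $N^{5/3}$, i.e. $N\ell^2\cdot N^{5/3} \cdot (\text{smaller})$; and the $\ell^{-1}$ is the surface term. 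The main obstacle is the third step: establishing the diagonal replacement with the right error requires the pointwise gradient bound \eqref{eq:nabla_rho2_bound} on $\rho^{(2)}_{z,\alpha}$, whose derivation (carried out in Lemma~\ref{lem:rho_z_a_uniform_bound}) is the genuinely new technical input, and one must be careful that the "other factors" $\prod_{(i,j)\neq}f_\ell^2 = 1 - \sum w_\ell + \dots$ correction, which couples the two-body interaction kernel to a third and fourth particle, is handled through the $\rho^{(4)}$ estimate and not naively — this is precisely why \eqref{eq:rho4_za_integral_bound} is stated in integrated form over two of the four variables. A secondary subtlety is ensuring all the $z$- and $\alpha$-averages of the density data close up exactly to the stated combination of $\gamma_p$ and $\widetilde N_0$, which uses that $\{\Psi_\alpha\}$ are simultaneously eigenfunctions of $\mathcal N_+$, $\mathcal N^<$, $\mathcal N^>$ and $\mathcal H^{\mathrm{diag}}$, so that cross terms in $N_\alpha^<, N_\alpha^>$ either vanish or are controlled by the cutoff in $\widetilde{\mathcal A}$.
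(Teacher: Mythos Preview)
Your overall architecture is close to the paper's, but there are two genuine gaps.

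\textbf{The normalization correction is not an error term.} You write that ``by \eqref{eq:denominator_lower_bound} the denominators are $1+\mathcal O(N\ell^2)$, so they only contribute to the error term provided the numerators are controlled.'' This is precisely where the argument would fail. The interaction numerator is of order $N$, so the correction $\|F\phi_{z,\alpha}\|^{-2}-1$ multiplied against it contributes $\sim N^2\ell^2$, which for $\ell\sim N^{-11/18}$ is $N^{7/9}\gg N^{11/18}$. The paper isolates this contribution as $V_1$ in \eqref{eq:Chi+V_expectation_computation} and shows (Lemma~\ref{lem:V1+V2_cancellation}) that it cancels to leading order against the correction $V_2$ coming from the factor $\prod_{(k,h)\neq(i,j)}f_\ell^2-1$ in \eqref{eq:upperBoundJastrow}. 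Both $V_1$ and $V_2$ are individually $\sim N^2\ell^2$; the cancellation leaves $N^{5/3}\ell^2$. You mention the $\rho^{(4)}$ correction but treat it separately from the denominator correction, whereas the whole point is that they must be matched. The diagonal replacement $\rho^{(2)}(x_1,x_2)\to\rho^{(2)}(x_1,x_1)$ you emphasize is used \emph{inside} this cancellation argument (to show the leading parts of $V_1$ and $V_2$ agree) but is not itself the announced cancellation.

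\textbf{The surface term is not $\mathcal O(\ell^{-1})$.} Your route via $\Delta(f_\ell^2)$ picks up a surface measure at $|x|=\ell$ with coefficient $\sim\mathfrak a_N/\ell^2$ (the jump of the radial derivative). Integrated over the sphere of area $\sim\ell^2$ against a two-body density of size $\sim N^2$, this gives $\sim\mathfrak a_N N^2\sim N$, not $\ell^{-1}$. The paper avoids this entirely: after the integration by parts one obtains $\int|\nabla_iF_n|^2|\phi|^2+\Re\int F_n^2\overline\phi(-\Delta_i\phi)$ with no distributional Laplacian of $f_\ell$, and the renormalized kernel $\xi_N=|\nabla f_\ell|^2+\tfrac12 v_Nf_\ell^2$ is an honest $L^1$ function. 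The $\ell^{-1}$ in the error actually comes from the scattering-length correction $\widehat\xi_N(0)=4\pi\mathfrak a_N/(1-\mathfrak a_N/\ell)$, which when replaced by $4\pi\mathfrak a_N$ against terms of size $N^2$ costs $\mathfrak a_N^2 N^2/\ell\sim\ell^{-1}$.

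A minor bookkeeping point: $\langle\mathcal K\rangle_{\phi_{z,\alpha}}$ alone does not produce $\tr[\mathcal H^{\mathrm{diag}}G^{\mathrm{diag}}]$; one must combine $\tr[\mathcal K\widetilde\Gamma_0]$ with the quadratic pairing piece $\mathcal Q_{\mathrm B}$ extracted from the interaction (see \eqref{eq:K_QB_combination}), since $\mathcal K+\mathcal Q_{\mathrm B}=\mathcal H_{\mathrm B}(z)+\mu_0\mathcal N_+$.
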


To prove the above proposition, we decompose
\begin{equation*}
	\tr[\mathcal H_N \Gamma] = \tr[\mathcal K \Gamma] + \tr[\mathcal V_N \Gamma],
\end{equation*} where
\begin{equation*}
	\mathcal K = \int_{\Lambda} a_x^*(-\Delta) a_x \, \mathrm d x \quad \text{ and } \quad \mathcal{V}_N = \int_{\Lambda^2} v_N(x-y) a_x^*a_y^* a_x a_y\, \mathrm dx\,\mathrm dy
\end{equation*} 
denote the kinetic energy operator and the interaction potential, respectively. We first compute the contribution of the kinetic term.

\subsection{Analysis of the kinetic energy}

The goal of this section is to prove the following proposition. 

\begin{proposition}\label{prop:kinetic_energy_upper_bound}
We have
	\begin{equation}\label{eq:kinetic_energy_contribution}
		\tr[\mathcal K\Gamma] = \tr[\mathcal K \widetilde \Gamma_0 ] + \tr[ \cchi \Gamma] + \mathcal E_{\mathcal K},
	\end{equation} where 
	\begin{equation}\label{eq:chi_operator_def}
	\cchi \coloneqq \int_{\Lambda^2} \frac{|\nabla f_\ell(x-y)|^2}{f_\ell(x-y)^2} a_x^* a_y^* a_x a_y \, \mathrm d x \, \mathrm d y
	\end{equation} 
	with $f_{\ell}$ in \eqref{eq:Definitionfl}. The error term satisfies
	\begin{equation}\label{eq:kinetic_energy_error_bound}
	 \mathcal E_\mathcal K \lesssim N^{2/3 + 3\delta_{\mathrm{Bog}}} \sqrt { \ell(1 + N^2\ell^3) }.
	\end{equation}
\end{proposition}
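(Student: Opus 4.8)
The starting point is the action of the Jastrow operator $F$ on the kinetic energy. Since $\Gamma = \int \zeta(z) \sum_\alpha \widetilde\lambda_\alpha \|F\phi_{z,\alpha}\|^{-2} |F\phi_{z,\alpha}\rangle\langle F\phi_{z,\alpha}|\, \mathrm d z$, I would compute $\langle \mathcal K \rangle_{F\phi_{z,\alpha}}$ by writing it in position space and moving the Jastrow factor $F_n$ through the Laplacians. On the $n$-particle sector, $\sum_i\int |\nabla_i(F_n\phi^{(n)})|^2 = \sum_i\int F_n^2|\nabla_i\phi^{(n)}|^2 + \sum_i\int|\nabla_iF_n|^2|\phi^{(n)}|^2 + 2\sum_i\int F_n\phi^{(n)}\nabla_iF_n\cdot\nabla_i\overline{\phi^{(n)}}$. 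The first term, after the normalization by $\|F\phi_{z,\alpha}\|^2$, should reproduce $\langle \mathcal K\rangle_{\phi_{z,\alpha}}$ up to corrections controlled by $1-F_n^2 = \sum_{i<j}u_\ell(x_i-x_j)$ smoothed against the reduced densities; here I would use the pointwise bounds \eqref{eq:rho2za_bound} and \eqref{eq:rho4_za_integral_bound} from Lemma \ref{lem:rho_z_a_uniform_bound} together with $\|u_\ell\|_1\lesssim \mathfrak a_N^2\ell$ (the scattering-length estimate $\eqref{eq:ul_integral}$). Integrating $\phi_{z,\alpha}$-densities against $\zeta$ and summing $\widetilde\lambda_\alpha$ then turns this leading piece into $\tr[\mathcal K\widetilde\Gamma_0]$ up to the error $\mathcal E_{\mathcal K}$.

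The second term $\sum_i\int|\nabla_iF_n|^2|\phi^{(n)}|^2$ is, by the product structure $F_n = \prod_{k<l}f_\ell(x_k-x_l)$, equal to $\sum_i\big|\sum_{j\ne i}\frac{\nabla f_\ell(x_i-x_j)}{f_\ell(x_i-x_j)}\big|^2 F_n^2|\phi^{(n)}|^2$. The diagonal part of the square gives, after the same normalization and summation, exactly $\tr[\chi\Gamma]$ with $\chi$ as in \eqref{eq:chi_operator_def} — up to replacing $F_n^2$ by $1$, which again costs a $1-F_n^2$ correction estimated as above. The off-diagonal part $\sum_i\sum_{j\ne k,\, j,k\ne i}\frac{\nabla f_\ell(x_i-x_j)}{f_\ell(x_i-x_j)}\cdot\frac{\nabla f_\ell(x_i-x_k)}{f_\ell(x_i-x_k)}$ is a genuine three-body term; I would bound it using $|\nabla f_\ell(x)/f_\ell(x)|\lesssim \mathfrak a_N/|x|^2$ supported on $|x|\le\ell$ (Appendix \ref{app:scattering}), so that its $L^1$ norm in one variable is $\lesssim \mathfrak a_N^2\ell\lesssim N^{-2}\ell$, and control the three-body density integral by $\eqref{eq:rho4_za_integral_bound}$ (reduced to a three-body density by integrating out one variable), yielding a contribution of order $N^{3}\ell\cdot(\text{stuff})$, which I expect to be dominated by the $N^{5/3+3\delta_{\mathrm{Bog}}}\ell^{5/2}$-type term once $N\ell^2$ is small.

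The cross term $2\sum_i\int F_n^2\,\frac{\nabla f_\ell}{f_\ell}(x_i-x_j)\cdot \mathfrak{Re}(\overline{\phi^{(n)}}\nabla_i\phi^{(n)})$ is where the bound on $\nabla_i\phi$ enters. Using Cauchy--Schwarz in the form $|{\rm cross}|\le \epsilon\sum_i\int F_n^2|\nabla_i\phi^{(n)}|^2 + \epsilon^{-1}\sum_i\int F_n^2|\nabla_iF_n/F_n|^2|\phi^{(n)}|^2$ would just reproduce the two previous terms, so instead I would keep it as a genuine error and estimate it directly: bound $|\nabla_i\rho^{(2)}_{z,\alpha}|$ and $|\nabla_i\rho^{(4)}_{z,\alpha}|$-type quantities via \eqref{eq:nabla_rho2_bound} (which produces the factors $N_\alpha^{1/2}\|\mathcal K^{1/2}\Psi_\alpha\|$), and the weight $\nabla f_\ell/f_\ell$ via its $L^1$-bound $\lesssim \mathfrak a_N^2\ell\lesssim N^{-2}\ell$. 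Crucially, after summing $\widetilde\lambda_\alpha$ the quantity $\sum_\alpha\widetilde\lambda_\alpha N_\alpha^{1/2}\|\mathcal K^{1/2}\Psi_\alpha\|$ is controlled by Cauchy--Schwarz and $\tr[\mathcal H^{\mathrm{diag}}G^{\mathrm{diag}}]\lesssim\beta^{-5/2}$ together with $\tr[\mathcal N G^{\mathrm{diag}}]\lesssim\beta^{-3/2}$, and the factor $|z|$ pieces are controlled by $\int|z|^k\zeta(z)\,\mathrm d z\lesssim N^{k/2}$. Collecting all the pieces — the $1-F_n^2$ corrections to the leading and $\chi$ terms, the three-body off-diagonal term, and the cross term — and using $\mathfrak a_N\sim N^{-1}$, $\beta\sim\beta_{\mathrm c}\sim N^{-2/3}$, $N_\alpha\lesssim N$, $N_\alpha^<\lesssim \beta^{-1}N^{\delta_{\mathrm{Bog}}}$, one should arrive at an overall bound of the shape $N^{2/3+3\delta_{\mathrm{Bog}}}\sqrt{\ell(1+N^2\ell^3)}$ as claimed. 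The main obstacle I anticipate is the bookkeeping in the cross term: one has to track the $\alpha$-dependent factors $\|\mathcal K^{1/2}\Psi_\alpha\|$ through the $\widetilde\lambda_\alpha$-sum without losing powers of $N$, and simultaneously keep the $|z|$-powers from the Weyl shift under control, so that the $\sqrt{\ell}$ rather than $\ell$ behavior emerges — this is precisely the regime where the short length scale $\ell$ of the correlations must be exploited, and getting the square-root (as opposed to a worse power) requires splitting the cross term optimally rather than bounding it by the two diagonal terms.
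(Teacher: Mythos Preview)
Your decomposition
\[
\sum_i\int|\nabla_i(F_n\phi)|^2
=\sum_i\int F_n^2|\nabla_i\phi|^2
+\sum_i\int|\nabla_iF_n|^2|\phi|^2
+2\sum_i\mathfrak{Re}\int F_n\,\nabla_iF_n\cdot\overline{\phi}\,\nabla_i\phi
\]
is correct, but treating the first and third terms \emph{separately} is where the plan breaks down. The correction to the first term is $\int(1-F_n^2)|\nabla_i\phi|^2$, and the cross term also carries a factor $|\nabla_i\phi|$; in both cases you are forced to control the \emph{full} kinetic energy of $\phi_{z,\alpha}$. After the $\widetilde\lambda_\alpha$--average this is of order $\beta^{-5/2}\sim N^{5/3}$ (and if you go through Cauchy--Schwarz to $\|\mathcal K\phi_{z,\alpha}\|$, the second moment is even worse). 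No matter how you pair this with the small factors $\|u_\ell\|_1\sim\mathfrak a_N\ell^2$ or $\|\nabla f_\ell\|_1\sim\mathfrak a_N\ell$, the resulting error is larger than the claimed $N^{2/3+3\delta_{\mathrm{Bog}}}\sqrt{\ell}$ by a full power of $N$. The $\nabla\rho^{(2)}$ bound \eqref{eq:nabla_rho2_bound} does not help here: it controls $\nabla_i|\phi|^2$, not $|\nabla_i\phi|^2$.

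The paper avoids this by two moves you are missing. First, one more integration by parts collapses your first and third terms into the single expression
\[
\mathfrak{Re}\int F_n^2\,\overline{\phi}\,(-\Delta_i\phi)
=\mathfrak{Re}\,\langle F^2\phi_{z,\alpha},\mathcal K\phi_{z,\alpha}\rangle\quad(\text{summed over }i,n),
\]
so there is no cross term to estimate at all. Second, and this is the decisive point, one splits $\mathcal K=\mathcal K^{<}+\mathcal K^{>}$ with $\mathcal K^{>}=\sum_{|p|>N^{\delta_{\mathrm{Bog}}}}|p|^2a_p^*a_p$. Since $W_z$ and $T_z$ act trivially on high momenta, $\phi_{z,\alpha}$ is an \emph{eigenvector} of $\mathcal K^{>}$ with eigenvalue $E_\alpha^{>}$; hence $\langle F^2\phi_{z,\alpha},\mathcal K^{>}\phi_{z,\alpha}\rangle=E_\alpha^{>}\|F\phi_{z,\alpha}\|^2$, and the normalization by $\|F\phi_{z,\alpha}\|^2$ cancels \emph{exactly}, giving $\tr[\mathcal K^{>}\widetilde\Gamma_0]$ with zero error. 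Only the low-momentum piece $\langle(F^2-1)\phi_{z,\alpha},\mathcal K^{<}\phi_{z,\alpha}\rangle$ remains, and Cauchy--Schwarz together with $\mathcal K^{<}\le N^{2\delta_{\mathrm{Bog}}}\mathcal N^{<}$ yields the factor $N^{2/3+3\delta_{\mathrm{Bog}}}$ instead of $N^{5/3}$. Without this eigenvector observation, the stated error bound is out of reach.

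Two smaller points: you quote $\|u_\ell\|_1\lesssim\mathfrak a_N^2\ell$, but \eqref{eq:ul_integral} gives $\mathfrak a_N\ell^2$; and the pointwise bound $|\nabla f_\ell/f_\ell|\lesssim\mathfrak a_N/|x|^2$ fails near the hard core (where $f_\ell\to 0$), so for the three-body term one should use $F_n^2/(f_{ij}f_{ik})\le 1$ and bound with $|\nabla f_{ij}||\nabla f_{ik}|$ directly, as in the paper's Lemma~\ref{lem:K1_upper_bound}.
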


\begin{remark}
	 The operator $\cchi$ contains the leading-order contribution of the correlation structure to the kinetic energy of the trial state. We will combine it with contributions from the potential $\mathcal V_N$ to obtain the full interaction energy. 
\end{remark}

To prove Proposition \ref{prop:kinetic_energy_upper_bound}, we start by expanding the trace on the left-hand side of \eqref{eq:kinetic_energy_contribution} in particle number sectors. With the notation $\mathrm d X^n = \mathrm d x_1\, ... \, \mathrm d x_n$, we write 
\begin{equation}\label{eq:kinetic_energy_computation_def}
	\begin{split}
		\tr[\mathcal K \Gamma] = & \int_{\mathbb C } \zeta(z)  \sum_{\alpha\in \widetilde{ \mathcal A}}\frac {  \widetilde \lambda_\alpha } { \| F \phi_{z,\alpha} \|^2 } \sum_{n =1}^\infty \sum_{i=1}^n \int_{\Lambda^n} \left| \nabla_i \big( F_n  \phi_{z,\alpha}^n \big)\right |^2 \, \mathrm d X^n \, \mathrm d z,
	\end{split}
\end{equation}
where $\phi_{z,\alpha}^n$ denotes the projection of $\phi_{z, \alpha}$ onto the $n$-particle sector of the Fock space. Integrating by parts, we find
\begin{equation}\label{eq:kinetic_energy_first_partial}
	\begin{split}
		\int_{\Lambda^n}  \big| \nabla_i \big(F_n &  \phi_{z,\alpha}^n \big) \big |^2  \, \mathrm d X^n = - \int_{\Lambda^n} F_n \overline{ \phi_{z,\alpha}^n }\Delta_i \big (F_n \phi_{z,\alpha}^n \big ) \, \mathrm d X^n \\
		= & \int_{\Lambda^n} \Big[ (-\Delta_i F_n ) F_n \big | \phi_{z,\alpha}^n \big|^2  - 2 \nabla_i F_n  \cdot  \nabla_i \phi_{z,\alpha}^n F_n \overline { \phi_{z,\alpha}^n } + F_n^2 \overline { \phi_{z,\alpha}^n  }  (-\Delta_i \phi_{z,\alpha}^n)  \Big ] \, \mathrm d X^n.
	\end{split}
\end{equation} A further integration by parts shows
\begin{equation*}
	\begin{split}
		\int_{\Lambda^n}  (-\Delta_i F_n) F_n \big |  \phi_{z,\alpha}^{n} \big|^2 \, \mathrm d X^n = & 	\int_{\Lambda^n} |\nabla_i F_n |^2 \big |  \phi_{z,\alpha}^{n}  \big |^2 \, \mathrm d X^n \\
		& + 2\mathfrak{Re}	\int_{\Lambda^n} F_n  \left(\nabla_i F_n \cdot \overline { \nabla_i \phi_{z,\alpha}^{n} } \right)  \phi_{z,\alpha}^{n} \, \mathrm d X^n.
	\end{split}
\end{equation*} When we plug this into \eqref{eq:kinetic_energy_first_partial} and take the real part on both sides, we get
\begin{equation*}\label{eq:kinetic_term_integrated_parts}
	\begin{split}
		\int_{\Lambda^n} & \left| \nabla_i \big( F_n  \phi_{z,\alpha}^{n} \big) \right |^2  \, \mathrm d X^n = \int_{\Lambda^n} \left| \nabla_i F_n \right|^2 \big|\phi_{z,\alpha}^{n} \big |^2  \, \mathrm d X^n + \mathfrak{Re} \int_{\Lambda^n} F_n ^2 \overline{ \phi_{z,\alpha}^{n} } \big (- \Delta_i \phi_{z,\alpha}^{n} \big)  \, \mathrm d X^n. 
	\end{split}
\end{equation*}
Inserted into \eqref{eq:kinetic_energy_computation_def}, this yields $\tr[\mathcal K\Gamma] = K_1 + K_2$ with 
\begin{equation*}
\begin{split}
		K_1 \coloneqq & \int_{\mathbb C } \zeta(z)  \sum_{\alpha\in \widetilde{ \mathcal A}}\frac {  \widetilde \lambda_\alpha } { \| F \phi_{z,\alpha} \|^2 } \sum_{n =1}^\infty \sum_{i=1}^n \int_{\Lambda^n}  \left| \nabla_i F_n \right|^2 \left| \phi_{z,\alpha}^{n} \right |^2  \, \mathrm d X^n \, \mathrm d z, \\
		K_2 \coloneqq & \mathfrak{Re} \int_{\mathbb C } \zeta(z)  \sum_{\alpha\in \widetilde{ \mathcal A}}\frac {  \widetilde \lambda_\alpha } { \| F \phi_{z,\alpha} \|^2 } \langle F^2 \phi_{z,\alpha} , \mathcal K\phi_{z,\alpha} \rangle \,\mathrm d z. 
\end{split}
\end{equation*}

In the next two lemmas we provide estimates for $K_1$ and $K_2$.
\begin{lemma}\label{lem:K1_upper_bound}
	We have
	\begin{equation}\label{eq:K1_upper_bound}
		K_1 = \tr[\cchi\Gamma] + \mathcal E^{(1)}_{\mathcal K},
	\end{equation} with 
	\begin{equation}\label{eq:K1_error_bound}
		 \mathcal E^{(1)}_{\mathcal K} \lesssim N\ell^2.
	\end{equation}
\end{lemma}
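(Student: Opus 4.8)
The strategy is to make the Jastrow structure explicit in $K_1$, split it into a "diagonal'' part that reproduces $\cchi$ and an "off‑diagonal'' part that is the error, and then bound the error in terms of the three–body density of $\phi_{z,\alpha}$ and the $L^1$‑norm of $\nabla f_\ell$. Since $f_\ell$ is radial, the Jastrow factor in \eqref{eq:jastrow_operator_def} factorizes as $F_n=G_n^{(i)}\prod_{j\ne i}f_\ell(x_i-x_j)$ with $G_n^{(i)}$ independent of $x_i$, so $\nabla_i F_n=F_n\sum_{j\ne i}\tfrac{\nabla f_\ell(x_i-x_j)}{f_\ell(x_i-x_j)}$, where the division is only formal: each term $F_n\tfrac{\nabla f_\ell(x_i-x_j)}{f_\ell(x_i-x_j)}$ equals $\nabla f_\ell(x_i-x_j)$ times a nonnegative product of $f_\ell$‑factors, hence is well defined also when $V$ has a hard core. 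Squaring and summing over $i$ yields
\[
\sum_{i=1}^n|\nabla_i F_n|^2=F_n^2\sum_{i\ne j}\frac{|\nabla f_\ell(x_i-x_j)|^2}{f_\ell(x_i-x_j)^2}+F_n^2\sum_i\sum_{\substack{j\ne k\\ j,k\ne i}}\frac{\nabla f_\ell(x_i-x_j)\cdot\nabla f_\ell(x_i-x_k)}{f_\ell(x_i-x_j)f_\ell(x_i-x_k)} .
\]
Inserting the first term into the definition of $K_1$ above, the factors $f_\ell(x_i-x_j)^2$ cancel against $F_n^2$ and one recovers $\sum_n\int_{\Lambda^n}|F_n\phi_{z,\alpha}^n|^2\sum_{i\ne j}\tfrac{|\nabla f_\ell(x_i-x_j)|^2}{f_\ell(x_i-x_j)^2}\,\mathrm dX^n=\langle F\phi_{z,\alpha},\cchi F\phi_{z,\alpha}\rangle$ by \eqref{eq:chi_operator_def}; averaging over $z,\alpha$ with the weights $\zeta(z)\widetilde\lambda_\alpha/\|F\phi_{z,\alpha}\|^2$ this is precisely $\tr[\cchi\Gamma]$. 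So \eqref{eq:K1_upper_bound} holds with $\mathcal E^{(1)}_{\mathcal K}$ equal to the corresponding average of the second (off‑diagonal) term.

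\textbf{Bounding the remainder.} By \eqref{eq:denominator_lower_bound} and the standing assumption that $N\ell^2$ is small, $\|F\phi_{z,\alpha}\|^2\ge\tfrac12$ for all $|z|^2\le\widetilde cN$, $\alpha\in\widetilde{\mathcal A}$; since $\int\zeta\,\mathrm dz=\sum_{\alpha\in\widetilde{\mathcal A}}\widetilde\lambda_\alpha=1$ it suffices to bound the off‑diagonal sum integrated against $|\phi^n_{z,\alpha}|^2$, uniformly in $z,\alpha$. For $j\ne k$ the unordered pairs $\{i,j\}$ and $\{i,k\}$ are distinct, so $F_n^2/(f_\ell(x_i-x_j)f_\ell(x_i-x_k))$ is a product of $f_\ell$‑factors in $[0,1]$, whence each summand is bounded pointwise by $|\nabla f_\ell(x_i-x_j)|\,|\nabla f_\ell(x_i-x_k)|$. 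Rewriting the triple sum over distinct indices via the reduced density \eqref{eq:rho_za_k_def} with $k=3$, and using that $\nabla f_\ell$ is supported in $\{|x|\le\ell\}$,
\[
|\mathcal E^{(1)}_{\mathcal K}|\ \lesssim\ \sup_{\substack{|z|^2\le\widetilde cN\\ \alpha\in\widetilde{\mathcal A}}}\int_{\Lambda^3}\rho^{(3)}_{z,\alpha}(y_1,y_2,y_3)\,|\nabla f_\ell(y_1-y_2)|\,|\nabla f_\ell(y_1-y_3)|\,\mathrm dy_1\,\mathrm dy_2\,\mathrm dy_3\ \le\ \Big(\sup_y\rho^{(3)}_{z,\alpha}(y)\Big)\,\|\nabla f_\ell\|_{L^1(\Lambda)}^2 .
\]

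\textbf{The two inputs and the main difficulty.} The bound \eqref{eq:K1_error_bound} then follows from: (i) $\|\nabla f_\ell\|_{L^1(\Lambda)}\lesssim\mathfrak a_N\ell$, obtained from the scattering estimate $|\nabla f_\ell(x)|\lesssim\mathfrak a_N|x|^{-2}$ outside the support of $V_N$ (Appendix~\ref{app:scattering}, with $f(\ell)\ge\tfrac12$ since $\ell\ge2\mathfrak a_N$), the contribution of the tiny support of $V_N$ being controlled by Cauchy–Schwarz and $\|\nabla f_\ell\|_2^2\lesssim\mathfrak a_N$; and (ii) the pointwise bound $\sup_y\rho^{(3)}_{z,\alpha}(y)\lesssim N^3$, uniform over $|z|^2\le\widetilde cN$ and $\alpha\in\widetilde{\mathcal A}$, giving $|\mathcal E^{(1)}_{\mathcal K}|\lesssim N^3(\mathfrak a_N\ell)^2\lesssim N\ell^2$ as $\mathfrak a_N=\mathfrak a/N$. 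Input (ii) is not among the estimates collected so far, but it is proved exactly like \eqref{eq:rho2za_bound}: bounding the phases by $1$ reduces $\sup_y\rho^{(3)}_{z,\alpha}(y)$ to $\sum_{p_1,p_2,p_3,q_1,q_2,q_3\in\Lambda^*}\langle a^*_{p_1}a^*_{p_2}a^*_{p_3}a_{q_1}a_{q_2}a_{q_3}\rangle_{\phi_{z,\alpha}}$, and after the $W_z$‑substitution the resulting expectations in $T_z\Psi_\alpha$ are estimated by splitting the momenta into $P_{\mathrm B}$ and its complement, using Lemmas~\ref{lem:up_vp_norms} and~\ref{lem:Tz_N_bound} together with $N_\alpha\le\widetilde cN$ and $N_\alpha^<\lesssim N^{\delta_{\mathrm{Bog}}}\beta^{-1}$; the leading contribution is $|z|^6\lesssim N^3$. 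The delicate point — and the reason a cruder estimate fails — is that \emph{both} copies of $\nabla f_\ell$ in the off‑diagonal term must be integrated in $L^1$: estimating either one in $L^2$ would leave an error of order $\|\rho^{(2)}_{z,\alpha}\|_\infty\|\nabla f_\ell\|_2^2\sim N$, which is far larger than $N\ell^2$ when $\ell\to0$. This is exactly why the three–body density has to be controlled pointwise rather than in an integrated sense, and it is the only genuinely new ingredient needed for the proof.
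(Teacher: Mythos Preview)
Your argument is correct and follows the same decomposition as the paper: write $|\nabla_i F_n|^2$ as a diagonal part (giving $\tr[\cchi\Gamma]$) plus an off-diagonal three-body term, bound the denominator $\|F\phi_{z,\alpha}\|^{-2}$ via \eqref{eq:denominator_lower_bound}, use $F_n^2/(f_{ij}f_{ik})\le1$, and finish with $\|\nabla f_\ell\|_1^2\lesssim(\mathfrak a_N\ell)^2$.

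The one place where you diverge from the paper is in handling the three-body density. You take a supremum over $z,\alpha$ and invoke a uniform pointwise bound $\sup_y\rho^{(3)}_{z,\alpha}(y)\lesssim N^3$ for $|z|^2\le\widetilde cN$, $\alpha\in\widetilde{\mathcal A}$, which---as you note---is not among the estimates already collected and must be proved separately along the lines of Lemma~\ref{lem:rho_z_a_uniform_bound}. The paper avoids this extra work: after bounding the denominator by a constant, the remaining expression is \emph{linear} in the state, so averaging over $z,\alpha$ directly yields (up to the harmless factor $\kappa_0^{-1}$ from extending $\widetilde{\mathcal A}$ to $\mathcal A$) the three-body density $\rho^{(3)}_{\Gamma_0}$ of the uncorrelated state $\Gamma_0$, for which the bound $\|\rho^{(3)}_{\Gamma_0}\|_\infty\lesssim N^3$ is already available from Lemma~\ref{lem:rho_k_Gamma_bound}. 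Your route works, but the paper's is shorter because it recycles an existing lemma rather than proving a new uniform-in-$(z,\alpha)$ estimate.
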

\begin{proof}
	 Let us introduce the notation 
	 \begin{equation}\label{eq:notation_fij}
	 	f_{ij} = f_\ell(x_i-x_j), \qquad u_{ij} = u_\ell(x_i-x_j), \qquad \nabla f_{ij} = \nabla f_\ell(x_i-x_j),
	 \end{equation} where, we recall, $u_\ell = 1- f_\ell^2$. We have $\nabla_i F_n = F_n \sum_{ j\neq i} \nabla f_{ij} / f_{ij}$,  which implies 
	\begin{equation*}
		|\nabla_i F_n|^2 = F_n ^ 2 \Bigg\{ \sum_{\substack{1\le j \le n \\ j\neq i}} \frac{ | \nabla f_{ij} | ^2 }{f_{ij} ^2 }  + \sum_{\substack{1\le j, k \le n \\ j\neq i, k \neq i, j }} \frac{ \nabla f_{ij} \cdot \nabla f_{ik}}{f_{ij} f_{ik} } \Bigg\} .
	\end{equation*} The first term, inserted into the definition of $K_1$, gives $\tr[\cchi\Gamma]$, thus 
	\begin{equation*}
\begin{split}
		\mathcal E^{(1)}_{\mathcal K} = & \int_{\mathbb C } \zeta(z)  \sum_{\alpha\in \widetilde{ \mathcal A}}\frac {  \widetilde \lambda_\alpha } { \| F \phi_{z, \alpha} \|^2 } \sum_{n =1}^\infty \int_{\Lambda^n} F_n ^ 2  \big| \phi_{z, \alpha}^{n} \big |^2 \sum_{\substack{1\le i, j, k \le n \\ i\neq j, k \neq i, j }} \frac{ \nabla f_{ij} \cdot \nabla f_{ik}}{f_{ij} f_{ik} }  \, \mathrm dX^n \, \mathrm d z. 
\end{split}
	\end{equation*} In combination, \eqref{eq:denominator_lower_bound}, the inequality $(1-x)^{-1} \le 1 + x + 2x^2$, valid for $0 \le x \le 1/2$, and $F_n^2 f_{ij}^{-1} f_{ik}^{-1}\le 1$, imply
\begin{equation}
\begin{split}
  	\mathcal E^{(1)}_{\mathcal K} \le & (1+C N\ell^2) \int_{\mathbb C } \zeta(z)  \sum_{\alpha\in \widetilde{ \mathcal A}} \widetilde \lambda_\alpha \sum_{n=1}^\infty \int_{\Lambda^n} \sum_{\substack{1\le i, j, k \le n \\ i\neq j, k \neq i, j }} |\nabla f_{ij}| | \nabla f_{ik} | \big| \phi_{z, \alpha}^{n} \big|^2  \, \mathrm dX^n \, \mathrm d z \\
	\lesssim & \kappa_0^{-1} \int_{\Lambda^3} \rho_{\Gamma_0}^{(3)}(x_1,x_2,x_3) |\nabla f(x_1-x_2) | |\nabla f(x_1 - x_3)|\mathrm dx_1 \mathrm d x_2 \mathrm d x_3 \le \kappa_0^{-1} \| \rho_{\Gamma_0}^{(3)} \|_\infty \| \nabla f_l \|_1^2,
\end{split}
\end{equation} as long as $N\ell^2$ is sufficiently small. The estimate in \eqref{eq:K1_error_bound} follows using \eqref{eq:rho_k_Gamma0_bounds} with $k=3$, \eqref{eq:ul_integral}, and $\kappa_0^{-1}\lesssim 1 + C\exp(-c\beta N) \lesssim 1$, which follows from Lemma~\ref{lem:cutoff_a_remainder}. 
\end{proof}

\begin{lemma}\label{lem:K2_upper_bound}
		We have
	\begin{equation}\label{eq:K2_upper_bound}
		K_2 = \tr[\mathcal K \widetilde \Gamma_0] + \mathcal E^{(2)}_{\mathcal K},
	\end{equation} where the error term satisfies the bound
	\begin{equation}\label{eq:K2_error_bound}
		\mathcal E^{(2)}_{\mathcal K} \lesssim N^{2/3 + 3\delta_{\mathrm{Bog}}} \sqrt { \ell(1 + N^2\ell^3 ) }.
	\end{equation}
\end{lemma}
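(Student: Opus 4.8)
Here is how I would go about it.

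The plan is to expand the quotient $\langle F^2\phi_{z,\alpha},\mathcal K\phi_{z,\alpha}\rangle/\|F\phi_{z,\alpha}\|^2$ around the uncorrelated state and to extract a ``covariance'' that is small precisely because the Jastrow correlations live on the length scale $\ell$, which is much shorter than the thermal wavelength $\beta^{1/2}\sim N^{-1/3}$. Since $\widetilde\Gamma_0=\int_{\mathbb C}\zeta(z)\sum_{\alpha\in\widetilde{\mathcal A}}\widetilde\lambda_\alpha\,|\phi_{z,\alpha}\rangle\langle\phi_{z,\alpha}|\,\mathrm dz$, we have $\tr[\mathcal K\widetilde\Gamma_0]=\int\zeta\sum_\alpha\widetilde\lambda_\alpha\langle\mathcal K\rangle_{\phi_{z,\alpha}}\,\mathrm dz$. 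Writing $W:=\mathds 1-F^2$ (so that $0\le W\le\mathds 1$, $\langle F^2\phi,\mathcal K\phi\rangle=\langle\mathcal K\rangle_\phi-\langle W\phi,\mathcal K\phi\rangle$ and $\|F\phi\|^2=1-\langle W\rangle_\phi$ with $\langle W\rangle_\phi:=\langle\phi,W\phi\rangle$), a direct computation gives the exact identity
\begin{equation*}
  \frac{\langle F^2\phi_{z,\alpha},\mathcal K\phi_{z,\alpha}\rangle}{\|F\phi_{z,\alpha}\|^2}-\langle\mathcal K\rangle_{\phi_{z,\alpha}}
  =-\,\frac{\langle W\phi_{z,\alpha},\mathcal K\phi_{z,\alpha}\rangle-\langle W\rangle_{\phi_{z,\alpha}}\langle\mathcal K\rangle_{\phi_{z,\alpha}}}{1-\langle W\rangle_{\phi_{z,\alpha}}}.
\end{equation*}
By \eqref{eq:denominator_lower_bound} and $f_\ell\le 1$ we have $0\le\langle W\rangle_{\phi_{z,\alpha}}\le CN\ell^2$, so the denominator contributes only a harmless factor $1+\mathcal O(N\ell^2)$. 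Hence, up to such a factor, $\mathcal E^{(2)}_{\mathcal K}$ equals $-\int_{\mathbb C}\zeta(z)\sum_{\alpha}\widetilde\lambda_\alpha\big(\mathfrak{Re}\langle W\phi_{z,\alpha},\mathcal K\phi_{z,\alpha}\rangle-\langle W\rangle_{\phi_{z,\alpha}}\langle\mathcal K\rangle_{\phi_{z,\alpha}}\big)\,\mathrm dz$, and it remains to bound this covariance integral.

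I would then replace $W$ by its leading part $w:=\sum_{1\le i<j\le n}u_\ell(x_i-x_j)$ on each $n$-particle sector; since $0\le w-W\le\tfrac12\big(\sum_{i<j}u_\ell(x_i-x_j)\big)^2$, the resulting error is the covariance of a ``two-pair'' object with $\mathcal K$ and is of higher order in $N\ell^2$, controlled via Lemmas~\ref{lem:rho_z_a_uniform_bound} and~\ref{lem:rho_k_Gamma_bound}. For the covariance of $w$ with $\mathcal K=\sum_k(-\Delta_k)$ one splits the double sum according to whether the particle $k$ on which $-\Delta_k$ acts belongs to the pair $\{i,j\}$: for $k\notin\{i,j\}$ the operator $u_\ell(x_i-x_j)$ commutes with $-\Delta_k$ and (by periodicity) no boundary terms arise, whereas for $k\in\{i,j\}$ one integration by parts produces an extra contribution which, using the zero-energy scattering equation $\Delta f_\ell=\tfrac12V_Nf_\ell$ together with the boundary flux of $f_\ell^2$ at $|x|=\ell$, collapses to $-\tfrac12\int(\Delta u_\ell)(x_i-x_j)\,|\phi_{z,\alpha}^{(n)}|^2$. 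The net result is the identity
\begin{equation*}
  \mathfrak{Re}\langle w\,\phi_{z,\alpha},\mathcal K\phi_{z,\alpha}\rangle
  =\sum_{i<j}\int u_\ell(x_i-x_j)\Big(\textstyle\sum_k|\nabla_k\phi_{z,\alpha}^{(n)}|^2\Big)
  -\sum_{i<j}\int(\Delta u_\ell)(x_i-x_j)\,|\phi_{z,\alpha}^{(n)}|^2
  \;=:\;\mathsf T_1-\mathsf T_2.
\end{equation*}

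The term $\mathsf T_2$ is linear in the state, so after integrating in $z$ and summing in $\alpha$ it equals $\tfrac12\int(\Delta u_\ell)(x-y)\,\rho^{(2)}_{\widetilde\Gamma_0}(x,y)\,\mathrm dx\,\mathrm dy$, which after one more integration by parts is $\tfrac12\int u_\ell(r)\,\Delta\bar\rho(r)\,\mathrm dr$, where $\bar\rho(r)$ is the (radially symmetric, by translation and rotation invariance) two-body density of $\widetilde\Gamma_0$ at separation $r$; using $\|u_\ell\|_1\lesssim\mathfrak a_N\ell^2$ (see \eqref{eq:ul_integral}) together with $\|\Delta\bar\rho\|_\infty\lesssim N^2\big(\sum_p|p|^2\gamma_p^{\mathrm{diag}}\big)+\dots\lesssim N^2\beta^{-5/2}+\dots$ (from Lemmas~\ref{lem:1pdm_pairing_bogoliubov} and~\ref{lem:rho_k_Gamma_bound} and the fast decay of the cloud occupation numbers), this produces the $N^{5/3+3\delta_{\mathrm{Bog}}}\ell^2$ part of the claimed bound. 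For $\mathsf T_1$ I would compute its disconnected part explicitly by Wick's theorem for $\phi_{z,\alpha}=|z\rangle\otimes T_z\Psi_\alpha$: the leading (disconnected) contribution cancels $\langle w\rangle_{\phi_{z,\alpha}}\langle\mathcal K\rangle_{\phi_{z,\alpha}}$, so the covariance reduces to the connected remainder, which is bounded using the pointwise estimates \eqref{eq:rho2za_bound}--\eqref{eq:rho4_za_integral_bound} for $\rho^{(2)}_{z,\alpha}$, $\nabla_2\rho^{(2)}_{z,\alpha}$ and $\int\rho^{(4)}_{z,\alpha}$, together with the fact that $\widehat{u_\ell}(p)$ differs from $\widehat{u_\ell}(0)=\|u_\ell\|_1$ only by $\mathcal O(|p|\ell\,\|u_\ell\|_1)$ for the momenta $|p|\lesssim N^{\delta_{\mathrm{Bog}}}$ and $|p|\lesssim\beta^{-1/2}$ that occur in $\phi_{z,\alpha}$; a Cauchy--Schwarz step here, splitting $u_\ell=u_\ell^{1/2}\cdot u_\ell^{1/2}$, yields the remaining $N^{2/3+3\delta_{\mathrm{Bog}}}\ell^{1/2}$ contribution.

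The main obstacle is this last estimate: bounding $\langle W\phi,\mathcal K\phi\rangle$ (or $\mathsf T_1$) crudely would pick up the full kinetic energy $\tr[\mathcal K\,\widetilde G^{\mathrm{diag}}]\sim N^{5/3}$ of the thermal cloud multiplied by the factor $N\ell^2$, overshooting the target by a whole power of $N$. Only isolating the covariance/connected structure — i.e.\ exploiting that the correlation length $\ell$ is much smaller than the thermal wavelength $\beta^{1/2}$, so that $\widehat{u_\ell}$ is essentially constant on the relevant momentum scale (and, equivalently, that $\int\Delta u_\ell=0$ while $|\phi_{z,\alpha}^{(n)}|^2$ varies slowly on the scale $\ell$) — together with the sharp pointwise density bounds of Section~\ref{sec:trial_state}, makes the estimate close. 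Carefully tracking the numerous factors $N^{\delta_{\mathrm{Bog}}}$ coming from $u_p,v_p$ and the Bogoliubov-transformed correlation functions is the main bookkeeping burden.
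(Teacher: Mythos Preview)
Your covariance identity is correct and your diagnosis of the obstacle is exactly right: a crude bound on $\langle W\phi_{z,\alpha},\mathcal K\phi_{z,\alpha}\rangle$ would pick up $\langle\mathcal K\rangle_{\phi_{z,\alpha}}\sim N^{5/3}$ and overshoot. But your proposed mechanism for the cancellation --- ``compute its disconnected part explicitly by Wick's theorem for $\phi_{z,\alpha}=|z\rangle\otimes T_z\Psi_\alpha$'' --- does not work. The state $\Psi_\alpha$ is a Fock state (a symmetrized product of plane waves), not a quasi-free state, so Wick's theorem is not available for $\phi_{z,\alpha}$; already $\langle a_p^{*2}a_p^2\rangle_{\Psi_\alpha}=n_p(n_p-1)$ differs from the Wick prediction $2n_p^2$. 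Moreover $\mathsf T_1=\sum_{i<j}\int u_{ij}\sum_k|\nabla_k\phi|^2$ is not the expectation of an observable in $\phi_{z,\alpha}$, and the counterterm $\langle w\rangle_{\phi_{z,\alpha}}\langle\mathcal K\rangle_{\phi_{z,\alpha}}$ is a \emph{product} of expectations, so you cannot pass the sum over $\alpha$ to recover a quasi-free state and then invoke Wick. Your treatment of $\mathsf T_2$ runs into the same structural problem (you sum it over $\alpha$ separately while the covariance subtraction lives at fixed $\alpha$), and additionally $\Delta u_\ell$ carries a surface measure at $|x|=\ell$ that has to be handled.

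The paper's proof bypasses all of this with a single observation you are missing: split $\mathcal K=\mathcal K^<+\mathcal K^>$ along the Bogoliubov cutoff $P_{\mathrm B}$. Since $W_z$ and $T_z$ act trivially on momenta $p\notin P_{\mathrm B}$, each $\phi_{z,\alpha}$ is an \emph{eigenvector} of $\mathcal K^>$ with eigenvalue $E_\alpha^>$. Hence $\langle F^2\phi_{z,\alpha},\mathcal K^>\phi_{z,\alpha}\rangle/\|F\phi_{z,\alpha}\|^2=E_\alpha^>$ exactly, and the high-momentum part yields $\tr[\mathcal K^>\widetilde\Gamma_0]$ with zero error. (In your language, this is precisely why your covariance with $\mathcal K^>$ vanishes identically: $\mathrm{Cov}_{\phi_{z,\alpha}}(w,\mathcal K^>)=E_\alpha^>\langle w\rangle-\langle w\rangle E_\alpha^>=0$.) Only the low-momentum piece $\mathcal K^<$ remains, and because $\mathcal K^<\le N^{2\delta_{\mathrm{Bog}}}\mathcal N^<$ is \emph{small}, a crude Cauchy--Schwarz $|\langle(F^2-1)\phi,\mathcal K^<\phi\rangle|\le\|(F^2-1)\phi\|\,\|\mathcal K^<\phi\|$ together with the bound \eqref{eq:E_K2_bound_term_1}--\eqref{eq:bound_rho_2_to_4} and $\tr[(\mathcal K^<)^2\Gamma_0]\lesssim N^{4\delta_{\mathrm{Bog}}}(1+\beta^{-1}N^{\delta_{\mathrm{Bog}}})^2$ already delivers \eqref{eq:K2_error_bound}. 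No connected/disconnected analysis, no $\mathsf T_1$--$\mathsf T_2$ split, and no Wick theorem are needed.
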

\begin{proof}
	Let us decompose the kinetic energy in momentum space as $\mathcal K = \mathcal K^< + \mathcal K^> $ with 
	\begin{equation*}
		\mathcal K^< = \sum_{p\in P_{\mathrm B}} p^2 a_p^* a_p, \qquad \mathcal K^> = \sum_{p\in \Lambda_+^* \setminus P_{\mathrm B}} p^2 a_p^* a_p.
	\end{equation*} As a symmetrized product of plane waves, $\Psi_\alpha$ is an eigenfunction of the localized kinetic energies $\mathcal K^<$ and $\mathcal K^>$, that is $\mathcal K^{ \gtrless } \Psi_{\alpha} = E_\alpha^{\gtrless} \Psi_{\alpha}$. 
	Both, $T_z$ and $W_z$, act trivially on creation/annihilation operators indexed by high momenta $p\in \Lambda_+^*\setminus P_{\mathrm B}$. Hence,
	\begin{equation*}
\begin{split}
			\int_{\mathbb C } \zeta(z)  \sum_{\alpha\in \widetilde{ \mathcal A}}\frac {  \widetilde \lambda_\alpha } { \| F \phi_{z, \alpha} \|^2 } \langle F^2 \phi_{z, \alpha} , \mathcal K^> \phi_{z, \alpha} \rangle \,\mathrm d z = \sum_{\alpha\in \widetilde{ \mathcal A}}  \widetilde \lambda_\alpha E_\alpha^{>} = \tr[\mathcal K^> \widetilde \Gamma_0]. 
\end{split}
	\end{equation*} To extract the largest contributions from the low-momentum part of the kinetic energy, we write 
	\begin{equation}\label{eq:kinetic_low_momenta_expansion}
		\begin{split}
			\langle F^2  \phi_{z, \alpha},  \mathcal K^< \phi_{z, \alpha}\rangle = & \langle \phi_{z, \alpha}, \mathcal K^< \phi_{z, \alpha} \rangle + \langle (F^2 - 1)\phi_{z, \alpha}, \mathcal K^< \phi_{z, \alpha} \rangle.
		\end{split}
	\end{equation} Combining the first term on the right-hand side of \eqref{eq:kinetic_low_momenta_expansion} with the high momentum part of the kinetic energy gives $\tr[\mathcal K \widetilde \Gamma_0]$. It remains to estimate the remainder.
	
	Using Cauchy--Schwarz, \eqref{eq:cutoff_a_remainder_bound}, \eqref{eq:denominator_lower_bound} and the inequality
	\begin{equation*}
		\left| F_n (x_1,...,x_n)^2 -1 \right| \le \sum_{i < j}^n u_\ell(x_i-x_j),
	\end{equation*} which follows from \eqref{eq:product_lower_bound} and $F_n \le 1$, we can estimate the error term as 
	\begin{equation}\label{eq:kinetic_error_2_intermediate_estimate}
		\begin{split}
			\mathcal E_{\mathcal K}^{(2)} = & \mathfrak{Re} \int_{ \mathbb C } \zeta(z)  \sum_{\alpha\in \widetilde{ \mathcal A}}\frac {  \widetilde \lambda_\alpha } { \| F \phi_{z, \alpha} \|^2 } \langle (F^2 - 1)\phi_{z, \alpha}, \mathcal K^< \phi_{z, \alpha} \rangle  \\
			\lesssim & \Big(  \int_{ \mathbb C } \zeta(z)  \sum_{\alpha\in \widetilde{ \mathcal A}} \widetilde \lambda_\alpha \| (F^2 - 1)\phi_{z, \alpha}\|^2 \,\mathrm dz \Big )^{1/2} \Big(  \int_{ \mathbb C } \zeta(z)  \sum_{\alpha\in \widetilde{ \mathcal A}} \widetilde \lambda_\alpha \| \mathcal K^<\phi_{z, \alpha}\|^2 \,\mathrm dz \Big)^{1/2}  \\
			\lesssim &  \Big( \int_{\mathbb C} \zeta(z)  \sum_{\alpha\in \widetilde{ \mathcal A}} \widetilde \lambda_\alpha \sum_{ n } \int_{\Lambda^n} \Big| \sum_{1\le i < j \le n} u_{ij}\Big| ^2 | \phi_{z, \alpha}^{n}|^2 \, \mathrm d X^n \,\mathrm d z  \Big)^{1/2}  \big( \tr_{\mathfrak F_+}[\mathcal (\mathcal K^<)^2 \Gamma_0] \big) ^{1/2} 
		\end{split} 
	\end{equation} with $u_{ij}$ in \eqref{eq:notation_fij}. We have
	\begin{equation*}
		\tr_{\mathfrak F_+}[\mathcal (\mathcal K^<)^2 \Gamma_0] \le N^{4\delta_{\mathrm {Bog}}} \int_{ \mathbb C } \zeta(z) \tr_{\mathfrak F_+}[\mathcal (\mathcal N^<)^2 G(z)] \de z.
	\end{equation*} An application of Wick's theorem and Lemma \ref{lem:1pdm_pairing_bogoliubov} shows
	\begin{equation*}
		\begin{split}
			\tr_{\mathfrak F_+}[\mathcal (\mathcal N^<)^2 G(z)] = & \sum_{ p,q\in P_{\mathrm B} }\tr_{\mathfrak F_+}[a_p^*a_pa_q^*a_q G(z)] \\
			= & \sum_{ p,q\in P_{\mathrm B} } \gamma_p\gamma_q  + \sum_{ p \in P_{\mathrm B} } \Big( \gamma_p(\gamma_p+1) + |\alpha_p|^2 \Big) \lesssim (1+\beta^{-1}N^{\delta_{\mathrm {Bog}}})^2,
		\end{split}
	\end{equation*} and hence
	\begin{equation}
		\label{eq:kinetic_low_bound}
		\big( \tr_{\mathfrak F_+}[\mathcal (\mathcal K^<)^2 \Gamma_0] \big) ^{1/2} \lesssim N^{2\delta_{\mathrm{Bog}}} (1 + \beta^{-1}N^{\delta_{\mathrm{Bog}}}).
	\end{equation} 
	
	We use $\kappa_0^{-1}\lesssim 1$, which follows from Lemma \ref{lem:cutoff_a_remainder}, and expand the square in the integral to see that 
	\begin{equation}\label{eq:E_K2_bound_term_1}
\begin{split}
		 \int_{\mathbb C} & \zeta(z)  \sum_{\alpha\in \widetilde{ \mathcal A}} \widetilde \lambda_\alpha \sum_{ n } \int_{\Lambda^n} \Big| \sum_{1\le i < j \le n} u_{ij}\Big| ^2 \big | \phi_{z, \alpha}^{n} \big|^2 \, \mathrm dX^n \,\mathrm dz\\
		 \lesssim & \frac 1 4 \int_{\Lambda^4} \rho_{\Gamma_0}^{(4)}(x_1,x_2,x_3,x_4) u_\ell(x_1-x_2) u_\ell(x_3 - x_4)\mathrm dx_1 \mathrm d x_2 \mathrm d x_3 \mathrm d x_4 \\
		& \qquad + \int_{\Lambda^3} \rho_{\Gamma_0}^{(3)}(x_1,x_2,x_3) u_\ell(x_1-x_2) u_\ell(x_1-x_3)\mathrm dx_1 \mathrm d x_2 \mathrm d x_3 \\
		& \qquad  \qquad + \frac 1 2 \int_{\Lambda^2} \rho_{\Gamma_0}^{(2)}(x_1,x_2) u_\ell(x_1-x_2)^2 \mathrm dx_1 \mathrm d x_2 .
\end{split}
	\end{equation} 
	Moreover, applications of \eqref{eq:ul_integral} and Lemma \ref{lem:rho_k_Gamma_bound} show
	\begin{equation}\label{eq:bound_rho_2_to_4}
		\begin{split}
			\int_{\Lambda^4} \rho_{\Gamma_0}^{(4)}(x_1,x_2,x_3,x_4) u_\ell(x_1-x_2) u_\ell(x_3 - x_4)\mathrm dx_1 \mathrm d x_2 \mathrm d x_3 \mathrm d x_4 \lesssim &  N^4 \mathfrak a_N^2\ell^4 \lesssim N^2\ell^4, \\
			\int_{\Lambda^3} \rho_{\Gamma_0}^{(3)}(x_1,x_2,x_3) u_\ell(x_1-x_2) u_\ell(x_1-x_3)\mathrm dx_1 \mathrm d x_2 \mathrm d x_3 \lesssim & N^3 \mathfrak a_N^2 \ell^4 \lesssim N\ell^4, \\
			\int_{\Lambda^2} \rho_{\Gamma_0}^{(2)}(x_1,x_2) u_\ell(x_1-x_2)^2 \mathrm dx_1 \mathrm d x_2 \lesssim & N^2 \mathfrak  a_N^2\ell \lesssim \ell.
		\end{split}
	\end{equation}
	With \eqref{eq:E_K2_bound_term_1} and \eqref{eq:bound_rho_2_to_4}, we see that the first factor on the right-hand side of \eqref{eq:kinetic_error_2_intermediate_estimate} is bounded by $\sqrt{ \ell(1 + N^2\ell^3 ) }$. Combined with \eqref{eq:kinetic_low_bound}, this implies \eqref{eq:K2_error_bound}. 
\end{proof}

	Putting together Lemma \ref{lem:K1_upper_bound} and Lemma \ref{lem:K2_upper_bound}, we find 
	\begin{equation*}
			\tr[\mathcal K\Gamma] = \tr[\cchi\Gamma] + \tr[\mathcal K\widetilde \Gamma_0] + \mathcal E_{\mathcal K},
	\end{equation*} with 
\begin{equation*}
	  \mathcal E_{\mathcal K} = \mathcal E_{\mathcal K}^{(1)} + \mathcal E_{\mathcal K}^{(2)}. 
\end{equation*} The claim of Proposition \ref{prop:kinetic_energy_upper_bound} follows from \eqref{eq:K1_error_bound} and  \eqref{eq:K2_error_bound}. \qed

\subsection{Analysis of the renormalized interaction}
\label{sec:analysisoftherenormalizedinteraction}

As shown in Proposition \ref{prop:kinetic_energy_upper_bound}, the expectation of the kinetic energy in our trial states yields two contributions, up to negligible errors. The first contribution is the kinetic energy of the undressed trial state $\widetilde{\Gamma}_0$, which will be combined with the entropy to obtain the free energy of the ideal gas. The second is given by the expectation of the two-body operator $\cchi$ in the state $\Gamma$. This term needs to be combined with the interaction potential $\mathcal V_N$ to replace the integral of $V_N$ by $8 \pi \mathfrak a_N L^{-3}$ in the relevant contributions to the free energy. The precise statement is captured by the following proposition.  
\begin{proposition}\label{prop:renormalized_potential} We have
	\begin{equation}\label{eq:renormalized_potential_statement}
\begin{split}
		 \tr[(\cchi + \mathcal V_N) \Gamma] = & \int_{\mathbb C } \zeta(z) \tr [ \mathcal Q_{\mathrm B} \widetilde G(z) ] \, \mathrm d z + 4\pi \mathfrak a _N \int_{\mathbb C } \zeta(z) |z|^4 \, \mathrm d z \\
	 & +   8\pi \mathfrak a_N  \Bigg[  \Big( \sum_{p \in \Lambda_+^* } \gamma_p \Big)^2 + \widetilde{N}_0  \sum_{ p \in \Lambda_+^* \setminus  P_{\mathrm B} } \gamma_p + \widetilde{N}_0  \sum_{p \in \Lambda_+^* } \gamma_p \Bigg] + \mathcal E_{\mathcal V}, 
\end{split}
	\end{equation} where $\widetilde N_0$ is defined in \eqref{eq:n0_tilde_def} and 
\begin{equation}\label{eq:QB_def}
	\mathcal Q_{\mathrm B } \coloneqq  4\pi \mathfrak a_N N_0 \sum_{ p \in P_{\mathrm B}  } \left[ 2a_p^*a_p + \left( \frac{z^2}{|z|^2} a_p^* a_{-p}^* + \frac{\overline{z}^2}{|z|^2} a_p a_{-p}\right) \right].
\end{equation} The error satisfies the bound
	\begin{equation}
		\mathcal E_{\mathcal V} \lesssim N^{1/3 + \delta_{ \mathrm{ Bog } }} + N^{2/3 + 2\delta_{ \mathrm{ Bog } }} \ell + N^{5/3}\ell^2 + N^3\ell^4 + \ell^{-1}.
	\end{equation}
\end{proposition}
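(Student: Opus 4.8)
The plan is to expand $\tr[(\cchi+\mathcal V_N)\Gamma]$ in particle number sectors, using the definition \eqref{eq:trial_state} of $\Gamma$, and to exhibit the announced cancellation between the numerator (coming from $\mathcal V_N$ and the $\cchi$-term) and the normalization factor $\|F\phi_{z,\alpha}\|^{-2}$. Concretely, on the $n$-particle sector the combined operator $\cchi+\mathcal V_N$ acts on the Jastrow-dressed wave function $F_n\phi_{z,\alpha}^{(n)}$ by multiplication with $\sum_{i<j}\big(|\nabla f_{ij}|^2/f_{ij}^2 + v_N(x_i-x_j)\big)$. Using the scattering equation $\Delta f = \tfrac12 V_N f$ from Appendix~\ref{app:scattering} (here with $V_N=v_N$ up to normalization conventions) one rewrites $|\nabla f_\ell|^2 + \tfrac12 v_N f_\ell^2 = \tfrac12\Delta(f_\ell^2)$ away from $|x-y|=\ell$, plus a boundary term supported near $|x-y|=\ell$; the relevant weight is essentially the function $g_\ell \coloneqq |\nabla f_\ell|^2/f_\ell^2 + v_N$, whose integral equals $8\pi\mathfrak a_N + \mathcal O(\text{lower order})$ by the properties of the scattering solution collected in the appendix. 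This is the source of the $8\pi\mathfrak a_N$ coefficients.

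The key step is then to replace, in every term, the $n$-particle density $F_n^2|\phi_{z,\alpha}^{(n)}|^2$ (normalized by $\|F\phi_{z,\alpha}\|^2$) by the \emph{undressed} density $|\phi_{z,\alpha}^{(n)}|^2$, i.e.\ to drop the Jastrow factor. The point — this is the cancellation advertised in the introduction and already used in Lemma~\ref{lem:perturbation_number_particles} — is that $g_\ell$ is supported on the tiny set $|x_i-x_j|\le\ell$ with $\ell\lesssim N^{-7/12}$, so that $F_n^2\le 1$ together with the lower bound \eqref{eq:denominator_lower_bound}, $\|F\phi_{z,\alpha}\|^2\ge 1-CN\ell^2$, and the inequality $|F_n^2-1|\le\sum_{i<j}u_{ij}$ (from \eqref{eq:product_lower_bound}) show that the error incurred by this replacement is controlled by integrals of the form $\int \rho^{(k)}_{\Gamma_0}\, g_\ell\, u_\ell\cdots$ for $k=2,3,4$, which by the pointwise bounds in Lemma~\ref{lem:rho_k_Gamma_bound} and the estimates \eqref{eq:ul_integral} on $\int u_\ell$ are of size $N^2\ell^2\cdot(N^2\ell^3)$ etc.\ — exactly of the order claimed in the error bound. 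Indeed, after this replacement the expression becomes $\tfrac12\int_{\Lambda^2} g_\ell(x-y)\,\rho^{(2)}_{\Gamma_0}(x,y)\,\mathrm dx\,\mathrm dy$ plus the analogous contractions of $\rho^{(3)}_{\Gamma_0}$, $\rho^{(4)}_{\Gamma_0}$ against $g_\ell$, and the leading one reduces, using $\int g_\ell = 8\pi\mathfrak a_N + \cdots$ and the near-constancy of $\rho^{(2)}_{\Gamma_0}$ on the scale $\ell$ (controlled by the gradient bound $\|\nabla_i\rho^{(k)}_{\Gamma_0}\|_\infty\lesssim\beta^{-1/2}N^k$), to $4\pi\mathfrak a_N\int_{\Lambda^2}\rho^{(2)}_{\Gamma_0}(x,y)\,\mathrm dx\,\mathrm dy$.

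The remaining bookkeeping is to evaluate $\int_{\Lambda^2}\rho^{(2)}_{\Gamma_0}$ explicitly. Writing $\rho^{(2)}_{\Gamma_0}$ via the coherent-state/Bogoliubov structure of $\Gamma_0$ (Lemma~\ref{lem:1pdm_pairing_bogoliubov} together with the Weyl action \eqref{eq:action_weyl_momentum}) one gets $\int\rho^{(2)}_{\Gamma_0} = \int\zeta(z)\big(|z|^4 + 4|z|^2\sum_p\gamma_p + 2(\sum_p\gamma_p)^2 + 2\sum_p|\alpha_p|^2 + \cdots\big)\mathrm dz$; multiplying by $4\pi\mathfrak a_N$ and using $\int|z|^2\zeta = \widetilde N_0$ produces precisely $4\pi\mathfrak a_N\int\zeta|z|^4 + 8\pi\mathfrak a_N\big[(\sum_p\gamma_p)^2 + \widetilde N_0\sum_p\gamma_p\big]$. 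The term $\int\zeta(z)\tr[\mathcal Q_{\mathrm B}\widetilde G(z)]\,\mathrm dz$ together with the extra piece $8\pi\mathfrak a_N\widetilde N_0\sum_{p\notin P_{\mathrm B}}\gamma_p$ arises from the $\alpha_p$-pairing contribution restricted to $P_{\mathrm B}$ (which reconstructs $\mathcal Q_{\mathrm B}$, using $4\pi\mathfrak a_N N_0\approx 4\pi\mathfrak a_N\widetilde N_0$ up to errors absorbed by Corollary~\ref{lem:n0_n0tilde}) versus the part with $p\notin P_{\mathrm B}$, where $\alpha_p=0$ and only the density-density piece survives. Finally, all the small corrections — the boundary term in the scattering identity ($\lesssim\ell^{-1}$ after integrating, by the $1/r$ tail of $f$), the gradient/constancy errors ($N^{5/3}\ell^2$ and lower), the removal of the cutoffs via Lemma~\ref{lem:cutoff_a_remainder} ($\exp(-cN^{\delta_{\mathrm{Bog}}})$, negligible), and the $N_0\leftrightarrow\widetilde N_0$ swap ($N^{1/3+\delta_{\mathrm{Bog}}}$-type terms from \eqref{eq:N0_N0tilde_diff} times $\mathfrak a_N\sum\gamma_p$) — are collected into $\mathcal E_{\mathcal V}$ and shown to satisfy the stated bound. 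The main obstacle I anticipate is the careful tracking of the $\ell^{-1}$ boundary contribution from the scattering equation and making sure the various $N^a\ell^b$ error terms genuinely close under the constraint $2\mathfrak a_N\le\ell\le cN^{-7/12}$; the algebraic reconstruction of $\mathcal Q_{\mathrm B}$ and the density-density terms is routine once the Jastrow factor has been removed.
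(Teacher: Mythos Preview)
Your proposal has the right overall shape, but there are two genuine gaps that would make the argument fail as written.

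First, the weight $g_\ell = |\nabla f_\ell|^2/f_\ell^2 + v_N$ is \emph{not integrable} for hard-core potentials: on the hard core $v_N = +\infty$, and near its boundary $|\nabla f_\ell|^2/f_\ell^2$ is singular as well. The whole point of the paper is to cover this case. The correct move (which the paper carries out explicitly just below \eqref{eq:Chi+V_expectation_computation}) is to extract one factor $f_{ij}^2$ from $F_n^2$ and combine it with the multiplier, so that the effective weight becomes $2\xi_N = 2|\nabla f_\ell|^2 + v_N f_\ell^2$, which \emph{is} integrable with $\int \xi_N = 4\pi\mathfrak a_N/(1-\mathfrak a_N/\ell)$ by \eqref{eq:xi_N_integral}. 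You gesture at $|\nabla f_\ell|^2 + \tfrac12 v_N f_\ell^2$ in one sentence but then revert to $g_\ell$; the proof must be organized around $\xi_N$ throughout, and the remaining Jastrow factor $\prod_{(k,h)\neq(i,j)}f_{kh}^2$ is what gets expanded via \eqref{eq:upperBoundJastrow}.

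Second, and more importantly, you misidentify the cancellation mechanism. You suggest that replacing the dressed density $F_n^2|\phi|^2/\|F\phi\|^2$ by $|\phi|^2$ produces errors of the form $\int\rho^{(k)}\xi_N u_\ell$ that are directly small. They are not: the individual contributions from expanding the denominator $\|F\phi\|^{-2}$ (call it $V_1$) and from expanding the remaining numerator Jastrow to first order in $u_\ell$ (call it $V_2$) are each of size $N^2\ell^2$, which with $\ell\sim N^{-11/18}$ is $N^{7/9}\gg N^{2/3}$. The paper's Lemma~\ref{lem:V1+V2_cancellation} shows that the \emph{leading parts} of $V_1$ and $V_2$ cancel, and this requires computing both to leading order using the pointwise asymptotics of $\rho^{(2)}_{z,\alpha}$ and $\rho^{(4)}_{\Gamma_0}$ (not just their uniform bounds), together with Lemma~\ref{cor:expectation_powers_N} to replace $\tr[\mathcal N_+^k G^{\mathrm{diag}}]$ by $(\tr[\mathcal N_+ G^{\mathrm{diag}}])^k$. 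Only after this cancellation does the residual drop to $N^{5/3}\ell^2 + N^3\ell^4$. Your ``small support'' heuristic does not capture this; the cancellation is between two specific terms of matching order, not a smallness argument.

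A minor point: your accounting of how $\mathcal Q_{\mathrm B}$ and the $\widetilde N_0\sum_{p\notin P_{\mathrm B}}\gamma_p$ term arise is off. In the paper's Lemma~\ref{lem:Xi_expectation_undressed}, the Weyl expansion of $\Xi_N$ produces separate pieces $\mathcal Q_<$, $\mathcal Q_>$, and $2\widehat\xi_N(0)|z|^2\langle\mathcal N_+\rangle$; the operator $\mathcal Q_{\mathrm B}$ is the full low-momentum quadratic (both $a_p^*a_p$ and $a_p^*a_{-p}^*$ parts), not just the pairing contribution, and the two $\widetilde N_0$ terms in the statement come from $\mathcal Q_>$ and the $\langle\mathcal N_+\rangle$ piece respectively, not from splitting $\int\rho^{(2)}_{\Gamma_0}$.
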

To prove the above proposition, we start by writing the expectation on the left-hand side of \eqref{eq:renormalized_potential_statement} in a more convenient way:
\begin{equation*}
\begin{split}
		\tr[(\cchi + \mathcal V_N) \Gamma]  = &  \int_{\mathbb C } \zeta(z)  \sum_{\alpha\in \widetilde{ \mathcal A}}\frac {  \widetilde \lambda_\alpha } { \| F\phi_{z, \alpha} \|^2 } \\
		& \times \sum_{n = 2}^\infty \sum_{1\le i < j \le n} \int_{\Lambda^n} \Big( 2 \frac{ | \nabla f_\ell (x_i - x_j) | ^2 }{ f_ \ell (x_i - x_j ) ^2 }  + v_n(x_i-x_j) \Big) F_n^2 \left| \phi_{z, \alpha}^{n}\right |^2 \, \mathrm dX^n  \, \mathrm d z\\
		\leq & \int_{ \mathbb C } \zeta(z)  \sum_{\alpha\in \widetilde{ \mathcal A}}\frac {  \widetilde \lambda_\alpha } { \| F\phi_{z, \alpha} \|^2 } \\
		& \times \sum_{n = 2}^\infty \sum_{1\le i < j \le n} \int_{\Lambda^n} 2 \xi_N(x_i-x_j) \prod_{ \substack{ 1\le k < h \le n \\ (k,h)\neq(i,j) } } f_\ell^2(x_k-x_h) \left| \phi_{z, \alpha}^{n}\right |^2 \, \mathrm d X^n \, \mathrm d z,
\end{split}
\end{equation*} where we defined the effective potential 
\begin{equation*}
	\xi_N(x) = |\nabla f_\ell(x)|^2 + \frac 1 2 v_N(x) f_\ell^2(x).
\end{equation*}  Its $L^1$-norm satisfies
\begin{equation}
\label{eq:xi_N_integral}
\int_{\Lambda}\xi_N(x)\mathrm dx = \mathcal E_\ell (f_\ell) = \frac{4\pi \mathfrak a_N}{1 - \frac{ \mathfrak a_N}{\ell}},
\end{equation} see \eqref{eq:energy_f_ell}. For $\ell \ge 2 \mathfrak a_N$, the denominator is bounded from below by $1/2$, and thus $ \| \xi_N \|_1 \lesssim \mathfrak a_N $. 

Using the pointwise bound 
\begin{equation}\label{eq:upperBoundJastrow}
	\prod_{ \substack{ k<h \\ (k, h) \neq (i, j) } } f_\ell^2(x_k-x_h) \le 1 - \sum_{ \substack{ k<h \\ (k, h) \neq (i, j) } } u_\ell(x_k-x_h) + \frac 1 2 \sum_{\substack { k<h, r<s \\ (i,j)\neq (k,h) \neq (r,s) \\ (i,j) \neq (r,s)}  }u_\ell(x_k-x_h) u_\ell(x_r-x_s),
\end{equation} we find 
\begin{equation} \label{eq:Chi+V_expectation_computation}
	\begin{split}
		\tr[(\cchi & + \mathcal V_N) \Gamma] \le \int_{ \mathbb C } \zeta(z)  \sum_{\alpha\in \widetilde{ \mathcal A}}\frac {  \widetilde \lambda_\alpha } { \| F\phi_{z, \alpha} \|^2 } \int_{\Lambda^2} \rho_{z, \alpha}^{(2)} \xi_N(x_1-x_2) \, \mathrm d x_1 \, \mathrm d x_2 \, \mathrm d z \\
		& - \frac 1 2  \int_{ \mathbb C } \zeta(z)  \sum_{\alpha\in \widetilde{ \mathcal A}}\frac {  \widetilde \lambda_\alpha } { \| F\phi_{z, \alpha} \|^2 }\int_{\Lambda^4} \rho^{(4)}_{z,\alpha} \xi_N(x_1-x_2) u_\ell(x_3-x_4) \,\mathrm dx_1 \,\mathrm dx_2 \,\mathrm dx_3 \,\mathrm dx_4  \, \mathrm d z\\
		& + \frac 1 8 \int_{ \mathbb C } \zeta(z)  \sum_{\alpha\in \widetilde{ \mathcal A}}\frac {  \widetilde \lambda_\alpha } { \| F\phi_{z, \alpha} \|^2 } \int_{\Lambda^6} \rho^{(6)}_{z,\alpha} \xi_N(x_1-x_2) u_\ell(x_3-x_4) u_\ell(x_5-x_6) \, \mathrm dx_1 ... \, \mathrm dx_6  \, \mathrm d z\\
		= & \tr[ \Xi_N \widetilde \Gamma_0 ] + V_1 + V_2 + V_3.
	\end{split}
\end{equation} Here, we defined
\begin{equation*}
\Xi_N \coloneqq \int_{\mathbb R^6} \xi_N(x-y)a_x^*a_y^*a_xa_y \, \mathrm d x \, \mathrm d y,
\end{equation*} $V_1$ is the first term on the right-hand side of \eqref{eq:Chi+V_expectation_computation}, with $ \| F\phi_{z, \alpha} \|^{-2} $ replaced by $  \| F\phi_{z, \alpha} \|^{-2} -1 $, 
and the densities $\rho_{z, \alpha}^{(k)}$ are defined in \eqref{eq:rho_za_k_def}. Using \eqref{eq:rho_k_Gamma0_bounds}, \eqref{eq:xi_N_integral} and \eqref{eq:ul_integral}, it is easy to see that 
\begin{equation}\label{eq:V3_bound}
	\begin{split}
		V_3 \lesssim \| \rho^{(6)}_{\Gamma_0} \| _ \infty \| \xi_N \|_1 \|u_\ell\|_1^2 \lesssim N^6 a_N^3 \ell^4 \lesssim N^3\ell^4. 
	\end{split}
\end{equation}
In contrast, naive bounds for $V_1$ and $V_2$ are not sufficient to achieve the level of precision necessary to prove Theorem \ref{thm:main_thm}. Indeed, when considered individually, they give contributions to the energy proportional to $N^2\ell^2$. Together with the errors arising from the localization of the minimization problem in \eqref{eq:energy_f_ell}, which are of order $\ell^{-1}$, they add up to a contribution proportional to $N^{2/3}$. This is the level of accuracy of the upper bound for the free energy given in \cite{DeuchertSeiringerHom2020}, and it is not sufficient to resolve the free energy of the interacting BEC and the Bogoliubov corrections we are interested in.  

It turns out that $V_1$ and $V_2$ cancel out to leading order. This important cancellation is the content of the next lemma.
\begin{lemma}\label{lem:V1+V2_cancellation}
The following bound holds:
	\begin{equation}
		\label{eq:V1+V2_cancellation}
		V_1+V_2 \lesssim N^{5/3}\ell^2 + N^3\ell^4.
	\end{equation}
\end{lemma}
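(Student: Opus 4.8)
The plan is to rewrite $V_1+V_2$ so that the contributions of apparent size $N^{2}\ell^{2}$ visibly cancel, and then estimate the remainder.

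\emph{Step 1 (reduction).} I first expand the denominator. From \eqref{eq:product_lower_bound}, the Bonferroni-type bound $\prod_i(1-a_i)\le 1-\sum_i a_i+\sum_{i<j}a_ia_j$ for $a_i\in[0,1]$, and $0\le u_\ell\le 1$, one obtains $1-\|F\phi_{z,\alpha}\|^{2}=\tfrac12\int_{\Lambda^{2}}\rho_{z,\alpha}^{(2)}(x_1,x_2)u_\ell(x_1-x_2)\,\mathrm dx_1\mathrm dx_2-\mathcal R_{z,\alpha}$ with $0\le\mathcal R_{z,\alpha}\le\tfrac18\int_{\Lambda^{4}}\rho_{z,\alpha}^{(4)}u_\ell(x_1-x_2)u_\ell(x_3-x_4)+\tfrac12\int_{\Lambda^{3}}\rho_{z,\alpha}^{(3)}u_\ell(x_1-x_2)u_\ell(x_1-x_3)$. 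Inserting $\|F\phi_{z,\alpha}\|^{-2}-1=\|F\phi_{z,\alpha}\|^{-2}(1-\|F\phi_{z,\alpha}\|^{2})$ in $V_1$, and recalling $V_2$ from \eqref{eq:Chi+V_expectation_computation}, the two $\Lambda^{2}$-integrals combine into one $\Lambda^{4}$-integral and
\begin{align*}
	V_1+V_2=\int_{\mathbb C}\zeta(z)\sum_{\alpha\in\widetilde{\mathcal A}}\frac{\widetilde\lambda_\alpha}{2\|F\phi_{z,\alpha}\|^{2}}\bigg[&\int_{\Lambda^{4}}\Big(\rho_{z,\alpha}^{(2)}(x_1,x_2)\,\rho_{z,\alpha}^{(2)}(x_3,x_4)-\rho_{z,\alpha}^{(4)}(x_1,x_2,x_3,x_4)\Big)\xi_N(x_1-x_2)u_\ell(x_3-x_4)\,\mathrm d X^{4}\\
	&-2\mathcal R_{z,\alpha}\int_{\Lambda^{2}}\rho_{z,\alpha}^{(2)}(x_1,x_2)\,\xi_N(x_1-x_2)\,\mathrm dx_1\mathrm dx_2\bigg]\,\mathrm dz .
\end{align*}
By \eqref{eq:denominator_lower_bound} and the smallness of $N\ell^{2}$ we have $\|F\phi_{z,\alpha}\|^{-2}\lesssim 1$, so it remains to bound the two bracketed quantities after integrating in $z$ and summing in $\alpha$.

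\emph{Step 2 (quartic remainder).} Since the $\rho_{z,\alpha}^{(k)}$ are nonnegative and $\widetilde\lambda_\alpha=\kappa_0^{-1}\lambda_\alpha$ with $\kappa_0^{-1}\lesssim 1$ (Lemma~\ref{lem:cutoff_a_remainder}), the $(z,\alpha)$-average of $\mathcal R_{z,\alpha}$ is controlled by the densities of $\Gamma_0$; using Lemma~\ref{lem:rho_k_Gamma_bound}, $\|u_\ell\|_1\lesssim\mathfrak a_N\ell^{2}$ and $\mathfrak a_N N\lesssim1$ one gets $\int_{\mathbb C}\zeta(z)\sum_\alpha\widetilde\lambda_\alpha\mathcal R_{z,\alpha}\lesssim N^{2}\ell^{4}$, while \eqref{eq:rho2za_bound} and \eqref{eq:xi_N_integral} give $\int_{\Lambda^{2}}\rho_{z,\alpha}^{(2)}\xi_N\lesssim\|\rho_{z,\alpha}^{(2)}\|_\infty\|\xi_N\|_1\lesssim N$ uniformly in $z,\alpha$. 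Hence the second bracket term contributes at most $\mathcal O(N^{3}\ell^{4})$.

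\emph{Step 3 (the cancellation).} This is the heart of the matter. Expressing $\rho_{z,\alpha}^{(2)}$ and $\rho_{z,\alpha}^{(4)}$ through $a_x=z+b_x$ acting on $T_z\Psi_\alpha$ (with $b_x$ collecting the excited modes) and sorting the resulting expressions by the power of $|z|$, one has $\rho_{z,\alpha}^{(2)}=|z|^{4}+(\text{size }|z|^{2})+(\text{size }|z|^{0})$ and $\rho_{z,\alpha}^{(4)}=|z|^{8}+(\text{size }|z|^{6})+\dots+(\text{size }|z|^{0})$, the odd powers being absent because $\langle a_p\rangle_{T_z\Psi_\alpha}=0$. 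The $|z|^{8}$-parts satisfy $|z|^{8}=|z|^{4}\cdot|z|^{4}$ and cancel pointwise. In the $|z|^{6}$-part of $\rho_{z,\alpha}^{(4)}$, the terms where both excited operators lie over the pair $(x_1,x_2)$ (resp.\ $(x_3,x_4)$) reproduce exactly $|z|^{4}$ times the $|z|^{2}$-part of $\rho_{z,\alpha}^{(2)}(x_1,x_2)$ (resp.\ $(x_3,x_4)$), hence cancel against $\rho_{z,\alpha}^{(2)}\otimes\rho_{z,\alpha}^{(2)}$; the remaining ``cross'' terms each carry a single plane wave $e^{\mathrm i p\cdot x_i}$, $p\in\Lambda^{*}_{+}$, over $(x_1,x_2)$ and one over $(x_3,x_4)$, and therefore vanish after integrating the corresponding center-of-mass variable over $\Lambda$, since $\int_\Lambda e^{\mathrm i p\cdot x}\,\mathrm dx=0$ for $p\ne0$. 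Running the same bookkeeping on the $|z|^{4}$- and lower parts, the ``disconnected'' pieces again cancel against $\rho_{z,\alpha}^{(2)}\otimes\rho_{z,\alpha}^{(2)}$, and what survives are momentum-diagonal sums of the schematic form $|z|^{4}\sum_{p}\big(\gamma_p^{2}+|\alpha_p|^{2}+\gamma_p|\alpha_p|\big)\widehat{\xi}_N(p)\widehat{u}_\ell(p)$ together with analogous lower-order terms (the diagonality being forced by momentum conservation, and the few quantities that are quadratic in the occupation numbers of $\Psi_\alpha$ being handled by averaging in $\alpha$ via $\sum_\alpha\widetilde\lambda_\alpha\sum_p n_p(\alpha)^{2}\lesssim\sum_p\big((\gamma_p^{\mathrm{diag}})^{2}+\gamma_p^{\mathrm{diag}}\big)$). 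Plugging in $|z|^{2}\le\widetilde c N$, $N_\alpha^{<}\lesssim\beta^{-1}N^{\delta_{\mathrm{Bog}}}$, $N_0\le N$, $\beta\sim\beta_{\mathrm c}\sim N^{-2/3}$, the $L^{1}$-bounds $|\widehat{\xi}_N(p)|\lesssim\mathfrak a_N$ and $|\widehat{u}_\ell(p)|\lesssim\mathfrak a_N\ell^{2}$, and Lemmas~\ref{lem:integral_gamma_0} and~\ref{lem:1pdm_pairing_bogoliubov} (in particular $\sum_p\gamma_p^{2}\lesssim\beta^{-2}+N_0^{4}/N^{4}$, $\sum_p|\alpha_p|^{2}\lesssim(N_0^{2}/N^{2})(\beta^{-2}+1)$), every surviving term is $\lesssim N^{5/3}\ell^{2}$. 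Together with Step~2 this proves \eqref{eq:V1+V2_cancellation}.

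\emph{Main obstacle.} The delicate part is the bookkeeping of Step~3: one has to check that \emph{all} contributions of apparent size $N^{2}\ell^{2}$ — precisely those of the two top powers of $|z|$ — cancel (the $|z|^{8}$ ones trivially, the $|z|^{6}$ ones only after the center-of-mass integration, which is where the separation of length scales $\ell\ll\beta^{1/2}$ enters), and that every remaining contraction, including those arising because $\Psi_\alpha$ is a symmetrized product of plane waves rather than a quasi-free vector, is diagonal in momentum and hence, after pairing with $\widehat{\xi}_N(p)\widehat{u}_\ell(p)$, of size at most $N^{5/3}\ell^{2}$. Carefully organizing the combinatorics of these contractions, and identifying which quantities must be averaged over $\alpha$ before being estimated, is the bulk of the work.
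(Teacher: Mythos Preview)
Your Step 1 and Step 2 are correct and give a clean reduction. The difficulty is entirely in Step 3, and here your argument has a genuine gap.

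Your diagnosis that ``all contributions of apparent size $N^{2}\ell^{2}$ --- precisely those of the two top powers of $|z|$'' is incorrect. At \emph{every} order $|z|^{2m}$ ($m=0,\dots,4$) the naive size is $|z|^{2m}\cdot N_\alpha^{4-m}\cdot (\int\xi_N)(\int u_\ell)\sim N^{2}\ell^{2}$, so a cancellation between $\rho_{z,\alpha}^{(2)}\otimes\rho_{z,\alpha}^{(2)}$ and $\rho_{z,\alpha}^{(4)}$ is needed at each order, not just at $|z|^{8}$ and $|z|^{6}$. For the top two orders your argument is fine (the $|z|^{8}$ pieces are identical, and the $|z|^{6}$ cross terms carry a single nonzero momentum in the $(x_1,x_2)$-block and one in the $(x_3,x_4)$-block, hence vanish under the $x_2,x_4$ and center-of-mass integrations). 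But for $m\le 2$ you are comparing, e.g.\ at $|z|^{0}$, the product $\langle b_1^{*}b_2^{*}b_1b_2\rangle_{T_z\Psi_\alpha}\langle b_3^{*}b_4^{*}b_3b_4\rangle_{T_z\Psi_\alpha}$ with the eight-point function $\langle b_1^{*}b_2^{*}b_3^{*}b_4^{*}b_1b_2b_3b_4\rangle_{T_z\Psi_\alpha}$, and $T_z\Psi_\alpha$ is \emph{not} quasi-free. You cannot invoke Wick's theorem; the ``disconnected pieces cancel and what survives is momentum-diagonal'' is an assertion, not an argument. Moreover, the surviving object at order $|z|^{0}$ is the covariance $\langle A_{12}\rangle_\alpha\langle A_{34}\rangle_\alpha-\langle A_{12}A_{34}\rangle_\alpha$ (plus contact terms), and since the first factor is a \emph{product} of $\alpha$-dependent quantities, averaging it over $\alpha$ does not reduce to the expressions $\sum_p\gamma_p^{2}$ or $\sum_p|\alpha_p|^{2}$ you write down; the remark ``handled by averaging in $\alpha$ via $\sum_\alpha\widetilde\lambda_\alpha\sum_p n_p(\alpha)^{2}\lesssim\dots$'' addresses only one of many such terms. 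Finally, the comment that the $|z|^{6}$ cancellation ``is where the separation of length scales $\ell\ll\beta^{1/2}$ enters'' is misplaced: the center-of-mass integral $\int_{\Lambda}e^{\mathrm i p\cdot x}\,\mathrm dx=0$ is exact and uses no length scale at all.

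The paper avoids these problems by \emph{not} combining $V_1$ and $V_2$ at the eigenfunction level. For $V_1$ it uses the uniform pointwise bound of Lemma~\ref{lem:rho_z_a_uniform_bound}, $\rho_{z,\alpha}^{(2)}\le |z|^{4}+4|z|^{2}N_\alpha+2N_\alpha(N_\alpha-1)+\mathcal O(N^{5/3+\delta_{\mathrm{Bog}}})$, which turns the product of two densities into a polynomial in $(|z|^{2},N_\alpha)$; the moments $\sum_\alpha\widetilde\lambda_\alpha N_\alpha^{k}$ are then controlled by Lemma~\ref{cor:expectation_powers_N}. For $V_2$ it first drops the denominator (legitimate since $V_2\le 0$) and averages over $\alpha$ to obtain $\rho_{\Gamma_0}^{(4)}$, then Taylor-expands to the diagonal $(x_1,x_1,x_3,x_3)$ using the gradient bound of Lemma~\ref{lem:rho_k_Gamma_bound} (this is where $\ell\ll\beta^{1/2}$ actually enters), and finally applies Wick's theorem to the genuinely quasi-free state $G(z)$. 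The two resulting polynomials match and cancel. Your route may well be salvageable, but it requires a full term-by-term analysis of the connected four-body function of $T_z\Psi_\alpha$ at all orders in $|z|$, which is substantially more work than your sketch indicates.
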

\begin{proof}
	We compute the leading order contributions of $V_1$ and $V_2$ separately, and we check that they indeed cancel. We first consider $V_1$. Using \eqref{eq:denominator_bound_first} and the bound $(1-x)^{-1}\le 1+ x + 2x^2$ for $0\le x \le 1/2$, we find 
	\begin{equation*}
		\begin{split}
			V_1 \le & \frac 1 2  \int_{ \mathbb C } \zeta(z)  \sum_{\alpha\in \widetilde{ \mathcal A}}  \widetilde \lambda_\alpha \Big ( \int_{\Lambda^2} \rho_{\alpha, z}^{(2)}(x_1,x_2) u_\ell(x_1-x_2) \, \mathrm dx_1 \, \mathrm dx_2 \Big ) \\
			& \qquad \times \Big ( \int_{\Lambda^2} \rho_{\alpha, z}^{(2)}(x_1,x_2) \xi_N(x_1-x_2) \, \mathrm dx_1 \, \mathrm dx_2 \Big ) \, \mathrm dz\\
			& +  \int_{ \mathbb C } \zeta(z)  \sum_{\alpha\in \widetilde{ \mathcal A}}  \widetilde \lambda_\alpha \Big ( \int_{\Lambda^2} \rho_{\alpha, z}^{(2)}(x_1,x_2) u_\ell(x_1-x_2) \, \mathrm dx_1 \, \mathrm dx_2 \Big )^2 \\
			& \qquad \times \Big ( \int_{\Lambda^2} \rho_{\alpha, z}^{(2)}(x_1,x_2) \xi_N(x_1-x_2) \, \mathrm dx_1 \, \mathrm dx_2 \Big ) \, \mathrm dz \eqqcolon V_{1,1} + V_{1,2}.
		\end{split}
	\end{equation*} Equations \eqref{eq:rho2za_bound}, \eqref{eq:xi_N_integral}, \eqref{eq:ul_integral}, and the particle number cutoffs imposed on $\widetilde \lambda_\alpha$ and $\zeta(z)$, allow us to show that
\begin{equation}\label{eq:V12_bound}
	\begin{split}
	  V_{1,2} \lesssim N ^6 \, \Big ( \int u_\ell \Big )^2\Big (\int \xi_N\Big ) \lesssim N^6 \mathfrak a_N^{3} \ell^4 \lesssim N^3\ell^4.
	\end{split}
\end{equation} 

We recall the definition of $N_{\alpha}$ in \eqref{eq:Eigenvalues_Psi_alpha}. With the same bounds we also obtain
\begin{equation}\label{eq:V11_def}
	V_{1,1} \le \frac 1 2 \Big( \int \xi_N\Big) \Big(\int u_\ell \Big)  \int_{ \mathbb C } \zeta(z)  \sum_{\alpha\in \widetilde{ \mathcal A}}  \widetilde \lambda_\alpha  \big ( |z|^4 + 4|z|^2 N_\alpha + 2 N_\alpha^2 \big )^2 \, \mathrm d z + \mathcal E_{\mathcal V}^{(1)},
\end{equation} where the error term satisfies (recall $\delta_{\mathrm{Bog}} < 1/12$)
\begin{equation*}\begin{split}
	\mathcal E_{\mathcal V}^{(1)} \lesssim & N^{8/3 + \delta_{ \mathrm{ Bog } }} \Big( \int \xi_N\Big) \Big(\int u_\ell \Big) \lesssim N^{2/3 + \delta_{ \mathrm{ Bog } }} \ell^2.
	\end{split}
\end{equation*} It thus follows from Lemma \ref{cor:expectation_powers_N} that
\begin{equation}\label{eq:V11_final_upper_bound}
	\begin{split}
		V_{1,1} \le & \frac 1 { 2 \kappa_0 } \Big ( \int \xi_N \Big ) \Big ( \int u_\ell \Big ) \int_{ \mathbb C } \zeta(z) \Big\{ |z|^8 + 8 |z|^6 \tr_{\mathfrak{F}_+}[\mathcal N_+ G^{ \mathrm{ diag } }] \\
		& + 20 |z|^4 \tr_{\mathfrak{F}_+}[\mathcal N_+ G^{ \mathrm{ diag } }] ^2 +  16 |z|^2 \tr_{\mathfrak{F}_+}[\mathcal N_+ G^{ \mathrm{ diag } }]^3 + 4 \tr_{\mathfrak{F}_+}[\mathcal N_+ G^{ \mathrm{ diag } }] ^4  \Big\} +C N^{4/3} \ell^2.
	\end{split} 
\end{equation}

We now turn to estimating $V_2$. The simple bound $\| F(z\otimes T_z\Psi_{\alpha})\|\le 1$ implies
\begin{equation*}\begin{split}
		V_2 \le &  - \frac 1 2  \int_{ \mathbb C } \zeta(z)  \sum_{\alpha\in \widetilde{ \mathcal A}}  \widetilde \lambda_\alpha \int_{\Lambda^4} \rho^{(4)}_{z,\alpha}(x_1,x_2,x_3,x_4) \xi_N(x_1-x_2) u_\ell(x_3-x_4) \mathrm dx_1 \mathrm dx_2\mathrm dx_3\mathrm dx_4 \, \mathrm d z \\
		= & -  \frac 1 { 2 \kappa_0 } \int_{\Lambda^4} \rho^{(4)}_{\Gamma_0}(x_1,x_2,x_3,x_4) \xi_N(x_1-x_2) u_\ell(x_3-x_4)\, \mathrm dx_1 \,  \mathrm dx_2 \, \mathrm dx_3 \, \mathrm dx_4 + \mathcal E_{ \mathcal V }^{(2)} , 
	\end{split}
\end{equation*} with 
\begin{equation*}
	\mathcal  E_{ \mathcal V }^{(2)} =  \frac 1 2  \int_{ \mathbb C}  \zeta(z)  \sum_{\alpha\in \mathcal A \setminus \widetilde{ \mathcal A}}  \widetilde \lambda_\alpha \int_{\Lambda^4} \rho^{(4)}_{z,\alpha}(x_1,x_2,x_3,x_4) \xi_N(x_1-x_2) u_\ell(x_3-x_4)\, \mathrm dx_1 \,  \mathrm dx_2 \, \mathrm dx_3 \, \mathrm dx_4 \, \mathrm d z.
\end{equation*} Using $0\le u_\ell \le 1$, \eqref{eq:cutoff_a_remainder_bound}, \eqref{eq:rho4_za_integral_bound} and \eqref{eq:xi_N_integral}, we find 
\begin{equation}\label{eq:V2_error_bound_1}
	\mathcal E_{ \mathcal V }^{(2)} \lesssim \mathfrak a_N  \sum_{\alpha\in \mathcal A \setminus \widetilde{ \mathcal A}}  \widetilde \lambda_\alpha (N^4 + N^{3\delta_{ \mathrm{ Bog } }} N_\alpha^2 N^2 + N^{3\delta_{ \mathrm{ Bog } }} N_\alpha^4) \lesssim  \exp(-c N^{ \delta_{ \mathrm{ Bog } }} ).
\end{equation}

To extract the leading order contribution from $V_2$, we expand
\begin{equation}\label{eq:V2_main_term}
	\begin{split}
		-\frac 1 2 \int_{\Lambda^4} & \rho^{(4)}_{\Gamma_0} (x_1,x_2,x_3,x_4)\xi_N(x_1-x_2) u_\ell(x_3-x_4) \, \mathrm dx_1 \,  \mathrm dx_2 \, \mathrm dx_3 \, \mathrm dx_4  \\
		= & -\frac 1 2 \Big (\int u_\ell \Big ) \Big (\int \xi_N \Big)\int_{\Lambda^4} \rho^{(4)} _{\Gamma_0} (x_1,x_1,x_2,x_2) \, \mathrm dx_1 \, \mathrm dx_2 + \mathcal E_{\mathcal V}^{(3)},
	\end{split}
\end{equation} where
\begin{equation}\label{eq:V2_error_bound_2}
	\begin{split}
		 \mathcal E_{\mathcal V}^{(3)} = &	-\frac 1 2 \int_{\Lambda^4} \int_0^1 [ (x_2-x_1)\cdot \nabla_2 + (x_4-x_3) \cdot \nabla_4 ] \rho^{(4)}_{\Gamma_0} (x_1,x_1 + t(x_2-x_1),x_3,x_3 + t(x_4-x_3)) \\
		& \qquad\qquad \times \xi_N(x_1-x_2) u_\ell(x_3-x_4) \, \mathrm dt \, \mathrm dx_1 \,  \mathrm dx_2 \, \mathrm dx_3 \, \mathrm dx_4.
	\end{split}
\end{equation} With \eqref{eq:xi_N_integral},  \eqref{eq:ul_integral}, Lemma \ref{lem:rho_k_Gamma_bound}, and the fact that the supports of $\xi_N$ and $u_\ell$ are contained in $B_\ell$, we find 
\begin{equation*}
	|  \mathcal E_{\mathcal V}^{(3)} | \lesssim  \ell \Big (\int u_\ell \Big ) \Big (\int \xi_N \Big ) ( \| \nabla_2 \rho^{(4)}_{\Gamma_0} \|_\infty +  \| \nabla_4 \rho^{(4)}_{\Gamma_0} \|_\infty) \lesssim \mathfrak a_N^2\ell^3 N^4 \beta^{-1/2} \lesssim  N^{7/3}\ell^3.
\end{equation*}

Let us now compare \eqref{eq:V2_main_term} with \eqref{eq:V11_final_upper_bound}. We write 
\begin{equation}\label{eq:rho4_gamma0_localized}
	\begin{split}
		\rho^{(4)}_{\Gamma_0 } (x_1,x_1,x_2,x_2) = & \int_{\mathbb C } \zeta(z) \tr_{\mathfrak F} [ a_{x_1}^*a_{x_1}^*a_{x_2}^*a_{x_2}^*a_{x_1}a_{x_1}a_{x_2}a_{x_2} | z \rangle \langle z | \otimes G(z) ] \de z,
	\end{split}
\end{equation} and use \eqref{eq:action_weyl_position} and Wick's theorem to compute the right-hand side. A naive approach generates many terms. To simplify the computation, we observe that, by Lemma \ref{lem:1pdm_pairing_bogoliubov}, $  \sup_{x\in\Lambda}|\check\alpha(x)| \lesssim N^{2/3} $. In particular, when we apply \eqref{eq:action_weyl_position} to \eqref{eq:rho4_gamma0_localized}, we see that all terms that do not contain the same number of creation and annihilation operators are subleading, and we get
\begin{equation}\label{eq:V2_localized_intermediate}
	\begin{split}
		-\frac 1 2 & \Big(\int u_\ell \Big) \Big(\int \xi_N \Big)\int_{\Lambda^4}  \rho^{(4)} _{\Gamma_0} (x_1,x_1,x_2,x_2) \, \mathrm dx_1 \, \mathrm dx_2  \\
		= &  -\frac 1 2 \Big(\int \xi_N\Big) \Big(\int u_\ell\Big) \int  \int_{\mathbb C } \zeta(z) \Big\{ \tr_{\mathfrak F_+} [ a_{x_1}^*a_{x_1}^*a_{x_2}^*a_{x_2}^*a_{x_1}a_{x_1}a_{x_2}a_{x_2} G(z) ] \\
		& + 8 |z|^2 \tr_{\mathfrak F_+} [ a_{x_1}^*a_{x_2}^*a_{x_2}^* a_{x_1}a_{x_2}a_{x_2} G(z) ] + 8 |z|^2 \tr_{\mathfrak F_+} [ a_{x_1}^*a_{x_2}^*a_{x_2}^* a_{x_1}a_{x_1}a_{x_2} G(z) ] \\
		& + 2|z|^4 \tr_{\mathfrak F_+} [ a_{x_1}^* a_{x_1}^* a_{x_1} a_{x_1} G(z) ] + 16 |z|^4 \tr_{\mathfrak F_+} [ a_{x_1}^* a_{x_2}^* a_{x_1} a_{x_2} G(z) ]  \\
		& + 8|z|^4 \tr_{\mathfrak F_+} [ a_{x_1}^* a_{x_1}^* a_{x_1} a_{x_2} G(z) ] + 8|z|^4 \tr_{\mathfrak F_+} [ a_{x_1}^* a_{x_2}^* a_{x_1} a_{x_1} G(z) ]  \\
		& + 2|z|^4 \tr_{\mathfrak F_+} [ a_{x_1}^* a_{x_1}^* a_{x_2} a_{x_2} G(z) ] \\
		& + 8 |z|^6 \tr_{\mathfrak F_+} [ a_{x_1}^* a_{x_1} G(z) ] + 8 |z|^6 \tr_{\mathfrak F_+} [ a_{x_1}^* a_{x_2} G(z) ] + |z|^8\Big\}\, \mathrm d z \, \mathrm dx_1 \, \mathrm dx_2 + CN^{5/3} \ell^2.
	\end{split}
\end{equation}
An application of Lemma \ref{lem:1pdm_pairing_bogoliubov} shows
\begin{equation*}
 \int_{\Lambda} \check\gamma(x)  \, \mathrm d x =  0, \qquad  \int_{\Lambda} | \check\gamma(x) |^2  \, \mathrm d x  \lesssim N^{4/3}.
\end{equation*} When we apply Wick's theorem to compute the right-hand side of \eqref{eq:V2_localized_intermediate}, this allows us to see that only the constant terms give leading-order contributions. More precisely, the right-hand side of \eqref{eq:V2_localized_intermediate} is bounded from above by
\begin{equation*}
	\begin{split}
		 -\frac 1 2 \Big(\int \xi_N\Big) \Big(\int u_\ell\Big) \int_{\mathbb C } \zeta(z) & \Big\{  4 \check \gamma(0)^4 + 16 |z|^2 \check \gamma(0)^3 \\
		& + 20 |z|^4\check \gamma(0)^2 + 8 |z|^6 \check \gamma(0) + |z|^8 \Big\} \, \mathrm d z + CN^{5/3}\ell^2.
	\end{split}
\end{equation*} It follows from \eqref{eq:vp_pointwise_bound} and \eqref{eq:1pdm_pairing_bogoliubov_momentum} that $ \big | \check\gamma(0)- \tr_{\mathfrak{F}_+}[ \mathcal N_+ G^{ \mathrm{ diag} } ] \big | \lesssim N^{2/3}$. The above considerations imply 
\begin{equation} \label{eq:V2_final_uper_bound}
	\begin{split}
		V_2 \le &  -\frac 1 { 2 \kappa_0 } \Big (\int \xi_N\Big ) \Big (\int u_\ell\Big )  \int_{\mathbb C } \zeta(z)  \Big\{  4 \tr_{\mathfrak{F}_+}[ \mathcal N_+  G^{ \mathrm{ diag } }]^4 + 16 |z|^2 \tr_{\mathfrak{F}_+}[ \mathcal N_+  G^{ \mathrm{ diag } }]^3 \\
		& + 20 |z|^4 \tr_{\mathfrak{F}_+}[ \mathcal N_+ G^{ \mathrm{ diag } }]^2 + 8 |z|^6 \tr_{\mathfrak{F}_+}[ \mathcal N_+  G^{ \mathrm{ diag } }] + |z|^8 \Big\} \, \mathrm d z + C ( N^{7/3} \ell^3 + N^{5/3} \ell^2 ).
	\end{split}
\end{equation}  
Combining \eqref{eq:V12_bound}, \eqref{eq:V11_final_upper_bound} and \eqref{eq:V2_final_uper_bound} we find \eqref{eq:V1+V2_cancellation}.
\end{proof}

In the next Lemma, we extract the largest contributions to the energy from the expectation of the effective interaction $\Xi_N$ with respect to the undressed trial state.

\begin{lemma}\label{lem:Xi_expectation_undressed} We have 
 		\begin{equation}\label{eq:Xi_expectation_undressed}
\begin{split}
	 		\tr[\Xi_N \widetilde \Gamma_0] = & \int_{\mathbb C } \zeta(z) \tr [ \mathcal Q_{\mathrm B} \widetilde G(z) ] \, \mathrm d z + 4\pi \mathfrak a _N \int_{\mathbb C } \zeta(z) |z|^4 \, \mathrm d z \\
	 		& +   8\pi \mathfrak a_N  \Bigg[  \Big(\sum_{ p \in \Lambda_+^* } \gamma_p \Big)^2 + \widetilde{N}_0  \sum_{ p \in \Lambda_+^* \setminus  P_{\mathrm B} } \gamma_p + \widetilde{N}_0  \sum_{ p \in \Lambda_+^* } \gamma_p  \Bigg] + \mathcal E_{\Xi},
\end{split} 
 	\end{equation} with $\mathcal Q_{ \mathrm B }$ in \eqref{eq:QB_def} and 
 \begin{equation}\label{eq:E_Xi_bound}
 	\mathcal E_{\Xi} \lesssim N^{2/3 + \delta_{ \mathrm{ Bog } }} ( N^{-1/3} + N^{ \delta_{ \mathrm{Bog} } }\ell + N^{2/3 + \delta_{ \mathrm{ Bog } }} \ell^2 + N^2 \ell^4  )+ \ell^{-1}.
 \end{equation}
\end{lemma}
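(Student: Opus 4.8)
\emph{Plan of proof.} Since $\widetilde\Gamma_0$ is the direct integral over $z$ of the coherent state $|z\rangle\langle z|$ on the condensate mode tensored with the quasi-free state $\widetilde G(z)$ on $\mathfrak F_+$, the expectation of the two-body operator $\Xi_N$ reduces to
\begin{equation*}
\tr[\Xi_N\widetilde\Gamma_0]=\int_{\mathbb C}\zeta(z)\int_{\Lambda^2}\xi_N(x-y)\,\langle a_x^*a_y^*a_xa_y\rangle_{|z\rangle\otimes\widetilde G(z)}\,\mathrm dx\,\mathrm dy\,\mathrm dz .
\end{equation*}
The first step is to apply the Weyl shift \eqref{eq:action_weyl_position} to each of the four operators. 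This produces a term with no excitation operators (coefficient $|z|^4$), terms quadratic in the $a^\#$'s (coefficients $z^2$, $\bar z^2$ or $|z|^2$), and a term quartic in the $a^\#$'s. Terms with an odd number of excitation operators vanish by the quasi-freeness of $\widetilde G(z)$, and in the remaining even ones the $z$-phases cancel against the phases produced by the pairings of $\widetilde G(z)=T_z\widetilde G^{\mathrm{diag}}T_z^*$ (cf.\ the computation in Lemma~\ref{lem:1pdm_pairing_bogoliubov}), so the $z$-integration only yields powers of $|z|^2$ and, after integration, factors $\widetilde N_0$. The $|z|^4$-term integrates to $\big(\int_\Lambda\xi_N\big)\int_{\mathbb C}\zeta(z)|z|^4\,\mathrm dz$; using $\int_\Lambda\xi_N=4\pi\mathfrak a_N/(1-\mathfrak a_N/\ell)$ from \eqref{eq:xi_N_integral} and $\int_{\mathbb C}\zeta(z)|z|^4\,\mathrm dz\lesssim N^2$ this equals $4\pi\mathfrak a_N\int_{\mathbb C}\zeta(z)|z|^4\,\mathrm dz$ up to $\mathcal O(\mathfrak a_N^2N^2/\ell)=\mathcal O(\ell^{-1})$.

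The main step is the analysis of the quadratic part. After the Weyl shift it equals $|z|^2\big(2(\int_\Lambda\xi_N)\mathcal N_++2\sum_{p}\hat\xi_N(p)a_p^*a_p\big)+z^2\sum_p\hat\xi_N(p)a_p^*a_{-p}^*+\bar z^2\sum_p\hat\xi_N(p)a_pa_{-p}$, where $\hat\xi_N$ is the Fourier transform of $\xi_N$. Taking the expectation in $\widetilde G(z)$ and integrating over $z$ replaces this by $\widetilde N_0$ times $2(\int_\Lambda\xi_N)\sum_p\widetilde\gamma_p+2\sum_p\hat\xi_N(p)\widetilde\gamma_p+2\sum_p\hat\xi_N(p)\widetilde\alpha_p$, with $\widetilde\gamma_p,\widetilde\alpha_p$ the two-point functions of $\widetilde G(z)$ (which on $\Lambda_+^*\setminus P_{\mathrm B}$ coincide with $\gamma_p^{\mathrm{diag}}$ and $0$, and which elsewhere agree with the $\gamma_p,\alpha_p$ of $G(z)$ up to $\exp(-cN^{\delta_{\mathrm{Bog}}})$ by Lemma~\ref{lem:cutoff_a_remainder}). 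Here I would use that $\xi_N$ is radial, so $\hat\xi_N(p)=\int_\Lambda\xi_N+\mathcal O\!\big(\mathfrak a_N\min\{1,|p|^2\ell^2\}\big)$: on $P_{\mathrm B}$ (where $|p|\le N^{\delta_{\mathrm{Bog}}}$ and where the $\alpha$-terms are supported) replacing $\hat\xi_N(p)$ by $4\pi\mathfrak a_N$ turns the low-momentum contribution $8\pi\mathfrak a_N\widetilde N_0\sum_{p\in P_{\mathrm B}}(\widetilde\gamma_p+\widetilde\alpha_p)$ into $\int_{\mathbb C}\zeta(z)\tr[\cQ_{\mathrm B}\widetilde G(z)]\,\mathrm dz$ after replacing $\widetilde N_0$ by $N_0$ (error controlled by Corollary~\ref{lem:n0_n0tilde}), while the $\int_\Lambda\xi_N$-term and the high-momentum part of $\sum_p\hat\xi_N(p)\widetilde\gamma_p$ yield $8\pi\mathfrak a_N\widetilde N_0\big(\sum_{p\in\Lambda_+^*}\gamma_p+\sum_{p\in\Lambda_+^*\setminus P_{\mathrm B}}\gamma_p\big)$. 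The errors produced — of the form $\widetilde N_0\,\mathfrak a_N\ell^2\sum_p|p|^2\gamma_p$, $N_0\,\mathfrak a_N N^{2\delta_{\mathrm{Bog}}}\ell^2\sum_{p\in P_{\mathrm B}}|\alpha_p|$, and $|N_0-\widetilde N_0|\,\mathfrak a_N\sum_{p\in P_{\mathrm B}}(\gamma_p+|\alpha_p|)$ — are estimated by \eqref{eq:gamma_Bog_bound_momentum}, \eqref{eq:norm_bounds_gamma}, \eqref{eq:norm_bounds_alpha} and Corollary~\ref{lem:n0_n0tilde}, and absorbed into $\mathcal E_\Xi$ using the smallness $\ell\le cN^{-7/12}$ (so that, e.g., $N^{5/3}\ell^2\lesssim\ell^{-1}$).

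For the quartic term, which equals $\int_{\mathbb C}\zeta(z)\tr[\Xi_N\widetilde G(z)]\,\mathrm dz$, I would first pass from $\widetilde G(z)$ to $G(z)$; by Lemmas~\ref{lem:cutoff_a_remainder} and~\ref{lem:rho_z_a_uniform_bound} the difference is $\lesssim\|\xi_N\|_1\big((\kappa_0^{-1}-1)+\sum_{\alpha\in\mathcal A\setminus\widetilde{\mathcal A}}\lambda_\alpha\sup_{x,y}\rho^{(2)}_{z,\alpha}(x,y)\big)\lesssim\exp(-cN^{\delta_{\mathrm{Bog}}})$. Then Wick's theorem and Lemma~\ref{lem:1pdm_pairing_bogoliubov} give $\langle a_x^*a_y^*a_xa_y\rangle_{G(z)}=\check\gamma(0)^2+\check\gamma(x-y)^2+\check\alpha(x-y)^2$, the $z$-phases again cancelling in the pairing contraction. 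Integrating against $\xi_N$ and using radiality to write $\int_\Lambda\xi_N(w)\check\gamma(w)^2\,\mathrm dw=\big(\int_\Lambda\xi_N\big)(\sum_{p\in\Lambda_+^*}\gamma_p)^2+\mathcal O\!\big(\mathfrak a_N\ell^2(\sum_p|p|^2\gamma_p)(\sum_p\gamma_p)\big)$ and $\int_\Lambda\xi_N(w)\check\alpha(w)^2\,\mathrm dw=\mathcal O\!\big(\mathfrak a_N(\sum_{p\in P_{\mathrm B}}|\alpha_p|)^2\big)$ produces the term $8\pi\mathfrak a_N(\sum_{p\in\Lambda_+^*}\gamma_p)^2$ plus errors of size $N^{5/3}\ell^2\lesssim\ell^{-1}$ and $N^{1/3}\lesssim N^{1/3+\delta_{\mathrm{Bog}}}$. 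Collecting the three contributions with all error terms gives \eqref{eq:Xi_expectation_undressed} with the bound \eqref{eq:E_Xi_bound}.

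The main obstacle is the bookkeeping in the quadratic step: one must split the low-momentum ($P_{\mathrm B}$) contributions — which, together with the anomalous $\alpha_p$-pairings, reassemble precisely into the Bogoliubov quadratic form $\cQ_{\mathrm B}$ of \eqref{eq:QB_def} — from the remaining ones, which must come out as the density–density cross terms with the exact split between $\sum_{p\in\Lambda_+^*}\gamma_p$ and $\sum_{p\in\Lambda_+^*\setminus P_{\mathrm B}}\gamma_p$; and one must decide, term by term, whether to keep $\widetilde G(z)$ (hence $\widetilde\gamma_p$, $\widetilde\alpha_p$, $\widetilde N_0$) or to pass to $G(z)$ (hence $\gamma_p$, $N_0$), and control each such replacement. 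A secondary but crucial point is the exploitation of the radial symmetry of $\xi_N$, which upgrades the naive error $\mathcal O(\mathfrak a_N|p|\ell)$ in replacing $\hat\xi_N(p)$ by its zero-momentum value to the much smaller $\mathcal O(\mathfrak a_N|p|^2\ell^2)$; combined with $\ell\le cN^{-7/12}$ this is exactly what lets all error terms fit into $\mathcal E_\Xi$.
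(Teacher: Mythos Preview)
Your approach is essentially the same as the paper's: Weyl-shift the four operators, decompose into the $|z|^4$ piece, a quadratic piece (split into low momenta $P_{\mathrm B}$ and high momenta), and the quartic piece $\Xi_N^+$; the cubic contribution vanishes because each $\Psi_\alpha$ has definite particle number. The one substantive difference is your insistence on the second-order radial Taylor bound $|\hat\xi_N(p)-\hat\xi_N(0)|\lesssim\mathfrak a_N|p|^2\ell^2$, which you call ``crucial''. It is not: the paper uses only the first-order bound $|\hat\xi_N(p)-\hat\xi_N(0)|\le|p|\int_\Lambda|x|\xi_N(x)\,\mathrm dx\lesssim\mathfrak a_N|p|\ell$ (see \eqref{eq:xi_n_4pi_a_comparison}), which on $P_{\mathrm B}$ already produces the error $N^{2/3+2\delta_{\mathrm{Bog}}}\ell$ that appears in $\mathcal E_\Xi$. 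Moreover, for the high-momentum quadratic piece $\mathcal Q_>$ and for the quartic exchange term $\sum_{p,q}\gamma_p\gamma_q\hat\xi_N(p-q)$ the paper uses no Taylor expansion at all but simply the pointwise inequality $\hat\xi_N(p)\le\hat\xi_N(0)$ (valid since $\xi_N\ge0$), which directly gives the required upper bounds at the cost of $\mathcal O(\mathfrak a_N^2 N^2/\ell)=\mathcal O(\ell^{-1})$. Your second-order route also works (your quartic error $\mathfrak a_N\ell^2(\sum_p|p|^2\gamma_p)(\sum_p\gamma_p)\lesssim N^{5/3}\ell^2\lesssim\ell^{-1}$ is indeed valid under $\ell\le cN^{-7/12}$), but it is more laborious than needed.
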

\begin{proof}
In the momentum space representation, $\Xi_N$ reads
\begin{equation*}
	\Xi_N = \sum_{p,q,r\in\Lambda^*} \widehat \xi_N(r) a_{p+r}^* a_q^* a_p a_{q+r}, 
\end{equation*} and hence
\begin{equation}\label{eq:Xi_N_expansion}
	\langle W_z^* \Xi_N W_z \rangle_{\Omega_0\otimes T_z\Psi_\alpha} = \langle \Xi_N^+ + \mathcal C_N + \mathcal Q_< + \mathcal Q_> \rangle_{\Omega_0\otimes T_z\Psi_\alpha} + 2\widehat \xi_N(0) |z|^2 \langle \mathcal N_+ \rangle_{T_z\Psi_\alpha} + \widehat \xi_N(0) |z|^4,
\end{equation} 
with 
\begin{equation*}
	\begin{split}
		\Xi_N^+ = & \sum_{\substack{p,q,r\in\Lambda^* \\ p, q , p+r, q+r \neq 0}} \widehat \xi_N(r) a_{p+r}^* a_q^* a_p a_{q+r}, \\
	\mathcal C_N = & 2 \sum_{ \substack{p,q,\in\Lambda^*_+ \\  p+q \neq 0} } \widehat \xi_N(p) ( z a_{p+q}^* a_{-p}^* a_q  + \overline z a_{q}^* a_{p + q} a_{-p} ) , \\
	 \mathcal Q_< = & |z|^2 \sum_{ p \in P_{\mathrm B} }\widehat \xi_N(p) \left[ 2a_p^*a_p + \left( \frac{z^2}{|z|^2} a_p^* a_{-p}^* + \frac{\overline{z}^2}{|z|^2} a_p a_{-p}\right) \right], \\
	 \mathcal Q_> = & |z|^2 \sum_{p \in \Lambda_+^* \setminus P_{\mathrm B}  }\widehat \xi_N(p) \left[ 2a_p^*a_p + \left( \frac{z^2}{|z|^2} a_p^* a_{-p}^* + \frac{\overline{z}^2}{|z|^2} a_p a_{-p}\right) \right].
	\end{split}
\end{equation*} Since all $\Psi_\alpha$ are eigenfunctions of $\mathcal N$, the expectation of the cubic term on the basis functions of $ \widetilde G(z) $ vanishes, that is, 
\[ \langle \Omega_0 \otimes T_z\Psi_\alpha, \mathcal C_N \Omega_0 \otimes T_z\Psi_\alpha\rangle = 0.\] 

To bound the expectation of the quartic term, we use $\Xi_N^+\ge 0$, $\widehat \xi_N(p) \le \widehat \xi_N(0)$, which follow from $\xi_N(x)\ge 0$ and $\xi_N(x) = \xi_N(-x)$, Wick's theorem and Lemma \ref{lem:1pdm_pairing_bogoliubov}:
\begin{equation*}
	\begin{split}
		\tr[\Xi_N^+\widetilde \Gamma_0] = &\int_{\mathbb C }  \zeta(z) \sum_{\alpha\in \widetilde{ \mathcal A}} \widetilde\lambda_\alpha \langle T_z\Psi_\alpha, \mathrm \Xi_N^+ T_z\Psi_\alpha \rangle \de z \\
		\le & \kappa_0^{-1}  \int_{\mathbb C } \zeta(z) \sum_{\substack{p,q,r \in \Lambda^*  \\ p, q, p+r, q+r \neq 0 }} \widehat \xi _N (r) \tr_{\mathfrak{F}_+}[ a_{p+r}^*a_q^* a_p a_{q+r} G_{\mathrm B} (z) ] \de z \\
		\le & \kappa_0^{-1} \widehat\xi_N(0) \Big( \sum_{\substack{p,q\in\Lambda^* \\ p, q \neq 0 }} \gamma_p\gamma_q + \sum_{\substack{p,q\in\Lambda^* \\ p, p+q \neq 0 }} \gamma_p\gamma_{p+q} \Big) + \kappa_0^{-1} \sum_{\substack{p,q\in\Lambda^* \\ p, p+q \neq 0 }} \widehat\xi_N(q)  \alpha_p\overline{\alpha_{p+q}}.
	\end{split}
\end{equation*} Using \eqref{eq:norm_bounds_alpha} and \eqref{eq:xi_N_integral}, we see that 
\begin{equation*}
	\begin{split}
		\widehat\xi_N(0) \Big( \sum_{\substack{p,q\in\Lambda^* \\ p, q \neq 0 }} \gamma_p\gamma_q + \sum_{\substack{p,q\in\Lambda^* \\ p, p+q \neq 0 }} \gamma_p\gamma_{p+q} \Big) = & \frac{ 8\pi \mathfrak a_N }{1-\frac{ \mathfrak a_N}{\ell}} \Big(\sum_{ p\in \Lambda_+^* } \gamma_p \Big)^2, \\
		\Big| \sum_{\substack{p,q\in\Lambda^* \\ p, p+q \neq 0 }} \widehat\xi_N(q)  \alpha_p\overline{\alpha_{p+q}} \Big| \lesssim & \mathfrak a_N(N^{ \delta_{ \mathrm {Bog} } } + \beta^{-1})^2 \lesssim N^{1/3}, 
	\end{split}
\end{equation*} and thus
\begin{equation}\label{eq:Xi_N_expectation_quartic}
\tr[\Xi_N^+ \widetilde \Gamma_0]  \le  8\pi \mathfrak a_N \Big(\sum_{ p\in \Lambda_+^* } \gamma_p \Big)^2 + C\big( N^{1/3} + \ell^{-1} \big).
\end{equation} In the last step, we used Lemma \ref{lem:cutoff_a_remainder} to conclude $ \kappa_0^{-1} \le 1 + C\exp(-c N^{\delta_{\mathrm{Bog}}})$, and $(1- \mathfrak a_N/\ell)^{-1} \le 1 + 2 \mathfrak a_N /\ell$, which follows from $\ell \ge 2 \mathfrak a_N$.

We now analyze the quadratic terms, starting with $\mathcal Q_{<}$. We write
\begin{equation}\label{eq:Q<_QB}
\begin{split}
	\int_{\mathbb C } \zeta(z) \tr [ \mathcal Q_< \widetilde G(z) ] \, \mathrm d z = 	\int_{\mathbb C } \zeta(z) \tr [ \mathcal Q_{\mathrm B} \widetilde G(z) ] \, \mathrm d z+ \mathcal E_{ \mathcal Q }
\end{split}
\end{equation} with $Q_{\mathrm{B}}$ in \eqref{eq:QB_def} and
\begin{equation*}\begin{split}
	\mathcal E_{ \mathcal Q } = & \int_{\mathbb C } \zeta(z)  \tr_{\mathfrak F_+} \Bigg[ \sum_{ p \in P_{\mathrm B} } \big( |z|^2 \widehat \xi_N(p) - 4\pi \mathfrak a_N N_0 \big) \Big ( 2a_p^*a_p + \frac{z^2}{|z|^2} a_p^* a_{-p}^* + \frac{\overline{z}^2}{|z|^2} a_p a_{-p} \Big ) \widetilde G(z) \Bigg] \, \mathrm d z .
	\end{split}
\end{equation*} 
It follows from Lemma~\ref{lem:cutoff_a_remainder} and \eqref{eq:norm_bounds_gamma} that
\begin{equation}
	\label{eq:gamma_low_G_z_bound}
	 \sum_{ p\in P_{\mathrm B} } \tr_{\mathfrak{F}_+}[a_p^* a_p \widetilde G_{\mathrm B} (z) ] \le \kappa_0^{-1} \sum_{ p\in P_{\mathrm B} } \gamma_p \lesssim N^{2/3 + \delta_{ \mathrm{ Bog } }}, 
\end{equation} and 
\begin{equation}
	\label{eq:alpha_low_G_z_bound}
\begin{split}
		\Big| \sum_{  p\in P_{\mathrm B} } \tr_{\mathfrak{F}_+}[a_p^* a_{-p}^* \widetilde G_{\mathrm B} (z) ] \Big| \le & \sum_{  p\in P_{\mathrm B} } | \alpha_p | + C \exp(-c N^{\delta_{\mathrm{Bog}}}) \lesssim N^{2/3}. 
\end{split}
\end{equation} 
We also have
\begin{equation}
	\label{eq:xi_n_4pi_a_comparison}
	\sup_{ p\in P_{\mathrm B} } \big|\widehat \xi_N(p) - 4\pi \mathfrak a_N\big| \lesssim \mathfrak a_N N^{ \delta_{ \mathrm{Bog} } } \ell,
\end{equation} which follows from combining 
\begin{equation*} 
	\big| \widehat \xi_N(p) - \xi_N(0) \big| \le |p| \sup_q |(\widehat \xi_N)'(q)| \leq |p| \int_{\Lambda} |x| \xi_N(x) \lesssim |p| \ell \mathfrak a_N
\end{equation*} with \eqref{eq:xi_N_integral}. The last inequality in the above equation follows from \eqref{eq:xi_N_integral} and the fact that the support of $\xi_N$ is contained in the ball of radius $\ell$. Using \eqref{eq:N0_N0tilde_diff}, \eqref{eq:gamma_low_G_z_bound}, \eqref{eq:alpha_low_G_z_bound}, \eqref{eq:xi_n_4pi_a_comparison}, and observing that the quantity
\begin{equation*}
	\tr_{\mathfrak F_+} \Big [ \Big ( 2a_p^*a_p + \frac{z^2}{|z|^2} a_p^* a_{-p}^* + \frac{\overline{z}^2}{|z|^2} a_p a_{-p} \Big )  \widetilde G(z) \Big] 
\end{equation*}   does not depend on $z \in \mathbb C$, we find
\begin{equation} \label{eq:EQ_bound}
	\begin{split}
		\mathcal E_{\mathcal Q} \le & \Big ( \sup_{ p\in P_{ \mathrm B } } \big | \widehat{\xi}_N(p) - 4\pi \mathfrak a_N \big | \widetilde N_0 + 4\pi \mathfrak a_N  \big |\widetilde N_0 - N_0 \big |  \Big ) \\
		& \qquad \times  2 \Big( \sum_{ p\in P_{ \mathrm B } } \tr_{\mathfrak{F}_+}[a_p^* a_p \widetilde G_{ \mathrm B}(z) ] + 	\Big | \sum_{ p\in P_{ \mathrm B } } \tr_{\mathfrak{F}_+}[a_p^* a_{-p}^* \widetilde G_{ \mathrm B}(z) ] \Big |  \Big) \\
		\lesssim & (N^{ \delta_{ \mathrm{Bog} } }\ell + N^{-1/3} +N^2 \ell^4 + N^{2/3 + \delta_{ \mathrm{ Bog } }} \ell^2 ) N^{2/3 + \delta_{ \mathrm{ Bog } }}.
	\end{split}
\end{equation} 

The expectation of the high-momentum term $\mathcal Q_>$ in $\widetilde \Gamma_0$ can be bounded by
\begin{equation} \label{eq:Q>_bound}
	\begin{split}
		\tr[ \mathcal Q_> \widetilde \Gamma_0 ] = & 2 \widetilde N_0 \sum_{ p \in \Lambda_+^* \setminus P_{\mathrm B} } \widehat \xi_N(p) \tr_{ \mathfrak{F} } [ a_p^* a_p \widetilde \Gamma_0 ] \\
		\leq & \frac {8\pi \mathfrak a_N} {1 - \frac{ \mathfrak a_N}{\ell}} \widetilde N_0  \sum_{ p \in \Lambda_+^* \setminus  P_{\mathrm B} }  \tr_{ \mathfrak{F} } [ a_p^* a_p \widetilde \Gamma_0 ] \le 8\pi \mathfrak a_N \widetilde N_0  \sum_{ p \in \Lambda_+^* \setminus  P_{\mathrm B} } \gamma_p + C [ \ell^{-1} + \exp(-c N^{\delta_{\mathrm{Bog}}} )],
	\end{split}
\end{equation} where we used $\widehat\xi_N(p) \le \widehat \xi_N(0)$, $\ell \ge 2 \mathfrak a _N$, $ \sum_{ p \in \Lambda_+^* \setminus  P_{\mathrm B} } \gamma(p) \lesssim N$ and \eqref{eq:cutoff_a_remainder_bound}. Similarly, we estimate the term related to the second-to-last term on the right-hand side of \eqref{eq:Xi_N_expansion} by
\begin{equation}\label{eq:Xi_N_N+}
	2 \widehat \xi_N(0) \tr_{ \mathfrak{F} }[ \mathcal N_+ \widetilde \Gamma_0] \le 8\pi \mathfrak a_N \widetilde N_0 \sum_{  p\in \Lambda_+^* }\gamma_p + C\ell^{-1}.
\end{equation}
Finally, the last term on the right-hand side of \eqref{eq:Xi_N_expansion} satisfies
\begin{equation}\label{eq:Xi_N_constant_term}
	\widehat\xi_N(0) \int_{\mathbb C} \zeta(z) |z|^4 \, \mathrm d z \le 4\pi \mathfrak a _N \int_{ \mathbb C } \zeta(z)|z|^4 \, \mathrm d z+ C \ell^{-1}.
\end{equation} Combining \eqref{eq:Xi_N_expectation_quartic}, \eqref{eq:Q<_QB}, \eqref{eq:EQ_bound}, \eqref{eq:Q>_bound}, \eqref{eq:Xi_N_N+} and \eqref{eq:Xi_N_constant_term}, we find the claim. 
\end{proof}

We can now conclude the proof of Proposition \ref{prop:renormalized_potential}. Inserting \eqref{eq:Xi_expectation_undressed} into \eqref{eq:Chi+V_expectation_computation}, we find \eqref{eq:renormalized_potential_statement}, with $\mathcal E_{\mathcal V} = V_1 + V_2 + V_3 + \mathcal E_{\Xi}$. The result now follows from \eqref{eq:V3_bound}, \eqref{eq:V1+V2_cancellation} and \eqref{eq:E_Xi_bound}. \qed

\subsection{Final upper bound}

We are now prepared to prove Proposition \ref{prop:energy_upper_bound}. Observing that $\mathcal K + \mathcal Q_{ \mathrm{ B } } = \mathcal H_{ \mathrm{ B } } (z) + \mu_0 \mathcal N_+$ and using \eqref{eq:bogoliubov_ham_diagonalization}, we find 
\begin{equation}\label{eq:K_QB_combination}
\begin{split}
		\tr_{\mathfrak{F}}[\mathcal K \widetilde \Gamma_0] +  \int_{\mathbb C } \zeta(z) \tr_{\mathfrak{F}_+} [ \mathcal Q_{\mathrm B} \widetilde G(z) ] \, \mathrm d z = & \tr_{\mathfrak{F}_+} [  \mathcal H^{ \mathrm{ diag } }  \widetilde G^{\mathrm { diag } } ] + \mu_0 \tr_{ \mathfrak{F}_+ } [ \mathcal N_+ \widetilde G(z) ] \\
		\le &\tr_{\mathfrak{F}_+} [   \mathcal H^{ \mathrm{ diag } }  G^{\mathrm { diag } } ] + \mu_0 \sum_{p\in \Lambda_+^*} \gamma_p  + \mathcal E_{\mathrm {Bog}},
\end{split}
\end{equation} with 
\begin{equation*}
 \mathcal E_{\mathrm {Bog}} = (\kappa_0^{-1} -1) \tr[ (   \mathcal H^{ \mathrm{ diag } }  - E_0 ) G^{\mathrm { diag } } ] + \mu_0\big( \tr_{ \mathfrak{F}_+ } [ \mathcal N_+  \widetilde G(z) ] - \tr_{ \mathfrak{F}_+ } [ \mathcal N_+  G(z) ] \big).
\end{equation*} 
The lowest eigenvalue $E_0 < 0$ of $\mathcal{H}_{\mathrm{B}}(z)$ has been defined in \eqref{eq:bogoliubov_ham_diagonalization}. The bounds $|p|^2 -\mu_0 \leq \epsilon(p) \lesssim |p|^2 - \mu_0$ and an application of Lemma~\ref{lem:sum_integral_approx} show
\begin{equation*}
 \tr[ (   \mathcal H^{ \mathrm{ diag } }  - E_0 ) G^{\mathrm { diag } } ] = \sum_{p\in \Lambda_+^*} \frac{ \eps(p) }{ \exp( \beta \eps(p) ) -1 } \lesssim \sum_{p\in \Lambda_+^*} \frac{ (|p|^2-\mu_0) }{ \exp( \beta( |p|^2 - \mu_0 ) ) -1 } \lesssim \beta^{-5/2}.
\end{equation*}
In combination with Lemma \ref{lem:Tz_N_bound}, Lemma \ref{lem:cutoff_a_remainder}, and $ \mu_0 = - \beta^{-1} \log( 1 + N_0 ^{-1})$, this implies
\begin{equation}\label{eq:E_Bog_bound}
\mathcal E _{ \mathrm{ Bog } } \lesssim \exp(-c N^{\delta_{ \mathrm{ Bog } }}),
\end{equation} for some $c>0$. The claim thus follows by combining \eqref{eq:K_QB_combination}, \eqref{eq:E_Bog_bound}, and Propositions \ref{prop:kinetic_energy_upper_bound} and \ref{prop:renormalized_potential}.  \qed

\section{Estimate for the entropy} \label{sec:entropy}

The goal of this section is to prove the lower bound in Proposition~\ref{prop:entropy} for the entropy of the trial state $\Gamma$ defined in \eqref{eq:trial_state}. It should be compared to \cite[Proposition~4.1]{BocDeuStock2023} and to \cite[Lemma~2]{Seiringer2006}. We also refer to the discussion in Remark~\ref{rem:correclationStructure} above. The main improvement is that we can estimate the influence of correlations added to the eigenfunctions of the state $|z \rangle \langle z | \otimes \widetilde{G}(z)$, while this has been possible in \cite{BocDeuStock2023} only for correlations added to those of $\widetilde{\Gamma}_0$. The additional freedom we obtain by doing this is a crucial ingredient for our analysis in Section~\ref{sec:energy}. We recall that the assumptions stated at the beginning of Section~\ref{sec:energy} also apply in this section.

\begin{proposition}
	\label{prop:entropy}
	The entropy of the state $\Gamma$ in \eqref{eq:trial_state} satisfies 
	\begin{equation}\label{eq:entropy_lower_bound}
		S(\Gamma) = S( \widetilde G^{\mathrm{ diag }}) + S^{ \mathrm{cl} }(\zeta) + \mathcal E_{\mathrm{S}},
	\end{equation} with $  \widetilde G^{ \mathrm{ diag } } $ in \eqref{eq:gibbs_state_bog_diag}, $\zeta$ in \eqref{eq:condensate_density_cutoff}, $S^{ \mathrm{cl} }$ defined below \eqref{eq:free_energy_functional_condensate} and
\begin{equation}
	\label{eq:entropy_error_bound}
	 \mathcal E_{\mathrm{S}} \gtrsim - N\ell^2.
\end{equation}
\end{proposition}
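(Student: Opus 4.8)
The goal is a \emph{lower} bound for $S(\Gamma)$, since $\Gamma$ enters the Gibbs variational principle through $-\beta^{-1}S(\Gamma)$ in \eqref{eq:Gibbsfreenergyfunctional}. The natural strategy is to compare $\Gamma$ with the uncorrelated state $\widetilde\Gamma_0$, whose entropy decomposes cleanly: since $\widetilde\Gamma_0 = \int_{\mathbb C}\zeta(z)\,|z\rangle\langle z|\otimes\widetilde G(z)\,\mathrm dz$ with $\zeta$ a classical probability density on $\mathbb C$ and $\widetilde G(z)$ unitarily equivalent to $\widetilde G^{\mathrm{diag}}$ for every $z$, one has $S(\widetilde\Gamma_0) = S^{\mathrm{cl}}(\zeta) + S(\widetilde G^{\mathrm{diag}})$ up to the negligible correction coming from the fact that the coherent states $|z\rangle$ are not exactly orthogonal (this is handled exactly as in \cite{BocDeuStock2023} via the Berezin--Lieb inequality, or directly using that $\langle z|z'\rangle$ is exponentially small in $|z-z'|^2$). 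Thus it suffices to prove
\[
S(\Gamma) \ge S(\widetilde\Gamma_0) - CN\ell^2
\]
and then absorb the difference $S(\widetilde G^{\mathrm{diag}}) - S(\widetilde G^{\mathrm{diag}})$ (which after the cutoff discussion is really $S(\widetilde G^{\mathrm{diag}})$) as the main term. Note the statement only requires the lower bound $\mathcal E_{\mathrm S}\gtrsim -N\ell^2$, so the argument is one-sided and we may freely use convexity/monotonicity inequalities that go the favorable way.

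\textbf{Key steps.}
First I would write $\Gamma = \int_{\mathbb C}\zeta(z)\sum_{\alpha\in\widetilde{\mathcal A}}\widetilde\lambda_\alpha\, |\widehat\phi_{z,\alpha}\rangle\langle\widehat\phi_{z,\alpha}|\,\mathrm dz$ with the normalized correlated vectors $\widehat\phi_{z,\alpha} = F\mathcal U\phi_{z,\alpha}/\|F\mathcal U\phi_{z,\alpha}\|$, and similarly $\widetilde\Gamma_0 = \int\zeta(z)\sum_\alpha\widetilde\lambda_\alpha\,|\phi_{z,\alpha}\rangle\langle\phi_{z,\alpha}|\,\mathrm dz$ (dropping $\mathcal U$). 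Both states are explicit convex combinations of rank-one projections indexed by the \emph{same} label set $(z,\alpha)$ with the \emph{same} weights $\zeta(z)\widetilde\lambda_\alpha$. The plan is to invoke a stability estimate for the von Neumann entropy under a small perturbation of a family of (nearly orthogonal) pure states sharing the same weights: since the entropy of $\sum_i p_i|\psi_i\rangle\langle\psi_i|$ is, to leading order, the classical entropy $-\sum_i p_i\log p_i$ corrected by terms controlled by the overlaps $\langle\psi_i|\psi_j\rangle$, and since $\|\widehat\phi_{z,\alpha}-\phi_{z,\alpha}\|$ is small (of order $\sqrt{N\ell^2}$ by \eqref{eq:denominator_lower_bound} together with $\|(F-1)\phi_{z,\alpha}\|^2 \lesssim \int\rho^{(2)}_{z,\alpha}u_\ell \lesssim N^2\mathfrak a_N\ell \lesssim N\ell^2$... wait, more carefully $\lesssim N^2\cdot N^{-1}\cdot\ell=N\ell$, and combined with the normalization one gets the $N\ell^2$ scaling claimed), the difference of entropies is bounded by the perturbation size times a logarithmic factor. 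Concretely I would use the standard estimate: if $\rho,\sigma$ are states with $\|\rho-\sigma\|_1\le\epsilon\le 1/e$ then $|S(\rho)-S(\sigma)|\le \epsilon\log(\dim) + \eta(\epsilon)$ with $\eta(\epsilon)=-\epsilon\log\epsilon$ — but the dimension here is effectively bounded because of the particle-number cutoffs $\widetilde{\mathcal A}$ and $\{|z|^2\le\widetilde cN\}$, which restrict the relevant Fock-space sector to dimension at most $\exp(CN^{\text{const}})$; tracking this one obtains an error $\lesssim \|\Gamma-\widetilde\Gamma_0\|_1 \cdot N^{\text{const}}$, which is \emph{not} good enough. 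So instead I would follow the sharper route of \cite{BocDeuStock2023}/\cite{Seiringer2006}: bound $S(\Gamma)$ from below by the entropy of a suitable \emph{coarse-grained} or \emph{dephased} state for which the correlation factor $F$ only contributes through the norms $\|F\phi_{z,\alpha}\|$, and use that $-\sum\widetilde\lambda_\alpha\log(\widetilde\lambda_\alpha/\|F\phi_{z,\alpha}\|^2)$ differs from $-\sum\widetilde\lambda_\alpha\log\widetilde\lambda_\alpha$ by $\sum\widetilde\lambda_\alpha\log\|F\phi_{z,\alpha}\|^2 \ge \log(1-CN\ell^2) \ge -CN\ell^2$, which is exactly the claimed error. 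The rigorous version of "bounding $S(\Gamma)$ below by this dephased entropy" uses concavity of $S$ and the fact that a pinching (conditional expectation onto the algebra generated by the $|\phi_{z,\alpha}\rangle\langle\phi_{z,\alpha}|$, after diagonalizing) cannot decrease entropy.

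\textbf{Main obstacle.}
The delicate point — and the reason this is stated as a proposition rather than a one-line remark — is that the correlated vectors $\widehat\phi_{z,\alpha}$ are \emph{not} mutually orthogonal even when the $\phi_{z,\alpha}$ are: the Jastrow factor $F$ mixes different $\alpha$'s, so $\langle\widehat\phi_{z,\alpha},\widehat\phi_{z,\alpha'}\rangle$ need not vanish for $\alpha\ne\alpha'$. One must show these spurious overlaps are small enough that the entropy is still governed by the classical weights $\widetilde\lambda_\alpha$. This is where the choice (emphasized in Remark~\ref{rem:correclationStructure}) to add correlations at the level of eigenfunctions of $|z\rangle\langle z|\otimes\widetilde G(z)$, which are symmetrized plane-wave products, pays off: off-diagonal overlaps $\langle F\phi_{z,\alpha},F\phi_{z,\alpha'}\rangle$ are controlled by integrals of products of $u_\ell$ against reduced densities, hence of size $\mathcal O(N\ell^2)$ or smaller, and one can feed this into a perturbative expansion of $-\log$ of the Gram matrix $G_{\alpha\alpha'}=\langle F\phi_{z,\alpha},F\phi_{z,\alpha'}\rangle$. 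I would therefore structure the proof as: (i) reduce to fixed $z$ by concavity; (ii) for fixed $z$, diagonalize the finite-rank operator $\sum_\alpha\widetilde\lambda_\alpha|F\phi_{z,\alpha}\rangle\langle F\phi_{z,\alpha}|$ and relate its spectrum to the eigenvalues of the matrix $D^{1/2}GD^{1/2}$ with $D=\mathrm{diag}(\widetilde\lambda_\alpha)$; (iii) since $G = \mathbb 1 + E$ with $\|E\|$ small (diagonal entries $\|F\phi_{z,\alpha}\|^2 - 1 = \mathcal O(N\ell^2)$ and off-diagonal entries even smaller in an appropriate norm), use $\mathrm{Tr}[-X\log X]$ Lipschitz-type estimates for $X$ near $D$ to conclude $S(\Gamma(z)) \ge S^{\text{diag}} + S^{\mathrm{cl}}(\widetilde\lambda) - CN\ell^2$ uniformly in $z$; (iv) integrate over $z$ against $\zeta$ and add $S^{\mathrm{cl}}(\zeta)$. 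The estimate $\|E\|\lesssim N\ell^2$ in the relevant (e.g.\ Hilbert--Schmidt, weighted by $\widetilde\lambda_\alpha$) norm is the technical heart, and follows from Lemma~\ref{lem:rho_z_a_uniform_bound} and the pointwise bounds on $\rho^{(2)}_{z,\alpha}$ exactly as in the proof of \eqref{eq:denominator_lower_bound}.
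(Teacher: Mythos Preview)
Your proposal has a genuine gap in step (i)/(iv): concavity of the von Neumann entropy gives only $S(\Gamma)\ge\int\zeta(z)\,S(\Gamma(z))\,\mathrm dz$, \emph{without} the extra term $S^{\mathrm{cl}}(\zeta)$. The identity $S(\sum_i p_i\rho_i)=H(p)+\sum_i p_iS(\rho_i)$ requires the $\rho_i$ to have mutually orthogonal ranges, which fails both for different $z$'s (coherent states overlap) and, after the Jastrow factor, for different $\alpha$'s. So you cannot first ``reduce to fixed $z$'' and then simply ``add $S^{\mathrm{cl}}(\zeta)$'' at the end; this is precisely the difficulty the proposition is designed to overcome. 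A second gap is in (iii): you assert $\|G-\mathds 1\|\lesssim N\ell^2$ for the Gram matrix, but while the diagonal entries are indeed $O(N\ell^2)$ by \eqref{eq:denominator_lower_bound}, the off-diagonal entries $\langle\phi_{z,\alpha},(F^2-1)\phi_{z,\alpha'}\rangle$ are only controlled \emph{individually} by $N\ell^2$ via $|F^2-1|\le\sum_{i<j}u_\ell(x_i-x_j)$, and you give no argument that their sum over an infinite row, or the operator norm, stays of this size.

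The paper avoids both issues with a single short argument that treats $z$ and $\alpha$ simultaneously. Working in the eigenbasis $\{\xi_k\}$ of $\Gamma$ one writes $S(\Gamma)=\sum_k\varphi(\langle\xi_k,\Gamma\xi_k\rangle)$ and introduces the normalizing factors
\[
Z_k=\int_{\mathbb C}\sum_{\alpha\in\widetilde{\mathcal A}}\frac{|\langle\xi_k,F\phi_{z,\alpha}\rangle|^2}{\|F\phi_{z,\alpha}\|^2}\,\mathrm dz.
\]
The crucial observation is that $Z_k\le 1+CN\ell^2$: after bounding $\|F\phi_{z,\alpha}\|^{-2}\le 1+CN\ell^2$ via \eqref{eq:denominator_lower_bound}, one uses that $\{T_z\Psi_\alpha\}_\alpha$ is an orthonormal basis of $\mathfrak F_+$ together with the coherent-state resolution of the identity $\int|z\rangle\langle z|\,\mathrm dz=\mathds 1_{\mathfrak F_0}$ to get $Z_k\le(1+CN\ell^2)\,\tr[F|\xi_k\rangle\langle\xi_k|F]\le 1+CN\ell^2$, since $F\le 1$. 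Then the multiplicativity $\varphi(xy)=x\varphi(y)+y\varphi(x)$ splits $S(\Gamma)$ into a main term, to which Jensen's inequality for the concave $\varphi$ applies and yields exactly $S(\widetilde G^{\mathrm{diag}})+S^{\mathrm{cl}}(\zeta)$, plus an error $-\sum_k\log(Z_k)\langle\xi_k,\Gamma\xi_k\rangle\ge -CN\ell^2$. No Gram-matrix spectral analysis, no separate control of off-diagonal overlaps, and no Berezin--Lieb machinery is needed.
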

\begin{proof}
	We define the function $\ph(x)= - x\log(x)$ for $x \geq 0$ and we assume that $\{ \xi_k \}_k$ is an orthonormal basis of eigenfunctions of $\Gamma$. This allows us to write 
	\begin{equation*}
		\begin{split}
			S(\Gamma) = & \sum_{k} \ph \big( \langle \xi_k, \Gamma\xi_k\rangle \big ) = \sum_k \ph \Big( \int_{\mathbb C } \zeta(z)  \sum_{\alpha\in \widetilde{ \mathcal A}}\frac {  \widetilde \lambda_\alpha } { \| F\phi_{z, \alpha} \|^2 }  \left | \langle \xi_k, F\phi_{z, \alpha} \rangle \right|^2 \, \mathrm d z \Big ) . 
		\end{split}
	\end{equation*} 
Next, we define
	\begin{equation*}
		Z_k = \int_{\mathbb C } \sum_{\alpha\in \widetilde{ \mathcal A}}\frac { \left | \langle \xi_k, F\phi_{z, \alpha} \rangle \right|^2 } { \| F\phi_{z, \alpha} \|^2 } \, \mathrm d z
	\end{equation*} and observe that \eqref{eq:denominator_lower_bound} implies the bound
	\begin{equation}\label{eq:Zk_upper_bound}
		\begin{split}
			Z_k \le & (1+CN\ell^2) \int_{ \mathbb C } \sum_{\alpha\in \mathcal A} \left | \langle \xi_k, F\phi_{z, \alpha} \rangle \right|^2 \, \mathrm d z\\
			= & (1+CN\ell^2)\int_{ \mathbb C } \sum_{\alpha\in \mathcal A}  \langle z \otimes T_z \Psi_\alpha , F \xi_k\rangle \langle F \xi_k,   z\otimes T_z \Psi_\alpha\rangle  \, \mathrm d z \\
			= & (1+CN\ell^2)\int_{ \mathbb C } \langle z, \tr_{ \mathfrak{F}_+ } [F |\xi_k \rangle \langle \xi_k | F  ] z\rangle = (1+CN\ell^2)\tr [ F | \xi_k\rangle \langle \xi_k | F ] \le (1+CN\ell^2),
		\end{split}
	\end{equation} as long as $N\ell^2$ is sufficiently small. To come to the last line, we used the fact that the set $\{ T_z \Psi_{\alpha} \}_\alpha$ is an orthonormal basis of $\mathfrak{F}_+$ for fixed $z \in \mathbb{C}$, the completeness relation $ \int_{\mathbb C}|z\rangle \langle z| \, \mathrm d z = \mathds 1_{\mathfrak{F}_0} $, and the bound $F \leq 1$ for the multiplication operator $F$ on $\mathfrak{F}$. 
	
	Using the identity  $\ph(xy) = x\ph(y) + y\ph(x)$ for $x,y\ge 0$, we can thus write
	\begin{equation}\label{eq:entropy_computation_1}
		\begin{split}
			S(\Gamma) =\sum_k Z_k \ph\Big( \int_{\mathbb C} \sum_{\alpha \in \mathcal A} \zeta(z)\widetilde \lambda_\alpha \frac { \left | \langle \xi_k, F\phi_{z, \alpha} \rangle \right|^2 } { Z_k \| F\phi_{z, \alpha} \|^2 }  \, \mathrm d z \Big)  + \sum_k \ph\left(Z_k \right) \int_{\mathbb C} \sum_{\alpha\in \mathcal A} \zeta(z)\widetilde \lambda_\alpha \frac { \left | \langle \xi_k, F\phi_{z, \alpha} \rangle \right|^2 } { Z_k \| F\phi_{z, \alpha} \|^2 } \, \mathrm d z .
		\end{split}
	\end{equation} With \eqref{eq:Zk_upper_bound} we see that the second term on the right-hand side can be bounded by
	\begin{equation}\label{eq:entropy_error_bound_intermediate_1}
		\begin{split}
			- \sum_k\log(Z_k)  \int_{\mathbb C } \zeta(z) \sum_{\alpha\in \widetilde{ \mathcal A}} &  \widetilde \lambda_\alpha \frac {  \left | \langle \psi_k, F\phi_{z, \alpha} \rangle \right|^2 } { \| F\phi_{z, \alpha} \|^2 }  \\
			\gtrsim & - N\ell^2 \sum_k \int_{\mathbb C } \zeta(z)  \sum_{\alpha\in \widetilde{ \mathcal A}}\widetilde \lambda_\alpha \frac {  \left | \langle \psi_k, F\phi_{z, \alpha} \rangle \right|^2 } { \| F\phi_{z, \alpha} \|^2 }  = - N\ell^2,
		\end{split}
	\end{equation} where the equality in the second line follows from the fact that $\{ \xi_k\}_k$ is an orthonormal basis. Moreover, an application of Jensen's inequality to the strictly concave function $\ph$ shows
	\begin{equation}\label{eq:bound_entropy_main_term}
	\begin{split}
		\sum_k Z_k & \ph\Big( \int_{\mathbb C} \sum_{\alpha \in \widetilde{ \mathcal A}} \zeta(z)\widetilde \lambda_\alpha \frac {  \left | \langle \xi_k, F\phi_{z, \alpha} \rangle \right|^2 } { Z_k \| F\phi_{z, \alpha} \|^2 }  \, \mathrm d z \Big) \ge \sum_k Z_k  \int_{\mathbb C} \sum_{\alpha \in \widetilde{ \mathcal A}} \ph(\zeta(z) \widetilde \lambda_\alpha) \frac {  \left | \langle \xi_k, F\phi_{z, \alpha} \rangle \right|^2 } { Z_k \| F\phi_{z, \alpha} \|^2 }  \, \mathrm d z  \\
		= & \int_{ \mathbb C } \sum_{\alpha\in\widetilde{ \mathcal A}} \ph(\widetilde \lambda_\alpha \zeta(z))  \, \mathrm d z = \sum_{\alpha \in \widetilde{ \mathcal A} } \ph(\widetilde \lambda_\alpha)+ \int_{ \mathbb C } \ph(\zeta(z)) \, \mathrm d z = S( \widetilde G^{\mathrm{diag}}) + S^{\mathrm{cl}}(\zeta).
	\end{split}
\end{equation} We insert \eqref{eq:entropy_error_bound_intermediate_1} and \eqref{eq:bound_entropy_main_term} into \eqref{eq:entropy_computation_1} to conclude the proof. 
\end{proof}

\section{Proof of Theorem \ref{thm:main_thm}} \label{sec:proof_main_theorem}

Let us fix some $0 < \epsilon_0 < 1/12$. To prove \eqref{eq:main_result} we show two distinct upper bounds, corresponding to the two terms in the minimum appearing on the right-hand side. The first upper bound is obtained with the trial state $\Gamma$ in \eqref{eq:trial_state} and yields the term $F^{\mathrm{BEC}} - 8 \pi \mathfrak a_N N_0^2$ in the minimum. It is only valid in the parameter regime in which $N_0 \gtrsim N^{2/3 + \eps_0}$ holds. A second bound that is valid for all $N_0 \leq N$ (but useful only if $N_0 \ll N^{5/6}$) and contributes the term $F_0^{\mathrm{BEC}}$ in the minimum will be obtained at the end of this section with a much simpler trial state. Before we discuss these issues in more detail, we provide the final upper bound for the free energy of $\Gamma$. We recall that for this part of the analysis the assumptions stated at the beginning of Section~\ref{sec:energy} hold.

In combination, Proposition \ref{prop:energy_upper_bound} and \ref{prop:entropy} imply the upper bound
\begin{equation}\label{eq:free_energy_upper_bound_intermediate_1}
\begin{split}
		\mathcal F(\Gamma) = & \tr[\mathcal H_N \Gamma] - \beta^{-1} S(\Gamma) \\
		\le & \tr [ \mathcal H^{ \mathrm { diag } } G^{ \mathrm{ diag} }  ] -\beta^{-1} S( \widetilde G^{\mathrm{ diag }})  + 4\pi \mathfrak a _N \int_{\mathbb C } \zeta(z) |z|^4 \, \mathrm d z +\beta^{-1} S^{ \mathrm{ cl } }(\zeta) + \mu_0 \sum_{p \in \Lambda_+^*} \gamma_p  \\
		& +   8\pi \mathfrak a_N  \Bigg[  \Big( \sum_{p \in \Lambda_+^*} \gamma_p \Big)^2 + \widetilde N_0  \sum_{ p \in \Lambda_+^* \setminus  P_{\mathrm B} } \gamma_p + \widetilde N_0 \sum_{p \in \Lambda_+^*} \gamma_p \Bigg] + \mathcal E_{\mathcal H} - \beta^{-1}\mathcal E_{\mathrm{S}},
\end{split}
\end{equation} where $\mathcal E_{ \mathcal H }$ and $\mathcal E_{\mathrm{S}}$ satisfy \eqref{eq:energy_error_bound} and \eqref{eq:entropy_error_bound}, respectively.

We first remove the particle number cutoff from the entropy of the Bogoliubov Gibbs state. As in Section~\ref{sec:entropy} we denote $\ph(x) = -x\log(x)$ for $x \geq 0$. Using $\kappa_0 = \sum_{\alpha \in \widetilde {\mathcal{A}} } \lambda_\alpha \le 1$, we find 
\begin{equation}\label{eq:entropy_cutoff_removal_1}
	\begin{split}
		S( \widetilde G^{ \mathrm { diag } }) = \sum_{\alpha \in \widetilde{ \mathcal A} } \ph(\widetilde \lambda_\alpha) \ge & \sum_{\alpha \in \widetilde{ \mathcal A} } \ph( \lambda_\alpha) + \ph(\kappa_0^{-1}) \sum_{\alpha \in \widetilde{ \mathcal A} } \lambda_\alpha = S ( G^{ \mathrm{ diag} }) + \log(\kappa_0) - \sum_{\alpha \in \mathcal A \setminus \widetilde{ \mathcal A} } \ph( \lambda_\alpha).
	\end{split}
\end{equation} 
An application of \eqref{eq:cutoff_a_remainder_bound} shows
\begin{equation}\label{eq:log_k0_bound}
	\log(\kappa_0) \gtrsim - \exp( - c N^{\delta_{ \mathrm{ Bog } }}).
\end{equation} Using the definition of $\lambda_\alpha$ above \eqref{eq:Eigenvalues_Psi_alpha} and \eqref{eq:cutoff_a_remainder_bound_energy}, we find 
\begin{equation*}
	\begin{split}
		\sum_{\alpha \in \mathcal A \setminus \widetilde{ \mathcal A} } \ph( \lambda_\alpha)  = & \sum_{\alpha \in \mathcal A \setminus \widetilde{ \mathcal A} }\lambda_\alpha \log\Big(\sum_{\alpha' \in  \mathcal A } e^{-\beta E_{\alpha'}}\Big) + \beta \sum_{\alpha \in \mathcal A \setminus \widetilde{ \mathcal A} }\lambda_\alpha E_\alpha \\
		\le & \sum_{\alpha \in \mathcal A \setminus \widetilde{ \mathcal A} }\lambda_\alpha \log\Big(\sum_{\alpha' \in  \mathcal A } e^{-\beta E_{\alpha'}}\Big) + C \exp(- c N^{\delta_{ \mathrm{ Bog } }}),
	\end{split}
\end{equation*} with $E_\alpha$ in \eqref{eq:Eigenvalues_Psi_alpha}. A standard computation shows that 
\begin{equation*}
	\begin{split}
		\log\Big(\sum_{\alpha \in  \mathcal A } e^{-\beta E_\alpha}\Big)  = & - \sum_{p\in\Lambda^*_+} \log\left( 1 - \exp(-\beta\eps(p)) \right)  \le - \sum_{p\in\Lambda^*_+} \log\left( 1 - \exp(-\beta|p|^2) \right) \\
		\lesssim & -\beta^{-3/2} \int_{0}^\infty (\beta+x^2) \log(1-\exp(-x^2))\,\mathrm d x \lesssim N.
	\end{split}
\end{equation*} The inequalities above follow from $\eps(p) \ge |p|^2$, the fact that $x\mapsto - \log(1-\exp(-x))$ is monotone decreasing for $x\ge 0$, and Lemma \ref{lem:sum_integral_approx} with $\lambda = 0$. In combination with \eqref{eq:cutoff_a_remainder_bound}, this implies
\begin{equation}
	\label{eq:entropy_lambda_remainder_bound}
	\sum_{\alpha \in \mathcal A \setminus \widetilde{ \mathcal A} }\lambda_\alpha \log\Big(\sum_{\alpha \in  \mathcal A } e^{-\beta E_\alpha}\Big) \lesssim \exp(-c N^{ \delta_{ \mathrm{ Bog } } } ).
\end{equation}
Inserting \eqref{eq:log_k0_bound} and \eqref{eq:entropy_lambda_remainder_bound} into \eqref{eq:entropy_cutoff_removal_1} yields
\begin{equation}
	\label{eq:entropy_error_2_bound}
	S(\widetilde G^{ \mathrm {diag} }) \ge 	S( G^{ \mathrm {diag} }) - C\exp(-c N^{ \delta_{ \mathrm{ Bog } } } ).
\end{equation} 

We also have
\begin{equation}\label{eq:free_energy_cloud_computation_1}
	\tr [ \mathcal H^{ \mathrm{ diag } } G^{ \mathrm{ diag} }  ] -\beta^{-1} S(G^{\mathrm{ diag }}) = \beta^{-1}\sum_{p\in\Lambda_+^*}\log\big( 1-e^{-\beta\eps(p)} \big ) .
\end{equation} The following Lemma, which is proved in \cite[Appendix B]{BocDeuStock2023}, provides us with an asymptotic expansion of the term on the right-hand side of \eqref{eq:free_energy_cloud_computation_1}.
\begin{lemma}
	\label{lem:bogoliubov_free_energy_expansion}
	Let $\beta =\kappa \beta_{\mathrm c}$, with $\beta_{\mathrm c}$ defined in \eqref{eq:Tc_ideal} and $\kappa \in (0,\infty)$. There exists a constant $C>0$ such that, for every $N$, 
	\begin{equation*}
\begin{split}
			\frac 1 \beta \sum_{ p\in P_{ \mathrm B } } & \log \big( 1-e^ {-\beta\eps(p)} \big)  \\
			\le & \frac 1 \beta \sum_{ p\in P_{ \mathrm B } } \log \big( 1- e^{-\beta(|p|^2-\mu_0)}  \big) + 8\pi \mathfrak a_N N_0 \sum_{ p\in P_{ \mathrm B } } \frac 1 { e^ {\beta(|p|^2-\mu_0)}  -1} \\
			& - \frac{1}{2\beta} \sum_{ p\in \Lambda_+^*} \Bigg[ \frac{16\pi  \mathfrak a_N N_0}{|p|^2} - \log\Big( 1 +  \frac{16\pi \mathfrak a_N N_0}{|p|^2}   \Big) \Bigg] + \frac{C N_0^2}{N^2} \Big( N^{ \delta_{ \mathrm{ Bog } } } + \frac 1 {\beta N^{ \delta_{ \mathrm{ Bog } } }  } + \frac{1}{\beta^2 N_0} \Big).
\end{split}
	\end{equation*}
\end{lemma}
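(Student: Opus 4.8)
Abbreviate $e_p := |p|^2 - \mu_0 > 0$ and $b := 16\pi \mathfrak a_N N_0 \ge 0$, so that by \eqref{eq:bogoliubov_dispersion} one has $\epsilon(p) = \sqrt{e_p(e_p+b)}$ for $p \in P_{\mathrm B}$, and set $G(t) := \beta^{-1}\log(1-e^{-\beta t})$, whose derivative is $G'(t) = (e^{\beta t}-1)^{-1}$. The left-hand side is $\sum_{p\in P_{\mathrm B}}[G(\epsilon(p)) - G(e_p)]$, and the first two terms on the right-hand side are precisely $\sum_{p\in P_{\mathrm B}}G(e_p)$ and $\tfrac b2\sum_{p\in P_{\mathrm B}}G'(e_p)$. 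The plan is to prove the stated inequality by turning it into an identity up to a controlled, essentially nonnegative, error: I would estimate the summand $G(\epsilon(p))-G(e_p)-\tfrac b2 G'(e_p)+\tfrac1{2\beta}[\tfrac b{e_p}-\log(1+\tfrac b{e_p})]$ pointwise in $p\in P_{\mathrm B}$, and then perform two elementary manipulations on the zero-point sum (replacing $e_p$ by $|p|^2$ and extending $P_{\mathrm B}$ to $\Lambda_+^*$), keeping track that all defects fit under $\tfrac{CN_0^2}{N^2}\big(N^{\delta_{\mathrm{Bog}}}+\tfrac1{\beta N^{\delta_{\mathrm{Bog}}}}+\tfrac1{\beta^2 N_0}\big)$.

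The key observation is that on $P_{\mathrm B}$ all relevant energies are bounded on the scale $\beta^{-1}$: since $\delta_{\mathrm{Bog}}<1/12$ we have $\beta|p|^2 \le \beta N^{2\delta_{\mathrm{Bog}}}\to 0$, while $\beta|\mu_0| = \log(1+N_0^{-1})$, which is bounded (depending on $\kappa$) because $N_0 \gtrsim_\kappa 1$ by \eqref{eq:Tc_ideal}, and $\beta b \lesssim N^{-2/3}$ by $\mathfrak a_N = \mathfrak a/N$. Hence $\beta e_p$ and $\beta\epsilon(p)$ are uniformly bounded on $P_{\mathrm B}$, and the expansion $(e^y-1)^{-1} = y^{-1} - \tfrac12 + r(y)$ with $|r(y)|\lesssim_\kappa y$ is available throughout. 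Using the integral representation together with this expansion,
\[
G(\epsilon(p)) - G(e_p) = \int_{e_p}^{\epsilon(p)}\frac{\mathrm dt}{e^{\beta t}-1} = \frac1\beta\log\frac{\epsilon(p)}{e_p} - \frac12\big(\epsilon(p)-e_p\big) + \int_{e_p}^{\epsilon(p)} r(\beta t)\,\mathrm dt,
\]
and since $\epsilon(p)^2 = e_p(e_p+b)$ we have $\tfrac1\beta\log\tfrac{\epsilon(p)}{e_p} = \tfrac1{2\beta}\log(1+\tfrac b{e_p}) = \tfrac b{2\beta e_p} - \tfrac1{2\beta}\big[\tfrac b{e_p}-\log(1+\tfrac b{e_p})\big]$. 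Substituting $\tfrac b{2\beta e_p} = \tfrac b2 G'(e_p) + \tfrac b4 - \tfrac b2 r(\beta e_p)$ and collecting, one arrives at
\[
G(\epsilon(p)) - G(e_p) - \frac b2 G'(e_p) + \frac1{2\beta}\Big[\frac b{e_p}-\log\big(1+\tfrac b{e_p}\big)\Big] = \frac12\Big[\frac b2 - \big(\epsilon(p)-e_p\big)\Big] - \frac b2 r(\beta e_p) + \int_{e_p}^{\epsilon(p)} r(\beta t)\,\mathrm dt .
\]
By the arithmetic–geometric mean inequality $\epsilon(p) = \sqrt{e_p(e_p+b)} \le e_p + \tfrac b2$, so the first bracket is nonnegative and equals $\tfrac{b^2/4}{e_p+b/2+\epsilon(p)} \le \tfrac{b^2}{4e_p}$; the remaining two terms are bounded using $|r(y)|\lesssim y$, $\epsilon(p)-e_p \le \tfrac b2$ and $\epsilon(p)\le e_p+b$. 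Summing over $p\in P_{\mathrm B}$, the dominant contribution is $\tfrac18 b^2\sum_{p\in P_{\mathrm B}}e_p^{-1} \lesssim b^2\sum_{p\in P_{\mathrm B}}|p|^{-2} \lesssim b^2 N^{\delta_{\mathrm{Bog}}} \lesssim \tfrac{N_0^2}{N^2}N^{\delta_{\mathrm{Bog}}}$ by Lemma~\ref{lem:sum_integral_approx}, and the $r$-contributions are of lower order after using $\#P_{\mathrm B}\lesssim N^{3\delta_{\mathrm{Bog}}}$, $\beta b \lesssim N^{-2/3}$ and $|\mu_0| = \beta^{-1}\log(1+N_0^{-1}) \le \beta^{-1}N_0^{-1}$.

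It then remains to replace $e_p$ by $|p|^2$ in the zero-point sum and to enlarge the summation set. For the first, $\tfrac{\mathrm d}{\mathrm dx}\big(\tfrac bx - \log(1+\tfrac bx)\big) = -\tfrac{b^2}{x^2(x+b)}$ and $e_p \ge |p|^2$ give a defect bounded by $\tfrac1{2\beta}\sum_{p\in P_{\mathrm B}}\tfrac{|\mu_0| b^2}{|p|^6} \lesssim \tfrac{b^2|\mu_0|}{\beta} \lesssim \tfrac{N_0^2}{N^2}\,\tfrac1{\beta^2 N_0}$. For the second, $0 \le \tfrac b{|p|^2}-\log(1+\tfrac b{|p|^2}) \le \tfrac{b^2}{2|p|^4}$, so $-\tfrac1{2\beta}\sum_{P_{\mathrm B}}[\cdot] = -\tfrac1{2\beta}\sum_{\Lambda_+^*}[\cdot] + \tfrac1{2\beta}\sum_{\Lambda_+^*\setminus P_{\mathrm B}}[\cdot]$ with the last sum nonnegative and $\lesssim \tfrac{b^2}{\beta}\sum_{|p|>N^{\delta_{\mathrm{Bog}}}}|p|^{-4} \lesssim \tfrac{N_0^2}{N^2}\tfrac1{\beta N^{\delta_{\mathrm{Bog}}}}$ by Lemma~\ref{lem:sum_integral_approx}. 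Adding these two defects to the pointwise error above yields an overall error of the stated size, and since both defects and the main pointwise error are (up to exponentially small terms) nonnegative, the resulting relation is exactly the claimed inequality. The main obstacle I anticipate is uniformity: the expansion of $(e^{\beta t}-1)^{-1}$ around $\beta t = 0$, which is what converts $\log(\epsilon(p)/e_p)$ into the Bogoliubov zero-point contribution, must be justified uniformly over $P_{\mathrm B}$ for every $\kappa\in(0,\infty)$ (this is where $\delta_{\mathrm{Bog}}<1/12$ and $N_0 \gtrsim_\kappa 1$ enter), and one must verify that each of the three error mechanisms genuinely fits under the respective term of $\tfrac{CN_0^2}{N^2}\big(N^{\delta_{\mathrm{Bog}}}+\tfrac1{\beta N^{\delta_{\mathrm{Bog}}}}+\tfrac1{\beta^2 N_0}\big)$, using $\mathfrak a_N = \mathfrak a/N$, $\beta\sim\beta_{\mathrm c}$ and the momentum-sum estimates of Lemma~\ref{lem:sum_integral_approx}.
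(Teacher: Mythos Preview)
The paper does not give its own proof of this lemma; it is quoted verbatim from \cite[Appendix~B]{BocDeuStock2023}, so there is nothing here to compare against directly. Your argument is correct and self-contained. The integral representation $G(\epsilon(p))-G(e_p)=\int_{e_p}^{\epsilon(p)}(e^{\beta t}-1)^{-1}\,\mathrm dt$ together with the Laurent expansion $(e^y-1)^{-1}=y^{-1}-\tfrac12+r(y)$ is exactly the mechanism that produces $\tfrac1{2\beta}\log(1+b/e_p)$ from $\tfrac1\beta\log(\epsilon(p)/e_p)$, and your algebraic identity $\tfrac12\big[\tfrac b2-(\epsilon(p)-e_p)\big]=\tfrac{b^2}{8(e_p+b/2+\epsilon(p))}$ isolates the dominant $\tfrac{N_0^2}{N^2}N^{\delta_{\mathrm{Bog}}}$ error cleanly. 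The two replacement steps ($e_p\to|p|^2$ via the monotonicity of $x\mapsto\tfrac bx-\log(1+\tfrac bx)$, and $P_{\mathrm B}\to\Lambda_+^*$ via the tail bound $\tfrac b{|p|^2}-\log(1+\tfrac b{|p|^2})\le\tfrac{b^2}{2|p|^4}$) are handled correctly and account precisely for the remaining two error contributions.

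Two small comments. First, you actually only need $\delta_{\mathrm{Bog}}<1/3$ for $\beta e_p$ to be uniformly bounded on $P_{\mathrm B}$ (and something like $\delta_{\mathrm{Bog}}<1/5$ for the $r$-contribution $\beta b\sum_{P_{\mathrm B}}e_p$ to be absorbed by $\tfrac{N_0^2}{N^2}\tfrac1{\beta^2 N_0}$); invoking $\delta_{\mathrm{Bog}}<1/12$ is sufficient but not sharp. Second, your closing remark that ``both defects and the main pointwise error are nonnegative'' is slightly off: the $r$-terms in $R_p$ have no sign. This is harmless, since you bound them in absolute value and they fit under the stated error regardless; the inequality in the lemma therefore follows because the three positive defects are each bounded by one of the three error terms, and the $r$-corrections are strictly smaller.
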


An application of Lemma \ref{lem:bogoliubov_free_energy_expansion} shows 
\begin{equation}\label{eq:free_energy_cloud_computation_2}
\begin{split}
		\frac 1 \beta \sum_{p\in\Lambda_+^*} &\log\big( 1-e^{-\beta\eps(p)} \big) \le  \frac 1 \beta \sum_{ p \in \Lambda_+^*} \log \big( 1- e^{-\beta(|p|^2-\mu_0)}  \big ) + 8\pi \mathfrak a_N N_0 \sum_{ p\in P_{ \mathrm B } }\gamma^{ \mathrm{ id } }_p \\
		& - \frac{1}{2\beta} \sum_{ p\in \Lambda_+^*} \Bigg[ \frac{16\pi  \mathfrak a_N N_0}{|p|^2} - \log \Big( 1 +  \frac{16\pi \mathfrak a_N N_0}{|p|^2}   \Big) \Bigg] + C N^{2/3 -\delta_{ \mathrm{ Bog } }},
\end{split}
\end{equation} where we introduced the notation $ \gamma^{ \mathrm{ id } }_p = \big( \exp ( \beta(|p|^2-\mu_0) )  -1 \big)^{-1} $ for $p\in \Lambda_+^*$. 

The sum of the third and fourth term on the right-hand side of \eqref{eq:free_energy_upper_bound_intermediate_1} equals
\begin{equation}\label{eq:free_energy_condesate_computation}
4\pi \mathfrak a _N \int_{\mathbb C } \zeta(z) |z|^4 \, \mathrm d z +\beta^{-1} S^{ \mathrm{ cl } }(\zeta)  = -\beta^{-1} \log\Big( \int_{|z|^2 \le \widetilde N} e^{ -\beta( 4\pi \mathfrak a_N |z|^4 - \widetilde\mu |z|^2 ) } \, \mathrm d z \Big)+\widetilde \mu \widetilde N_0,
\end{equation} with $\widetilde N_0$ defined in \eqref{eq:n0_tilde_def}, and where we used the shorthand notation $\widetilde N = \widetilde c N$. Extending the integral inside the logarithm to the whole complex plane, we generate the error term
\begin{equation*}
		-\beta^{-1} \log\Big( 1- \frac {\int_{|z|^2 > \widetilde N} e^{ -\beta( 4\pi \mathfrak a_N |z|^4 - \widetilde\mu |z|^2 ) } \, \mathrm d z} { \int_{\mathbb C} e^{ -\beta( 4\pi \mathfrak  a_N |z|^4 - \widetilde\mu |z|^2 ) } \, \mathrm d z } \Big).
\end{equation*} Setting $\eta = \widetilde \mu \sqrt{\beta/(4h)}$ and $A = \sqrt{\beta h} \widetilde c N$, and computing the integrals in polar coordinates with the change of variables $t = \sqrt{\beta h} |z|^2 - \eta$, we find 
\begin{equation*}
	\frac {\int_{|z|^2 > \widetilde N} e^{ -\beta( 4\pi \mathfrak  a_N |z|^4 - \widetilde\mu |z|^2 ) } \, \mathrm d z} { \int_{\mathbb C} e^{ -\beta( 4\pi \mathfrak  a_N |z|^4 - \widetilde\mu |z|^2 ) }  \, \mathrm d z } = \frac{ \int_{A-\eta}^\infty e^{-t^2}\,\mathrm dt }{ \int_{-\eta}^\infty e^{-t^2}\,\mathrm dt  }.
\end{equation*} We know from the proof of Lemma \ref{lem:mu_tilde_condensate} that $\eta < A/2$. If $0 \le \eta < A/2$, we estimate the denominator from below by a constant and apply \eqref{eq:erf_asymptotics} to obtain a bound for the numerator. This shows
\begin{equation*}
	\frac{ \int_{A-\eta}^\infty e^{-t^2}\,\mathrm dt }{ \int_{-\eta}^\infty e^{-t^2}\,\mathrm dt} \lesssim  \int_{A/2}^\infty e^{-t^2}\,\mathrm dt \lesssim \exp(-cN^{1/3}). 
\end{equation*} In contrast, if $ \eta < 0$ we apply \eqref{eq:erf_asymptotics} in both the numerator and the denominator, and find
\begin{equation*}
	\frac{ \int_{A-\eta}^\infty e^{-t^2}\,\mathrm dt }{ \int_{-\eta}^\infty e^{-t^2}\,\mathrm dt} \lesssim e^{-A^2 + 2A\eta} \lesssim  \exp(-cN^{1/3}).
\end{equation*}
Hence, 
\begin{equation}\label{eq:free_energy_condensate_cutoff_remainder}
\begin{split}
	-\beta^{-1} \log\Big( \int_{|z|^2 \le \widetilde N} & e^{ -\beta( 4\pi \mathfrak  a_N |z|^4 - \widetilde\mu |z|^2 ) } \, \mathrm d z \Big) \\
	\le & -\beta^{-1} \log\Big( \int_{\mathbb C} e^{ -\beta( 4\pi \mathfrak  a_N |z|^4 - \widetilde\mu |z|^2 ) } \, \mathrm d z \Big) + Ce^{-cN^{1/3}} \\   \le & -\beta^{-1} \log\Big( \int_{\mathbb C} e^{ -\beta( 4\pi \mathfrak  a_N |z|^4 - \mu |z|^2 ) } \, \mathrm d z \Big) - N_0(\widetilde \mu -\mu) + Ce^{-cN^{1/3}}, 
\end{split}
\end{equation} with $\mu$ in \eqref{eq:mu_definition}. In the last step, we used the fact that the first term on the second line of \eqref{eq:free_energy_condensate_cutoff_remainder} is a concave function of $\widetilde \mu$ and that
\begin{equation*}
		\frac 1 \beta \frac{\partial}{\partial \widetilde \mu}  \log\Big( \int_{\mathbb C} e^{ -\beta( 4\pi \mathfrak  a_N |z|^4 - \widetilde \mu |z|^2 ) } \, \mathrm d z \Big) \Big | _{ \widetilde \mu = \mu } = N_0
\end{equation*}
holds.

Inserting \eqref{eq:entropy_error_2_bound}--\eqref{eq:free_energy_condensate_cutoff_remainder} into \eqref{eq:free_energy_upper_bound_intermediate_1}, we get 
\begin{equation}\label{eq:upper_bound_intermediate}
\begin{split}
	 \mathcal F(\Gamma) \le & \frac 1 \beta \sum_{|p|\in \Lambda_+^*} \log( 1- e^{-\beta(|p|^2-\mu_0)}  ) + \mu_0 \sum_{p \in \Lambda_+^*} \gamma_p -\beta^{-1} \log\left( \int_{\mathbb C} e^{ -\beta( 4\pi \mathfrak  a_N |z|^4 - \mu |z|^2 ) }\right) \\
	 &  +   8\pi \mathfrak a_N  \Bigg[  N_0 \sum_{ p\in P_{ \mathrm B } } \gamma_p^{ \mathrm{ id } } + \Big( \sum_{p \in \Lambda_+^*} \gamma_p \Big)^2 + \widetilde N_0  \sum_{ p \in \Lambda_+^* \setminus  P_{\mathrm B} } \gamma_p + \widetilde N_0 \sum_{p \in \Lambda_+^*} \gamma_p \Bigg] \\
	 & + \mu N_0 + \widetilde \mu ( \widetilde N_0 - N_0) - \frac{1}{2\beta} \sum_{ p\in \Lambda_+^*} \Bigg[ \frac{16\pi a_N N_0}{|p|^2} - \log\Big( 1 +  \frac{16\pi a_N N_0}{|p|^2}   \Big) \Bigg] \\
	 &  + C N^{2/3 -\delta_{ \mathrm{ Bog } }} + \mathcal E_{\mathcal H} - \beta^{-1}\mathcal E_{\mathrm{S}}.
\end{split}
\end{equation} 
With Lemma \ref{lem:perturbation_number_particles}, Lemma \ref{lem:Tz_N_bound} and Lemma \ref{lem:cutoff_a_remainder}, we infer that
\begin{equation*}
	\sum_{  p\in \Lambda_+^* } \gamma_p = N - \widetilde N_0 + E,
\end{equation*} with $E \lesssim N^{5/3 + \delta_{ \mathrm{ Bog } }} \ell^2 + N^{3}\ell^4 \lesssim N$. Using this, we see that the terms inside the bracket on the second line of \eqref{eq:upper_bound_intermediate} are bounded from above by
\begin{equation*}
\begin{split}
		2\widetilde N_0 & (N - \widetilde N_0) + (N - \widetilde N_0)^2 + N_0\sum_{p \in P_{ \mathrm{ B } } } \gamma^{ \mathrm {id} }_p - \widetilde N_0\sum_{p \in P_{ \mathrm{ B } } } \gamma_p + CNE \\
		= &  2\widetilde N_0 (N - \widetilde N_0) + (N - \widetilde N_0)^2 + \widetilde N_0 ( \widetilde N_0 - N_0 ) + (N_0 - \widetilde N_0) \sum_{p \in P_{ \mathrm{ B } } } \gamma^{ \mathrm {id} }_p + CNE \\
		= & N^2 - N_0^2 + \widetilde N_0(N_0 - \widetilde N_0) + (N_0 - \widetilde N_0)^2 + (N_0 - \widetilde N_0) \sum_{p \in P_{ \mathrm{ B } } } \gamma^{ \mathrm {id} }_p + CNE. 
\end{split}
\end{equation*} Moreover, an application of Lemma \ref{lem:n0_n0tilde} shows
\begin{equation*}
\begin{split}
		8 \pi \mathfrak a_N \Big[ (N_0 - \widetilde N_0)^2 + (N_0 - \widetilde N_0) \sum_{p \in P_{ \mathrm{ B } } } \gamma^{ \mathrm {id} }_p  + CNE \Big] 
		\lesssim & N^{1/3 + \delta_{ \mathrm{ Bog } }} + N^{5/3 + \delta_{ \mathrm{ Bog } }} \ell^2 + N^{3}\ell^4.
\end{split}
\end{equation*} The above considerations show that the second line of \eqref{eq:upper_bound_intermediate} is bounded from above by 
\begin{equation*}
	8\pi \mathfrak a_N (N^2 - N_0^2) + 8\pi \mathfrak a_N \widetilde N_0 (N_0 - \widetilde N_0) + C [ N^{1/3 + \delta_{ \mathrm{ Bog } }} + N^{5/3 + \delta_{ \mathrm{ Bog } }} \ell^2 + N^{3}\ell^4 ].
\end{equation*}

Next, we consider the second term in the above equation and the second term on the third line of \eqref{eq:upper_bound_intermediate}. Let $\eps\in(0, \eps_0)$. If $ N_0 \gtrsim N^{5/6+\eps}$, we infer from \eqref{eq:asymptotics_mu_tilde_large} in Appendix C that
\begin{equation*}
	(8\pi \mathfrak a_N \widetilde N_0 -\widetilde \mu ) (N_0 - \widetilde N_0) \lesssim e^{-cN^\eps}.
\end{equation*}
On the other hand, if $ N_0 \lesssim N^{5/6+\eps}$, Lemma \ref{lem:n0_n0tilde}, Lemma \ref{lem:mu_tilde_condensate} and the lower bound $N_0 \gtrsim N^{2/3 + \eps_0}$ imply
\begin{equation*}
	|8\pi \mathfrak a_N \widetilde  N_0 ( N_0 - \widetilde N_0) | \lesssim N^{1/3+2\eps} + N^{3/2+\delta_{ \mathrm{ Bog } } + \eps} \ell^2 + N^{17/6 + \eps} \ell^4 
\end{equation*} and
\begin{equation*}
\begin{split}
	 | \widetilde{\mu} (\widetilde N_0 - N_0) | \lesssim & \Big( \frac 1 {\widetilde N_0\beta} + \frac{1}{\sqrt{\beta N}} + \frac{\widetilde N_0}{N} \Big) \Big ( \frac{N_0}{N\beta} + \frac{N_0^2}{N^2}  + N^{3}\ell^4 + N^{5/3 + \delta_{ \mathrm{ Bog } }} \ell^2 \Big) \\
 \lesssim & N^{1/3 + 2\eps} + N^{3}\ell^4 + N^{5/3 + \delta_{ \mathrm{ Bog } }} \ell^2.
\end{split} 
\end{equation*} With the bound $|\mu_0| = \beta^{-1} \log( 1 + N_0^{-1}) \leq 1/(N_0\beta)$, we see that
\begin{equation*}
\begin{split}
		\mu_0 \sum_{p \in \Lambda_+^*} \gamma_p = & \mu_0 (N- N_0) + \mu_0 ( N_0 - \widetilde N_0 + E ) \\
		\le & \mu_0 (N- N_0)  + C ( N^{1/3 } + N^{5/3 + \delta_{ \mathrm{ Bog } }} \ell^2 + N^{3}\ell^4 ).
\end{split}
\end{equation*}

We collect the above estimates, insert the bounds for $\mathcal E_{\mathcal H}$ and $\mathcal E_{\mathrm{S}}$ in \eqref{eq:energy_error_bound} and \eqref{eq:entropy_error_bound}, respectively, choose $\eps$ sufficiently small, and find 
\begin{equation}\label{eq:upper_bound_final}
	\begin{split}
		\mathcal F(\Gamma) \le & \frac 1 \beta \sum_{|p|\in \Lambda_+^*} \log( 1- e^{-\beta(|p|^2-\mu_0)}  ) + \mu_0 (N-N_0) -\frac 1 \beta \log\left( \int_{\mathbb C} e^{ -\beta( 4\pi a_N |z|^4 - \mu |z|^2 ) }\right) + \mu N_0 \\
		& + 8\pi \mathfrak a_N (N^2-N_0^2) - \frac{1}{2\beta} \sum_{ p\in \Lambda_+^*} \Bigg[ \frac{16\pi a_N N_0}{|p|^2} - \log\Bigg( 1 +  \frac{16\pi a_N N_0}{|p|^2}   \Bigg) \Bigg]  \\
		& + C \Big( N^{2/3 - \delta_{ \mathrm{ Bog } }} + N^{2/3 + 3\delta_{ \mathrm{ Bog } }} \ell^{1/2} + N^{5/3 + 3\delta_{ \mathrm{ Bog } }} \ell^2 + N^3 \ell^4 + \ell^{-1} \Big).
	\end{split}
\end{equation} The error size is minimized by choosing $\delta_{\mathrm{Bog}} = 1/18$, $\ell = N^{-11/18}$,  which yields the first upper bound.

To prove an upper bound that is valid in the non-condensed phase, we need a different trial state. In this regime, the Bogoliubov corrections become negligible compared to the size of the error term in Theorem \ref{thm:main_thm}. As undressed trial state, we can thus take the Gibbs state of the ideal gas with an appropriate cutoff for the number of particles: 
\begin{equation*}
	\widetilde G^{ \mathrm{ id } } = \frac{ \mathds 1_{\{ \mathcal N \le \widetilde c N \}} \exp( -\beta( \mathrm d \Gamma( -\Delta - \widetilde \mu_0 )  ) ) }{ \tr_{ \mathfrak{F} }[ \mathds 1_{\{ \mathcal N \le \widetilde c N \}} \exp( -\beta( \mathrm d \Gamma( -\Delta - \widetilde \mu_0 )  ) ) ] },
\end{equation*} where $\widetilde c>1$ is chosen independent of $N$. The chemical potential $\widetilde \mu_0$ is uniquely determined by the condition $ \tr_{ \mathfrak{F} } [ \mathcal N \widetilde G^{ \mathrm{ id } } ] = N$. The final trial state $\widetilde \Gamma$ is obtained by dressing $\widetilde G^{ \mathrm{id} }$ with the correlation structure $F$ in the same way as we did in \eqref{eq:trial_state}. Since the eigenfunctions of $\widetilde G^{ \mathrm {id} }$ can be chosen to be eigenfunctions of $\mathcal N$, it is easy to check that, in contrast to the previous regime, the correlation structure does not alter the expected number of particles in the trial state. Computing the free energy of $\widetilde \Gamma$ is a straightforward replication of Sections \ref{sec:energy} and \ref{sec:entropy} in a simplified setting. We therefore leave the details to the reader and only state the final result: 
\begin{equation}\label{eq:upper_bound_final2}
	\mathcal F( \widetilde \Gamma ) \le F_0(\beta, N) + 8\pi \mathfrak a_N N^2 + CN^{11/18}. 
\end{equation} 

Theorem \ref{thm:main_thm} is a direct consequence of \eqref{eq:upper_bound_final} and \eqref{eq:upper_bound_final2}. To see this, let us make the following remarks. Using Proposition \ref{prop:f_bec} one easily checks that the minimum in \eqref{eq:main_result} is attained by the first term if $N_0 \gtrsim N^{5/6 + \epsilon}$ with $\epsilon > 0$. Since the term on the second line of \eqref{eq:main_result} is negative, it can be pulled out of the minimum in this parameter regime. If $N_0 \lesssim N^{5/6 + \epsilon}$ this term can also be pulled out of the minimum, because it is bounded in absolute value by a constant times $N^{1/3 + 2 \epsilon}$, which is smaller than our remainder term if $\epsilon$ is chosen sufficiently small. Moreover, a short computation that uses Proposition~\ref{prop:f_bec} and the identity $\mu_0 = -\ln(1+N_0^{-1})/\beta$ show that the condition $N_0\lesssim N^{2/3 + \eps_0}$ with $\eps_0< 1/12$ implies 
\begin{equation}\label{eq:bounds_minimum}
	F^{\mathrm{ BEC }} - 8\pi \mathfrak a_N N_0^2 \ge F^{\mathrm{ BEC }}_0 - \frac {\mu_0} 2 + \mathcal O(N^{1/2}).
\end{equation} Using additionally $\mu_0 < 0$, we thus find
\begin{equation*}
	\min\{ F^{\mathrm{ BEC }} - 8\pi \mathfrak a_N N_0^2, F^{\mathrm{ BEC }}_0 \} = F^{\mathrm{ BEC }}_0 + \mathcal O(N^{1/2}).
\end{equation*} This explains why it is sufficient to prove the bound for the trial state $\Gamma$ only in the parameter regime $N_0 \gtrsim N^{2/3+\eps_0}$, concluding our proof of Theorem \ref{thm:main_thm}. \qed


%

\textbf{Acknowledgments.}
A. D. gratefully acknowledges funding from the Swiss National Science Foundation (SNSF) through the Ambizione grant PZ00P2 185851. M. C. gratefully acknowledges support from the European Research Council through the ERC--AdG CLaQS. It is our pleasure to thank Chiara Boccato, Phan Thành Nam, Marcin Napiórkowski, Alessandro Olgiati and Benjamin Schlein for inspiring discussions.

\begin{appendix}
		
	\begin{center}
		\huge \textsc{--- Appendix ---}
	\end{center}

\section{Properties of the solution to the scattering equation} \label{app:scattering}

In this appendix we recall some well-known properties of the solution $f(|x|)$ to the zero energy scattering equation and of $f_{\ell}(x)$ defined above and in \eqref{eq:Definitionfl}, respectively. In the whole section we assume that $V$ is a nonnegative measurable and radial function that vanishes outside the ball with radius $R>0$. All results that we state without proof can be found in \cite[Appendix C]{LiebSeiYng_BoseGas}\footnote{For general interaction potentials, the scattering equation in \cite[Theorem C.1]{LiebSeiYng_BoseGas} has to be understood in the sense of quadratic forms and not in the sense of distributions as claimed. That is, functions used to test the equation should be elements of the form domain of the energy functional $\mathcal E_R$ (notation from the reference). All proofs in the reference apply, with minor adjustments.}. 


Let us introduce the energy functional
\begin{equation*}
	\mathcal E_\ell [ \phi ] = \int_{ B_\ell } \left( | \nabla \phi |^2 + \frac 1 2 V_N(x) | \phi(x) |^2 \right) \mathrm d x
\end{equation*}
with $\ell > R$. The function $f_{\ell}$ is the unique minimizer of $\mathcal E_\ell$ among all $H^1$ functions $\phi$ satisfying the boundary condition $\phi(x) = 1$ for $|x| = \ell$, and its energy is given by
\begin{equation}
	\label{eq:energy_f_ell}
	\mathcal{E}_\ell[f_\ell] = \min_{ \substack{\phi\in H^1(B_\ell) \\ \phi = 1 \text{ on } \partial B_\ell } } \mathcal E_\ell[\phi] =  \frac {4\pi \mathfrak a_N} {1- \mathfrak a_N/\ell}.
\end{equation} 
Here $\mathfrak a_N < \ell$ is a positive number called the scattering length of the potential $V_N$. It is easy to see that, by scaling, $\mathfrak a_N = N^{-1}\mathfrak a$, where $\mathfrak a$ is the scattering length of the unscaled potential $V$. 

The function $f(|x|)$ is monotonically non-decreasing in $|x|$, and it is bounded from above and from below by
	\begin{equation*}
		1 \geq f(r) \ge 1-\frac{\mathfrak a_N}r,
	\end{equation*} 
with equality in the lower bound for $r \ge R$. This, in particular, implies 
	\begin{equation}\label{eq:f_ell_pointwise_bound}
		1 \ge f_\ell(x) \ge \frac { (1 - \mathfrak a_N / |x|)_+ } { 1 - \mathfrak a_N / \ell } \ge ( 1 - \mathfrak a_N/|x| )_+. 
	\end{equation} 
	
	We also need the following integral bounds on $f_\ell$. 
	\begin{lemma} Let $u_\ell = 1- f_\ell$. We have
		\begin{equation}\label{eq:ul_integral}
			\int_{\mathbb R^3} u_\ell(x) \mathrm d x \lesssim \mathfrak a_N \ell^2, \qquad 	\int_{\mathbb R^3} u_\ell(x)^2 \mathrm d x \lesssim \mathfrak a_N^2 \ell, \quad \text{ and } \quad \int_{\mathbb R^3}|\nabla f_\ell(x)| \, \mathrm dx \lesssim \mathfrak{a}_N \ell.
		\end{equation} 
	\end{lemma}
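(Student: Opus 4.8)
The plan is to derive all three estimates from the pointwise information on $f$ and $f_\ell$ already recorded in this appendix. From \eqref{eq:f_ell_pointwise_bound} one has $0 \le u_\ell(x) \le \min\{1,\mathfrak a_N/|x|\}$ for $0<|x|<\ell$, and $u_\ell \equiv 0$ outside $B_\ell$; likewise the bound $f(r) \ge (1-\mathfrak a_N/r)_+$ gives $0 \le 1-f(r) \le \min\{1,\mathfrak a_N/r\}$. With these in hand, the first two bounds follow by splitting the region of integration into $\{|x|<\mathfrak a_N\}$, where $u_\ell \le 1$, and $\{\mathfrak a_N \le |x| < \ell\}$, where $u_\ell \le \mathfrak a_N/|x|$. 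On the inner ball the two contributions are $\lesssim \mathfrak a_N^3$, hence $\lesssim \mathfrak a_N \ell^2$ and $\lesssim \mathfrak a_N^2 \ell$ respectively since $\mathfrak a_N \le \ell$; on the outer shell, passing to polar coordinates, $\int_{\mathfrak a_N}^\ell (\mathfrak a_N/r)\,r^2\,\mathrm d r \lesssim \mathfrak a_N \ell^2$ and $\int_{\mathfrak a_N}^\ell (\mathfrak a_N/r)^2\,r^2\,\mathrm d r \lesssim \mathfrak a_N^2 \ell$. Adding the two regions yields the first and second inequalities in \eqref{eq:ul_integral}.

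For the gradient bound I would write $|\nabla f_\ell(x)| = f'(|x|)/f(\ell)$ for $|x|<\ell$ (and $0$ otherwise), and use $f(\ell) = 1-\mathfrak a_N/\ell \ge 1/2$, valid since $\ell \ge 2\mathfrak a_N$. Since $f$ is monotone non-decreasing, $f' \ge 0$, and a single integration by parts gives
\begin{equation*}
	\int_0^\ell f'(r)\,r^2\,\mathrm d r = f(\ell)\ell^2 - 2\int_0^\ell f(r)\,r\,\mathrm d r = \big(f(\ell)-1\big)\ell^2 + 2\int_0^\ell \big(1-f(r)\big)\,r\,\mathrm d r \le 2\int_0^\ell \big(1-f(r)\big)\,r\,\mathrm d r .
\end{equation*}
Using $1-f(r) \le \min\{1,\mathfrak a_N/r\}$ and the same splitting as above, $\int_0^\ell (1-f(r))\,r\,\mathrm d r \lesssim \mathfrak a_N^2 + \mathfrak a_N \ell \lesssim \mathfrak a_N \ell$, so $\int_{\mathbb R^3} |\nabla f_\ell| = 4\pi f(\ell)^{-1} \int_0^\ell f'(r)\,r^2\,\mathrm d r \lesssim \mathfrak a_N \ell$, which is the third inequality.

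I do not expect a genuine obstacle here; the only real decision is how to treat the gradient. An alternative is to observe that $r \mapsto r f(r)$ is convex (the radial form of the zero-energy scattering equation reads $(rf)'' = \tfrac12 r V_N f \ge 0$) and affine with slope $1$ for $r \ge R$, whence $0 \le f'(r) \le \min\{1/r,\mathfrak a_N/r^2\}$ pointwise, and then integrate directly. The integration-by-parts route is preferable because it needs no regularity statement about $f$ on the support of $V_N$: it uses only monotonicity of $f$, its explicit far-field form $f(r) = 1-\mathfrak a_N/r$ for $r \ge R$, and $f \le 1$. Finally I would note that every implicit constant above is $N$-independent, as required by the $\lesssim$ convention, since they originate solely from elementary radial integrals and the comparison $\mathfrak a_N \le \ell/2$.
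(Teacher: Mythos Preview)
Your proof is correct and follows essentially the same route as the paper: pointwise bounds on $u_\ell$ from \eqref{eq:f_ell_pointwise_bound} with the inner/outer split for the first two estimates, and the same integration-by-parts identity $\int_0^\ell f'(r)r^2\,\mathrm d r = f(\ell)\ell^2 - 2\int_0^\ell f(r)r\,\mathrm d r$ combined with the lower bound on $f$ for the third. The only cosmetic difference is that the paper inserts the lower bound $f(r)\ge (1-\mathfrak a_N/r)_+$ directly and computes the resulting integral explicitly, whereas you rewrite it as $2\int_0^\ell(1-f(r))r\,\mathrm d r$ first; the content is identical.
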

	\begin{proof}
		With \eqref{eq:f_ell_pointwise_bound}, we see that
		\begin{equation*}
			0 \le u_\ell(x) \le \begin{cases}
				0 & |x| \ge \ell, \\
				2 \mathfrak a_N / |x| & \mathfrak a_N \le |x| < \ell, \\
				1 & |x| < \mathfrak  a_N,
			\end{cases}
		\end{equation*} 
		which implies the first two bounds in \eqref{eq:ul_integral}. The last bound follows from $f'\ge 0$ and one integration by parts:
		\begin{equation*}
			\begin{split}
				\int|\nabla f_\ell| = & \frac {4\pi} {f(\ell)} \int_0^\ell f'(r) r^2 \mathrm d r =  \frac {4\pi} {f(\ell)}  \left[ f(\ell) \ell^2 - 2\int_0^\ell f(r) r \mathrm d r \right] \\ 
				\le & 4\pi\ell^2 - 8\pi\int_0^\ell (1- \mathfrak a_N/r)_+ r \mathrm d r = 8\pi\mathfrak a_N\ell -4\pi \mathfrak a_N^2 \le 8\pi\mathfrak a_N\ell.
			\end{split}
		\end{equation*} To come to the second line, we additionally used \eqref{eq:f_ell_pointwise_bound}.
	\end{proof}

\section{Properties of the effective functional for the condensate} \label{app:condensate_functional}

We recall the definition of $\zeta$ in \eqref{eq:condensate_density_cutoff} with the cutoff parameter $\widetilde{c}$. The following is an adaptation of \cite[Lemma C.1]{BocDeuStock2023}.
\begin{lemma}\label{lem:mu_tilde_condensate}
	We assume that $\widetilde{c} > 2$ and consider the combined limit $N\to\infty$, $\beta = \kappa\beta_{\mathrm c}$ with $\kappa\in(0,\infty)$ and $\beta_{\mathrm c}$ in \eqref{eq:Tc_ideal}. Let $1 < \widetilde b < \widetilde c /2 $ and choose a sequence $M= M(N) \in\mathbb R$ such that $0 < M < \widetilde b N$ for every $N\in\mathbb N$. Then there exists a unique $\widetilde \mu \in \mathbb{R}$ such that $\int_{\mathbb C}|z|^2 \zeta(z)\,\mathrm d z = M$, with $\zeta$ in \eqref{eq:condensate_density_cutoff}. Moreover, for every $\eps > 0$, there exists $c>0$ such that:
	\begin{enumerate}
		\item If $M \gtrsim N^{5/6+\eps}$, then
		\begin{equation}\label{eq:asymptotics_mu_tilde_large}
			\left| \widetilde \mu - 8\pi \mathfrak a_N M \right| \lesssim e^{-cN^\eps}.
		\end{equation}
		\item If $M\lesssim N^{5/6 - \eps}$, then
		\begin{equation}\label{eq:asymptotics_mu_tilde_small}
			\left| \widetilde \mu + \frac{1}{\beta M} \right| \lesssim \frac{N^{-2\eps}}{\beta M}.
		\end{equation}
		\item For any $M=M(N)$, we have
		\begin{equation}\label{eq:mu_tilde_general_bound}
			|\widetilde\mu | \lesssim \left( \frac{1}{M\beta} + \frac{1}{\sqrt{\beta N}} + \frac{M}{N} \right).
		\end{equation}
	\end{enumerate}
\end{lemma}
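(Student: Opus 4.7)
The plan is to reduce the two-dimensional integral defining $\zeta$ to one-dimensional integrals over the radial variable $r = |z|^2$. With $h := 4\pi\mathfrak a_N$ and the angular integration absorbed into $\mathrm d r$, one has
\[
M(\widetilde\mu) := \int_{|z|^2 \leq \widetilde c N} |z|^2 \zeta(z)\,\mathrm d z = \frac{\int_0^{\widetilde c N} r\, e^{-\beta(h r^2 - \widetilde\mu r)}\,\mathrm d r}{\int_0^{\widetilde c N} e^{-\beta(h r^2 - \widetilde\mu r)}\,\mathrm d r}.
\]
Differentiating under the integral sign yields $M'(\widetilde\mu) = \beta\,\mathrm{Var}_\zeta(|z|^2) > 0$, so $M$ is strictly increasing. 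Combined with the boundary behaviour $M(\widetilde\mu)\to 0$ as $\widetilde\mu\to -\infty$ and $M(\widetilde\mu)\to \widetilde c N$ as $\widetilde\mu\to+\infty$ (both follow by concentration of $\zeta$ at the maximum of $r \mapsto -h r^2 + \widetilde\mu r$ on $[0,\widetilde c N]$), the intermediate value theorem yields the unique $\widetilde\mu$ with $M(\widetilde\mu) = M$ for every $M \in (0,\widetilde c N)$, and in particular for $M \in (0,\widetilde b N)$.

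Completing the square $h r^2 - \widetilde\mu r = h(r - \widetilde\mu/(2h))^2 - \widetilde\mu^2/(4h)$ and substituting $t = \sqrt{\beta h}(r - \widetilde\mu/(2h))$ produces the representation already used above \eqref{eq:free_energy_condensate_cutoff_remainder}, namely
\[
M(\widetilde\mu) = \frac{1}{\sqrt{\beta h}}\Bigg( \eta + \frac{1}{2}\cdot\frac{e^{-\eta^2} - e^{-(A-\eta)^2}}{\int_{-\eta}^{A-\eta} e^{-t^2}\,\mathrm d t} \Bigg),
\]
with $\eta = \widetilde\mu\sqrt{\beta/(4h)}$ and $A = \sqrt{\beta h}\,\widetilde c N$. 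In the present scaling $\beta \sim N^{-2/3}$ and $\mathfrak a_N \sim 1/N$, so $A \sim N^{1/6}$, and the constraint $M \leq \widetilde b N$ with $\widetilde b < \widetilde c/2$ forces $\eta < A/2$ for $N$ large (otherwise the conditional mean would exceed $\widetilde c N/2 > \widetilde b N$).

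For case (1) the plan is to test the ansatz $\widetilde\mu_* = 8\pi\mathfrak a_N M$, i.e.\ $\eta_* = M\sqrt{\beta h}$. Under $M \gtrsim N^{5/6 + \eps}$ one has $\eta_*^2 \gtrsim N^{2\eps}$ and $(A-\eta_*)^2 \gtrsim A^2/4 \sim N^{1/3}$, so the exponential boundary terms in the above ratio contribute at most $e^{-cN^\eps}/\sqrt{\beta h}$ to $|M(\widetilde\mu_*) - M|$. Since the variance of $\zeta$ in this Gaussian regime is of order $1/(\beta h)$, we have $M'(\widetilde\mu) \sim 1/h \sim N$, and the mean value theorem converts this into \eqref{eq:asymptotics_mu_tilde_large}. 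For case (2) I would instead take $\widetilde\mu_* = -1/(\beta M) < 0$ and substitute $r = Ms$ to obtain
\[
\frac{M(\widetilde\mu_*)}{M} = \frac{\int_0^{\widetilde c N/M} s\, e^{-s - \delta s^2}\,\mathrm d s}{\int_0^{\widetilde c N/M} e^{-s - \delta s^2}\,\mathrm d s},\qquad \delta := \beta h M^2.
\]
Under $M \lesssim N^{5/6-\eps}$ we have $\delta \lesssim N^{-2\eps}$, and the cutoff tail is of size $e^{-\widetilde c N/M} \leq e^{-cN^{1/6+\eps}}$. Expanding $e^{-\delta s^2} = 1 - \delta s^2 + O(\delta^2 s^4)$ and using $\int_0^\infty s^k e^{-s}\,\mathrm d s = k!$ then yields $M(\widetilde\mu_*) = M(1 + O(\delta))$. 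In this exponential-type regime the variance of $\zeta$ is of order $M^2$, so $M'(\widetilde\mu) \gtrsim \beta M^2$, and the mean value theorem produces $|\widetilde\mu - \widetilde\mu_*| \lesssim \delta/(\beta M) \lesssim N^{-2\eps}/(\beta M)$, giving \eqref{eq:asymptotics_mu_tilde_small}.

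For the general bound (3), note first that since $1/\sqrt{\beta N}$ is the geometric mean of $1/(\beta M)$ and $M/N$, the desired estimate is equivalent to $|\widetilde\mu| \lesssim \max\{ 1/(\beta M), M/N\}$. Cases (1) and (2) already handle the regimes $M \gtrsim N^{5/6+\eps}$ (yielding $|\widetilde\mu| \lesssim M/N$) and $M \lesssim N^{5/6-\eps}$ (yielding $|\widetilde\mu| \lesssim 1/(\beta M)$), so it remains to cover the crossover window $N^{5/6-\eps}\lesssim M \lesssim N^{5/6 + \eps}$. Here a direct evaluation using the standard Gaussian integrals on $[0,A]$ gives $M(0) \sim 1/\sqrt{\beta h}\sim N^{5/6}$ and $M'(\widetilde\mu)\sim 1/h\sim N$ uniformly for $|\widetilde\mu|\lesssim 1/\sqrt{\beta N}$, and linearization around $\widetilde\mu = 0$ then produces $|\widetilde\mu|\lesssim |M - M(0)|/N \lesssim N^{5/6}/N = 1/\sqrt{\beta N}$. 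The main technical point throughout is controlling the cutoff at $r = \widetilde c N$: the assumption $\widetilde b < \widetilde c/2$ is used precisely to keep $\eta < A/2$ so that the endpoint contributions $e^{-(A-\eta)^2}$ are subleading, and in case (2) one must also verify that the cutoff tail does not spoil the small-$\delta$ expansion.
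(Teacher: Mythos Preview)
Your strategy is sound and genuinely different from the paper's. The paper works directly with the explicit formula $\sqrt{\beta h}\,M = \Upsilon(\eta)$ in terms of the error function, localizes $\eta$ using the values $\Upsilon(0)\sim 1$ and $\Upsilon(A/2)=A/2$ together with monotonicity, and then inverts the asymptotic expansions \eqref{eq:upsilon_expansion_negative}, \eqref{eq:upsilon_expansion_positive} of $\Upsilon$ to read off the asymptotics of $\eta$ (hence $\widetilde\mu$) directly. You instead guess a candidate $\widetilde\mu_*$, compute $M(\widetilde\mu_*)-M$, and appeal to the mean value theorem with a lower bound on $M'=\beta\,\mathrm{Var}_\zeta(|z|^2)$. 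Your approach is arguably more intuitive and avoids the erf expansions; the paper's approach is more mechanical but gives the localization of $\eta$ for free.

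There is, however, a circularity you need to close. The mean value theorem gives $|\widetilde\mu-\widetilde\mu_*| = |M(\widetilde\mu_*)-M|/M'(\xi)$ for some $\xi$ between $\widetilde\mu$ and $\widetilde\mu_*$, so you need a \emph{lower} bound on $M'(\xi)$ on this interval before you know where $\widetilde\mu$ lies. In case~(1) this is harmless once you first note (as the paper does) that $\Upsilon(\eta)\gtrsim N^\eps$ and $\Upsilon(0)\sim 1$ force $0<\eta<A/2$, and on that range the variance is uniformly $\sim 1/(\beta h)$. In case~(2) and in the crossover for~(3) the variance is \emph{not} uniformly bounded below by the quantity you use (for very negative $\eta$ one has $M'\sim N/\eta^2\to 0$), so the argument as written is circular. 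The cleanest fix is to replace the mean value theorem by a bracketing argument: e.g.\ in case~(2), evaluate $M$ at $\widetilde\mu_*^\pm := -(1\pm C\delta)/(\beta M)$ with the same substitution $r=Ms$ to get $M(\widetilde\mu_*^\pm)=M(1\mp C\delta + O(\delta))$, and for $C$ large enough monotonicity sandwiches $\widetilde\mu$ between $\widetilde\mu_*^+$ and $\widetilde\mu_*^-$. The same idea handles the crossover in~(3).
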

\begin{proof}
	Proceeding as in the proof of \cite[Lemma C.1]{BocDeuStock2023}, we compute 
	\begin{equation} \label{eq:mu_definition_implicit}
		\sqrt{\beta h}\int_{\mathbb C}\zeta(z)|z|^2\,\mathrm d z = \frac{1-\exp(-A^2+2A\eta) + \sqrt \pi \eta \exp(\eta^2) \erf[-\eta,A - \eta]}{ \sqrt \pi  \exp(\eta^2) \erf[-\eta,A - \eta] } \eqqcolon \Upsilon(\eta),
	\end{equation} where $h = 4\pi \mathfrak a_N$, $\eta = \widetilde \mu \sqrt{\beta/(4h)}$, $A=\widetilde c  \sqrt{\beta h} N$ and 
	\[
	\erf[a,b] = \frac 2 {\sqrt \pi} \int_a^be^{-t^2}\, \mathrm d t
	\] with $a, b\in\mathbb R$. From \cite[Eq. 7.1.21]{AbramowitzStegun} we know that
	\begin{equation}\label{eq:erf_asymptotics}
	2\exp(x^2)\int_x^\infty e^{-t^2}\,\mathrm dt = \frac{1}{x} - \frac{1}{2x^3} + Q(x), \qquad |Q(x)|\le \frac{3}{4 x^5},
	\end{equation} for $x > 0$. 

	Using \eqref{eq:erf_asymptotics} we find 
	\begin{equation}\label{eq:upsilon_expansion_negative}
		\Upsilon(\eta) =  \frac{ \frac 1 {2\eta^2} + \eta Q(-\eta)  -\eta \exp(-A^2+2A\eta) \big( \frac 1 \eta + \frac 1 {A-\eta} - \frac{1}{2(A-\eta)^3} + Q(A-\eta) \big)}{ -\frac 1 \eta + \frac 1 {2\eta^3} + Q(-\eta)  -\exp(-A^2+2A\eta) \big( \frac 1 {A-\eta} - \frac{1}{2(A-\eta)^3} + Q(A-\eta) \big) }
	\end{equation} for $\eta < 0$. This, in particular, implies $\Upsilon(\eta)\to 0 $ in the limit $\eta\to -\infty$, for fixed $A$. Moreover, a direct computation shows $\Upsilon(A/2) = A/2$. With
	\begin{equation*}
		\frac{\partial }{\partial \widetilde \mu} \int_{\mathbb C}\zeta(z)|z|^2\,\mathrm d z  = \beta \int_{\mathbb C}\zeta(z)\left ( |z|^2 - \left( \int_{\mathbb C}\zeta(w)|w|^2\,\mathrm d w  \right) \right)^2 \,\mathrm d z > 0,
	\end{equation*} 
	we conclude that for $0 < M < \widetilde b N$ there exists a unique solution $\eta$ to the equation $\sqrt{\beta h}M=\Upsilon(\eta)$. In the following we derive the asymptotic behavior of this solution for large $N$. 
	
	Let us first consider the case $M\lesssim N^{5/6-\eps}$, where we have $\Upsilon(\eta) \lesssim N^{-\eps}$. By comparing this to 
	\begin{equation}\label{eq:upsilon_0}
		\Upsilon(0) = \frac{1-\exp(-A^2)}{2\int_0^A e^{-t^2} \, \mathrm d t} \to \pi^{-1/2} \quad \text{ as } A\to\infty,
	\end{equation} and using the monotonicity of $\Upsilon(\eta)$, we see that $\eta<0$, for sufficiently large $N$. Thus, it follows from \eqref{eq:upsilon_expansion_negative} that $\eta\to-\infty$ as $N\to\infty$, and moreover
	\[
	\sqrt{\beta h} M = -\frac{1}{2\eta} + \mathcal O(|\eta|^{-3}),
	\] which immediately implies \eqref{eq:asymptotics_mu_tilde_small}.
	
	Let us now assume that $M \gtrsim N^{5/6+\eps}$, in which case we have
	\begin{equation}
		\Upsilon(\eta) = \sqrt{\beta h} M \gtrsim N^\eps.
		\label{eq:AppB1}
	\end{equation}
	From the assumption $\widetilde b < \widetilde c/2$ we know that
	\begin{equation*}
		\Upsilon(\eta) = \sqrt{\beta h} M < A/2 = \Upsilon(A/2).
	\end{equation*} In combination with \eqref{eq:upsilon_0} and the monotonicity of $\Upsilon(\cdot)$, this allows us to conclude that $0< \eta < A/2$, for large $N$. An application of \eqref{eq:erf_asymptotics} shows 
	\begin{equation}\label{eq:upsilon_expansion_positive}
		\Upsilon(\eta) = \eta + \frac{ 1- e^{-A(A-2\eta)} }{ 2e^{\eta^2}\sqrt\pi - \frac 1 \eta + \frac 1 {2\eta^3} - Q(\eta) -e^{-A(A-2\eta)}\big( \frac 1 {A-\eta} - \frac 1 {2(A-\eta)^3} + Q(A-\eta) \big) },
	\end{equation}
	which together with \eqref{eq:AppB1} implies $\eta\to+\infty$ as $N\to\infty$, provided $M \gtrsim N^{5/6+\eps}$. Using again $\eta< A/2$ and \eqref{eq:upsilon_expansion_positive}, we see that 
	\begin{equation*}
		| \eta - \sqrt{\beta h} M | \lesssim \exp(-N^{2\eps}),
	\end{equation*} which is \eqref{eq:asymptotics_mu_tilde_large}.

	Finally, the bound \eqref{eq:mu_tilde_general_bound} follows from the asymptotics of $\Upsilon$ for $\eta\to\pm\infty$ and the fact that if $\eta$ is bounded, then $\widetilde{\mu}\lesssim (\beta N)^{-1/2}$. 
\end{proof}

\section{Estimate of the expected number of particles in $\Gamma$} \label{app:perturbation_number_particles}

This section is devoted to the proof of Lemma \ref{lem:perturbation_number_particles}. We start by showing \eqref{eq:difference_particle_number} uniformly in $\widetilde \mu \in \mathbb R$. We will often need to remove cutoffs from the expectation of observables on the Gibbs state $\widetilde G^{\mathrm{ diag}}$, producing errors that are exponentially small in $N$. We will omit such errors because they can be absorbed in the remaining error bounds. Since the proof is analogous to that of Lemma \ref{lem:V1+V2_cancellation} we only sketch it. 

Using \eqref{eq:denominator_bound_first} and \eqref{eq:upperBoundJastrow} without the restriction $(k,h) \neq (i,j)$ and expanding the relevant numerator and denominator, we find
\begin{equation}\label{eq:N_Gamma_upper_bound_1}
	\begin{split}
		\tr [ \mathcal N \Gamma ] = & \tr[\mathcal N \widetilde \Gamma_0] + \frac 1 2 \int_{\mathbb C } \zeta(z) \sum_{\alpha\in \widetilde{ \mathcal A}} \widetilde \lambda_\alpha \int_{\Lambda} \rho_{z,\alpha}^{(1)}(x)  \, \mathrm d x \int_{\Lambda^2} u_{12} \rho_{z,\alpha}^{(2)}(x_1, x_2)  \, \mathrm d x_1 \, \mathrm d x_2 \, \mathrm d z   \\
		& - \frac 1 {2}  \int_{\Lambda^3} u_{12} \rho^{(3)}_{\Gamma_0} (x_1,x_2,x_3)  \, \mathrm d x_1 \, \mathrm d x_2  \, \mathrm d x_3 + \mathcal O(N \ell^2 + N^3\ell^4 ).
	\end{split}
\end{equation} 
From \eqref{eq:action_weyl_momentum} and \eqref{eq:quadratic_expectation_psi_a} we know that
\begin{equation}\label{eq:rho1za_asymptotics}
	\begin{split}
		\int_{\Lambda} \rho_{z,\alpha}^{(1)} (x) \, \mathrm d x = & \langle \phi_{z, \alpha} , \mathcal N  \phi_{z, \alpha}\rangle = |z|^2 + N_\alpha +  \sum_{p\in P_{\mathrm B} } v_p^2 \Big( 1 +  \langle \Psi_\alpha ,a_p^* a_p  \Psi_\alpha \rangle \Big).
	\end{split}
\end{equation} Using \eqref{eq:nabla_rho2_bound}, we see that the second term on the right-hand side of \eqref{eq:N_Gamma_upper_bound_1} equals
\begin{equation}\label{eq:second_term_N_Gamma}
	\frac 1 2 \Big( \int u_\ell \Big) \int_{\mathbb C } \zeta(z) \sum_{\alpha\in \widetilde{ \mathcal A}} \widetilde \lambda_\alpha \int_{\Lambda} \rho_{z,\alpha}^{(1)}(x)  \, \mathrm d x \int_{\Lambda} \rho_{z,\alpha}^{(2)}(x, x)  \, \mathrm d x \, \mathrm d z  + \mathcal E_{\mathrm C}^{(1)},
\end{equation} with 
\begin{equation} \label{eq:EC1_error_bound}
\begin{split}
		\mathcal E_{\mathrm C}^{(1)} = & \frac 1 2 \int_{\mathbb C } \zeta(z) \sum_{\alpha\in \widetilde{ \mathcal A}} \widetilde \lambda_\alpha \int_{\Lambda} \rho_{z,\alpha}^{(1)}(x)  \, \mathrm d x \int_{\Lambda^2} u_{\ell}(x_2-x_1) \\
		& \times \int_0^1 (x_2-x_1) \cdot \nabla_2 \rho_{z,\alpha}^{(2)}(x_1, x_1 + t (x_2- x_1)) \, \mathrm d t \, \mathrm d x_1 \, \mathrm d x_2 \, \mathrm d z. 
\end{split}
\end{equation} It follows from \eqref{eq:nabla_rho2_bound}, \eqref{eq:ul_integral}, \eqref{eq:rho1za_asymptotics} and the cutoff in $\widetilde \lambda_\alpha$ that 
\begin{equation}\label{eq:eq:EC1_error_bound2}
	\begin{split}
		| \mathcal E_{\mathrm C}^{(1)}| \lesssim  & \mathfrak a_N N \ell^3 \int_{\mathbb C } \zeta(z) \sum_{\alpha\in \widetilde{ \mathcal A}} \widetilde \lambda_\alpha \sup_{x_1, x_2 \in \Lambda} \big| \nabla_2 \rho_{z,\alpha}^{(2)}(x_1, x_2) \big | \mathrm d z \\
		\lesssim & \ell^3 \Big( N^{3/2} \big( \tr_{ \mathfrak{F}_+ }[ \mathcal K \widetilde G^{ \mathrm{ diag } } ] \big)^{1/2} + N^{4 \delta_{ \mathrm{ Bog } }} ( \beta^{-2} N^{2 \delta_{ \mathrm{ Bog } }} + 1 ) + N^{1 + 2\delta_{\mathrm{Bog}}} (1 + \beta^{-1}) \Big) \\
		\lesssim & \ell^3 N^2 ( \beta^{-1/2} +  1 ).
	\end{split}
\end{equation} To come to the last line we also used the second bound in \eqref{eq:gamma_Bog_bound_momentum} and the assumption $\delta_{ \mathrm{ Bog } } < 1/12$. 

The same arguments used to prove \eqref{eq:rho2za_bound} show 
\begin{equation}
	\label{eq:rho2za_asymptotics}
	\rho_{z, \alpha}^{(2)}(x, x) =  |z|^4 + 4 |z|^2 N_\alpha + 2N_\alpha(N_\alpha-1)
	+\mathcal O \Big( ( |z|^2 + N_\alpha )   ( N_\alpha^< + N^{ \delta_{ \mathrm{ Bog } } } ) + N^{ 3\delta_{ \mathrm{ Bog } }} ( N_\alpha^{<}  + 1 )^2  \Big), 
\end{equation} uniformly in $x\in\Lambda$. This follows from the observation that, thanks to the translation invariance of the eigenstates $\Psi_\alpha$, the phases in the expansion \eqref{eq:computation_rho_2_z_a} drop out when $x_1 = x_2$. Inserting \eqref{eq:rho2za_asymptotics} into \eqref{eq:second_term_N_Gamma} and using \eqref{eq:ul_integral}, \eqref{eq:rho1za_asymptotics}, \eqref{eq:EC1_error_bound}, \eqref{eq:eq:EC1_error_bound2} and Corollary \ref{cor:expectation_powers_N}, we see that the second term on the right-hand side of \eqref{eq:N_Gamma_upper_bound_1} equals
\begin{equation}\label{eq:N_cancellation_positive}
	\begin{split}
		\frac 1 {2} \Big ( \int u_\ell \Big ) &\int_{\mathbb C } \zeta(z)  \Big\{ |z|^6 + 5|z|^4 \tr_{\mathfrak{F}_+}[ \mathcal N_+  G^{ \mathrm{ diag } }] + 6|z|^2 ( \tr_{\mathfrak{F}_+}[ \mathcal N_+  G^{ \mathrm{ diag } }] )^2  \\ 
		& + 2 ( \tr_{\mathfrak{F}_+}[ \mathcal N_+  G^{ \mathrm{ diag } }] )^3 \Big \} \, \mathrm dz + \mathcal O\Big( \ell^2 N^{1+\delta_{ \mathrm { Bog } } } ( \beta^{-1} + 1)  + \ell^3 N^2 (\beta^{-1/2} + 1) + \ell^4 N^3 \Big).
	\end{split}
\end{equation} 

As for the term on the second line of \eqref{eq:N_Gamma_upper_bound_1}, it is easy to see, using Lemma \ref{lem:rho_k_Gamma_bound}, that 
\begin{equation*}
	\int_{\Lambda^3} u_{12} \rho^{(3)}_{\Gamma_0} (x_1,x_2,x_3)  \, \mathrm d x_1 \, \mathrm d x_2  \, \mathrm d x_3 = \Big ( \int u_\ell \Big )  \int_{\Lambda^2} \rho^{(3)}_{\Gamma_0} (x_1,x_1,x_2)  \, \mathrm d x_1 \, \mathrm d x_2 + \mathcal O( N^2\ell^3\beta^{-1/2} ).
\end{equation*} From here, using Wick's theorem as in the proof of \eqref{eq:V2_final_uper_bound} we arrive at 
\begin{equation}\label{eq:N_cancellation_negative}
	\begin{split}
		- \frac 1 {2}  \int_{\Lambda^3} & u_{12} \rho^{(3)}_{\Gamma_0} (x_1,x_2,x_3)  \, \mathrm d x_1 \, \mathrm d x_2  \, \mathrm d x_3 \\
		= & - \frac 1 {2} \Big ( \int u_\ell \Big ) \int_{\mathbb C } \zeta(z)  \Big\{ |z|^6 + 5|z|^4 \tr_{\mathfrak{F}_+}[ \mathcal N_+  G^{ \mathrm{ diag } }]  \\ 
		& + 6|z|^2 ( \tr_{\mathfrak{F}_+}[ \mathcal N_+  G^{ \mathrm{ diag } }] )^2 + 2 ( \tr_{\mathfrak{F}_+}[ \mathcal N_+  G^{ \mathrm{ diag } }] )^3 \Big \} \, \mathrm dz + \mathcal O\Big( N^2\ell^3\beta^{-1/2} + N\ell^2 \beta^{-1} \Big). 
	\end{split}
\end{equation} Inserting \eqref{eq:N_cancellation_positive} and \eqref{eq:N_cancellation_negative} into \eqref{eq:N_Gamma_upper_bound_1}, we get \eqref{eq:difference_particle_number}. It remains to prove the existence statement of the lemma. 

We denote
\begin{equation*}
	M(\widetilde \mu) = \int_{\mathbb C} |z|^2 \zeta(z) \, \mathrm d z
\end{equation*} and observe that $\tr_{\mathfrak{F}_+}[\mathcal N_+ \widetilde G(z)]$ is independent of $z\in\mathbb C$, and that
\begin{equation*}
	\tr[ \mathcal N \widetilde \Gamma_0 ] =	M(\widetilde \mu) + \tr_{\mathfrak{F}_+}[\mathcal N_+ \widetilde G(z)] 
\end{equation*} holds. Using Lemmas \ref{lem:Tz_N_bound}, \ref{lem:cutoff_a_remainder}, \ref{lem:n_g(z)_ideal_difference} and Equation \eqref{eq:n0_n0tilde_difference_2}, it is easy to see that 
\begin{equation*}
	\left| \tr_{\mathfrak{F}_+}[\mathcal N_+ \widetilde G(z)] - N + N_0 \right| \lesssim \frac{N_0}{N\beta} + \frac{N_0^2}{N^2}.
\end{equation*}
We can thus write 
\begin{equation*}
	\tr[ \mathcal N \Gamma ] - N = M(\widetilde \mu) - N_0 +  \left( 	\tr[ \mathcal N  \widetilde \Gamma_0 ] - 	\tr[ \mathcal N \Gamma ] \right) + \mathcal O \Big( \frac{N_0}{N\beta} +  \frac{N_0^2}{N^2}\Big),
\end{equation*} which, for $\beta \gtrsim \beta_{\mathrm{c}}$, $\delta_{ \mathrm{ Bog } }< 1/12$ and $\ell\le cN^{-7/12}$, for some sufficiently small $c>0$, yields 
\begin{equation} \label{eq:N_Gamma_bound_existence}
	M(\widetilde \mu) - N_0 -  \Big( \frac 1 2 + C \frac{N_0}{N} \Big)N^{2/3} \le 	\tr[ \mathcal N \Gamma ] - N \le 	M(\widetilde \mu) - N_0 + \Big( \frac 1 2 + C \frac{N_0}{N} \Big)  N^{2/3},
\end{equation} for some constant $C>0$ independent of $\widetilde \mu$ and $N$. From the proof of Lemma \ref{lem:mu_tilde_condensate} we know that 
\begin{equation}\label{eq:M(mu)_limits}
	\lim_{\widetilde \mu \to -\infty} M(\widetilde \mu) = 0,  \qquad \lim_{\widetilde \mu \to +\infty} M(\widetilde \mu) = \widetilde  c N ,
\end{equation} for fixed $N\in\mathbb N$. Equations \eqref{eq:N_Gamma_bound_existence}, \eqref{eq:M(mu)_limits} and the assumption $N_0\ge N ^{2/3}$ imply that $ \tr[ \mathcal N \Gamma ] - N $ takes positive and negative values as a function of $\widetilde \mu$, for fixed $N$ large enough. The claim now follows from the continuity of the map 
\begin{equation*}
	\widetilde \mu \mapsto \tr[\mathcal N \Gamma] = \int_{ \mathbb C } \zeta(z) \sum_{\alpha\in \widetilde{ \mathcal A}}\frac {  \widetilde \lambda_\alpha } { \| F \phi_{z,\alpha} \|^2 } \sum_{n =1}^\infty n \int_{\Lambda^n} \left| \big( F_n  \phi_{z,\alpha}^n \big)\right |^2 \, \mathrm d X^n \, \mathrm d z,
\end{equation*} 
which is a consequence of the dominated convergence theorem. \qed

\end{appendix}


\printbibliography

\vspace{0.5cm}
\noindent (Marco Caporaletti) Institute of Mathematics, University of Zurich \\ 
Winterthurerstrasse 190, 8057 Zurich, Switzerland \\ 
E-mail address: \texttt{marco.caporaletti@math.uzh.ch}\\
\\
(Andreas Deuchert) Institute of Mathematics, University of Zurich \\ 
Winterthurerstrasse 190, 8057 Zurich, Switzerland \\ 
E-mail address: \texttt{andreas.deuchert@math.uzh.ch}\\

\end{document}